  \newcolumntype{R}{X}
  \newcolumntype{L}{>{\hsize=.3\hsize}X}
  \newcolumntype{Y}{>{\centering\arraybackslash}X}
  \newcolumntype{Z}{>{\hsize=.5\hsize}X}
\tikzset{join/.code=\tikzset{after node path={%
\ifx\tikzchainprevious\pgfutil@empty\else(\tikzchainprevious)%
edge[every join]#1(\tikzchaincurrent)\fi}}}
\makeatother\pagestyle{headings}
\tikzset{>=stealth',every on chain/.append style={join},
         every join/.style={->}}
\tikzstyle{labeled}=[execute at begin node=$\scriptstyle,
\tikzstyle{box}=[rectangle, draw=black, minimum width=2cm, minimum height=1cm, align=center]
\tikzstyle{line} = [draw=black, ->]
\tikzstyle{arrow} = [thick,->,>=stealth]
\RenewDocumentCommand{\caption}{+o+m}{%
  \def\@figcaptype{figure}
  \ifx\@captype\@figcaptype
  \ifallcaptionsshort
  \IfValueTF{#1}{%
    \latex@@caption[#1]{#2}%
  }{%
    \latex@@caption[#2]{#2}
  }
  \else
  \latex@@caption[#2]{#2}%
  \fi
  \else
  \IfValueTF{#1}{%
    \latex@@caption[#1]{#2}%
  }{%
    \latex@@caption[#2]{#2}
  }
  \fi
}
\renewcommand\qedsymbol{$\blacksquare$}
\newtheorem{theorem}{Theorem}\setcounter{theorem}{0}
\newtheorem{assumption}[theorem]{Assumption}
\newtheorem{lemma}[theorem]{Lemma}
\newtheorem{proposition}[theorem]{Proposition}
\newtheorem{example}[theorem]{Example}
\newtheorem{remark}[theorem]{Remark}
\newtheorem{conjecture}[theorem]{Conjecture}
\newtheorem{definition}[theorem]{Definition}
\newcommand{\eproposition}{\hfill$\square$}
\newcommand{\epropositionproof}{\hfill\qedsymbol}
\newcommand{\elemma}{\hfill$\square$}
\newcommand{\elemmaproof}{\hfill\qedsymbol}
\newcommand{\etheorem}{\hfill$\square$}
\newcommand{\eassumption}{\hfill$\square$}
\newcommand{\etheoremproof}{\hfill\qedsymbol}
\newcommand{\edefinition}{\hfill$\triangle$}
\newcommand{\eexample}{\hfill$\square$}
\newcommand{\econjecture}{\hfill$\square$}
\newcommand{\eremark}{\hfill$\triangle$}
\newcommand{\vect}[1]{ \bm{#1} }
\newcommand{\set}[1]{\mathcal{#1}}
\newcommand{\imagunit}{\iota}
\newcommand{\beli}{\beta}
\newcommand{\vbeli}{ \vect{\beli} }
\newcommand{\vgam}{ \vect{\gamma} }
\newcommand{\setEfull}{ \mathcal{E} }
\newcommand{\setF}{ \mathcal{F} }
\newcommand{\sfN}{ \mathsf{N} }
\newcommand{\setpf}{ \partial f }
\newcommand{\setpe}{ \partial e }
\newcommand{\sR}{\mathbb{R}}
\newcommand{\sC}{\mathbb{C}}
\newcommand{\sZ}{\mathbb{Z}}
\newcommand{\sZpp}{\mathbb{Z}_{\geq 1}}
\newcommand{\FB}{F_{ \mathrm{B} }}
\newcommand{\ZB}{Z_{ \mathrm{B} }}
\newcommand{\Herm}{\mathsf{H}}
\newcommand{\tran}{\mathsf{T}}
\newcommand{\ze}{ z_{e} }
\newcommand{\xe}{ x_{e} }
\newcommand{\matr}[1]{ \bm{#1} }
\newcommand{\defeq}{ \triangleq }
\newcommand{\fr}[1]{ f_{ \mathrm{r}, #1 } }
\newcommand{\fc}[1]{ f_{ \mathrm{c}, #1 } }
\newcommand{\setx}[1]{ \set{X}_{#1} }
\newcommand{\setxe}{ \set{X}_{e} }
\newcommand{\setX}{ \set{X} }
\newcommand{\vx}{ \bm{x} }
\newcommand{\graphN}{ \mathsf{N} }
\newcommand{\vvu}{\bm{u}}
\newcommand{\vui}{\vvu_{i}}
\newcommand{\vuj}{\vvu_{j}}
\newcommand{\supp}{\mathrm{supp}}
\newcommand{\sRp}{\mathbb{R}_{\geq 0}}
\newcommand{\xf}{\bm{x}_{\partial f}}
\newcommand{\ZBn}[1]{\perm_{ \mathrm{B}, #1 }}
\newcommand{\perm}{ \mathrm{perm} }
\newcommand{\permb}{ \mathrm{perm}_{\mathrm{B}} }
\newcommand{\permbM}[1]{\perm_{ \mathrm{B}, #1 }}
\newcommand{\perms}{ \mathrm{perm}_{\mathrm{S}} }
\newcommand{\permscs}{ \mathrm{perm}_{\mathrm{scS}} }
\newcommand{\permscsM}[1]{ \mathrm{perm}_{\mathrm{scS},#1} }
\newcommand{\pmtheta}{p_{\mtheta}}
\newcommand{\hgraphN}{\hat{\graphN}}
\newcommand{\ef}{e,f}
\newcommand{\efi}{e,f_i}
\newcommand{\efj}{e,f_j}
\newcommand{\setpff}{ \setpf,f }
\newcommand{\setxf}{ \setX_{\setpf} }
\newcommand{\vxf}{\bm{x}_{\partial f}}
\newcommand{\setpfi}{\setpf_{i}}
\newcommand{\etof}{e, f}
\newcommand{\vmu}{\bm{\mu}}
\newcommand{\fSPAN}{f_{\mathrm{SPA},\sfN}}
\newcommand{\setpfj}{\setpf_{j}}
\newcommand{\wh}{ w_{\mathrm{H}} }
\newcommand{\vsigma}{ \bm{\sigma} }
\newcommand{\setC}{\set{C}}
\newcommand{\setP}{\set{P}}
\newcommand{\setR}{\set{R}}
\newcommand{\setS}{\set{S}}
\newcommand{\vgamRCp}{\vgam_{\setR,\setC}}
\newcommand{\gamRCp}{\gamma_{\setR,\setC}}
\newcommand{\CM}[1]{\Cgen{M}{#1}}
\newcommand{\Ctwo}[1]{\Cgen{2}{#1}}
\newcommand{\Cgen}[2]{C_{#1,#2}}
\newcommand{\CBM}[1]{\CBgen{M}{#1}}
\newcommand{\CBtwo}[1]{\CBgen{2}{#1}}
\newcommand{\CBgen}[2]{C_{\mathrm{B},#1,#2}}
\newcommand{\CscSgen}[2]{C_{\mathrm{scS},#1,#2}}
\newcommand{\mtheta}{\matr{\theta}}
\newcommand{\isigmaoi}{\bigl( i,\sigma_1(i) \bigr)}
\newcommand{\tvgam}{\tilde{\vgam}}
\newcommand{\tgam}{\tilde{\gamma}}
\newcommand{\sZp}{\sZ_{\geq 0}}
\newcommand{\vtheta}{\matr{\theta}}
\newcommand{\mP}{\matr{P}}
\newcommand{\mU}{\matr{U}}
\newcommand{\sfG}{\mathsf{G}}
\newcommand{\FBthe}{F_{ \mathrm{B}, \mtheta }}
\newcommand{\UBthe}{U_{ \mathrm{B}, \mtheta }}
\newcommand{\HBthe}{H_{ \mathrm{B} }}
\newcommand{\FscSthe}{F_{ \mathrm{scS}, \mtheta }}
\newcommand{\UscSthe}{U_{ \mathrm{scS}, \mtheta }}
\newcommand{\HscSthe}{H_{ \mathrm{scS} }}
\newcommand{\setA}{ \set{A} }
\newcommand{\suppmtheta}{\operatorname{supp}(\mtheta)}
\newcommand{\hvgamRCp}{ \hat{\vgam}_{\setR,\setC} }
\newcommand{\hgamRCp}{ \hat{\gamma}_{\setR,\setC} }
\newcommand{\vgamzo}{\vgam}
\newcommand{\gamzo}{\gamma}
\newcommand{\HG}{H_{ \mathrm{G} }}
\newcommand{\FGthe}{F_{ \mathrm{G}, \mtheta }}
\newcommand{\UGthe}{U_{ \mathrm{G}, \mtheta }}
\newcommand{\vp}{\bm{p} }
\newcommand{\PiA}[1]{ \Pi_{\setA(#1)} }
\newcommand{\Gamnthe}{ \Gamma_{n}(\vtheta) }
\newcommand{\GamMnthe}{ \Gamma_{M,n}(\vtheta) }
\newcommand{\Gampnthe}{ \Gamma'_{n}(\vtheta) }
\newcommand{\GamMn}{ \Gamma_{M,n} }
\newcommand{\Gamtwon}{ \Gamma_{2,n} }
\newcommand{\tPsi}{\tilde{\Psi}}
\newcommand{\eg}{\textit{e.g.}}
\newcommand{\ie}{\textit{i.e.}}
\newcommand{\etal}{\textit{et al.}}
\newcommand{\gtheta}{g_{\sfN(\mtheta)}}
\newcommand{\e}{\mathrm{e}}
\newcommand{\bigformulabottom}[3]{%
  \begin{figure*}[!b]
    \normalsize
    \vspace*{4pt}
    \hrulefill

    \setcounter{equation}{#1}
    #3

    \setcounter{equation}{#2}
  \end{figure}
}
\newcommand{\cw}{ \check{w} }
\newcommand{\vw}{ \vect{w} }
\newcommand{\cvw}{ \check{\vw} }
\newcommand{\vpsi}{ \vect{ \psi } }
\newcommand{\cpsi}{ \check{ \psi } }
\newcommand{\cvpsi}{ \check{ \vect{ \psi } } }
\newcommand{\tvz}{ \tilde{ \vect{z} } }
\newcommand{\ZSSTf}{ Z_{\mathrm{SST},f} }
\newcommand{\setE}{ \mathcal{E} }
\newcommand{\setHerm}[1]{ \mathrm{Herm}_{#1} }
\newcommand{\setPSD}[1]{ \mathrm{PSD}_{#1} }
\newcommand{\sRpp}{ \mathbb{R}_{> 0} }
\newcommand{\LMP}{ \set{B} }
\newcommand{\ZBSPA}{ Z_{\mathrm{B,SPA}} }
\newcommand{\ZBM}{ Z_{\mathrm{B},M} }
\newcommand{\setcovN}[1]{ \hat{\set{N}}_{#1} }
\newcommand{\tset}[1]{\tilde{\mathcal{#1}}}
\newcommand{\LCTtset}[1]{\mathring{\tilde{\mathcal{#1}}}}
\newcommand{\LCTset}[1]{\mathring{\mathcal{#1}}}
\newcommand{\LCTtv}[1]{ \mathring{ \tilde{ \vect{#1} } } }
\newcommand{\LCTv}[1]{ \mathring{ \vect{#1} } }
\newcommand{\LCT}[1]{ \mathring{#1} }
\newcommand{\LCTt}[1]{ \mathring{ \tilde{#1} } }
\newcommand{\Sigmae}{\Sigma_{(e)}}
\newcommand{\vSigma}{\vect{\Sigma}}
\newcommand{\hgraphNavg}{ \hat{\graphN}_{M,\mathrm{avg}} }
\newcommand{\hgraphNavgalt}{ \hat{\graphN}_{M,\mathrm{avg}}^{\mathrm{alt}} }
\newcommand{\funcFS}{ \Phi }
\newcommand{\matrFS}{ \vect{\Phi} }
\newcommand{\muFSsimple}{ \mu_{\mathrm{FS}} }
\newcommand{\vW}{ \vect{W} }
\newcommand{\vxavgalt}{ \vx_{\mathrm{avg}}^{\mathrm{alt}}  }
\newcommand{\cvpsiavgalt}{ \cvpsi_{\mathrm{avg}}^{\mathrm{alt}} }
\newcommand{\avgalt}[1]{ \hat{{#1}}_{M,\mathrm{avg}}^{\mathrm{alt}} }
\newcommand{\epfi}{ e',f_i }
\newcommand{\epfj}{ e',f_j }
\newcommand{\ZSSTM}{ Z_{\mathrm{SST},M} }
\newcommand{\unpair}[1]{ \dot{#1} }
\newcommand{\pair}[1]{ \ddot{#1} }
\newcommand{\upone}{ \unpair{1} }
\newcommand{\uptwo}{ \unpair{2} }
\newcommand{\upthre}{ \unpair{3} }
\newcommand{\upfor}{ \unpair{4} }
\newcommand{\upfif}{ \unpair{5} }
\newcommand{\pone}{ \pair{1} }
\newcommand{\ptwo}{ \pair{2} }
\newcommand{\pthre}{ \pair{3} }
\newcommand{\pfor}{ \pair{4} }
\newcommand{\pfif}{ \pair{5} }
\newcommand{\xfi}{ \bm{x}_{\setpf_i} }
\newcommand{\hLCT}[1]{ \hat{ \LCT{#1} } }
\newcommand{\hLCTgraphN}{ \hLCT{\graphN} }
\newcommand{\hgraphPsig}{ \hgraphN_{M,P_{\vsigma}} }
\newcommand{\LCTvxf}{ \LCTv{x}_{\setpf} }
\newcommand{\LCTvxfi}{ \LCTv{x}_{\setpfi} }
\newcommand{\LCTsetx}{ \LCTset{X} }
\newcommand{\LCTsetxf}{ \LCTset{X}_{\setpf} }
\newcommand{\tx}{ \tilde{x} }
\newcommand{\txe}{ \tilde{x}_{e} }
\newcommand{\tvx}{ \tilde{\vect{x}}  }
\newcommand{\LCTvxl}{ \LCTtv{x}_{\mathrm{l}} }
\newcommand{\LCTvxs}{ \LCTtv{x}_{\mathrm{s}} }
\newcommand{\LCTvxlm}{ \LCTtv{x}_{\mathrm{l},m} }
\newcommand{\LCTvxlsigm}{ \LCTtv{x}_{\mathrm{l},\sigma(m)} }
\newcommand{\LCTvxsm}{ \LCTtv{x}_{\mathrm{s},m} }
\newcommand{\LCTvxssigm}{ \LCTtv{x}_{\mathrm{s},\sigma(m)} }
\newcommand{\LCTvxlM}{ \LCTtv{x}_{\mathrm{l},[M]} }
\newcommand{\LCTvxsM}{ \LCTtv{x}_{\mathrm{s},[M]} }
\newcommand{\tzero}{\tilde{0}}
\newcommand{\gzero}{g_{0}}
\newcommand{\gone}{g_{1}}
\newcommand{\tze}{ \tilde{z}_{e} }
\newcommand{\etofone}{ e, f_1 }
\newcommand{\onetoftwo}{ 1, f_2 }
\newcommand{\tv}[1]{ \tilde{\vect{#1}} }
\newcommand{\vt}{ \bm{t} }
\newcommand{\vv}{ \bm{v} }
\newcommand{\tz}{ \tilde{z} }
\begin{document}

\linespread{1.5}


\thesistitle{Finite-Graph-Cover-Based \\ Analysis of Factor Graphs in Classical and Quantum Information Processing Systems}
\authorname{HUANG, Yuwen}
\degree{Doctor of Philosophy}
\programme{Information Engineering}
\supervisor{Pascal O. Vontobel}
\submitdate{September 2024}
\coverpage


\pagenumbering{roman}


	\vspace*{2cm}
	\large \noindent {\bf
	Abstract of thesis entitled: \thesistitle \\
	Submitted by \authorname \\
	for the degree of \degree \\
	at \institution~in September 2024
					}
	\vskip 1cm \noindent
	\addcontentsline{toc}{chapter}{Abstract}
\noindent 

\vspace*{-0.8cm}

Various intractable computational problems can be reduced to computing the marginals and the partition function of a suitably defined multivariate function represented by a factor graph. 
For computational problems relating to classical statistical models, we consider standard factor graphs (S-FGs), \ie, factor graphs with local functions taking on non-negative real values. For computational problems relating to quantum information processing systems, we consider so-called double-edge factor graphs (DE-FGs), a class of factor graphs where local functions take on complex values and satisfy some positive semi-definiteness constraints. The sum-product algorithm (SPA) provides an efficient iterative method for approximating the marginals and the partition function induced by each of these factor graphs. 

For an arbitrary S-FG, Yedidia \etal~showed that fixed points of the SPA on this S-FG correspond to stationary points of the Bethe free energy function. Based on this, they defined the so-called Bethe approximation of the partition function, or simply the Bethe partition function, in terms of the minimum of the Bethe free energy function. Later, Vontobel provided a combinatorial characterization of the Bethe partition function in terms of the finite graph covers of the S-FG. The proof of this characterization heavily relied on the method of types. 

In this thesis, we leverage finite graph covers to analyze the SPA and the Bethe partition function for both S-FGs and DE-FGs. There are two main contributions in this thesis.

The first main contribution concerns a special class of S-FGs where the partition function of each S-FG equals the permanent of a nonnegative square matrix. The Bethe partition function for such an S-FG is called the Bethe permanent. A combinatorial characterization of the Bethe permanent is given by the degree-$M$ Bethe permanent, which is defined based on the degree-$M$ graph covers of the underlying S-FG.
In this thesis, we prove a degree-$M$-Bethe-permanent-based lower bound on the
permanent of a non-negative square matrix, resolving a conjecture proposed by
Vontobel in [IEEE Trans. Inf. Theory, Mar.\ 2013]. We also prove a
degree-$M$-Bethe-permanent-based upper bound on the permanent of a
non-negative matrix. In the limit $M \to \infty$, these lower and upper
bounds yield known Bethe-permanent-based lower and upper bounds on the
permanent of a non-negative square matrix. Similar results are obtained for
another approximation of the permanent known as the (scaled) Sinkhorn permanent.

The second main contribution is giving a combinatorial characterization of the Bethe partition function for DE-FGs in terms of finite graph covers. In general, approximating the partition function of a DE-FG is more challenging than for an S-FG because the partition function of the DE-FG is a sum of complex values and not just a sum of non-negative real values. Moreover, one cannot apply the method of types for proving the combinatorial characterization as in the case of S-FGs. We overcome this challenge by applying a suitable loop-calculus transform (LCT). Originally, the LCT was introduced by Chertkov and Chernyak as a special linear transform for S-FGs based on SPA fixed-point messages. Mori further developed the LCT for S-FGs with non-binary alphabet, but his LCT was limited to the case where SPA fixed-point messages consist of non-zero components only. In contrast, the SPA on DE-FGs often involves messages containing zeros. We introduce an alternative LCT that is applicable for both S-FGs and DE-FGs and is without the limitation of non-zero components in SPA fixed-point messages. Currently, we provide a combinatorial characterization of the Bethe partition function in terms of finite graph covers for a class of DE-FGs satisfying an (easily checkable) condition. However, based on numerical results, we suspect that this combinatorial characterization holds more broadly. We further introduce the symmetric-subspace transform (SST), which reformulates the previously mentioned combinatorial characterizations for S-FGs and DE-FGs in terms of some integral. Both the LCT and the SST should be of interest beyond proving the main results of this thesis. 

\newpage
\begin{CJK*}{UTF8}{bsmi}
\noindent\textbf{摘要：}
\noindent 

各種棘手的計算問題可以歸結為計算由因子圖表示的適當定義的多變量函數的邊際和配分函數。對於與經典統計模型相關的計算問題，我們考慮標準因子圖（S-FG），即局部函數取非負實值的因子圖。對於與量子信息處理系統相關的計算問題，我們考慮雙邊因子圖（DE-FG），這是一類局部函數取複數值並滿足某些半正定約束的因子圖。乘積和算法（SPA）是一種高效的疊代方法。它可以用來逼近由這些因子圖表示的邊際和配分函數。

對於任意的S-FG，Yedidia等人顯示，SPA在此S-FG上的不動點對應於Bethe自由能函數的駐點。基於此，他們定義了Bethe配分函數近似方法，由該近似方法得到的函數我們簡稱為Bethe配分函數。Bethe配分函數是基於Bethe自由能函數的最小值來定義的。基於S-FG的有限圖覆蓋，Vontobel提供了Bethe配分函數的組合刻畫。這一刻畫的證明在很大程度上依賴於類型方法。

在本論文中，我們利用有限圖覆蓋來分析S-FG和DE-FG的SPA和Bethe配分函數。本論文有兩個主要貢獻。

第一個主要貢獻涉及一類特殊的S-FG，其中每個S-FG的配分函數等於非負方塊矩陣的積和式。這類S-FG的Bethe配分函數稱為Bethe積和式。Bethe積和式的組合刻畫由度數為M的Bethe積和式給出。其中度數為M的Bethe積和式是基於底層的S-FG的度數為M的圖覆蓋來定義的。在本論文中，我們證明了一個基於度數為M的Bethe積和式的，對於非負方塊矩陣積和式的下界，解決了Vontobel在[IEEE Trans. Inf. Theory, Mar. 2013]中提出的一個推測。我們還證明了一個基於度數為M的Bethe積和式的，對於非負矩陣積和式的上界。在M趨於無窮大的極限下，這些上下界會等於已知的，基於Bethe積和式的，對於非負方塊矩陣積和式的上下界。類似的結果也適用於另一種對於積和式的近似。這種近似被稱為（縮放的）Sinkhorn積和式。

第二個主要貢獻是給出了基於DE-FG的有限圖覆蓋的，對於DE-FG的Bethe配分函數的組合刻畫。一般來說，逼近DE-FG的配分函數比S-FG的配分函數更具挑戰性，因為DE-FG的配分函數是複數的和，而不僅僅是非負實數的和。此外，我們不能直接應用類型方法來證明該組合刻畫，這跟S-FG的情況不一樣。我們通過對DE-FG應用適當的環計算變換（LCT）克服了這一困難。最初，LCT由Chertkov和Chernyak引入，作為基於SPA不動點信息的應用在S-FG上的特殊線性變換。Mori進一步發展了對於非二元字母表的S-FG的LCT，但他提出的LCT僅適用於SPA不動點的信息僅包含非零分量的情況。相比之下，DE-FG上的SPA通常涉及包含零值分量的信息。基于此，我們拓展了Mori提出的LCT。我們引入了一種同時適用於S-FG和DE-FG的LCT，並且我們提出的LCT也適用於SPA不動點信息中含有非零分量的情況。目前，我們提供了對於一類滿足（易於檢查的）條件的DE-FG的Bethe配分函數的組合刻畫。然而，基於數值仿真結果，我們推測這一組合刻畫對於廣泛的DE-FG都是成立的。我們還進一步引入了對稱子空間變換（SST），這變換可以將前述S-FG和DE-FG的組合刻畫重新表述為某些積分。除了用在證明本論文的主要結果，LCT和SST應該會有更廣泛的應用。

\clearpage\end{CJK*}
	\newpage



%
        \chapter*{Acknowledgement}

        \addcontentsline{toc}{chapter}{Acknowledgement}
\noindent 
This thesis represents the culmination of a journey that would not have been possible without the invaluable support and guidance of numerous individuals. I am deeply indebted to all those who have inspired, encouraged, and supported me throughout my graduate studies.

First and foremost, I express my sincere gratitude to my advisor, Professor Pascal O. Vontobel, for his unwavering guidance and support. His distinguished contributions to information theory, especially his work on finite graph covers, have not only inspired the research results presented in this thesis but have also motivated me to pursue high-quality research throughout my graduate studies. His dedication to my development as a scientist has been immeasurable. I am particularly grateful for his insightful feedback and guidance on the finite-graph-cover-based analysis presented in this paper. He has generously shared his expertise, engaged in countless insightful discussions, and made great efforts to improve my writing and presentation skills.

I am also profoundly grateful to Professor Navin Kashyap, Professor Jonathan Leake, Yannan Wang, and Haiwen Cao for their invaluable contributions to my research. Their insightful discussions, particularly in the areas of quantum information processing and theoretical computer science, have sparked new ideas and deepened my understanding of these areas. Special thanks to Professor Navin Kashyap for sharing his unpublished notes, which have improved some of the results presented in this thesis. The discussions with Yannan Wang and Haiwen Cao were particularly helpful in shaping my understanding of Bell inequalities and finite graph covers, while Professor Jonathan Leake's insights were crucial to my work on factor graphs based on homogeneous real stable polynomials.

During the final year of my doctoral studies, I have been fortunate to collaborate with Professor John C.S. Lui, Professor Xiaojun Lin, and Bin Luo. Their introduction to the exciting fields of quantum networks and quantum computing has greatly enriched my research experience. I am especially grateful to Professor John C.S. Lui for his mentorship and support since the beginning of 2024. His guidance on quantum networks was particularly valuable.

I would also like to extend my sincere appreciation to my thesis committee members--Professor Chandra Nair, Professor Henry Pfister, Professor Raymond W. Yeung, and Professor Ying-Jun Angela Zhang--for their time, insightful feedback, and encouragement. Their expertise and feedback have been invaluable in refining this thesis. Professor Henry Pfister offered particularly insightful feedback on the symmetric subspace transform, which is one of the main results in this thesis.

Lastly, I dedicate this thesis to my parents, Xuemei Xu and Hongzhong Li, whose unwavering love and support have been my constant source of strength and motivation. My deepest thanks also go to Xiangyue Tong for her encouragement.

The work presented in this thesis was partially supported by the Research
Grants Council of the Hong Kong Special Administrative Region, China (Project Nos. CUHK 14209317, CUHK 14207518, CUHK 14208319).

	\newpage






\tableofcontents


\listoffigures
\addcontentsline{toc}{chapter}{\listfigurename}


%
        \chapter*{List of Publications}
        \addcontentsline{toc}{chapter}{List of Publications}
        The work presented in this thesis is based on the following papers.
\begin{enumerate}
	\item Y. Huang and P. O. Vontobel, ``Characterizing the Bethe partition function of double-edge factor graphs via graph covers,'' in Proc. IEEE Int. Symp. Inf. Theory, Los Angeles, CA, USA, Jun. 2020, pp. 1331–1336.

	\item Y. Huang and P. O. Vontobel, ``Characterizing the Bethe partition function of double-edge factor graphs via graph covers (extended version),'' \textbf{in preparation} for IEEE Trans. on Inf. Theory.

	\item Y. Huang and P. O. Vontobel, ``Bounding the permanent of a non-negative matrix via its degree-M Bethe and Sinkhorn permanents,'' in Proc. IEEE Int. Symp. Inf. Theory, Taipei, Taiwan, Jun. 2023, pp. 2774–2779.

	\item Y. Huang, N. Kashyap, and P. O. Vontobel, ``Degree-$M$ Bethe and Sinkhorn permanent based bounds on the permanent of a non-negative matrix,'' to appear in IEEE Trans. Inf. Theory, pp. 1-19, 2024.
\end{enumerate}
         
	\newpage


    \chapter*{Notations\markboth{NOTATIONS}{}}
    \addcontentsline{toc}{chapter}{Notations}
	\vskip 1cm \noindent
\noindent\begin{xltabular}{\textwidth}{L|R}
  \multicolumn{2}{l}{\textbf{\Large Mathematics}}\\
  \hline
  iff &if and only if\\
  \hline
  $\sZ$ & the ring of integers \\
  \hline
  $\sZp$ & the set of non-negative integers \\
  \hline
  $\sZpp$ & the set of positive integers \\
  \hline
  $\sR$ & the field of real numbers \\
  \hline
  $\sRp$ & the set of non-negative real numbers \\
  \hline
  $\sRpp$ & the set of positive real numbers \\
  \hline
  $\sC$ & the field of complex numbers \\
  \hline
  $\imagunit$ & the imaginary unit \\
  \hline
  $[L]$ & the finite set $\{1, \ldots, L\}$ for any $L \in \sZpp$ \\
  \hline
  $[S]$ & equals one if the statement $S$ is true and equals zero otherwise, for any statement $S$\\
  \hline
  $\Pi_{\set{X}}$ & the set of probability mass functions over a finite set $\set{X}$, 
  \textit{i.e.},
  \begin{align*}
    \Pi_{\set{X}} \defeq \Biggl\{ p: \set{X} \to \sR \ \Biggl| \ p(x) \geq 0 \ 
    \text{for all} \,
    x \in \set{X},\, \sum\limits_{x \in \set{X}} p(x) = 1 \Biggr\}.
  \end{align*}
  \\ \hline 
  $\set{S}_{\set{A}}$ & the set of all
  permutations over an arbitrary finite set $\set{A}$, \textit{i.e.}, 
  the set of all bijective mappings from $\set{A}$ to
  $\set{A}$
  \\ \hline  
  $\log(0)$ & $ -\infty $ 
  \\ \hline 
  $ 0 \cdot \log(0) $ & $0$
  \\ \hline
  $\matr{I}$ & the identity matrix 
  \\ \hline
  $\matr{A}$ & the matrix $ \matr{A} = \bigl( A(x, y) \bigr)_{ x \in \set{X}, y \in \set{Y} } $ associated with a function $ A(x, y) \in \sC $, where $x \in \set{X}$ is the row index of $\matr{A}$ and $y \in \set{Y}$ is the column index of $\matr{A}$ with finite sets 
  $ \set{X} $ and $ \set{Y} $
  \\ \hline
  $\matr{A}^{\tran}$ & the transpose of a complex matrix $ \matr{A} $
  \\ \hline
  $\overline{\matr{A}}$ & the componentwise complex conjugate of a complex matrix $ \matr{A} $
  \\ \hline
  $\matr{A}^{\Herm}$ & the Hermitian transpose of a complex matrix $ \matr{A} $, \ie, 
  $\matr{A}^{\Herm} = \overline{\matr{A}}^{\tran}$
  \\ \hline
  $\matr{A}(:, y)$ & the column $y$ of $\matr{A}$ 
  \\ \hline 
  $\matr{A}(x, :)$ & the row $x$ of $\matr{A}$
  \\ \hline 
  
\end{xltabular}
	\newpage

    \chapter*{Abbreviations\markboth{ABBREBIATIONS}{}}
    \addcontentsline{toc}{chapter}{Abbreviations}
	\vskip 1cm \noindent
\noindent
\begin{xltabular}{\textwidth}{L|R}
  \hline
  DE-FG & double-edge factor graph
  \\ \hline 
  DE-NFG & double-edge normal factor graph
  \\ \hline 
  LBP & loopy belief propagation 
  \\ \hline
  LCT & loop-calculus transform
  \\ \hline 
  NFG & normal factor graph
  \\ \hline 
  PSD & positive semi-definite
  \\ \hline
  S-FG & standard factor graph
  \\ \hline
  S-NFG & standard normal factor graphs
  \\ \hline
  SPA & sum-product algorithm 
  \\ \hline
  SST & symmetric-subspace transform
  \\ \hline 
\end{xltabular}
	\newpage



\setcounter{page}{0}
\pagenumbering{arabic}
\pagestyle{headings}



\chapter{Introduction}

Factor graphs are used for representing both classical statistical models~\cite{Kschischang2001, Forney2001,Loeliger2004} and quantum information processing systems~\cite{Cao2017,Alkabetz2021}. For classical statistical models, we consider standard factor graphs (S-FGs)~\cite{Kschischang2001, Forney2001,Loeliger2004}, \ie, factor graphs with local functions taking on
non-negative real values. S-FGs have a wide range of applications across various fields, including statistical mechanics (see, \eg,~\cite{Mezard2009}), coding theory (see, \eg, \cite{Richardson2008}), and communications (see, \eg, \cite{Wymeersch2007}). For representing quantities of interests in quantum information processing systems, we consider double-edges factor graphs (DE-FGs)~\cite{Cao2017,Cao2021}, 
where local functions take complex values and have to satisfy some positive semi-definiteness constraints. DE-FGs extend the concept of S-FGs and are closely related to tensor networks formalism~\cite{Cirac2009,Coecke2010,Robeva2019,Alkabetz2021}, which is a key research area in theoretical physics~\cite{Cowen2015}.

Many typical inference problems can be formulated as computing the marginals and the partition function for a suitably defined multivariate function whose factorization is depicted by an S-FG~\cite{Mezard2009} or DE-FG~\cite{Alkabetz2021}. Exact computation of these quantities is, in general, challenging for both S-FGs and DE-FGs. In order to efficiently approximate these quantities, a popular approach is to apply the sum-product algorithm (SPA), also known as loopy belief propagation (LBP), on S-FGs or DE-FGs. The SPA is a heuristic algorithm that has been successfully used for approximating the marginal functions and the partition functions for various classes of S-FGs and DE-FGs.

There are various characterizations of the pseudo-marginal functions obtained via an SPA fixed point. When an S-FG or DE-FG is cycle-free (\ie, tree-structured), the pseudo-marginal functions evaluated at the SPA fixed point are the exact marginal functions of the multivariate function represented by the S-FG or DE-FG (see, \eg,~\cite{Mezard2009}). For S-FGs or DE-FGs with cycles, the SPA often provides surprisingly good approximations of the  marginals and partition functions. However, there exists instances of cyclic S-FGs and DE-FGs where the SPA gives a poor approximations of these quantities~\cite{Murphy1999,Heskes2003,Weller2014}.

For S-FGs with cycles, Yedidia \etal~in~\cite{Yedidia2005} defined the Bethe free energy function and also the Bethe approximation of the partition function, also known as the Bethe partition function, in terms of the minimum of the Bethe free energy function. The main idea of the Bethe free energy function dates back to the work of Bethe in 1935~\cite{Bethe1935}. Then Yedidia \etal~\cite{Yedidia2005} provided a theoretical justification of the SPA by showing that the fixed points of the SPA on an S-FG correspond to the stationary points of the Bethe free energy function. Therefore, the Bethe partition function can be computed with the help of the SPA.

For S-FGs with cycles, Chertkov and Chernyak~\cite{Chertkov2006, Chernyak2007} 
introduced a concept called loop calculus, which provides an expression
that relates the partition function to the Bethe partition function. 
This expression yields an exact representation of the partition function as a finite sum of terms, with the first term being the Bethe partition function and higher-order
terms obtained by adding so-called loop corrections. Mori in~\cite{Mori2015} extended the results by Chertkov and Chernyak to S-FGs with non-binary alphabets, using concepts from information geometry.

Finite graph covers of an S-FG or DE-FG offer a different perspective to understand the SPA on these graphs. A notable observation about finite graph covers is that each cover looks locally the same as the base graph, \ie, the original S-FG or DE-FG~\cite{Koetter2003}. This implies that the SPA, a locally operating algorithm, ``cannot distinguish'' whether it is running on the base graph, or implicitly on any of the finite graph covers.

Finite graph covers serves as a useful theoretical tool for analyzing the SPA and the Bethe partition functions for S-FGs. Specifically:
\begin{itemize}
  \item In order to understand the performance of the SPA for decoding low-density parity-check (LDPC) codes, Koetter \etal~\cite{Koetter2003,Koetter2007}~studied finite graph covers of S-FGs representing LDPC codes. Ruozzi \etal~\cite{Ruozzi2009} utilized finite graph covers to characterize the behavior of the max-product algorithm for Gaussian graphical models. (For Gaussian graphical models, the max-product algorithm is essentially equivalent to the SPA.)

  \item Vontobel~\cite{Vontobel2013} provided a combinatorial characterization of the Bethe partition function in terms of finite graph covers of an S-FG, as illustrated in Fig.~\ref{fig: combinatorial chara for sfg}. In Fig.~\ref{fig: combinatorial chara for sfg}, $ \sfN $ denotes an arbitrary S-FG, $ Z(\sfN) $ is the partition function, $ \ZB(\sfN) $ is the Bethe partition function, and $ \ZBM(\sfN) $, the degree-$M$ Bethe partition function, is defined based on the degree-$M$ covers of $ \sfN $.
  Vontobel~\cite{Vontobel2013} showed that
  \begin{align*}
    \ZBM(\sfN) &= Z(\sfN), \qquad M = 1, \nonumber\\
    \limsup_{ M \to \infty} \ZBM(\sfN) &= \ZB(\sfN).
  \end{align*} 
  Similar characterizations were discussed in \cite{Altieri2017,Angelini2022}.

  \item Based on Vontobel's findings~\cite{Vontobel2013}, 
  Ruozzi \etal~in~\cite{Ruozzi2012} and Csikv\'{a}ri \etal~in~\cite{Csikvari2022} 
  proved that the partition functions are lower bounded by the Bethe partition functions
  for certain classes of S-FGs, particularly resolving a conjecture by Sudderth
  \etal~\cite{E.B.Sudderth2007} regarding log-supermodular graphical models. 
\end{itemize}

\begin{figure}[t]
  \centering
  \begin{tikzpicture}
    \pgfmathsetmacro{\Ws}{1.3};
    \begin{pgfonlayer}{main}
      \node (ZB1)     at (0,0) [] 
      {$\qquad\qquad\qquad\quad\left. \ZBM(\sfN) \right|_{M = 1} = Z(\sfN)$};
      \node (ZBM)     at (0,\Ws) [] {$\hspace{0.5 cm}\ZBM(\sfN)$};
      \node (ZBinfty) at (0,2*\Ws) {$\qquad\qquad\qquad\qquad
      \left. \ZBM(\sfN) \right|_{M \to \infty} 
      = \ZB(\sfN)$};
      \draw[]
        (ZB1) -- (ZBM) -- (ZBinfty);
    \end{pgfonlayer}
  \end{tikzpicture}
  \caption[A combinatorial characterization of the Bethe partition function.]{A combinatorial characterization of the Bethe partition function for an S-FG.
  \label{fig: combinatorial chara for sfg}}
\end{figure}

In this thesis, we also leverage finite graph covers to analyze the SPA and the relationship between the Bethe partition functions and the partition functions for both S-FGs and DE-FGs.
Our investigation yields two main contributions:
\begin{enumerate}
  \item \textbf{Degree-$M$-Bethe-permanent-based bounds for the permanent \\ of a non-negative square matrix}

  Vontobel in~\cite{Vontobel2013a} introduced a class of S-FGs such that the partition function of each S-FG is equal to the permanent of a non-negative square matrix. The combinatorial characterization of the Bethe partition function for such an S-FG is given by the degree-$M$ Bethe permanent, which is defined based on the degree-$M$ covers of the S-FG. In this thesis, we present degree-$M$-Bethe-permanent-based bounds on the permanent of a non-negative square matrix, resolving a conjecture proposed by Vontobel in~\cite[Conjecture 51]{Vontobel2013a} Detailed discussion can be found in Section~\ref{sec: degree M charactertion of Bethe and Sinkhorn permanent}.

  \item \textbf{Combinatorial characterization of Bethe partition function \\ for DE-FGs}

  For an arbitrary DE-FG, a natural question is whether
  there is a combinatorial characterization of its Bethe partition function in terms of its finite graph covers. In this thesis, we prove this characterization for DE-FGs that satisfy an easily checkable condition. However, based on the obtained numerical results, we suspect that this characterization holds more broadly.
  The detailed discussion is in Section~\ref{intro: characterize ZB for DE-FG}.
\end{enumerate}

\section[Finite-Graph-Cover-Based Bounds on the Permanent]{\texorpdfstring{Degree-$M$}{} Bethe and Sinkhorn Permanent Based Bounds on the Permanent of a Non-negative square Matrix}\label{sec: degree M charactertion of Bethe and Sinkhorn permanent}

This section establishes finite-graph-cover-based bounds for the permanent of a nonnegative square matrix. In particular, Sections~\ref{chapt:intro:permanent} and~\ref{chapt:intro:Bethe permanent} introduce the permanent of a non-negative square matrix and the Bethe permanent, respectively, with the latter being an S-FG-based approximation of the matrix permanent. Section~\ref{chapt: permanent: combinatorial chara of ZB and ZscC} presents the degree-$M$ Bethe permanent, which is defined based on degree-$M$ covers of the underlying S-FG. Section~\ref{sec:main:constributions:1} then establishes degree-$M$-Bethe-permanent-based bounds on the matrix permanent. Moreover, Sections~\ref{chapt:intro:Bethe permanent}--\ref{sec:main:constributions:1} show that similar results holds for the (scaled) Sinkhorn permanent~\cite{N.Anari2021}, which is another S-FG-based approximation of the matrix permanent.

\subsection{The Permanent of a Square Matrix}
\label{chapt:intro:permanent}

For any $n \in \sZpp$, define $[n] \defeq \{ 1, 2, \ldots, n \}$ and let
$\setS_{[n]}$ be the set of all $n!$ bijections from $[n]$ to $[n]$. Recall
the definition of the permanent of a real square matrix (see, \eg,
\cite{Minc1984}).


\begin{definition}\label{sec:1:def:13}
  Let $n \in \sZpp$ and let
  $ \mtheta \defeq \bigl( \theta(i,j) \bigr)_{\! i,j\in [n]}$ be a real matrix of
  size $n \times n$. The permanent of $ \mtheta $ is defined to be
  \begin{align}
    \perm(\mtheta)
      &\defeq
         \sum\limits_{\sigma \in \setS_{[n]}} 
           \prod\limits_{i \in [n]}
             \theta\bigl( i, \sigma(i) \bigr) .
               \label{eq:def:permanent:1} \\[-1.00cm] \nonumber
  \end{align}
  \edefinition
\end{definition}


In this thesis, we consider only the matrices $\mtheta$ such that there is at least one
$\sigma \in \setS_{[n]}$ such that
$\prod\limits_{i \in [n]} \theta\bigl( i, \sigma(i) \bigr) \neq 0$.



Computing the exact permanent is in the complexity class \#P, where \#P is the
set of the counting problems associated with the decision problems in the
class NP. Notably, already the computation of the permanent
of matrices that contain only zeros and ones is \#P-complete~\cite{Valiant1979}.


\subsection[The Bethe Permanent and the Sinkhorn Permanent]{The Bethe Permanent and the (Scaled) Sinkhorn Permanent of a Non-negative Square Matrix}
\label{chapt:intro:Bethe permanent}

While being able to compute the exact permanent of a matrix might be desirable
in some circumstances, many applications require only a good approximation of
the permanent. In this thesis, we focus on the important special
case of approximating the permanent of a non-negative real matrix, \ie, a
matrix where all entries are non-negative real numbers. In particular, we
study two graphical-model-based methods for approximating the permanent of a
non-negative square matrix. The first method is motivated by Bethe-approximation /
SPA based methods proposed in~\cite{Chertkov2008,
  Huang2009} and studied in more detail in~\cite{Vontobel2013a}. The second
method is motivated by Sinkhorn's matrix scaling algorithm~\cite{Linial2000}.


The main idea of these methods is to construct an S-FG~\cite{Forney2001, Loeliger2004} whose partition
function equals the permanent.
Consequently, the permanent can be obtained by
finding the minimum of the Gibbs free energy function associated with the
factor graph. However, because finding the minimum of the Gibbs free energy
function is in general intractable unless $n$ is small, one considers the
minimization of some functions that suitably approximate the Gibbs free energy
function.


A first approximation of the Gibbs free energy function is given by the Bethe
free energy function~\cite{Yedidia2005}; the resulting approximation of
$\perm(\mtheta)$ is called the Bethe permanent and denoted by
$ \permb(\mtheta)$. Vontobel~\cite{Vontobel2013a} showed that for the relevant
factor graph, the Bethe free energy function is convex and that the SPA
converges to its minimum.
(This is in contrast to the situation for factor
graphs in general, where the Bethe free energy function associated with a
factor graph is not convex and the SPA is not guaranteed to converge to its
minimum.) The Bethe permanent $\permb(\mtheta)$ satisfies
\begin{align}
  1 
    &\leq 
       \frac{\perm(\mtheta)}{ \permb(\mtheta)} 
     \leq 
        2^{n/2}.
        \label{eq:ratio:permanent:bethe:permanent:1}
\end{align}
The first inequality in~\eqref{eq:ratio:permanent:bethe:permanent:1} was
proven by Gurvits~\cite{Gurvits2011} with the help of an inequality of
Schrijver~\cite{Schrijver1998}. Later on, an alternative proof based on
results from real stable polynomials was given by Straszak and Vishnoi
in~\cite{Straszak2019} and Anari and Gharan in~\cite{Anari2021}. The second
inequality in~\eqref{eq:ratio:permanent:bethe:permanent:1} was conjectured by
Gurvits~\cite{Gurvits2011} and proven by Anari and
Rezaei~\cite{Anari2019}. Both inequalities
in~\eqref{eq:ratio:permanent:bethe:permanent:1} are the best possible given
only the knowledge of the size of the matrix~$\mtheta$.

A second approximation of the Gibbs free energy function is given by the
Sinkhorn free energy function; the resulting approximation of $\perm(\mtheta)$
is called the Sinkhorn permanent and denoted by $\perms(\mtheta)$. (The
Sinkhorn free energy function was formally defined in~\cite{Vontobel2014}
based on some considerations in~\cite{Huang2009}. Note that
  $\perms(\mtheta)$ appears already in~\cite{Linial2000} as an approximation
  of $\perm(\mtheta)$ without being called the Sinkhorn permanent.) The
Sinkhorn free energy function is easily seen to be convex and it can be
efficiently minimized with the help of Sinkhorn's matrix scaling
algorithm~\cite{Linial2000}, thereby motivating its name. In this thesis, we
will actually work with the scaled Sinkhorn permanent
$\permscs(\mtheta) \defeq \e^{-n} \cdot \perms(\mtheta)$, a variant of the
Sinkhorn permanent $\perms(\mtheta)$ that was introduced
in~\cite{Anari2021}. The scaled Sinkhorn permanent $\permscs(\mtheta)$
satisfies
\begin{align}
  \e^n \cdot \frac{n!}{n^n}
    &\leq 
       \frac{\perm(\mtheta)}{\permscs(\mtheta)}
     \leq 
       \e^n .
         \label{eq:ratio:permanent:scaled:Sinkhorn:permanent:1}
\end{align}
The first inequality in~\eqref{eq:ratio:permanent:scaled:Sinkhorn:permanent:1}
follows from van der Waerden's inequality proven by
Egorychev~\cite{Egorychev1981} and Falikman~\cite{Falikman1981}, whereas the
second equality follows from a relatively simple to prove inequality. Both
inequalities in~\eqref{eq:ratio:permanent:scaled:Sinkhorn:permanent:1} are the
best possible given only the knowledge of the size of the
matrix $\mtheta$. Note that the value on the LHS
of~\eqref{eq:ratio:permanent:scaled:Sinkhorn:permanent:1} is approximately
$\sqrt{2 \pi n}$.


\subsection[A Combinatorial Characterization]{A Combinatorial Characterization of the Bethe Permanent and the (Scaled) Sinkhorn Permanent}
\label{chapt: permanent: combinatorial chara of ZB and ZscC}

While the definition of $\perm(\mtheta)$ in~\eqref{eq:def:permanent:1} clearly
has a combinatorial flavor (\eg, counting weighted perfect matchings in a
complete bipartite graph with two times $n$ vertices), it is, a priori, not
clear if a combinatorial characterization can be given for $\permb(\mtheta)$
and $\permscs(\mtheta)$. However, this is indeed the case.


Namely, using the results in~\cite{Vontobel2013}, a finite-graph-cover-based
combinatorial characterization was given for $\permb(\mtheta)$
in~\cite{Vontobel2013a}. Namely,
\begin{align*}
  \permb(\mtheta)
    &= \limsup_{M \to \infty} \,
         \permbM{M}(\mtheta) ,
\end{align*}
where the degree-$M$ Bethe permanent is defined to be
\begin{align*}
  \permbM{M}(\mtheta)
    &\defeq
      \sqrt[M]{
        \bigl\langle
          \perm( \mtheta^{\uparrow \mP_{M}} )
        \bigr\rangle_{\mP_{M} \in \tPsi_{M} }
     }  , 
\end{align*}
where the angular brackets represent an arithmetic average and where
$\mtheta^{\uparrow \mP_{M}}$ represents a degree-$M$ cover of $\mtheta$
defined by a collection of permutation matrices $\mP_{M}$. (See
Definition~\ref{def:matrix:degree:M:cover:1} for the details.)


In this thesis we offer a combinatorial characterization of the scaled Sinkhorn
permanent $\permscs(\mtheta)$. Namely,
\begin{align*}
  \permscs(\mtheta)
    &= \limsup_{M \to \infty} \, 
         \permscsM{M}(\mtheta) , 
\end{align*}
where $\permscs(\mtheta)$ was defined
in~\cite{N.Anari2021}, and where
the degree-$M$ scaled Sinkhorn permanent is defined to be \\[-0.35cm]
\begin{align}
  \permscsM{M}(\mtheta)
    &\defeq 
       \sqrt[M]{
         \perm
           \bigl(
             \langle
               \mtheta^{\uparrow \mP_{M}}
             \rangle_{\mP_{M} \in \tPsi_{M} }
           \bigr)
       } \nonumber \\
    &= \sqrt[M]{
         \perm( \mtheta \otimes \mU_{M,M})
       }  ,\nonumber
\end{align}
where $\otimes$ denotes the Kronecker product of two matrices and where
$ \mU_{M,M} $ is the matrix of size $M \times M$ with all entries equal to
$1/M$.  (See Definition~\ref{def:Sinkhorn:degree:M:cover:1} and
Proposition~\ref{sec:1:prop:14} for the details.)~\footnote{Note that the quantity
  $\sqrt[M]{ \perm( \mtheta \otimes \mU_{M,M}) }$ appears in the literature at
  least as early as~\cite{Bang1976,Friedland1979}. These papers show that for
  $M = 2^s$, $s \in \sZp$, it holds that
  $\sqrt[M]{ \perm( \mtheta \otimes \mU_{M,M}) } \leq \perm(\mtheta)$, which,
  in our notation, implies $1 \leq \perm(\mtheta) / \permscsM{M}(\mtheta)$ for
  $M = 2^s$, $s \in \sZp$.}%
~\footnote{Barvinok~\cite{Barvinok2010} considered combinatorial expressions
  in the same spirit as $\sqrt[M]{ \perm( \mtheta \otimes \mU_{M,M}) }$, but
  the details and the limit are different.}


\begin{figure}[t]
  \begin{center}
    \hspace{-2.25cm}\begin{tikzpicture}
\pgfmathsetmacro{\Ws}{1.3};
\begin{pgfonlayer}{main}
  \node (ZB1)     at (0,0) [] {$\qquad\qquad\qquad\quad\left. \ZBn{M}(\vtheta) \right|_{M = 1} = \perm(\vtheta)$};
  \node (ZBM)     at (0,\Ws) [] {$\ZBn{M}(\vtheta)$};
  \node (ZBinfty) at (0,2*\Ws) {$\qquad\qquad\qquad\qquad
  \left. \ZBn{M}(\vtheta) \right|_{M \to \infty} 
  = \permb(\vtheta)$};
  \draw[]
    (ZB1) -- (ZBM) -- (ZBinfty);
\end{pgfonlayer}
\end{tikzpicture}
    \hspace{-1.75cm}\begin{tikzpicture}
\pgfmathsetmacro{\Ws}{1.3};
\begin{pgfonlayer}{main}
  \node (ZscS1)     at (0,0) [] {$\qquad\qquad\qquad\quad\left. 
                               \permscsM{M}(\vtheta) \right|_{M = 1} 
                                 = \perm(\vtheta)$};
  \node (ZscSM)     at (0,\Ws) [] {$\permscsM{M}(\vtheta)$};
  \node (ZscSinfty) at (0,2*\Ws) {$\qquad\qquad\qquad\qquad
  \left. \permscsM{M}(\vtheta) \right|_{M \to \infty} 
  = \permscs(\vtheta)$};
  \draw[]
    (ZscS1) -- (ZscSM) -- (ZscSinfty);
\end{pgfonlayer}
\end{tikzpicture}
  \end{center}
  \caption[Combinatorial characterizations of the 
    Bethe and Sinkhorn permanents]{Combinatorial characterizations of the 
    Bethe and scaled Sinkhorn permanents.}
  \label{fig:combinatorial:characterization:1}
  \vspace{0.00cm} 
\end{figure}



These combinatorial characterizations of the Bethe and scaled Sinkhorn
permanent are summarized in Fig.~\ref{fig:combinatorial:characterization:1}.


Besides the quantities $\permbM{M}(\mtheta)$ and $\permscsM{M}(\mtheta)$ being
of inherent interest, they are particularly interesting toward understanding
the ratios 
\begin{align*}
  \frac{ \perm(\mtheta) }{ \permb(\mtheta) }, \qquad 
  \frac{ \perm(\mtheta) }{ \permscs(\mtheta) }
\end{align*}
for general non-negative square matrices or for
special classes of non-negative square matrices.


For example, as discussed in~\cite[Sec.~VII]{Vontobel2013}, one can consider
the equation
\begin{align*}
  \hspace{-0.25cm}
  \frac{\perm(\mtheta)}{\permb(\mtheta)}
    &= \frac{\perm(\mtheta)}{\permbM{2}(\mtheta)}
       \!\cdot\!
       \frac{\permbM{2}(\mtheta) }{\permbM{3}(\mtheta)}
       \!\cdot\!
       \frac{\permbM{3}(\mtheta) }{\permbM{4}(\mtheta)}
       \!\cdots .
\end{align*}
If one can give bounds on the ratios appearing on the RHS of this equation,
then one obtains bounds on the ratio on the LHS of this equation.



Alternatively, one can consider the equation
\begin{align}
  \underbrace{
    \frac{\perm(\mtheta)}
         {\permb(\mtheta)}
  }_{\text{\ding{192}}}
    &= \underbrace{
         \frac{\perm(\mtheta)}
              {\permbM{2}(\mtheta)}
       }_{\text{\ding{193}}}
       \cdot
       \underbrace{
         \frac{\permbM{2}(\mtheta)}
              {\permb(\mtheta)}
       }_{\text{\ding{194}}} .
         \label{eq:perm:ratios:1}
\end{align}
Based on numerical and analytical results, Ng and Vontobel~\cite{KitShing2022}
concluded that it appears that for many classes of matrices of interest, the
ratio~\ding{193} behaves similarly to the ratio~\ding{194}. Therefore,
understanding the ratio~\ding{193} goes a long way toward understanding the
ratio~\ding{192}. (Note that the ratio~\ding{193} is easier to analyze than
the ratio~\ding{192} for some classes of matrices of interest.)


Results similar in spirit to the above results were given for the Bethe
approximation of the partition function of other graphical models (see, \eg,
\cite{Ruozzi2012,Vontobel2016,Csikvari2022}).

\subsection{Main Contributions}
\label{sec:main:constributions:1}


Let $n, M \in \sZpp$, let $\mtheta$ be a fixed non-negative matrix of size
$n \times n$, and let $\GamMn$ be the set of doubly stochastic matrices of
size $n \times n$ where all entries are integer multiples of $1/M$. Moreover,
for $\vgam \in \GamMn$, let
$\mtheta^{ M \cdot \vgam } \defeq \prod\limits_{i,j \in [n]} \bigl( \theta(i,j)
\bigr)^{\! M \cdot \gamma(i,j)}$. (Note that in this expression, all exponents
are non-negative integers.) Throughout this thesis, we consider $ \mtheta $
such that $ \perm(\mtheta) $, $ \permbM{M}(\mtheta) $, and
$ \permscsM{M}(\mtheta) $ are positive real-valued.


\begin{lemma}
  \label{lem: expression of permanents w.r.t. C}

  There are collections of non-negative real numbers \\
  \mbox{$\bigl\{ \CM{n}( \vgam ) \bigr\}_{\vgam \in \GamMn}$,\!
  $\bigl\{ \CBM{n}( \vgam ) \bigr\}_{\vgam \in \GamMn}$,\!
  $\bigl\{ \CscSgen{M}{n}( \vgam ) \bigr\}_{\vgam \in \GamMn}$} such that
  \begin{align}
    \bigl( \perm(\mtheta) \bigr)^{\! M} 
      &= \sum\limits_{\vgam \in \GamMn} 
           \mtheta^{ M \cdot \vgam }
           \cdot
           \CM{n}( \vgam ) , 
             \label{sec:1:eqn:43} \\
    \bigl( \permbM{M}(\mtheta) \bigr)^{\! M} 
      &= \sum\limits_{\vgam \in \GamMn} 
           \mtheta^{ M \cdot \vgam }
           \cdot
           \CBM{n}( \vgam ) ,
             \label{sec:1:eqn:190}  \\
    \bigl( \permscsM{M}(\mtheta) \bigr)^{\! M}
      &= \sum\limits_{\vgam \in \GamMn}
           \mtheta^{ M \cdot \vgam } \cdot \CscSgen{M}{n}( \vgam ).
             \label{sec:1:eqn:168} \\[-0.75cm] \nonumber
  \end{align}
  \elemma
\end{lemma}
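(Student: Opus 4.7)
The three identities follow by a common bookkeeping template: expand each left-hand side as a polynomial in the entries of $\mtheta$, and collect the resulting monomials by their multi-exponent, which in every case will turn out to be of the form $M \cdot \vgam$ with $\vgam \in \GamMn$. The non-negative coefficient in front of $\mtheta^{M \cdot \vgam}$ is then a (rescaled) count of how many terms produced that type.

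For~\eqref{sec:1:eqn:43}, the plan is to expand
\begin{align*}
  \bigl( \perm(\mtheta) \bigr)^{\! M}
    &= \sum_{(\sigma_{1},\ldots,\sigma_{M}) \in \setS_{[n]}^{M}}
         \, \prod_{m \in [M]} \prod_{i \in [n]}
             \theta\bigl(i,\sigma_{m}(i)\bigr),
\end{align*}
and to attach to each tuple the type $\vgam$ with $\gamma(i,j) \defeq \tfrac{1}{M} \cdot \bigl|\{m \in [M] : \sigma_{m}(i) = j\}\bigr|$. Bijectivity of each $\sigma_{m}$ gives row sum $1$ immediately, and bijectivity applied column-wise gives column sum $1$, so $\vgam \in \GamMn$; the inner product then equals $\mtheta^{M \cdot \vgam}$. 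Setting $\CM{n}(\vgam)$ to the number of tuples of type $\vgam$ finishes this case.

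For~\eqref{sec:1:eqn:190} I would start from $\bigl(\permbM{M}(\mtheta)\bigr)^{\! M} = \bigl\langle \perm(\mtheta^{\uparrow \mP_{M}}) \bigr\rangle_{\mP_{M} \in \tPsi_{M}}$ and expand the inner permanent as a sum over bijections $\pi$ of $[nM]$. Since every $(i,j)$-block of $\mtheta^{\uparrow \mP_{M}}$ equals $\theta(i,j)$ times a permutation matrix, each nonzero term is indexed by a bijection $\pi$ that picks a specific entry of exactly one $(i,j)$-block for each of its $nM$ matched pairs. The block counts $M \cdot \gamma(i,j) \defeq \bigl|\{k : k \text{ lies in row-block } i,\ \pi(k) \text{ lies in column-block } j\}\bigr|$ form a non-negative integer matrix whose row and column sums each equal $M$ (each row-block and each column-block of the cover has exactly $M$ rows/columns), so $\vgam \in \GamMn$ and the matched product is $\mtheta^{M \cdot \vgam}$. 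Averaging over $\mP_{M} \in \tPsi_{M}$ and defining $\CBM{n}(\vgam)$ to be $1/|\tPsi_{M}|$ times the number of valid pairs $(\mP_{M},\pi)$ of type $\vgam$ yields~\eqref{sec:1:eqn:190}. Case~\eqref{sec:1:eqn:168} is handled analogously starting from the identity $\bigl(\permscsM{M}(\mtheta)\bigr)^{\! M} = \perm(\mtheta \otimes \mU_{M,M})$ stated just above: every bijection $\pi$ of $[nM]$ contributes, with block counts again giving some $\vgam \in \GamMn$, a product equal to $M^{-nM} \cdot \mtheta^{M \cdot \vgam}$; collecting yields $\CscSgen{M}{n}(\vgam) \defeq M^{-nM} \cdot |\{\pi : \pi \text{ has type } \vgam\}|$.

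The argument is almost entirely a bookkeeping exercise, and the main (and only) technical point common to all three identities is verifying that the multi-exponents produced by each expansion are in fact $M$ times a doubly-stochastic matrix with entries in $\tfrac{1}{M} \sZp$. This follows from the row/column bijectivity of the relevant permutations combined, in~\eqref{sec:1:eqn:190} and~\eqref{sec:1:eqn:168}, with the $M \times M$ block structure of the cover $\mtheta^{\uparrow \mP_{M}}$ and of the Kronecker product $\mtheta \otimes \mU_{M,M}$, respectively.
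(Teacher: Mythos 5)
Your proposal is correct for the lemma as stated, and for the first identity~\eqref{sec:1:eqn:43} it coincides with the paper's argument: expand $(\perm(\mtheta))^M$ over $M$-tuples of permutations, assign each tuple its type $\vgam = \langle \mP_{\sigma_m}\rangle_{m\in[M]} \in \GamMn$, and let $\CM{n}(\vgam)$ count the tuples of that type.

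For~\eqref{sec:1:eqn:190} and~\eqref{sec:1:eqn:168} you take a genuinely different route. The paper does not expand anything directly: it obtains~\eqref{sec:1:eqn:190} by invoking~\cite[Lemma~29]{Vontobel2013} together with an argument that the set of $M$-cover lift-realizable pseudo-marginals $\Gamma'_{M,n}(\mtheta)$ equals $\Gamma_{M,n}(\mtheta)$, and it obtains~\eqref{sec:1:eqn:168} by citing~\cite[Theorem~2.1]{Friedland1979}. Your direct expansion of $\perm(\mtheta^{\uparrow\mP_M})$ and $\perm(\mtheta\otimes\mU_{M,M})$ over bijections of $[nM]$, with block counts giving the type, is more elementary and self-contained, and the key verification (row/column block sums both equal $M$, hence $\vgam\in\GamMn$) is carried out correctly. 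In the scaled Sinkhorn case your count $M^{-nM}\cdot|\{\pi:\pi\text{ has type }\vgam\}| = M^{-nM}\cdot (M!)^{2n}/\prod_{i,j}(M\cdot\gamma(i,j))!$ even reproduces the paper's explicit coefficient from Definition~\ref{sec:1:lem:43} exactly. One caveat: the paper's downstream results (Theorem~\ref{thm: inequalities for the coefficients} and the recursions in Section~\ref{sec:recursions:coefficients:1}) rely on the \emph{specific} closed form of $\CBM{n}(\vgam)$ given in Definition~\ref{sec:1:lem:43}, namely $(M!)^{2n-n^2}\cdot\prod_{i,j}\bigl(M-M\gamma(i,j)\bigr)!/\bigl(M\gamma(i,j)\bigr)!$. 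Your argument defines $\CBM{n}(\vgam)$ only implicitly as the average, over $\mP_M\in\tPsi_M$, of the number of compatible bijections of type $\vgam$; identifying that average with the closed form is essentially the content of Vontobel's Lemma~29 and would require additional work. This does not affect the validity of your proof of the existence statement itself.
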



With the help of the inequalities
  in~\eqref{eq:ratio:permanent:bethe:permanent:1}
  and~\eqref{eq:ratio:permanent:scaled:Sinkhorn:permanent:1}, along with some
  results that are established in this thesis, we can make the following
  statement about the coefficients $\CM{n}( \vgam )$, $\CBM{n}( \vgam )$, and
  $\CscSgen{M}{n}( \vgam )$ in Lemma~\ref{lem: expression of permanents
    w.r.t. C}.


\begin{theorem}
  \label{thm: inequalities for the coefficients}

  For every $\vgam \in \GamMn$, the coefficients $\CM{n}( \vgam )$,
  $\CBM{n}( \vgam )$, and $\CscSgen{M}{n}( \vgam )$ satisfy
  \begin{align}
    1 
      &\leq 
         \frac{ \CM{n}(\vgam) }{\CBM{n}(\vgam) }
       \leq
         \Bigl( 2^{n/2} \Bigr)^{\! M-1} ,
           \label{sec:1:eqn:58} \\
    \Biggl( \frac{M^{M}}{M!} \Biggr)^{\!\!\! n}
    \cdot
    \biggl(
      \frac{ n! }{ n^{n} }
    \biggr)^{\!\! M-1} 
      &\leq
         \frac{\CM{n}(\vgam)}{\CscSgen{M}{n}(\vgam)} 
       \leq 
         \Biggl( \frac{M^{M}}{M!} \Biggr)^{\!\! n}.
           \label{sec:1:eqn:180}
  \end{align}
  \etheorem
\end{theorem}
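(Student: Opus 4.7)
The theorem asks for pointwise comparisons of coefficients appearing in the polynomial identities of Lemma~\ref{lem: expression of permanents w.r.t. C}. My plan is to reduce each bound to a combination of (i) a combinatorial inequality at the level of individual finite graph covers, and (ii) the known global bounds~\eqref{eq:ratio:permanent:bethe:permanent:1} and~\eqref{eq:ratio:permanent:scaled:Sinkhorn:permanent:1}, applied at two different scales and telescoped.

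The starting point is to expand $\perm(\mtheta^{\uparrow \mP_{M}}) = \sum_{\vgam \in \GamMn} c_{\mP_{M}}(\vgam)\, \mtheta^{M\vgam}$, where $c_{\mP_{M}}(\vgam)$ counts perfect matchings of the cover graph whose ``type'' is $\vgam$; averaging identifies $\CBM{n}(\vgam) = \langle c_{\mP_{M}}(\vgam)\rangle_{\mP_{M}}$. A parallel block expansion of $\perm(\mtheta \otimes \mU_{M,M})$ gives an explicit multinomial formula for $\CscSgen{M}{n}(\vgam)$. The key observation behind~\eqref{sec:1:eqn:58} is that the \emph{trivial} cover $\mP_{M} = (\mathrm{id}, \ldots, \mathrm{id})$ produces the block-diagonal matrix $\mtheta^{\oplus M}$, whose perfect matchings are in bijection with $M$-tuples of permutations of $\mtheta$; hence $c_{\mathrm{triv}}(\vgam) = \CM{n}(\vgam)$. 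The lower bound $\CBM{n}(\vgam) \leq \CM{n}(\vgam)$ then follows once I show that the trivial cover maximizes $c_{\mP_{M}}(\vgam)$ pointwise in $\mP_{M}$, which I would establish by constructing an explicit injection from matchings of any cover of type $\vgam$ into the set of $M$-tuples of permutations of the same type.

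For the upper bound of~\eqref{sec:1:eqn:58}, I would combine the cover-level inequality $\perm(\mu) \leq 2^{(nM)/2}\,\permb(\mu)$ applied to $\mu = \mtheta^{\uparrow \mP_{M}}$ with the multiplicativity $\permb(\mtheta^{\uparrow \mP_{M}}) = \permb(\mtheta)^{M}$ (valid for the permanent S-FG because its Bethe free energy is convex, see~\cite{Vontobel2013a}) and the base-level bound $\perm(\mtheta) \leq 2^{n/2}\,\permb(\mtheta)$. Telescoping across these two scales saves exactly one factor of $2^{n/2}$ relative to a single application of the cover inequality, producing the $(2^{n/2})^{M-1}$ exponent. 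For~\eqref{sec:1:eqn:180} the reasoning is parallel, using~\eqref{eq:ratio:permanent:scaled:Sinkhorn:permanent:1} together with the explicit multinomial form of $\CscSgen{M}{n}(\vgam)$; the factor $(M^{M}/M!)^{n}$ enters as the Stirling normalization introduced by the $\mU_{M,M}$ averaging, and the additional factor $(n!/n^{n})^{M-1}$ on the lower side is the van der Waerden correction at base scale.

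The main obstacle is promoting the real-valued inequalities on $\perm$, $\permbM{M}$, $\permscsM{M}$ to pointwise coefficient inequalities. Elementary polynomial counterexamples show that $P(\mtheta) \leq c\, Q(\mtheta)$ for all non-negative $\mtheta$ does not in general imply coefficient-wise dominance of $P$ by $c Q$, so the lifting cannot proceed by analytical substitution alone. The resolution goes through the cover-by-cover analysis above: since $\CBM{n}(\vgam)$ is an average of cover data and $\CscSgen{M}{n}(\vgam)$ has a closed form, pointwise identities established at the single-cover level survive averaging. The trickiest sub-step is the reverse-direction argument required for the upper bound; I expect it to be completed either by identifying $\CBM{n}(\vgam)$ with the Bethe permanent of an associated $n \times n$ matrix (to which~\eqref{eq:ratio:permanent:bethe:permanent:1} applies directly) or by a probe-matrix asymptotic that isolates $\CM{n}(\vgam)/\CBM{n}(\vgam)$ as the limiting value of $\perm(\mtheta)/\permbM{M}(\mtheta)$ along a suitable family of matrices concentrating on $\vgam$.
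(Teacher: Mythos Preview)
Your plan has two genuine gaps, one for each direction of~\eqref{sec:1:eqn:58}.

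For the lower bound you propose to show that the trivial cover maximizes $c_{\mP_{M}}(\vgam)$ over all $\mP_{M}$, via an explicit injection. This is precisely the \emph{stronger} conjecture of~\cite[Conjecture~52]{Vontobel2013a}, and the paper explicitly notes that the claimed proof in~\cite{Smarandache2015} is incomplete. The injection is not as innocent as it looks: a perfect matching of a cover of type~$\vgam$ does determine, for each block $(i,j)$, a partial matching of size $M\gamma(i,j)$ inside an $M\times M$ permutation matrix, but these partial matchings need not assemble into $M$ permutations of $[n]$ in a canonical, injective way. The paper does \emph{not} prove this stronger statement; it only proves the averaged inequality $\CM{n}(\vgam)\geq\CBM{n}(\vgam)$.

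For the upper bound your ``telescoping across two scales'' is too vague to be an argument. Even granting the multiplicativity $\permb(\mtheta^{\uparrow\mP_{M}})=\permb(\mtheta)^{M}$, the chain you wrote yields $\perm(\mtheta^{\uparrow\mP_{M}})\leq 2^{nM/2}\permb(\mtheta)^{M}\leq\perm(\mtheta)^{M}$, which is the lower bound again, not the upper one; there is no visible mechanism that produces the saved factor $2^{n/2}$ in the correct direction. And as you yourself flag, even a correct polynomial inequality in $\mtheta$ does not lift to a coefficient-wise inequality, so the probe-matrix route would need a separate argument that the ratio $\CM{n}/\CBM{n}$ is actually \emph{attained} in the limit along your probe family.

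The paper's route avoids both obstacles by working with explicit closed forms and a recursion in $M$. First, $\CBM{n}(\vgam)$ is not merely an average: it has the closed form $(M!)^{2n-n^{2}}\prod_{i,j}\frac{(M-M\gamma(i,j))!}{(M\gamma(i,j))!}$ (Definition~\ref{sec:1:lem:43}, coming from~\cite[Lemma~29]{Vontobel2013}). From this and the analogous closed forms one derives Pascal-triangle-style recursions (Lemmas~\ref{sec:1:lem:29:copy:2}--\ref{sec:1:lem:30}): peeling off one permutation $\sigma_{1}\in\setS_{[n]}(\vgam)$ relates the $M$-coefficient to a sum of $(M\!-\!1)$-coefficients, with a multiplicative correction factor that equals $1$ for $C$, equals $1/\perm(\hvgamRCp)$ for $C_{\mathrm{B}}$, and equals $1/\bigl(\chi(M)^{n}\perm(\vgam)\bigr)$ for $C_{\mathrm{scS}}$. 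The bounds~\eqref{eq:ratio:permanent:bethe:permanent:1} and van der Waerden are then applied \emph{not} to $\mtheta$ or its covers, but to the auxiliary matrix $\hvgamRCp$ (for which one checks $\permb(\hvgamRCp)=1$, giving $1\leq\perm(\hvgamRCp)\leq 2^{n/2}$, Lemma~\ref{sec:1:lem:17}) and to $\vgam$ itself (giving $n!/n^{n}\leq\perm(\vgam)\leq 1$, Lemma~\ref{sec:1:lem:36}). Induction on $M$ with base case $M=1$ (all coefficients equal $1$) then gives both inequalities in~\eqref{sec:1:eqn:58} and~\eqref{sec:1:eqn:180} directly at the coefficient level, with no polynomial-to-coefficient lifting needed.
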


Combining Lemma~\ref{lem: expression of permanents w.r.t. C} and
Theorem~\ref{thm: inequalities for the coefficients}, we can make the
following statement.


\begin{theorem}
  \label{th:main:permanent:inequalities:1}
  \label{TH:MAIN:PERMANENT:INEQUALITIES:1}
  It holds that
  \begin{align}
    1 
      &\leq
         \frac{\perm(\mtheta)}{\permbM{M}(\mtheta)}
       \leq
         \Bigl( 2^{n/2} \Bigr)^{\!\! \frac{M-1}{M}}
           \label{SEC:1:EQN:147}, \\
    \frac{M^{n}}{(M!)^{n/M}}
    \cdot
    \biggl( \frac{n!}{n^{n}} \biggr)^{\!\! \frac{M-1}{M}}
      &\leq
         \frac{ \perm(\mtheta) }{ \permscsM{M}(\mtheta) }
       \leq
         \frac{M^{n}}{(M!)^{n/M}}.
           \label{sec:1:eqn:200}
  \end{align}
  Note that in the limit $M \to \infty$ we recover the inequalities
  in~\eqref{eq:ratio:permanent:bethe:permanent:1}
  and~\eqref{eq:ratio:permanent:scaled:Sinkhorn:permanent:1}.
  \etheorem
\end{theorem}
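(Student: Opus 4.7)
The plan is to deduce Theorem~\ref{th:main:permanent:inequalities:1} by a direct termwise comparison of the three representations supplied by Lemma~\ref{lem: expression of permanents w.r.t. C}, using the pointwise coefficient bounds supplied by Theorem~\ref{thm: inequalities for the coefficients}, followed by an extraction of $M$-th roots. All the hard combinatorial work is already packaged into those two statements, so the proof is essentially algebraic bookkeeping.

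First, I would fix $n, M \in \sZpp$ and a non-negative $\mtheta$, and observe that every term $\mtheta^{M \cdot \vgam}$ appearing in~\eqref{sec:1:eqn:43}--\eqref{sec:1:eqn:168} is non-negative, and that the coefficients $\CM{n}(\vgam)$, $\CBM{n}(\vgam)$, $\CscSgen{M}{n}(\vgam)$ are non-negative as well. Consequently, any inequality between two of the coefficients that holds uniformly in $\vgam \in \GamMn$ transfers verbatim to an inequality between the corresponding sums. The crucial structural feature of~\eqref{sec:1:eqn:58} and~\eqref{sec:1:eqn:180} is precisely that the bounding constants on both sides depend only on $n$ and $M$, not on~$\vgam$; this is what permits them to be pulled out of the sum.

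Second, for the Bethe inequality~\eqref{SEC:1:EQN:147}, I would use the lower bound $\CM{n}(\vgam) \geq \CBM{n}(\vgam)$ of~\eqref{sec:1:eqn:58} to conclude that $\bigl(\perm(\mtheta)\bigr)^{\! M} \geq \bigl(\permbM{M}(\mtheta)\bigr)^{\! M}$, and the upper bound $\CM{n}(\vgam) \leq (2^{n/2})^{M-1} \CBM{n}(\vgam)$ to conclude that $\bigl(\perm(\mtheta)\bigr)^{\! M} \leq (2^{n/2})^{M-1} \bigl(\permbM{M}(\mtheta)\bigr)^{\! M}$. Taking $M$-th roots and recalling that $\perm(\mtheta)$, $\permbM{M}(\mtheta)$ are assumed positive yields~\eqref{SEC:1:EQN:147}. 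For the scaled Sinkhorn inequality~\eqref{sec:1:eqn:200}, the argument is identical with~\eqref{sec:1:eqn:180} in place of~\eqref{sec:1:eqn:58}: the $\vgam$-free upper bound $(M^{M}/M!)^{n}$ factors out of the sum and survives the $M$-th root as $M^{n}/(M!)^{n/M}$, while the $\vgam$-free lower bound $(M^{M}/M!)^{n} \cdot (n!/n^{n})^{M-1}$ produces the additional factor $(n!/n^{n})^{(M-1)/M}$ after the root is taken.

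Finally, to verify the asymptotic remark, I would let $M \to \infty$ along the limsup: by the combinatorial characterizations recalled in Section~\ref{chapt: permanent: combinatorial chara of ZB and ZscC}, $\permbM{M}(\mtheta) \to \permb(\mtheta)$ and $\permscsM{M}(\mtheta) \to \permscs(\mtheta)$; the exponent $(M-1)/M \to 1$ converts $(2^{n/2})^{(M-1)/M}$ into $2^{n/2}$, recovering~\eqref{eq:ratio:permanent:bethe:permanent:1}; and Stirling's formula $(M!)^{1/M} \sim M/\e$ gives $M^{n}/(M!)^{n/M} \to \e^{n}$, together with $(n!/n^{n})^{(M-1)/M} \to n!/n^{n}$, recovering~\eqref{eq:ratio:permanent:scaled:Sinkhorn:permanent:1}. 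There is no genuine obstacle at this stage; the only point requiring care is to confirm that the coefficient-ratio bounds of Theorem~\ref{thm: inequalities for the coefficients} are genuinely uniform in $\vgam$, since that uniformity is what makes the pulling-out-of-the-sum step rigorous. Once that is noted, the whole theorem drops out in a few lines.
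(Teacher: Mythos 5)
Your proposal is correct and follows essentially the same route as the paper: the paper's proof likewise substitutes the uniform-in-$\vgam$ coefficient bounds of Theorem~\ref{thm: inequalities for the coefficients} termwise into the expansions of Lemma~\ref{lem: expression of permanents w.r.t. C} and then takes $M$-th roots. Your explicit remark that the uniformity of the bounds in $\vgam$ is what licenses pulling the constants out of the sum is exactly the (implicit) crux of the paper's argument.
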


Finally, we discuss the following asymptotic characterization of the
coefficients $\CM{n}( \vgam )$, $\CBM{n}( \vgam )$, and
$\CscSgen{M}{n}( \vgam )$.


\begin{proposition}
  \label{prop:coefficient:asymptotitic:characterization:1}
  
  Let $\vgam \in \GamMn$. It holds that
  \begin{align}
    \CM{n}( \vgam )
      &= \exp\bigl( M \cdot \HG'(\vgam) + o(M) \bigr) ,
           \label{sec:1:eqn:125} \\
    \CBM{n}( \vgam )
      &= \exp\bigl( M \cdot \HBthe(\vgam) + o(M) \bigr) ,
           \label{sec:1:eqn:206}  \\
    \CscSgen{M}{n}( \vgam )
      &= \exp\bigl( M \cdot \HscSthe(\vgam) + o(M) \bigr) ,
           \label{sec:1:eqn:207}
  \end{align}
  where $\HG'(\vgam)$, $\HBthe(\vgam)$, and $\HscSthe(\vgam)$ are the
  modified Gibbs entropy function, the Bethe entropy function, and the
  scaled Sinkhorn entropy function, respectively, as defined in Section~\ref{sec:free:energy:functions:1}.

  \vspace{-0.02 cm}
  \eproposition
\end{proposition}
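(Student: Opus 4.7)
The plan is to exhibit, for each of the three coefficients, an explicit combinatorial formula counting configurations whose base-graph ``type'' is $\vgam$, and then to apply Stirling's approximation to the resulting product of factorials. The unifying principle is that in each of \eqref{sec:1:eqn:43}--\eqref{sec:1:eqn:168}, the coefficient of $\mtheta^{M \cdot \vgam}$ counts a set of discrete objects (respectively $M$-tuples of permutations of $[n]$, lifted permutations on a degree-$M$ cover, or permutations of $[nM]$), so $\log \CM{n}(\vgam)$, $\log \CBM{n}(\vgam)$, and $\log \CscSgen{M}{n}(\vgam)$ each reduce to sums of factorials whose arguments are proportional to $M$ times a local marginal of $\vgam$. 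Stirling converts each factorial into its leading entropy contribution, and after cancellation of the $M \log M$ pieces the entropy rate appropriate to each coefficient survives, with the remainder absorbed into a term of order $O(\log M) = o(M)$.

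For the scaled Sinkhorn coefficient \eqref{sec:1:eqn:207}, the starting point is the identity $\bigl(\permscsM{M}(\mtheta)\bigr)^{M} = \perm(\mtheta \otimes \mU_{M,M})$ stated just after \eqref{sec:1:eqn:168}. Expand this permanent over $\sigma \in \setS_{[nM]}$, identify $[nM]$ with $[n] \times [M]$, and group the terms by the matrix with entries $N_{ij}(\sigma) = \bigl| \{ k \in [M] : \sigma(i,k) \text{ has first coordinate } j \} \bigr| = M \gamma(i,j)$. A short two-step count (choose, for each source row $i$, a partition of $\{(i,1), \ldots, (i,M)\}$ according to the target row, and then independently permute the incoming positions within each target row) yields
\begin{align*}
  \CscSgen{M}{n}(\vgam)
    &= \frac{(M!)^{2n}}{M^{nM} \cdot \prod_{i,j \in [n]} \bigl( M \gamma(i,j) \bigr)!} ,
\end{align*}
and Stirling then produces \eqref{sec:1:eqn:207} with rate matching $\HscSthe(\vgam)$ as defined in Section~\ref{sec:free:energy:functions:1}.

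For \eqref{sec:1:eqn:125}, expand $\bigl(\perm(\mtheta)\bigr)^{M}$ over ordered $M$-tuples $(\sigma_{1}, \ldots, \sigma_{M}) \in \setS_{[n]}^{M}$; then $\CM{n}(\vgam)$ counts those tuples with $\sum_{k} \mP_{\sigma_{k}} = M \vgam$, equivalently the number of $\{0,1\}$-valued three-way arrays on $[n]^{2} \times [M]$ with line sums $(1, 1, M \gamma(i,j))$. I would estimate this type-class count by sandwiching it between a direct row-by-row multinomial upper bound and the same multinomial divided by a polynomial-in-$M$ factor coming from the Birkhoff--von Neumann consistency conditions, so that its logarithm equals $M \cdot \HG'(\vgam) + O(\log M)$, reproducing the modified Gibbs rate of Section~\ref{sec:free:energy:functions:1}.

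The Bethe case \eqref{sec:1:eqn:206} is the main obstacle, and I would attack it via the cover identity $\bigl(\permbM{M}(\mtheta)\bigr)^{M} = \bigl\langle \perm(\mtheta^{\uparrow \mP_{M}}) \bigr\rangle_{\mP_{M} \in \tPsi_{M}}$. Expanding each cover permanent over the permutations of the $nM$ cover variables, projecting each such permutation to the base graph to read off its base type $\vgam$, and exchanging the sum over covers $\mP_{M}$ with the sum over lifted permutations expresses $\CBM{n}(\vgam)$ as an averaged count of (cover, lifted permutation) pairs of prescribed base type. The delicate step is decoupling this count into independent local contributions at the base-graph variable and factor nodes, so that Stirling on each local multinomial reproduces exactly the variable-wise and factor-wise entropy terms that constitute $\HBthe(\vgam)$. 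I expect this decoupling to mirror the cover-based enumeration arguments in~\cite{Vontobel2013,Vontobel2013a}, with the $o(M)$ error coming from Stirling's polynomial prefactor together with the polynomially-many cover degrees of freedom.
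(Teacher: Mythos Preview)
Your treatment of $\CscSgen{M}{n}(\vgam)$ is fine and matches the paper: the explicit formula $\CscSgen{M}{n}(\vgam) = M^{-nM}(M!)^{2n}/\prod_{i,j}(M\gamma(i,j))!$ is given in Definition~\ref{sec:1:lem:43}, and Stirling gives the rate $\HscSthe(\vgam)$ directly. For $\CBM{n}(\vgam)$ you are working harder than necessary: Definition~\ref{sec:1:lem:43} already supplies the closed form $\CBM{n}(\vgam) = (M!)^{2n-n^2}\prod_{i,j}(M-M\gamma(i,j))!/(M\gamma(i,j))!$ (with the cover-enumeration derivation outsourced to~\cite{Vontobel2013}), and Stirling on this product yields $\HBthe(\vgam)$ after the $M\log M$ terms cancel. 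So your cover-decoupling program would eventually reproduce this, but the paper simply cites the formula and references~\cite[Section~VII]{Vontobel2013}.

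The genuine gap is in your argument for $\CM{n}(\vgam)$. The row-by-row multinomial $\prod_{i}\binom{M}{(M\gamma(i,j))_{j}}$ has exponential rate $-\sum_{i,j}\gamma(i,j)\log\gamma(i,j)$, which is \emph{not} $\HG'(\vgam)$; the column-permutation constraints you call ``Birkhoff--von Neumann consistency conditions'' cut the count by an exponential, not polynomial, factor. For instance with $n=2$ and $\vgam = \frac{1}{2}\matr{1}_{2\times2}$, your sandwich would predict rate $2\log 2$, but $\CM{2}(\vgam)=\binom{M}{M/2}$ has rate $\log 2 = \HG'(\vgam)$. The paper's route is different: it applies the method of types on the permutation alphabet $\setA(\vgam)\subseteq\setP_{[n]}$ (Definition~\ref{sec:1:def:17}), writes $\CM{n}(\vgam) = \sum_{\bm{t}\in\set{B}_{\setA(\vgam)^M}\cap\PiA{\vgam}(\vgam)}|\set{T}_{\bm{t}}|$, replaces each type-class size by $\exp(M\HG(\bm{t})+o(M))$ via Stirling, and absorbs the polynomially-many types into the $o(M)$ term. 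The maximum over types then converges to $\max_{\vp\in\PiA{\vgam}(\vgam)}\HG(\vp) = \HG'(\vgam)$ by density and the definition in Definition~\ref{sec:1:def:18:part:2}. This is precisely the step where $\HG'$ arises as a constrained maximum entropy over Birkhoff decompositions of $\vgam$, which your three-way-array count does not see.
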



Let us put the above results into some context:
\begin{itemize}

  \item The first inequality in~\eqref{sec:1:eqn:58} proves the first (weaker)
    conjecture in~\cite[Conjecture 52]{Vontobel2013a}.

  \item Smarandache and Haenggi~\cite{Smarandache2015} claimed a proof of the
    second (stronger) conjecture
    in~\cite[Conjecture~52]{Vontobel2013a},\footnote{The second
        (stronger) conjecture in~\cite[Conjecture~52]{Vontobel2013a} stated that
        for each $ \mP_{M} \in \tPsi_{M} $, there is a collection of
        non-negative real numbers
        $\bigl\{ C_{\mathrm{B},M,n,\mP_{M}}( \vgam ) \bigr\}_{\vgam \in \GamMn}$
        such that
        \begin{align*}
          \perm( \mtheta^{\uparrow \mP_{M}} ) = \sum\limits_{\vgam \in \GamMn}
        \mtheta^{ M \cdot \vgam } \cdot C_{\mathrm{B},M,n,\mP_{M}}( \vgam ) ,
        \end{align*}
        and that for every $ \vgam \in \GamMn $, it holds that
        $ \CM{n}(\vgam) / C_{\mathrm{B},M,n,\mP_{M}}( \vgam ) \geq 1.  $}
    which would imply the first (weaker) conjecture in~\cite[Conjecture
    52]{Vontobel2013a}. However, the proof in~\cite{Smarandache2015} is
    incomplete.

  \item The first inequality in~\eqref{SEC:1:EQN:147} proves the first (weaker)
    conjecture in~\cite[Conjecture 51]{Vontobel2013a}.

  \item The proof of the first inequality in~\eqref{sec:1:eqn:58}, and with that
    the proof of the first inequality in~\eqref{SEC:1:EQN:147}, uses an
    inequality of 
    Schrijver~\cite{Schrijver1998}. By now, there are various
    proofs of Schrijver's inequality and also various proofs of the first
    inequality in~\eqref{eq:ratio:permanent:bethe:permanent:1} available in the
    literature, however, the ones that we are aware of are either quite involved
    (like Schrijver's original proof of his inequality) or use reasonably
    sophisticated machinery like real stable polynomials. Given the established
    veracity of the first inequality in~\eqref{SEC:1:EQN:147}, it is hoped that
    a more basic proof for this inequality, and with that of the first
    inequality in~\eqref{eq:ratio:permanent:bethe:permanent:1}, can be
    found. Similar statements can be made for the other inequalities
    in~\eqref{SEC:1:EQN:147} and~\eqref{sec:1:eqn:200}.

    In fact, for $M = 2$, a basic proof of the inequalities
    in~\eqref{SEC:1:EQN:147} is available: it is based
    on~\cite[Proposition~2]{KitShing2022} (which appears in this thesis as
    Proposition~\ref{prop:ratio:perm:permbM:2:1}), along with the observation
    that the quantity $c(\sigma_1,\sigma_2)$ appearing therein is bounded as
    $0 \leq c(\sigma_1,\sigma_2) \leq n/2$. Another basic proof of the first
    inequality in~\eqref{SEC:1:EQN:147} is based on a straightforward
    generalization of the result in~\cite[Lemma 4.2]{Csikvari2017} from
    $\{0,1\}$-valued matrices to arbitrary non-negative square matrices and then
    averaging over $2$-covers.

    Moreover, for the case where $\mtheta$ is proportional to the all-one matrix
    and where $M \in \sZpp$ is arbitrary, a basic proof of the first inequality
    in~\eqref{SEC:1:EQN:147} was given in~\cite[Appendix~I]{Vontobel2013a}.
    However, presently it is not clear how to generalize this proof to other
    matrices $\mtheta$.

  \item For $M = 2$, the lower and upper bounds in~\eqref{SEC:1:EQN:147} are
    exactly the square root of the lower and upper bounds
    in~\eqref{eq:ratio:permanent:bethe:permanent:1}, respectively, thereby
    corroborating some of the statements made after~\eqref{eq:perm:ratios:1}.

  \item Although we take advantage of special properties of the factor graphs
    under consideration, potentially the techniques in this thesis can be
    extended toward a better understanding of the Bethe approximation of the
    partition function of other factor graphs. (For more on this topic, see the discussion in Chapter~\ref{chapt: summary and outlook}.)

\end{itemize}

\section[The Graph Covers for DE-FGs]{Characterizing the Bethe Partition Function of \\Double-Edges Factor Graphs via Graph Covers}
\label{intro: characterize ZB for DE-FG}

The results in Section~\ref{sec: degree M charactertion of Bethe and Sinkhorn permanent} are for factor graphs called S-FGs, \ie, for factor graphs with local functions taking on non-negative real values. Recently, several papers~\cite{Loeliger2012, Loeliger2017, Loeliger2020, Mori2015a} have explored more general factor graphs, in particular those where the local functions take on complex values. This exploration is particularly relevant for representing quantities of interest in quantum information processing. (For connections of these factor graphs to other graphical notations in physics, see~\cite[Appendix~A]{Loeliger2017}.) The structure of these factor graphs is not completely arbitrary. In order to formalize such factor graphs, Cao and Vontobel~\cite{Cao2017} introduced DE-FGs, where local functions take on complex values and have to satisfy certain positive semi-definiteness constraints. They~\cite{Cao2017} further showed that the partition function of a DE-FG is a non-negative real number. However, because the partition function is the sum of complex numbers (notably, the real and the imaginary parts of these complex numbers can have a positive or negative sign), approximating the partition function is, in general, much more challenging for DE-FGs than for S-FGs. This problem is known as the numerical sign problem in applied mathematics and theoretical physics~(see, \eg, \cite{LohJr.1990}). 

Based on the positive semi-definiteness constraints posed for DE-FGs, Cao and Vontobel~\cite{Cao2021} developed the SPA for DE-FGs, where the SPA messages also satisfy these constraints, and further defined the Bethe partition function based on the SPA fixed-point messages. \footnote{Similar to the definition of the Bethe partition function in~\cite[Def.~8]{Cao2017}, in this thesis, the Bethe partition function is formulated based on the so-called pseudo-dual Bethe free energy function for S-FGs. Although
generalizing the primal Bethe free energy function from S-FGs to DE-FGs is
formally straightforward, it poses challenges because of the
multi-valuedness of the complex logarithm and is left for future research.} 
Then they showed that the Bethe partition function is a non-negative real number.

DE-FGs have a strong connection to tensor networks formulism~\cite{Cirac2009,Coecke2010,Robeva2019,Alkabetz2021}, which has become a central topic in theoretical physics~\cite{Cowen2015}. In~\cite{Alkabetz2021}, Alkabetz and Arad showed how to map a projected-entangled-pair-state (PEPS) tensor network to a DE-FG. 
Alkabetz and Arad in~\cite{Alkabetz2021} connected the SPA for DE-FGs to tensor network contractions. Motivated by this connection, Tindall and Fishman~\cite{Tindall2023} related the SPA for DE-FGs to known tensor network gauging methods. Moreover, Tindall \etal~\cite{Tindall2024} showed that for the kicked Ising quantum system studied in~\cite{Kim2023}, the SPA-based approximation results are more accurate and precise than those obtained from the quantum processor and many other classical methods. These promising results motivate us to study the SPA for DE-FGs.

Given that both the partition function and the Bethe partition function are non-negative real numbers for an arbitrary DE-FG~\cite{Cao2017}, the existence of a combinatorial characterization of the Bethe partition function in terms of finite graph covers for DE-FGs remains an open question. Numerical results on small DE-FGs suggested that such a characterization might exist. However, because in general, the partition function of a DE-FG is a sum of complex numbers, proving the combinatorial characterization for DE-FGs turns out to be much more challenging than for S-FGs. The method of types, which was the main ingredient in the proof of the combinatorial characterization for S-FGs~\cite{Vontobel2013}, cannot be directly adapted to the proof for DE-FGs. The challenges of approximating the sum of complex numbers are illustrated by the following example.
\begin{example}
  For any positive-valued integer $ M $ and any complex number $ \alpha $, we define
  \begin{align}
    Z_{M} \defeq 
    \sum\limits_{\ell = 0}^{M} c_{M,\ell},
    \label{eqn: expression of ZM}
  \end{align}
  where
  \begin{align*}
    c_{M,\ell} = \binom{M}{ \ell } 
    \cdot ( 1 - \alpha )^{M - \ell}
    \cdot \alpha^{ \ell }.
  \end{align*}
  Using the binomial theorem, one can verify that $ Z_{M} = 1 $ for any positive-valued integer $ M $ and any complex number $ \alpha $.

  If $ \alpha $ is a non-negative number satisfying $ 0 \leq \alpha \leq 1 $, then $ c_{M,\ell} $ is non-negative for all $ \ell \in \{0,\ldots,M\} $, and we obtain
  \begin{align*}
    \max\limits_{ \ell \in \{0,\ldots,M\} } c_{M,\ell} 
    \leq Z_{M}
    \leq (M+1) \cdot \max\limits_{ \ell \in \{0,\ldots,M\} } c_{M,\ell}, 
  \end{align*}
  which implies
  \begin{align*}
    \frac{1}{M} \cdot 
    \log( \max\limits_{ \ell \in \{0,\ldots,M\} } c_{M,\ell} )
    &\leq 
    \frac{1}{M} \cdot \log(Z_{M})
    \nonumber\\
    &\leq 
    \frac{1}{M} 
    \cdot \log(M+1)
    +
    \frac{1}{M} 
    \cdot \log( \max\limits_{ \ell \in \{0,\ldots,M\} } c_{M,\ell} ).
  \end{align*}
  In the limit $ M \to \infty $, we obtain
  \begin{align}
    \lim\limits_{ M \to \infty }
    \frac{1}{M} \cdot \log(Z_{M})
    = 
    \lim\limits_{ M \to \infty }
    \frac{1}{M} \cdot 
    \log( \max\limits_{ \ell \in \{0,\ldots,M\} } c_{M,\ell} ).
    \label{eqn: limit of 1 M log ZM}
  \end{align}
  For simplicity of exposition, we suppose that there exists a positive-valued integer $ M $ such that $ \alpha \cdot M \in \sZpp $. Then for such an $ \alpha $, 
  we have
  \begin{align}
    \max\limits_{ \ell \in \{0,\ldots,M\} } c_{M,\ell}
    = c_{M,\alpha \cdot M}
    = 
    \hspace{-0.5 cm}
    \underbrace{
      \binom{M}{\alpha \cdot M} 
    }_{ \overset{(a)}{=} \exp( M \cdot h(\alpha) + o(M) ) }
    \hspace{-0.7 cm}
    \cdot
    \underbrace{ (1-\alpha)^{ M (1-\alpha) }
    \cdot \alpha^{ M \alpha }
    }_{ = \exp( -M \cdot h(\alpha) ) }
    = \exp( o(M) ),
    \label{eqn: Cml is a little o function wrt M}
  \end{align}
  where $ h(\alpha) \defeq -\alpha \cdot \log( \alpha ) 
  - (1-\alpha) \cdot \log( 1 - \alpha ) $,
  where $ o(M) $ is the usual little-$o$ notation for a function in $M$,
  and where step $(a)$ follows from Stirling's approximation (see, \eg,~\cite{Robbins1955}).
  Plugging~\eqref{eqn: Cml is a little o function wrt M} into~\eqref{eqn: limit of 1 M log ZM}, we obtain
  \begin{align*}
    \lim\limits_{ M \to \infty }
    \frac{1}{M} \cdot \log(Z_{M})
    =
    \lim\limits_{ M \to \infty }
    \frac{1}{M} \cdot 
    \log( \max\limits_{ \ell \in \{0,\ldots,M\} } c_{M,\ell} )
    =\lim\limits_{ M \to \infty }
    \frac{o(M)}{M} = 0.
  \end{align*}
  We can see that for the case $ 0 \leq \alpha \leq 1 $, all the terms in $ Z_{M} $ are non-negative and are added up constructively, allowing us to approximate $ Z_{M} $ by the term with the largest magnitude. Following a similar idea, Vontobel~\cite{Vontobel2013}
  used the method of types to characterize the valid configurations in finite graph covers of an S-FG, and he further gave a combinatorial characterization of the Bethe partition function of the S-FG.

  However, if $ \alpha $ is a complex-valued number with non-zero imaginary part, the terms in the expression of $ Z_{M} $ in~\eqref{eqn: expression of ZM}, might take values in the field of complex numbers, and these terms are added up constructively and destructively. In this case,
  we cannot approximate $ \lim\limits_{ M \to \infty } M^{-1} \cdot \log(Z_{M}) $ using the same idea as in the case $ 0 \leq \alpha \leq 1 $, as the term with the largest magnitude gives a poor estimate of $ Z_{M} $. 
\end{example}


\subsection{Main Contributions}

We develop a novel approach to give a combinatorial characterization of the Bethe partition functions for a certain class of DE-FGs. This novel proof approach is one of the main results in this thesis, besides the statement itself. 



Our proof approach consists of three main ingredients:
\begin{enumerate}

  \item At the first step, we apply the loop-calculus transform (LCT)\footnote{The loop-calculus transform can be seen as a particular instance of a holographic
  transform~\cite{AlBashabsheh2011}.} to a DE-FG based on a
  fixed point of the SPA. (If the SPA has multiple fixed points,
  we assume that it is the fixed point with the largest Bethe partition
  function value.) The benefit of the LCT is that the
  transformed SPA fixed-point messages have an elementary form, which is independent of the edge and the direction of a message. Note that the LCT leaves the partition function of the DE-FG unchanged.

  \item At the second step, for any positive-valued integer $M$, we construct the average degree-$M$ cover for the transformed DE-FG. Then we express the average of the partition functions of the degree-$M$ covers in terms of the partition function of this average degree-$M$ cover. 

  \item At the third step, we evaluate the partition function of this average degree-$M$ cover in the limit $M \to \infty$.
  So far, we can relate the limit of this partition function to the Bethe partition function when some (easily checkable) conditions are satisfied. However, based on the obtained numerical results, we suspect that this combinatorial characterization holds more generally. We leave it as an open problem to broadly characterize when the DE-FGs do not satisfy the (easily checkable) conditions. 

\end{enumerate}

Based on finite graph covers of a DE-FG, we further develop the symmetric-subspace transform (SST), which allows one to express the partition function of the average degree-$M$ cover of the DE-FG in terms of some integral.
Both the LCT and the SST should be of interest beyond proving the main results in this thesis. The reasons are listed as follows.
\begin{enumerate}
  \item The LCT introduced in this thesis is based on SPA fixed-point messages consisting of complex-valued components, which is applicable for both S-FGs and DE-FGs. This contrasts with the LCT proposed in~\cite{Mori2015}, which requires SPA fixed-point messages to be non-zero valued. Therefore, our proposed transform generalizes the one presented in~\cite{Mori2015}.
  
  \item The SST defined in this thesis is applicable for both S-FGs and DE-FGs. To the best of our knowledge, it is the first time that this transform is introduced in factor-graph literature, although it has wide applications 
  in quantum physics~\cite{Wood2015,Harrow2013}.
\end{enumerate}

\section{Structure of this Thesis}

This thesis is structured as follows. Chapter~\ref{chapt: preliminaries} reviews the basic definitions and notations that are frequently used in this thesis. In particular, Chapter~\ref{chapt: preliminaries} introduces S-FGs, a special class of S-FGs such that the partition function of each S-FG is equal to the permanent of a non-negative square matrix, DE-FGs, and finite graph covers of these factor graphs. Chapter~\ref{chapt: finite graph cover based bound for matrix permanent} proves degree-$M$-Bethe-permanent-based bounds on the permanent of a non-negative square matrix and also similar results for another approximation of the permanent known as the (scaled) Sinkhorn permanent. Chapters~\ref{chapt: LCT} and~\ref{chapt:SST} introduce the LCT and the SST, respectively, for both S-FGs and DE-FGs. Chapter~\ref{chapt: graph cover thm for DENFG} proves the graph-cover theorem, \ie, a combinatorial characterization of the Bethe partition function in terms of finite graph covers, for a class of DE-FGs satisfying an easily checkable condition. Various (longer) proofs have been collected in the appendices.


\chapter{Preliminaries}
\label{chapt: preliminaries}
\label{CHAPT: PRELIMINARIES}

In this chapter, we introduce the preliminaries of this thesis. We begin by introducing the basic definitions and notations used throughout this thesis in Section~\ref{sec: basic definition}. Next, we review the fundamental concepts of standard normal factor graphs (S-NFGs) in Section~\ref{sec:SNFG}. In Section~\ref{sec: snfg for permanent of nonnagtive square matrix}, we study a class of S-NFGs such that the partition function of each of these S-NFGs equals the permanent of a non-negative square matrix. Moving to the quantum case, we introduce double-edge normal factor graphs (DE-NFGs) in Section~\ref{sec: basic of denfg}. Finally, in Section~\ref{sec: finite graph covers}, we present the finite graph covers of the factor graphs introduced in this chapter.

\section{Basic Definition and Notations}
\label{sec: basic definition}

If not mentioned otherwise, all variable alphabets are assumed to be finite.
We will use, without essential loss of generality, normal
factor graphs (NFGs), \ie, factor graphs where variables are associated with
edges~\cite{Forney2001, Loeliger2004}.

Let $n$ be a positive integer and let $\mtheta \in \sR_{\geq 0}^{n \times
  n}$. We define $\suppmtheta$ to be the support of $\mtheta$, \ie,
\begin{align*}
  \suppmtheta 
    &\defeq
       \bigl\{ 
         (i,j) \in [n] \times [n] 
       \bigm|
         \theta(i,j) > 0
       \bigr\}.
\end{align*}


Consider two finite subsets $ \set{I} $ and $ \set{J} $ of $[n]$ such that
$ |\set{I}| = | \set{J} | $. The set $ \setS_{\set{I} \to \set{J}} $ is
defined to be the set of all bijections from $ \set{I} $ to $ \set{J} $. In
particular, if $ \set{I} = \set{J} $, then the set
$ \setS_{\set{I} \to \set{J}} $ represents the set of all permutations of the
elements in $ \set{I} $, and we simply write $ \setS_{\set{I}} $ instead of
$\setS_{\set{I} \to \set{J}}$. Given
$ \sigma \in \setS_{\set{I} \to \set{J}} $, we define the matrix
$ \mP_{\sigma} \in \sZp^{\set{I} \times \set{J}} $ to be
\begin{align*}
  \mP_{\sigma} 
    &\defeq 
       \bigl(
         P_{\sigma}( i, j )
       \bigr)_{\! (i,j) \in \set{I} \times \set{J}}  , 
       \ \ \ 
  P_{\sigma}( i, j )
     \defeq 
       \begin{cases}
         1 & \sigma(i) = j \\
         0 & \text{otherwise}
     \end{cases}.
     \nonumber
\end{align*}
Note that $\mP_{\sigma}$ is a permutation matrix, \ie, a matrix where all
entries are equal to $0$ except for exactly one $1$ per row and exactly one
$1$ per column. Moreover, given a set $ \setS_{\set{I} \to \set{J}} $, the set
$\setP_{\set{I} \to \set{J}}$ is defined to be the set of the associated
permutation matrices, \ie,
\begin{align}
  \setP_{\set{I} \to \set{J}}
    &\defeq
       \bigl\{ 
         \mP_{\sigma} 
       \bigm|
         \sigma \in \setS_{\set{I} \to \set{J}}
       \bigr\}. 
         \label{sec:1:eqn:181}
\end{align}
If $\set{I} = \set{J}$, we simply write $\setP_{\set{I}}$ instead of
$\setP_{\set{I} \to \set{J}}$. A particularly important special case for this
thesis is the choice $\set{I} = \set{J} = [n]$: the set $\setP_{[n]}$
represents the set of all permutation matrices of size $n \times n$ where the
rows and columns are indexed by $[n]$.


We define $\Gamma_n$ to be the set of all doubly stochastic
matrices of size $n \times n$, \ie,
\begin{align*}
  \Gamma_n
    &\defeq
       \left\{ 
         \vgam
         = \bigl(
             \gamma(i,j)
           \bigr)_{\! i,j \in [n]}
      \ \middle| \ 
        \begin{array}{l}
          \gamma(i,j) \in \sR_{\geq 0}, \, \forall i,j \in [n] \\
          \sum\limits_{j \in [n]} \gamma(i,j) = 1, \, \forall i \in [n] \\
          \sum\limits_{i \in [n]} \gamma(i,j) = 1, \, \forall j \in [n]
        \end{array}
      \right\}.
\end{align*}
Moreover, for $M \in \sZpp$, we define $\Gamma_{M,n}$ to be the subset of
$\Gamma_{n}$ that contains all matrices where the entries are multiples of
$1/M$, \ie,
\begin{align*}
  \Gamma_{M,n} 
    &\defeq 
       \bigl\{ 
         \vgam \in \Gamma_n
       \bigm|
         M \cdot \gamma(i,j) \in \sZp, \ \forall i,j \in [n]
       \bigr\}.
\end{align*}
Observe that $\Gamma_{1,n} = \setP_{[n]}$.


All logarithms in this thesis are natural logarithms and the value of
$0 \cdot \log(0)$ is defined to be $ 0 $.


In this thesis, given an arbitrary $ n \in \sZpp $, if there is no ambiguity,
we use $ \prod\limits_{i,j} $, $ \prod\limits_{i} $, $ \prod\limits_{j} $, $ \sum\limits_{i,j} $,
$ \sum\limits_{i} $, and $ \sum\limits_{j} $ for $ \prod\limits_{i,j \in [n]} $, $ \prod\limits_{i \in [n]} $,
$ \prod\limits_{j \in [n]} $, $ \sum\limits_{i,j \in [n]} $, $ \sum\limits_{i \in [n]} $, and
$ \sum\limits_{j \in [n]} $, respectively.

\section{Standard Normal Factor Graphs (S-NFGs)}
\label{sec:SNFG}


In this section, we review some basic concepts and properties of an S-NFG. We use an example to introduce the key concepts of an S-NFG.

\begin{figure}[t]
  \centering
  \captionsetup{font=scriptsize}
  \subfloat{
    \begin{minipage}[t]{0.4\linewidth}
      \centering
      \begin{tikzpicture}[node distance=2cm, remember picture]
        \input{figures/head_files_figs.tex}
        \node[state] (f1) at (0,0) [label=above: $f_{1}$] {};
        \node[state,right of=f1] (f2) [label=above: $f_{2}$] {};
        \node[state,below of=f1] (f3) [label=below: $f_{3}$] {};
        \node[state,right of=f3] (f4) [label=below: $f_{4}$] {};
        \begin{pgfonlayer}{background}
          \draw[-,draw]
              (f1) edge node[above]  {$x_{1}$} (f2)
              (f1) edge node[left]  {$x_{2}$} (f3)
              (f1) edge node[left]  {$x_{3}$} (f4) 
              (f2) edge node[right] {$x_{4}$} (f4)
              (f3) edge node[below] {$x_{5}$} (f4);
        \end{pgfonlayer}
      \end{tikzpicture}
    \end{minipage}%
  }
  \caption{NFG $\mathsf{N}$ in Example \ref{sec:SNFG:exp:1}.}
  \label{sec:SNFG:fig:1}
\end{figure}


\begin{example}\label{sec:SNFG:exp:1}

  Consider the multivariate function
  \begin{align*}
    g(x_{1},\ldots,x_{5}) \defeq f_{1}(x_{1},x_{2},x_{3}) \cdot f_{2}(x_{1},x_{4})
    \cdot f_{3}(x_{2},x_{5})\cdot f_{4}(x_{3},x_{4},x_{5}).
  \end{align*}
  where $g$, the
  so-called global function, is defined to be the product of the so-called
  local functions $f_{1}$, $f_{2}$, $f_{3}$ and $f_{4}$. We can visualize the factorization of $g$ with the help of the S-NFG $\graphN$ in
  Fig.~\ref{sec:SNFG:fig:1}. Note that the S-NFG $\graphN$ shown in
  Fig.~\ref{sec:SNFG:fig:1} consists of four function nodes
  $f_1, \ldots, f_4$ and five (full) edges with associated variables
  $x_1, \ldots, x_5$.
  \eexample
\end{example}



An edge that connects to two function nodes is called a full edge, whereas an edge
that is connected to only one function node is called a half edge. For the
purposes of this thesis, it is sufficient to consider only S-NFGs with full
edges, because S-NFGs with half edges can be turned into S-NFGs with only full
edges by attaching a dummy $1$-valued function node to the other end of every half edge without changing any marginals or the partition function.


\begin{definition}
  \label{def: def of snfg}
  \index{Normal factor graph!S-NFG}
  An S-NFG $\graphN(\setF, \setEfull, \set{X})$ consists of the following
  objects:
  \begin{enumerate}
    
    \item The graph $(\setF,\setEfull)$ with vertex set $\setF$ and edge set
      $\setEfull$, where $\setF$ is also known as the set of function nodes. (As
      mentioned above, every edge $e \in \setEfull$ will be assumed to be a full
      edge connecting two function nodes.)

    \item The alphabet $\set{X} \defeq \prod\limits_{e \in \setEfull} \setxe$, where
      $\setxe$ is the alphabet of the variable $\xe$ associated with the edge
      $e \in \setEfull$. In this thesis, without loss of generality, we suppose that $ 0 \in \setxe$ for all $ e \in \setEfull $.

  \end{enumerate} 
  For a given S-NFG $\graphN(\setF,\setEfull,\set{X})$, we fix the order of the function nodes by $ \set{F} = \{ f_{1},\ldots,f_{|\set{F}|} \}. $ Then we make the following definitions.
  
  \begin{enumerate}
    \setcounter{enumi}{2}

  \item For every function node $f \in \setF$, the set $\setpf$ is defined to be the set of edges incident on $f$.

  \item For every edge $e = (f_{i}, f_{j}) \in \setEfull$, the pair $(f_{i}, f_{j})$ is defined to be the pair of function nodes that are connected to edge $e$ such that $ i < j $. The set $ \partial e $ is defined to be the set of function nodes incident on $e$, \ie, $ \partial e \defeq \{f_{i}, f_{j}\} $. Note that for notational convenience, here we impose a direction on every edge $(f_{i}, f_{j})$, \ie, the inequality $ i < j $, that is irrelevant for our results.

  \item For any finite set $ \set{I} \subseteq \setEfull $, we define
   \begin{align*}
         \setx{ \set{I} } \defeq 
         \prod\limits_{ e \in \set{I} } \setxe, \qquad 
         \vx_{\set{I}} \defeq (\xe)_{e \in \set{I}}
         \in \setx{ \set{I} }.
   \end{align*}

  \item An assignment $\vx \defeq (\xe)_{e \in \setEfull} \in \set{X}$ is called a
    configuration of the S-NFG $ \sfN $. For each $f \in \setF$, a configuration
    $\vx \in \set{X}$ induces the vector $\xf \defeq (x_{e})_{e \in \setpf} \in \setxf$.

  \item For every $f \in \setF$, the local function associated with $f$ is,
    with some slight abuse of notation, also called $f$. Here, the local
    function $f$ is an arbitrary mapping from $\setxf$ to
    $\sRp$.

  \item The global function $g$ is defined to be the mapping
   \begin{align*}
    g:\set{X} &\to \sRp, \qquad
    \vx \mapsto \prod\limits_{f \in \setF} f(\xf).
   \end{align*}

  \item A configuration $\vx \in \set{X}$ satisfying $g(\vx) \neq 0$ is called a
    valid configuration. The set of valid configurations is defined to be
    \begin{align*}
      \set{C} \defeq \{ \vx \in \set{X} \ | \ g(\vx) \neq 0 \}.
    \end{align*}

  \item The partition function is defined to be
    $Z(\graphN) \defeq \sum\limits_{\vx \in \set{C}}g(\vx)$.\footnote{In this thesis, the
      partition function $Z(\graphN)$ of $\graphN$ is a scalar, \ie, it
      is not really a function. If $\graphN$ depends on some parameter
      (say, some temperature parameter), then $Z(\graphN)$ is a function of
      that parameter.} Clearly, the partition function satisfies
    $Z(\graphN) \in \sRp$.
    \label{partition function of snfg}
  \end{enumerate}
  \edefinition
\end{definition}


In this thesis, we will only consider the S-NFG $\graphN$ for which
$Z(\graphN) \in \sRpp$.


If there is no ambiguity, when we consider S-NFG, we will use the short-hands
$\sum\limits_{\vx}$, 
$\sum\limits_{\xe}$, 
$\sum\limits_{e}$, 
$\sum\limits_{f}$, 
$\prod\limits_{\vx}$,  
$\prod\limits_{\xe}$,
$\prod\limits_{e}$, 
$\prod\limits_{f}$,
$ (\cdot)_{\xe} $,
and
$ (\cdot)_{e} $ for
$\sum\limits_{\vx \in \set{X}}$, 
$\sum\limits_{\xe \in \setxe}$, 
$\sum\limits_{e \in \setEfull}$, 
$\sum\limits_{f \in \setF}$,
$\prod\limits_{\vx \in \set{X}}$, 
$\prod\limits_{\xe \in \setxe}$, 
$\prod\limits_{e \in \setEfull}$, 
$\prod\limits_{f \in \setF}$, 
$ (\cdot)_{\xe \in \setxe} $,
and 
$ (\cdot)_{e\in \setEfull} $,
respectively. 
For any set $ \set{I} \subseteq \setEfull $, we will use the short-hands 
$\sum\limits_{\vx_{\set{I}}}$,
$\prod\limits_{\vx_{\set{I}}}$, 
and $ (\cdot)_{\vx_{\set{I}}} $
for 
$\sum\limits_{ \vx_{\set{I}} \in \setx{\set{I}} }$,
$\prod\limits_{ \vx_{\set{I}} \in \setx{\set{I}} }$, 
and $ (\cdot)_{ \vx_{\set{I}} \in \setx{\set{I}} } $.
Moreover, $\setpf \setminus e$ will be short-hand notation for
$\setpf \setminus \{ e \}$.

Now we introduce the Gibbs free energy function, which provides an alternative expression of the partition function.
\begin{definition}
  (The Gibbs free energy function)
  \index{Gibbs free energy function!for S-NFG}
  Consider some S-NFG $\sfN$ and $ \bm{p} \defeq \bigl( p(\vx) \bigr)_{\vx \in \set{C}} \in \Pi_{\set{C}} $.
  The Gibbs free energy function is defined to be the mapping
  (see, \eg, \cite[Definition 9]{Vontobel2013})
  \begin{align*}
    F_{\mathrm{G}}: \Pi_{\set{C}} &\to \sR, \nonumber\\
    \qquad \bm{p} &\mapsto U_{\mathrm{G}}(\bm{p}) - H_{\mathrm{G}}(\bm{p}),
  \end{align*}
  where
  \begin{alignat*}{3}
    U_{\mathrm{G}}: \Pi_{\set{C}} &\to \sR, \qquad 
    &\bm{p} &\mapsto -\sum\limits_{ \vx \in \set{C} } 
    p(\vx) \cdot \log\bigl( g(\vx) \bigr),
    \nonumber\\
    H_{\mathrm{G}}: \Pi_{\set{C}} &\to \sR, \qquad 
    &\bm{p} &\mapsto -\sum\limits_{ \vx \in \set{C} } 
    p(\vx) \cdot \log\bigl( p(\vx) \bigr).
  \end{alignat*}
  \edefinition
\end{definition}
\begin{proposition}
  \label{prop: minimum of FG relates to Z}
  It holds that $  F_{\mathrm{G}}(\bm{p}) $ is a convex function w.r.t. 
  $ \bm{p} \in \Pi_{\set{C}} $ and
  \begin{align*}
    Z(\sfN) = \exp( - \min_{ \bm{p} \in \Pi_{\set{C}} } F_{\mathrm{G}}(\bm{p}) ).
  \end{align*}
\end{proposition}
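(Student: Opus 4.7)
The plan is to rewrite $F_{\mathrm{G}}(\bm{p})$ as a Kullback--Leibler divergence plus a constant, which simultaneously yields both the convexity claim and the exact form of the minimum. Concretely, I would first introduce the Gibbs distribution $\bm{q} \defeq \bigl( q(\vx) \bigr)_{\vx \in \set{C}}$ defined by $q(\vx) \defeq g(\vx)/Z(\sfN)$ for $\vx \in \set{C}$. Since $\sfN$ is assumed to satisfy $Z(\sfN) \in \sRpp$ and $g(\vx) > 0$ for every $\vx \in \set{C}$, the vector $\bm{q}$ is a well-defined element of $\Pi_{\set{C}}$.

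Next I would compute, for any $\bm{p} \in \Pi_{\set{C}}$,
\begin{align*}
  F_{\mathrm{G}}(\bm{p})
  = \sum\limits_{\vx \in \set{C}} p(\vx) \log\!\frac{p(\vx)}{g(\vx)}
  = \sum\limits_{\vx \in \set{C}} p(\vx) \log\!\frac{p(\vx)}{q(\vx)} \; - \; \log Z(\sfN),
\end{align*}
where I used $\sum_{\vx} p(\vx) = 1$ to pull out $\log Z(\sfN)$. The first term on the right is $D(\bm{p} \,\|\, \bm{q}) \geq 0$, with equality iff $\bm{p} = \bm{q}$, by Gibbs' inequality (with the convention $0 \cdot \log 0 = 0$ fixed earlier in the chapter).

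Convexity then follows by either of two routes. The clean route is to note that $U_{\mathrm{G}}$ is linear in $\bm{p}$, while $H_{\mathrm{G}}$ is the Shannon entropy and hence concave in $\bm{p}$, so $F_{\mathrm{G}} = U_{\mathrm{G}} - H_{\mathrm{G}}$ is convex as the sum of a linear function and a convex function. Alternatively, from the KL-divergence identity above, convexity of $F_{\mathrm{G}}$ reduces to the standard fact that $\bm{p} \mapsto D(\bm{p} \,\|\, \bm{q})$ is convex in $\bm{p}$ for fixed $\bm{q}$.

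Finally, from $F_{\mathrm{G}}(\bm{p}) = D(\bm{p} \,\|\, \bm{q}) - \log Z(\sfN)$ and the nonnegativity of KL divergence, I obtain $\min_{\bm{p} \in \Pi_{\set{C}}} F_{\mathrm{G}}(\bm{p}) = -\log Z(\sfN)$, attained uniquely at $\bm{p} = \bm{q}$, so $Z(\sfN) = \exp\bigl( -\min_{\bm{p}} F_{\mathrm{G}}(\bm{p}) \bigr)$. There is no real obstacle here; the only modest care needed is ensuring that $\bm{q}$ lies in $\Pi_{\set{C}}$ (which uses $Z(\sfN) > 0$) and that sums over $\set{C}$ rather than $\set{X}$ do not spoil the rewriting, since $p(\vx) = 0$ is permitted outside $\set{C}$ and the convention $0 \cdot \log 0 = 0$ makes the log-of-ratio manipulation valid throughout.
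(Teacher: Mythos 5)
Your proof is correct and complete: the rewriting $F_{\mathrm{G}}(\bm{p}) = D(\bm{p}\,\Vert\,\bm{q}) - \log Z(\sfN)$ with $\bm{q} = \bm{g}/Z(\sfN)$ immediately gives both the convexity (linearity of $U_{\mathrm{G}}$ plus concavity of $H_{\mathrm{G}}$) and the value and location of the minimum. The paper does not write out an argument at all --- it only cites Lemma~8 of Vontobel~(2013) --- and the proof there is exactly this standard variational/KL-divergence characterization, so your route coincides with the intended one.
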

\begin{proof}
  See, \eg,~\cite[Lemma 8]{Vontobel2013}.
\end{proof}

In general, neither the direct evaluation of the partition function nor solving the convex optimization problem in Proposition~\ref{prop: minimum of FG relates to Z} leads to a tractable computational problem, except for small-scaled S-NFG. 
This limitation motivates the study of alternative approaches, specifically approximations of the partition function and the Gibbs free energy function.

In the following, we introduce the SPA, whose fixed points often provide decent approximation to the partition function. Note that we only give the technical details of the SPA on an S-NFG; for a general discussion with respect to (w.r.t.) the motivations behind the SPA, see, \eg,~\cite{Kschischang2001,Loeliger2004}.
\begin{definition}\label{sec:SNFG:def:4}
  (The SPA on S-NFG)
  \index{SPA!on S-NFG}
  Consider some S-NFG $\graphN$. The SPA is an iterative algorithm that
  sends messages along edges, where messages are functions over the alphabet
  associated with an edge. (Note that for every iteration and every edge, two
  messages are sent along that edge, one in both directions.) More precisely,
  the SPA consists of the following steps:
  \begin{enumerate}

  \item (Initialization) For every $e \in \setEfull$, $f \in \setF$, the
    message $\mu_{\etof}^{(0)}: \setxe \to \sR_{\geq 0} $, is defined to be some
    arbitrary function. (Typically, $\mu_{\etof}^{(0)}(\xe) \defeq 1 / |\setxe|$
    for all $\xe \in \setxe$.)

  \item (Iteration) For $t = 1, 2, 3, \ldots$, do the following calculations
    until some termination criterion is met:\footnote{The termination
      criterion is typically a combination of numerical convergence and an
      upper bound on the number of iterations.}
    \begin{enumerate}
        \item  For every $e = (f_{i}, f_{j}) \in \setEfull$, we define the scaling factor $ \kappa_{\efj}^{(t)} $ to be
        \begin{align*}
            \kappa_{\efj}^{(t)}
            &\defeq
            \sum\limits_{\xfi}
            f_{i}(\xfi)
            \cdot
            \prod\limits_{e' \in \setpfi \setminus e}
            \mu_{\epfi}^{(t-1)}(x_{e'}).
        \end{align*}
        One can obtain the definition of $ \kappa_{\efi}^{(t)} $ by switching $ f_{i} $ with $ f_{j} $ in the above definition.
        
        \item If 
        $ 
            \prod\limits_{e \in \setEfull} 
            \prod\limits_{f \in \setpe}
            \kappa_{\ef}^{(t)} = 0 
        $,
        we randomly generate each element in $ \vmu^{(t)} $ following uniform distribution in $ [0,1] $.

        \item  If 
        $ 
            \prod\limits_{e \in \setEfull} 
            \prod\limits_{f \in \setpe}
            \kappa_{\ef}^{(t)} \neq 0 
        $, for every $e \in \setEfull$, 
        we define the messages from the edge $ e = (f_{i}, f_{j}) $ to the function node 
        $ f_{j} $ via the mapping:\footnote{If desired or necessary, other generalization procedures is allowed.}
        \begin{align}
          \mu_{\efj}^{(t)}:\, \set{X}_{e} &\to \sR_{\geq 0},
          \nonumber\\
          z_{e} &\mapsto 
          \bigl( \kappa_{\efj}^{(t)} \bigr)^{\! -1}
          \cdot
          \sum\limits_{\xfi: \, x_{e} = z_{e}}
            f_{i}(\xfi)
            \cdot
            \prod\limits_{e' \in \setpfi \setminus e}
              \mu_{\epfi}^{(t-1)}(x_{e'}). \label{sec:SNFG:eqn:8}
        \end{align}
        One can obtain the definition of $ \mu_{\efi}^{(t)} $ by switching $ f_{i} $ with $ f_{j} $ in~\eqref{sec:SNFG:eqn:8}.

    \item For every $f \in \setF$, we define
    \begin{align*}
        \kappa_f^{(t)} \defeq 
            \sum\limits_{ \xf }
            f(\xf) \cdot \prod\limits_{e \in \setpf} \mu_{\etof}^{(t)}(\xe).
    \end{align*}
    If $ \kappa_f^{(t)} \neq 0 $, then we define the belief function at function node $f$ to be the mapping
      \begin{align}
        \beli_f^{(t)}: \setxf \to \sR_{\geq 0}, \quad
          \xf
          &\mapsto 
            \bigl( \kappa_f^{(t)} \bigr)^{\! -1}
              \cdot
              f(\xf)
              \cdot
              \prod\limits_{e \in \setpf}
                \mu_{\etof}^{(t)}(\xe).
                    \label{sec:SNFG:eqn:1}
      \end{align}
      The associated vector is defined to be
      $ \vbeli_{f}^{(t)} \defeq \bigl( \beli_f^{(t)}(\xf) \bigr)_{ \! \xf \in \set{X}_{\setpf} } $.
    
    \item For every $e \in \setEfull$, we define
    \begin{align*}
        \kappa_e^{(t)} \defeq \sum\limits_{\xe} 
        \prod\limits_{f \in \setpe}
        \mu_{\ef}^{(t)}(\xe).
    \end{align*}
    If $  \kappa_e^{(t)}\neq 0 $, then we define the belief function at edge $e$ to be the mapping
      \begin{align}
        \beli_e^{(t)}: \setxe \to \sR_{\geq 0}, \quad
        \xe
        &\mapsto
        \bigl( \kappa_e^{(t)} \bigr)^{-1}
        \cdot
        \mu_{\efi}^{(t)}(\xe)
        \cdot
        \mu_{\efj}^{(t)}(\xe).
        \label{sec:SNFG:eqn:2}
      \end{align}
      The vector $ \vbeli_e^{(t)} $ is defined to be $ \vbeli_e^{(t)} \defeq \bigl( \beli_e^{(t)}(\xe) \bigr)_{ \! \xe \in \setxe } $.
    \end{enumerate}
    
    \item At iteration $t \in \sZpp$, the SPA messages can be collected as the SPA message vector:
    \begin{align*}
      \vmu^{(t)} \defeq 
      \bigl( \vmu_{\etof}^{(t)} \bigr)_{\! e \in \setpf,\, f \in \setF},
    \end{align*}
    where $ \vmu_{\ef}^{(t)} \defeq \bigl( \mu_{\ef}^{(t)}(\xe) \bigr)_{ \! \xe } $. 
    With this, the SPA message update rules can be written as
    \begin{align*}
      \vmu^{(t)} = \fSPAN\bigl( \vmu^{(t-1)} \bigr)
    \end{align*}
    for some suitably defined function $\fSPAN$. 

    \item The beliefs at function nodes and edges can be collected as the following vector:
    \begin{align*}
       \vbeli^{(t)} \defeq \Bigl( 
       ( \vbeli_f^{(t)} )_{f \in \setF}, \,
      ( \vbeli_e^{(t)} )_{e \in \setE} \Bigr).
    \end{align*}

    \item \index{SPA fixed point!for S-NFG} 
    An SPA message vector
    \begin{align}
      \vmu \defeq ( \vmu_{\etof} )_{e \in \setpf,\, f \in \setF}
      \label{sec:SNFG:eqn:10}
    \end{align}
    with 
    $ \vmu_{\etof} \defeq \bigl( \mu_{\etof}(\xe) \bigr)_{\! \xe} $
    is called an SPA fixed-point message vector if 
    \begin{align}
      \vmu = \fSPAN(\vmu). \label{sec:SNFG:eqn:11}
    \end{align}


  \end{enumerate}
  (If desired or necessary, other normalization procedures for messages and
  beliefs can be specified.)
  \edefinition
\end{definition}


\begin{definition}
  \label{def:belief at SPA fixed point for S-NFG}
  Consider some S-NFG $\graphN$ and let
  $\vmu $ be some SPA fixed-point
  message vector for $\graphN$ as defined in~\eqref{sec:SNFG:eqn:10} and~\eqref{sec:SNFG:eqn:11}. The collection of the beliefs 
  \begin{align}
     \vbeli \defeq \Bigl( ( \vbeli_f )_{f \in \setF}, \,
    ( \vbeli_e )_{e \in \setE} \Bigr)
    \label{eqn: def of beliefs}
  \end{align}
  induced by the SPA fixed-point messages is computed as follows with 
  $ \vmu_{\etof}^{(t)} = \vmu_{\etof} $ for all $ e \in \setEfull $ and $ f \in \setF $:
  \begin{enumerate}
    \item for every $f \in \setF$, if 
    \begin{align*}
        \sum\limits_{ \xf }
        f(\xf) \cdot \prod\limits_{e \in \setpf} \mu_{\etof}(\xe)
        >0, 
    \end{align*}
      then the elements in the vector $\vbeli_f$ are 
      obtained by letting $ \vbeli_{f} = \vbeli_{f}^{(t)} $, where the elements in $ \vbeli_{f}^{(t)} $ are given in~\eqref{sec:SNFG:eqn:1};

    \item for every $e \in \setEfull$, if
    \begin{align}
        \sum\limits_{\xe} 
        \prod\limits_{f \in \setpe}
        \mu_{\ef}(\xe) > 0,
        \label{eqn: positive of belief at e}
     \end{align} 
    then the elements in the vector $\vbeli_e$ 
    are obtained by letting $ \vbeli_{e} = \vbeli_{e}^{(t)} $, where the elements in $ \vbeli_{e}^{(t)} $ are given in~\eqref{sec:SNFG:eqn:2}.
  \end{enumerate}
  \edefinition
\end{definition}




\begin{assumption}
  \label{asmp: assume messages are non-negative}
  In the context of S-NFG, if not mentioned otherwise, we will only consider the following normalization procedure for the SPA 
  message vectors $\vmu^{(t)} $ and the collection of the beliefs $ \vbeli^{(t)} $:
  \begin{align*}
    \vmu_{\etof}^{(t)} &\in \Pi_{\setxe}, \qquad e \in \setpf,\, f \in \setF,
    \nonumber\\
    \vbeli^{(t)}_{e} &\in \Pi_{\setxe}, \qquad e \in \setEfull, \nonumber\\
    \vbeli^{(t)}_{f} &\in \Pi_{\setxf}, \qquad f \in \setF.
  \end{align*}
  
  Furthermore, we only consider the SPA fixed-point message vector $ \vmu $ such that
  the strict inequality in~\eqref{eqn: positive of belief at e} 
  holds for all $ e \in \setEfull $.


  (See Item~\ref{sec:SNFG:remk:1:item:2} in Remark~\ref{sec:SNFG:remk:1} for a non-trivial S-NFG with an SPA fixed-point message vector such that the inequality in~\eqref{eqn: positive of belief at e} does not satisfy for some $ e \in \setEfull $.)
  \eassumption
\end{assumption}


\begin{proposition}\hspace{-0.1 cm}\cite[Proposition 5]{Yedidia2005}
  For the S-NFG $ \sfN $ such that $ f(\xf) \in \sR_{>0} $ for all $ \xf \in \setx{\setpf} $ and $ f \in \setF $, each SPA fixed-point message vector $ \vmu $ satisfies the inequality~\eqref{eqn: positive of belief at e} for all $ f \in \setF $ and the inequality~\eqref{eqn: positive of belief at e} for all $ e \in \setEfull $.
  \eproposition
\end{proposition}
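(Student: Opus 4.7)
The strategy is to deduce, from the strict positivity of the local functions, that every component of every SPA fixed-point message is strictly positive; both claimed inequalities then follow as sums of strictly positive terms.

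First, I would invoke Assumption~\ref{asmp: assume messages are non-negative}, which places each message $\vmu_{e,f}$ in the probability simplex $\Pi_{\setxe}$. Consequently, for any edge $e = (f_i,f_j) \in \setEfull$, one has $\prod_{e' \in \setpfi \setminus e} \sum_{x_{e'}} \mu_{e',f_i}(x_{e'}) = 1$. Setting $f_{i,\min} \defeq \min_{\xfi \in \setx{\setpfi}} f_i(\xfi)$, which is strictly positive since $f_i$ takes values in $\sRpp$ on the finite set $\setx{\setpfi}$, the SPA scaling factor admits the lower bound
\begin{align*}
  \kappa_{e, f_j}
  \;=\; \sum_{\xfi} f_i(\xfi) \! \prod_{e' \in \setpfi \setminus e} \! \mu_{e',f_i}(x_{e'})
  \;\geq\; f_{i,\min} \cdot |\setxe|
  \;>\; 0,
\end{align*}
so the update rule~\eqref{sec:SNFG:eqn:8} is well-defined at $\vmu$ and the fixed-point equation~\eqref{sec:SNFG:eqn:11} carries its standard meaning. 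Applying the same bound to the restricted sum, for every $z_e \in \setxe$,
\begin{align*}
  \mu_{e, f_j}(z_e)
  \;=\; \kappa_{e, f_j}^{-1} \!\!\sum_{\xfi:\, x_e = z_e}\! f_i(\xfi) \! \prod_{e' \in \setpfi \setminus e} \! \mu_{e',f_i}(x_{e'})
  \;\geq\; \kappa_{e, f_j}^{-1} \cdot f_{i,\min}
  \;>\; 0,
\end{align*}
and the symmetric argument from the opposite endpoint of $e$ gives $\mu_{e, f_i}(z_e) > 0$ for every $z_e$.

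With pointwise strict positivity of every SPA fixed-point message established, both inequalities of~\eqref{eqn: positive of belief at e} become sums of strictly positive terms. For the edge condition, $\sum_{x_e} \prod_{f \in \setpe} \mu_{e,f}(x_e) = \sum_{x_e} \mu_{e,f_i}(x_e) \cdot \mu_{e,f_j}(x_e) > 0$. For the function-node condition, $\sum_{\xf} f(\xf) \prod_{e \in \setpf} \mu_{e,f}(x_e) > 0$ for the same reason, using strict positivity of $f$ itself. I do not anticipate any serious obstacle: the only subtlety is that the probability-simplex normalization of Assumption~\ref{asmp: assume messages are non-negative} must be internally consistent with~\eqref{sec:SNFG:eqn:8} at a fixed point, which is precisely what the lower bound on $\kappa_{e,f_j}$ guarantees.
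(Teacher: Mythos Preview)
The paper does not supply its own proof of this proposition; it simply cites \cite[Proposition~5]{Yedidia2005} and closes with the end-of-statement symbol. There is therefore no in-paper argument to compare against.

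Your self-contained argument is correct and is the standard one. The only point worth tightening is the flow of logic around the fixed-point equation: you need $\kappa_{e,f_j} > 0$ \emph{before} you can assert that the fixed-point relation~\eqref{sec:SNFG:eqn:8} holds in its usual form (otherwise, by Definition~\ref{sec:SNFG:def:4}, the update falls into the random-regeneration branch, which cannot yield a fixed point). You do in fact establish $\kappa_{e,f_j} \geq f_{i,\min}\cdot|\setxe| > 0$ first, using only that the incoming messages lie in $\Pi_{\setxe}$ (Assumption~\ref{asmp: assume messages are non-negative}) and that $f_i$ is strictly positive on its finite domain---so the logic is sound, but it is worth stating explicitly that this is why the fixed-point equation is available. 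After that, strict positivity of every message component and hence of both the edge sum $\sum_{x_e}\mu_{e,f_i}(x_e)\mu_{e,f_j}(x_e)$ and the function-node sum $\sum_{\xf} f(\xf)\prod_{e\in\setpf}\mu_{e,f}(x_e)$ follows exactly as you wrote.
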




Now we introduce the Bethe free energy function, which gives an S-FG-based approximation to the Gibbs free energy function. Proposition~\ref{prop: minimum of FG relates to Z} states that the value of the partition function can be evaluated by the minimum of the Gibbs free energy function. Therefore, the minimum of the Bethe free energy function yields an approximation to the partition function. The following definitions are adapted from~\cite[Section V]{Yedidia2005} to S-NFGs.


\begin{definition}\label{sec:SNFG:def:3}
  \!\cite[Section V]{Yedidia2005}
  \index{Local marginal polytope!for S-NFG}
  Let $\graphN$ be some S-NFG. Recall that 
  the collection of the beliefs $  \vbeli $ is defined in~\eqref{eqn: def of beliefs}.
  The local marginal polytope $\LMP(\graphN)$
  associated with $\graphN$ is defined to be the following set:
  \begin{align*}
    \LMP(\graphN)
      &\defeq \left\{ 
        \vbeli
      \ \middle| \ 
        \begin{array}{c}
          \vbeli_f \in \Pi_{\setxf} \ \text{for all} \ f \in \setF \\
          \vbeli_e \in \Pi_{\setxe} \ \text{for all} \ e \in \setE \\[0.10cm]
          \sum\limits_{\xf: \, \xe = \ze}
            \beli_f(\xf)
              = \beli_e(\ze) \\
                  \ \text{for all} \ 
                  e \in \setpf, f \in \setF, \ze \in \setxe
        \end{array}
      \right\}. 
  \end{align*}

  \edefinition
\end{definition}


\begin{definition}
  \label{sec:SNFG:def:1}
  \index{Bethe free energy function for S-NFG}
  \!\cite[Section V]{Yedidia2005}
  Let $\graphN$ be some S-NFG. The Bethe free energy function
  is defined to be the following mapping:
  \begin{align*}
    \FB(\vbeli): \LMP(\graphN) &\to \sR, \quad
      \vbeli \mapsto  U_{\mathrm{B}}(\vbeli) -  H_{\mathrm{B}}(\vbeli),
  \end{align*}
  where
  \begin{align*}
    U_{\mathrm{B}}(\vbeli): \LMP(\graphN) &\to \sR,\quad
    \vbeli \mapsto 
    \sum\limits_{f} U_{\mathrm{B},f}(\vbeli_{f}), 
    \nonumber\\
    H_{\mathrm{B}}(\vbeli): \LMP(\graphN) &\to \sR,\quad
    \vbeli \mapsto \sum\limits_{f}
    H_{\mathrm{B},f}(\vbeli_{f})
    - \sum\limits_{e} H_{\mathrm{B},e}(\vbeli_{e}), 
  \end{align*}
  with
  \begin{align*}
    U_{\mathrm{B},f}(\vbeli_{f}): \Pi_{\setxf} \to \sR,\quad
    &\vbeli_{f} \mapsto
    -\sum\limits_{\xf} 
      \beli_f(\xf)
      \cdot
        \log\bigl( f(\xf) \bigr), \nonumber\\
    H_{\mathrm{B},f}(\vbeli_{f}): \Pi_{\setxf} \to \sR,\quad
    &\vbeli_{f} \mapsto
    -\sum\limits_{\xf} 
      \beli_f(\xf)
      \cdot
        \log\bigl( \beli_f(\xf) \bigr), \nonumber\\
    H_{\mathrm{B},e}(\vbeli_{e}): \Pi_{\setxe} \to \sR,\quad
    & \vbeli_{e} \mapsto -\sum\limits_{\xe} 
      \beli_{e}(\xe)
      \cdot
      \log\bigl( \beli_{e}(\xe) \bigr).
  \end{align*}
  Here, $ U_{\mathrm{B}} $ is the Bethe average energy function and $ H_{\mathrm{B}} $ is the Bethe entropy function.
  (Note that for NFGs with full and half edges, the sum in $ \sum\limits_{e} H_{\mathrm{B},e}(\vbeli_{e}) $ is only over all full
  edges.)  The Bethe approximation of the partition function, henceforth
  called the Bethe partition function, is defined to be
  \begin{align}
    \ZB(\graphN)
      &\defeq 
      \exp
            \left( 
              -
              \min_{\vbeli \in \LMP(\graphN)}
                \FB(\vbeli)
            \right).
              \label{sec:SNFG:eqn:4}
  \end{align}
  \index{Bethe partition function for S-NFG}This definition is often viewed as the primal formulation of the Bethe partition function.
  \edefinition
\end{definition}




\begin{proposition}
  \label{prop: property of the SPA and ZB for cycle-free SNFG}
  Consider $\graphN$ to be cycle-free, \ie, tree-structured.
  \begin{enumerate}
    \item\label{item: cycle-free SNFG ZB equals Z}\cite[Proposition 3]{Yedidia2005} It holds that $ \ZB(\graphN) = Z(\graphN) $.

    \item (See, \eg,~\cite[Section 2.5]{JWainwright2008}.) The SPA on $ \graphN $ converges to a unique fixed point and the beliefs evaluated by the SPA fixed-point messages, as defined in Definition~\ref{def:belief at SPA fixed point for S-NFG}, is at the location of the minimum of the Bethe free energy function, \ie, the value of the Bethe partition function can be evaluated by the fixed-point messages. 
  \end{enumerate}
  By these two properties, we know that for a cycle-free S-NFG $\graphN$, 
  the value of the partition function can be evaluated by the SPA fixed-point messages.
  \eproposition
\end{proposition}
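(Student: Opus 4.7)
The plan is to address the two items separately, exploiting the tree structure of $\sfN$ in each case.

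\textbf{Item 1 ($\ZB(\sfN) = Z(\sfN)$).} Since $\sfN$ is tree-structured with full edges only, every edge $e$ is a cut edge separating $\sfN$ into two subtrees. The first step is the classical junction-tree factorization: for any joint distribution $p(\vx)$ on $\set{C}$ with single-node marginals $\beli_f(\xf)$ and edge marginals $\beli_e(\xe)$ consistent with the tree, one has $p(\vx) = \prod_f \beli_f(\xf) \big/ \prod_e \beli_e(\xe)$. Second, given any $\vbeli \in \LMP(\sfN)$, I would define $p_{\vbeli}(\vx)$ by this very formula and verify that $p_{\vbeli} \in \Pi_{\set{C}}$ and that its marginals are indeed the $\beli_f$ and $\beli_e$ (this uses only the local consistency constraints of $\LMP(\sfN)$ together with the absence of cycles). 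Third, a direct substitution shows $H_{\mathrm{G}}(p_{\vbeli}) = \sum_f H_{\mathrm{B},f}(\vbeli_f) - \sum_e H_{\mathrm{B},e}(\vbeli_e) = H_{\mathrm{B}}(\vbeli)$, and likewise $U_{\mathrm{G}}(p_{\vbeli}) = U_{\mathrm{B}}(\vbeli)$ since $\log g(\vx) = \sum_f \log f(\xf)$ depends on $p_{\vbeli}$ only through the $\beli_f$. Hence $F_{\mathrm{G}}(p_{\vbeli}) = \FB(\vbeli)$. Conversely, every $p \in \Pi_{\set{C}}$ has a consistent collection of marginals in $\LMP(\sfN)$. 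Thus the minima of $F_{\mathrm{G}}$ and $\FB$ coincide, and by Proposition~\ref{prop: minimum of FG relates to Z} this common minimum equals $-\log Z(\sfN)$, giving $\ZB(\sfN) = Z(\sfN)$.

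\textbf{Item 2 (SPA convergence and correctness on a tree).} I would argue by the standard two-sweep schedule. Root the tree at an arbitrary function node and order the edges by distance from the root. A straightforward induction on the depth of a subtree shows that after a number of iterations equal to the diameter of $\sfN$, every message $\mu_{\etof}^{(t)}$ has stabilized and equals, up to the normalization fixed by Assumption~\ref{asmp: assume messages are non-negative}, the ``subtree sum'' obtained by marginalizing out every variable on the far side of $e$ from $f$. This gives existence of a fixed-point message vector $\vmu$. Uniqueness of the fixed point (under the chosen normalization) follows because on a tree the SPA update map is, after one full sweep, a constant map: the output does not depend on the input message vector at all, only on the local functions.

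For the variational part of Item 2, I would verify that the beliefs $\vbeli$ induced by the SPA fixed-point messages via Definition~\ref{def:belief at SPA fixed point for S-NFG} lie in $\LMP(\sfN)$ (the product form of $\beli_f$ and $\beli_e$ together with the subtree-sum interpretation of the messages makes the marginalization identity immediate) and that they satisfy the KKT stationarity conditions of the convex optimization $\min_{\vbeli \in \LMP(\sfN)} \FB(\vbeli)$. Writing the Lagrangian with multipliers $\lambda_{\etof}(\xe)$ for the edge-consistency constraints and setting partial derivatives to zero recovers exactly the SPA fixed-point equations, with $\log \mu_{\etof}$ playing the role of the Lagrange multipliers (up to normalization). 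Combined with Item~1, this shows that $\FB(\vbeli) = -\log Z(\sfN)$, so the belief induced by the (unique) SPA fixed point attains the minimum of $\FB$ and hence evaluates $\ZB(\sfN)$.

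\textbf{Main obstacle.} The genuinely routine parts are the junction-tree factorization and the induction on tree depth; they are standard. The step that requires the most care is matching the SPA fixed-point equations to the KKT conditions of $\min \FB$ while respecting the chosen normalization in Assumption~\ref{asmp: assume messages are non-negative} and the possibility that some local functions vanish on parts of $\setxf$ (so that the logs in $\FB$ must be handled with the $0 \log 0 = 0$ convention). I expect this bookkeeping, rather than any conceptual difficulty, to be the delicate point.
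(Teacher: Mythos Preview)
Your proof plan is sound and follows the standard textbook route. Note, however, that the paper does not actually prove this proposition: it simply cites \cite[Proposition~3]{Yedidia2005} for Item~1 and \cite[Section~2.5]{JWainwright2008} for Item~2, treating both as known results from the literature. Your junction-tree factorization argument and the KKT/two-sweep analysis are essentially what one finds in those references.

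One small remark on your Item~1 argument: the ``conversely'' step is slightly underspecified. Merely observing that every $p \in \Pi_{\set{C}}$ has marginals lying in $\LMP(\sfN)$ does not by itself yield $\min_{p} F_{\mathrm{G}}(p) \geq \min_{\vbeli} \FB(\vbeli)$, because for a generic $p$ one only has $U_{\mathrm{G}}(p) = U_{\mathrm{B}}(\vbeli_p)$, not equality of the entropy terms. To close the argument you need one additional fact: either that the Gibbs minimizer $p^*(\vx) = g(\vx)/Z(\sfN)$ itself satisfies the junction-tree factorization on a tree (so that $F_{\mathrm{G}}(p^*) = \FB(\vbeli_{p^*})$ for this particular $p^*$), or equivalently that among all distributions with prescribed tree marginals the junction-tree distribution maximizes entropy. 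Either version is routine, but the step should be stated explicitly rather than folded into ``thus the minima coincide.''
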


\begin{remark}
  Consider some S-NFG $ \sfN $ with cycles. 
  Suppose that we get an SPA fixed-point message vector $ \vmu $ and obtain a collection of the beliefs $ \vbeli $ based on $ \vmu $, following Definition~\ref{def:belief at SPA fixed point for S-NFG}. 
  One can verify that 
  \begin{align*}
    \vbeli \in \LMP(\graphN).
  \end{align*}
  Therefore, the value of the function $ \exp \bigl( -\FB(\vbeli) \bigr) $, which is obtained by $ \vmu $, can be viewed as an approximation of the value of the Bethe partition function.
\end{remark}

For an S-NFG with cycles, the behavior of the SPA varies and the Bethe partition function provides an approximation of the partition function. One of the main results in the paper by Yedidia \etal~\cite[Section V]{Yedidia2005} is a characterization of the stationary points of the Bethe free energy function in terms of the SPA fixed-point messages. 
\begin{theorem}
  \label{thm: Yedidia results on SPA and FB}
  Consider some S-NFG $ \sfN $.
  \begin{enumerate}

    \item\hspace{-0.01cm}\cite[Theorem 2]{Yedidia2005} If the stationary point of the Bethe free energy function is in the interior of the LMP, \ie, 
    \begin{align*}
      \vbeli \in \LMP(\graphN),\quad \vbeli_{f} \in \sR_{>0}^{|\setx{\setpf}|}, \qquad 
      f \in \setF,
    \end{align*}
    then this stationary point corresponds to an SPA fixed point of $ \sfN $.

    \item\hspace{-0.01cm}\cite[Theorem 3]{Yedidia2005} If all the local functions in $ \sfN $ are positive-valued, \ie,  
    \begin{align*}
        f(\xf) \in \sR_{>0}, \qquad
        \xf \in \setxf,\, f \in \setF ,
    \end{align*}
    then each of the local minima of the Bethe free energy function corresponds to an SPA fixed point of $ \graphN $. 
  \end{enumerate}
  \etheorem
\end{theorem}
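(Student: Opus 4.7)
}

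The plan is to carry out a Lagrange-multiplier analysis of the constrained optimization of $\FB$ over $\LMP(\graphN)$, and then identify the resulting stationarity conditions with the SPA fixed-point equations in~\eqref{sec:SNFG:eqn:8}--\eqref{sec:SNFG:eqn:2}. The non-trivial constraints defining $\LMP(\graphN)$ are the normalization constraints $\sum_{\xf} \beli_f(\xf) = 1$, $\sum_{\xe} \beli_e(\xe) = 1$, together with the marginalization (consistency) constraints $\sum_{\xf: x_e = z_e} \beli_f(\xf) = \beli_e(z_e)$ for each triple $(e,f,z_e)$ with $e \in \setpf$. I would introduce scalar Lagrange multipliers for the two families of normalization constraints and, crucially, a multiplier $\lambda_{e,f}(z_e)$ for each marginalization constraint, one per $(e,f,z_e)$. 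In the interior of $\LMP(\graphN)$, no inequality constraint is active, so KKT reduces to plain stationarity of the Lagrangian $L(\vbeli,\vlam,\ldots) = \FB(\vbeli) + (\text{multiplier terms})$.

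First I would compute $\partial L/\partial \beli_f(\xf)$ and $\partial L/\partial \beli_e(\xe)$. Using the explicit forms of $U_{\mathrm{B},f}$, $H_{\mathrm{B},f}$, $H_{\mathrm{B},e}$ from Definition~\ref{sec:SNFG:def:1}, setting these derivatives to zero yields
\begin{align*}
\log\beli_f(\xf) &= \log f(\xf) + \sum_{e \in \setpf} \lambda_{e,f}(\xe) + \text{const}_f , \\
\log\beli_e(\xe) &= \sum_{f \in \setpe} \lambda_{e,f}(\xe) + \text{const}_e ,
\end{align*}
after absorbing the normalization multipliers into additive constants. Exponentiating, I identify $\mu_{\efj}(\xe) \defeq \exp\bigl( \lambda_{e,f_i}(\xe) \bigr)$ (i.e., the message arriving at $f_j$ corresponds to the multiplier on the consistency constraint associated with the \emph{other} endpoint $f_i$). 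Then $\beli_f(\xf) \propto f(\xf) \prod_{e\in\setpf} \mu_{\etof}(\xe)$ by construction, matching~\eqref{sec:SNFG:eqn:1}, and $\beli_e(\xe) \propto \mu_{\efi}(\xe)\cdot\mu_{\efj}(\xe)$, matching~\eqref{sec:SNFG:eqn:2}. Finally, substituting the expression for $\beli_f$ into the marginalization constraint and eliminating $\mu_{\efi}$ on both sides (which is the step that requires the interior assumption, so division is legitimate) reproduces exactly the SPA update~\eqref{sec:SNFG:eqn:8}. This establishes Item~1.

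For Item~2, the task is to upgrade ``interior'' to ``any local minimum''. Assuming every $f$ is strictly positive on $\setxf$, I would show that no local minimum can lie on the relative boundary of $\LMP(\graphN)$. Concretely, if at some $\vbeli^{\star}$ some component $\beli_f(\xf^{\star})$ (or $\beli_e(\xe^{\star})$) equals $0$ while feasibility permits a nearby interior point, then consider a perturbation along a feasible direction that raises $\beli_f(\xf^{\star})$ by $\epsilon$. The contribution from $-\beli_f\log\beli_f$ changes like $-\epsilon\log\epsilon$, whose derivative blows up to $+\infty$ as $\epsilon\downarrow 0$, and thus decreases $\FB$ strictly because $U_{\mathrm{B},f}$ and all remaining entropy terms contribute only bounded (finite) linear terms thanks to $f(\xf^{\star})>0$. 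Hence local minima must be interior, and Item~1 applies.

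The main obstacle I anticipate is bookkeeping: matching the multipliers to the directed messages in a way that is consistent with the way~\eqref{sec:SNFG:eqn:8} treats $f_i,f_j$ asymmetrically, and justifying the boundary-exclusion argument for Item~2 along every feasible direction (since a boundary point of $\LMP(\graphN)$ may force several beliefs to vanish simultaneously due to the linear coupling constraints). Handling this carefully requires identifying, for each vanishing $\beli_f(\xf^{\star})$, a feasible perturbation that does not violate the marginalization constraints, which one can construct by perturbing a corresponding $\beli_e(\xe^{\star})$ and a compensating configuration in the same marginal class. Once this feasibility argument is in place, the entropy-gradient blow-up finishes Item~2.
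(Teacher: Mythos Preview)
The paper does not actually prove this theorem: it is stated with explicit attributions to \cite[Theorem~2]{Yedidia2005} and \cite[Theorem~3]{Yedidia2005} and is used as a black box, so there is no ``paper's own proof'' to compare against. Your Lagrangian sketch is correct and is essentially the argument Yedidia, Freeman, and Weiss give in their original paper: introduce multipliers for the normalization and marginal-consistency constraints, differentiate, identify $\mu_{\ef}\propto\exp(\lambda_{e,f'})$ (multiplier attached to the \emph{other} endpoint), and recover~\eqref{sec:SNFG:eqn:8}--\eqref{sec:SNFG:eqn:2}; for Item~2, the entropy derivative $-\log\beli\to+\infty$ forces local minima into the interior when all $f(\xf)>0$. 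Your anticipated bookkeeping issues (message--multiplier orientation, constructing feasible boundary perturbations) are the real technical content and are handled in \cite{Yedidia2005} exactly along the lines you indicate.
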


For S-NFG $ \sfN $ with cycles, 
the SPA on $ \sfN $ might converge to a fixed point, or it might fail to converge.
The Bethe partition function $\ZB(\graphN)$ can be smaller,
equal, or larger than the partition function $\ZB(\graphN)$.

\begin{enumerate}

    \item In~\cite[Corollary 4.2]{Ruozzi2012}, Ruozzi proved that the Bethe partition function lower bounds the partition function for S-NFGs whose global functions are a binary and log-supermodular function. In subsequent work~\cite{Ruozzi2013}, Ruozzi extended this lower bound to certain classes of S-NFGs whose global functions are not necessarily binary or log-supermodular, including S-NFGs for ferromagnetic Potts models with a uniform external field~\cite[Theorems 5.2 and 5.4]{Ruozzi2013} and its generalizations~\cite[Theorem 5.6]{Ruozzi2013}, as well as a class of S-NFGs related to the problem of counting weighted graph homomorphisms.~\cite[Theorem 6.2]{Ruozzi2013}.

    \item In~\cite{Vontobel2013a}, Vontobel investigated the S-NFG whose partition function equals the permanent of a non-negative square matrix. For such S-NFG, he showed that the Bethe free energy function is a convex function and the SPA finds the minimum of the Bethe free energy function exponentially fast. Huang and Vontobel~\cite{Huang2024} extended Vontobel's results to a broader class of bipartite S-FGs where each local function is defined based on a (possibly different) multi-affine homogeneous real stable (MAHRS) polynomial. They further showed that a certain projection of the local marginal polytope equals the convex hull of the set of valid configurations, the Bethe free energy function possesses some convexity properties, and, for the typical case where the S-FG has an SPA fixed point consisting of positive-valued messages only, the SPA finds the value of the Bethe partition function exponentially fast.

    \item Gurvits in~\cite[Theorem 2.2]{Gurvits2011} proved that for the S-NFG studied in~\cite{Vontobel2013a}, the Bethe partition function lower bounds the partition function,
    \ie, he proved the lower bound in~\eqref{eq:ratio:permanent:bethe:permanent:1}. Subsequent work by Straszak and Vishnoi~\cite[Theorem 3.2]{Straszak2019} and Anari and Gharan~\cite[Theorem 1.1]{Anari2021} extended this lower bound to the S-NFGs where each local function corresponds to a (possibly different) multi-affine real stable polynomial. (Note that they did not assume homogenity.) 
    The class of the S-NFGs considered in~\cite{Straszak2019,Anari2021} is a generalization of the class of the S-NFGs studied in~\cite{Vontobel2013a,Huang2024}.
    
\end{enumerate}

Besides bounding the partition function using the Bethe partition function, we can also connect these two functions via finite graph covers.  
Unlike the analytical definition of the Bethe partition function in~\eqref{sec:SNFG:eqn:4}, Vontobel in~\cite{Vontobel2013} provided a combinatorial characterization of the Bethe partition function in terms of finite graph covers. A key step in achieving this characterization was to use the method of types to characterize valid configurations in $M$-covers. This characterization is illustrated in Fig.~\ref{fig: combinatorial chara for sfg}.
For $ M$ being one, the degree-$M$ Bethe partition function is equal to the partition function. As $ M $ grows to infinity, the limit superior of the degree-$M$ Bethe partition function is equal to the Bethe partition function. Section~\ref{sec:GraCov} provides technical details about $M$-covers and  the degree-$M$ Bethe partition function. 


Given that the Bethe partition function is closely related to the SPA, in the following, we give a pseudo-dual formulation of the Bethe partition function based on the SPA fixed-point message vectors.


\begin{definition}(The pseudo-dual formulation of the Bethe partition function for S-NFGs)
  \label{sec:SNFG:def:2}
  \index{Pseudo-dual Bethe partition function!for S-NFG}
  Consider some S-NFG $\graphN$ and let
  $\vmu $ be some SPA
  message vector for $\graphN$ as defined in~\eqref{sec:SNFG:eqn:10}.\footnote{The SPA message vector $\vmu$ does
    not necessarily have to be an SPA fixed-point message vector.} 
  If 
  \begin{align}
    Z_e(\vmu) > 0, \qquad e \in \setEfull,
    \label{eqn:positivity of SNFG Ze}
  \end{align}
  then we define the
  $\vmu$-based Bethe partition function to be
  \begin{align}
    \ZBSPA(\graphN,\vmu)
    &\defeq 
    \frac{\prod_f Z_f(\vmu)
    }{\prod\limits_{e} Z_e(\vmu)},
    \label{sec:SNFG:eqn:5}
  \end{align}
  where
  \begin{alignat}{2}
    Z_f(\vmu)
    &\defeq
      \sum\limits_{\xf}
        f(\xf)
          \cdot
          \prod\limits_{e \in \setpf}
            \mu_{\etof}(\xe),
            \ 
            \qquad f \in \setF,
                \nonumber \\
    Z_e(\vmu)
      &\defeq
        \sum\limits_{\xe}
          \prod\limits_{f \in \setpe}
          \mu_{\ef}(\xe),
          \qquad e \in \setEfull.
              \label{sec:SNFG:eqn:6}
  \end{alignat}
  (For NFGs with full and half edges, the product in the denominator on the
  right-hand side of~\eqref{sec:SNFG:eqn:5} is only over all full
  edges. Moreover, note that the expression for $\ZBSPA(\graphN,\vmu)$ is scaling
  invariant, \ie, if any of the messages $\mu_{\ef}$ is multiplied by a
  non-zero complex number then $\ZBSPA(\graphN,\vmu)$ remains unchanged.)
  The SPA-based Bethe partition function $\ZBSPA^{*}(\graphN)$ is then defined
  to be
  \begin{align*}
    \begin{aligned}
      \ZBSPA^{*}(\graphN)
          &\defeq \max\limits_{\vmu}
          \ZBSPA(\graphN,\vmu)
          \nonumber\\
          &\qquad \mathrm{s.t.} \
          \text{$\vmu$ is an SPA fixed-point message vector 
          satisfy~\eqref{eqn:positivity of SNFG Ze}.}
      \end{aligned}
  \end{align*}
  The above definition is viewed as the pseudo-dual formulation of the Bethe partition function.
  \edefinition
\end{definition}

 \begin{remark}\label{sec:SNFG:remk:1}
   Let us make some remarks on the pseudo-dual formulation of the Bethe partition function in Definition~\ref{sec:SNFG:def:2}.
   \begin{enumerate}
        \item Consider an arbitrary cycle-free S-NFG $\graphN$. 
        Proposition~\ref{prop: property of the SPA and ZB for cycle-free SNFG} 
        states that the SPA on $\graphN$ converges to a unique fixed point
         and that $ \ZB(\graphN) = Z(\graphN) $.
        Denote the associated SPA fixed-point message vector by~$\vmu$. 
        One can verify that $\ZBSPA(\graphN,\vmu) = Z(\graphN)$. 

        \item By Assumption~\ref{asmp: assume messages are non-negative}, in this thesis, we only consider the S-NFGs where $  \ZBSPA^{*}(\graphN) $, as defined in Definition~\ref{sec:SNFG:def:2}, is well-defined, \ie, the SPA fixed-point messages $ \vmu $ such that
        \begin{align*}
            \ZBSPA(\graphN,\vmu) = \ZBSPA^{*}(\graphN), \qquad
            Z_e(\vmu) >0, \qquad e \in \setEfull.
        \end{align*}
        See Item~\ref{sec:SNFG:remk:1:item:2} for a non-trivial S-NFG where this condition is not satisfied.

        \begin{figure}[t]
            \centering
            \captionsetup{font=scriptsize}
            \begin{minipage}[t]{0.4\linewidth}
              \centering
              \begin{tikzpicture}[node distance=2cm, remember picture]
                \tikzstyle{state}=[shape=rectangle,fill=white, draw, minimum width=0.2cm, minimum height = 0.2cm,outer sep=-0.3pt]
                \tikzset{dot/.style={circle,fill=#1,inner sep=0,minimum size=3pt}}
                \node[state] (f2) at (0,0) [label=above: $f_{2}$] {};
                \node[state] (f1) at (0,1) [label=above: $f_{1}$] {};
                \node [dot=black] at (f2.west) {};
                \node [dot=black] at (f1.west) {};
                \node (x1) at (-0.8,0.5) [label=left: $x_{1}$] {};
                \node (x2) at (0.8,0.5) [label=right: $x_{2}$] {};
                \begin{pgfonlayer}{background}
                  \draw[-,draw]
                      (1,0) -- (-1,0) (1,0) -- (1,1) (-1,0) -- (-1,1) 
                      (-1,1) -- (1,1);
                \end{pgfonlayer}
              \end{tikzpicture}
            \end{minipage}%
            \caption{S-NFG $\sfN$ in Item~\ref{sec:SNFG:remk:1:item:2} in Remark~\ref{sec:SNFG:remk:1}.\label{sec:SNFG:fig:2}}
        \end{figure}

        \item\label{sec:SNFG:remk:1:item:2} Let us consider the S-NFG in Fig.~\ref{sec:SNFG:fig:2}, where the dots are used for denoting the row index of the matrices $ \matr{f}_{1} $ and $ \matr{f}_{2} $ w.r.t. the function nodes $ f_{1} $ and $ f_{2} $, respectively. In particular, these matrices are given by
        \begin{align*}
            \matr{f}_{1} &\defeq 
            \bigl( f_{1}(x_{1},x_{2}) \bigr)_{x_{1} \in \setx{1},x_{2} \in \setx{2}},
            \qquad 
            \matr{f}_{2} \defeq 
            \bigl( f_{2}(x_{1},x_{2}) \bigr)_{x_{1} \in \setx{1},x_{2} \in \setx{2}}
        \end{align*}
        with row indexed by $ x_{1} $ and column indexed by $ x_{2} $. Running the SPA on this S-NFG is equivalent to applying the power method to the matrix $ \matr{f}_{1} \cdot \matr{f}_{2}^{\tran} $. If we set
        \begin{align*}
            \matr{f}_{1} 
            = \begin{pmatrix}
              1 & 1 \\ 0 & 1
            \end{pmatrix},
            \qquad 
            \matr{f}_{2} 
            = \begin{pmatrix}
              1 & 0 \\ 0 & 1
            \end{pmatrix},
        \end{align*}
        then the SPA fixed-point messages are given by
        \begin{align*}
            \vmu_{1,f_{1}}= \vmu_{2,f_{2}} = \begin{pmatrix}
                0, & 1
            \end{pmatrix}^{\!\!\tran}, \quad
            \vmu_{2,f_{1}}= \vmu_{1,f_{2}} = \begin{pmatrix}
                1, & 0
            \end{pmatrix}^{\!\!\tran}, \quad
        \end{align*}
        which results in 
        \begin{align*}
            Z_{1}(\vmu) 
            = Z_{2}(\vmu) = 0.
        \end{align*}
        We cannot evaluate $ \ZBSPA(\graphN,\vmu) $ for this S-NFG $ \graphN $.

        \item~\cite[Section VI-C]{Yedidia2005} A sufficient condition for $ \ZBSPA^{*}(\graphN) $ being well-defined for the S-NFG $ \sfN $ is that the local functions in
        $ \sfN $ satisfy
        \begin{align*}
            f(\vxf) > 0, \qquad \vxf \in \setxf,\, f \in \setF.
        \end{align*}
        For such S-NFG $ \sfN $, both the SPA fixed-point messages and the beliefs at the locations of the local minima of the Bethe free energy function are positive-valued. 
        Thus the function $ \ZBSPA^{*}(\graphN) $ is well-defined. The statement proven by Yedidia \etal~in~\cite[Theorem 3]{Yedidia2005} further implied that 
        \begin{align*}
            \ZB(\graphN) = \ZBSPA^{*}(\graphN).
        \end{align*}

        \item 
        Consider running the SPA defined in Definition~\ref{sec:SNFG:def:4} on some S-NFG $\graphN$. 
        Suppose that we obtain an SPA fixed-point message vector $\vmu$. Following Definition~\ref{def:belief at SPA fixed point for S-NFG}, we obtain 
        the collection of the beliefs $ \vbeli \in \LMP(\sfN) $, which is induced by $\vmu$. 
        Then we relate the value of the Bethe free energy function evaluated at $ \vbeli $
        to the value of the $\vmu$-based Bethe partition function by the following equation:
        \begin{align*}
            \exp \bigl( -\FB(\vbeli) \bigr) = \ZBSPA(\graphN,\vmu).
        \end{align*}
   \end{enumerate}
  \eremark
\end{remark}

\section[The S-NFG for the Matrix Permanent]{The S-NFG \texorpdfstring{$\sfN(\mtheta)$}{} for the Permanent of a Non-negative Square Matrix \texorpdfstring{$\mtheta$}{} }
\label{sec: snfg for permanent of nonnagtive square matrix}



The main purpose of this section is to present an S-NFG whose partition
function equals the permanent of a non-negative matrix~$\mtheta$ of size
$n \times n$. There are different ways to develop such
S-NFGs~\cite{Greenhill2010, Vontobel2013a}. (For a comparison between the ways
in~\cite{Greenhill2010, Vontobel2013a}, see\cite[Section
VII-E]{Vontobel2013a}.) In this thesis, we consider the S-NFG $ \sfN(\mtheta) $
defined in~\cite{Vontobel2013a}. Throughout this section, let $\mtheta$ be a
fixed non-negative matrix of size $n \times n$.


\begin{definition}
  \label{sec:1:def:5}
  \label{def: details of S-NFG of permanent}
  \index{Normal factor graph!S-NFG for the matrix permanent}
  We define the S-NFG
  $ \sfN(\mtheta) \defeq \sfN( \setF, \setEfull, \set{X}) $ as follows (see
  also Fig.~\ref{sec:1:fig:15} for the special case $n = 3$). Note that $ \sfN(\mtheta) $ is a special case of the S-NFG defined in Definition~\ref{def: def of snfg}.
  \begin{enumerate}

  \item The set of vertices (also called function nodes) is
    \begin{align*}
      \set{F}
        &\defeq
           \{ \fr{i} \}_{i \in [n]} 
           \ \cup \ 
           \{ \fc{j} \}_{j \in [n]}.
    \end{align*}
    Here, the letter ``$\mathrm{r}$'' in $ \fr{i} $ means that $ \fr{i} $
    corresponds to the $ i $-th row of $\mtheta$, whereas the letter
    ``$\mathrm{c}$'' in $ \fc{j} $ means that $ \fc{j} $ corresponds to the
    $ j $-th column of $\mtheta$.

  \item The set of edges is defined to be
    \begin{align*}
      \setEfull
        &\defeq
           [n] \times [n] 
          = \bigl\{ 
              (i,j) 
            \bigm|
              i , j \in [n] 
            \bigr\}.
    \end{align*}

  \item The alphabet associated with edge $ e = (i,j) \in \setEfull $ is
  \begin{align*}
    \setx{e} = \setx{i,j} \defeq \{0,1\}.
  \end{align*}

  \item The set $ \set{X} $ is defined to be
    \begin{align*}
      \set{X}
        &\defeq
           \prod\limits_{e}
             \setx{e} 
         = \prod\limits_{i,j}
             \setx{i,j}
    \end{align*}
    and is called the configuration set of $ \sfN(\mtheta) $.

  \item A matrix like
    \begin{align*}
      \vgamzo
        &\defeq
           \bigl( 
             \gamzo(i,j)
           \bigr)_{\! (i,j) \in [n] \times [n]}
             \in \setX
    \end{align*} 
    is used to denote an element of $ \set{X} $, \ie, a configuration of
    $ \sfN(\mtheta) $. Here, $ \gamma(i,j) $ is associated with edge
    $ e = (i,j) \in \setEfull$. In the following, the $i$-th row and the
    $j$-th column of $\vgamzo$ will be denoted by $ \vgamzo(i,:) $ and
    $ \vgamzo(:,j) $, respectively.

  \item For a given configuration $\vgamzo \in \setX$, the value of the local
    functions of $\sfN(\mtheta)$ are defined as follows. Namely, for every
    $ i \in [n] $, the local function $ \fr{i} $ is defined to
    be\footnote{Note that, with a slight abuse of notation, we use the same
      label for a function node and its associated local function.}
    \begin{align*}
      \fr{i}\bigl( \vgamzo(i,:) \bigr)
        &\defeq
           \begin{cases}
             \sqrt{ \theta(i,j) } & \vgamzo(i,:) = \vuj  \\
             0                    & \text{otherwise}
          \end{cases}.
    \end{align*}
    For every $ j \in [n] $, the local function $ \fc{j} $ is defined to be
    \begin{align*}
        \fc{j}\bigl( \vgamzo(:,j) \bigr) \defeq
        \begin{cases}
            \sqrt{ \theta(i,j) } & \vgamzo(:,j) = \vui \\
            0 & \text{otherwise}
        \end{cases}.
    \end{align*}
    Here we used the following notation: the vector $\vuj$ stands for a row
    vector of length $n$ whose entries are all equal to~$0$ except for the
    $j$-th entry, which is equal to~$1$. The column vector $\vui$ is defined
    similarly.
  
  \item For a given configuration $\vgamzo \in \setX$, the value of the global
    function $g$ of $\sfN(\mtheta)$ is defined to be the product of the values
    of the local functions, \ie,
    \begin{align*}
      \gtheta(\vgamzo)
        &\defeq
           \Biggl( 
             \prod\limits_{i} 
               \fr{i}\bigl( \vgamzo(i,:) \bigr)
           \Biggr)
          \cdot
          \Biggl(
            \prod\limits_{j}
              \fc{j}\bigl( \vgamzo(:,j) \bigr) 
          \Biggr) .
    \end{align*}
  
  \item The set of valid configurations of $ \sfN(\mtheta) $ is
    defined to be
    \begin{align*}
      \setA(\mtheta)
        &\defeq
           \setA\bigl( \sfN(\mtheta) \bigr)
         \defeq 
           \bigl\{
             \vgamzo \in \setX
           \bigm|
             \gtheta(\vgamzo) > 0
           \bigr\}.
    \end{align*}
    
  \item The partition function of $ \sfN(\mtheta) $ is defined to be
    \begin{align*}
      Z\bigl( \sfN(\mtheta) \bigr)
        &\defeq 
           \sum\limits_{\vgamzo \in \setX }
             \gtheta(\vgamzo)
         = \sum\limits_{\vgamzo \in \setA(\mtheta) }
             \gtheta(\vgamzo).
    \end{align*}
    \label{partition function of N theta}

  \end{enumerate}
  \vspace{-0.25cm}
  \edefinition
\end{definition}


\begin{figure}[t]
  \centering
  \begin{tikzpicture}[node distance=2.2cm, on grid,auto,scale=1.50]
    \tikzstyle{state}=[shape=rectangle,fill=white,draw,minimum size=0.3cm]
    \pgfmathsetmacro{\ldis}{2} 
    \pgfmathsetmacro{\sdis}{0.5} 
    \pgfmathsetmacro{\scale}{0.8} 
    \begin{pgfonlayer}{glass}
        \foreach \i in {1,2,3}{
            \node[state] (f\i) at (0,-\i*\scale) [label=above: \scriptsize$\fr{\i}$] {};
            \node[state] (g\i) at (\ldis,-\i*\scale) [label=above: \scriptsize$\fc{\i}$] {};
        }
        \foreach \i in {1,2,3}{
            \foreach \j in {1,2,3}{
                \begin{pgfonlayer}{background}
                     \draw[]
                        (f\i) -- (g\j) ;
                \end{pgfonlayer}
            }
        }
    \end{pgfonlayer}
\end{tikzpicture}
\vspace{0.3cm}
  \caption{The S-NFG $ \sfN(\mtheta) $ for the special case $ n = 3 $.}
  \label{sec:1:fig:15}
  \vspace{-0.0cm}
\end{figure}


We make the following observations:
\begin{enumerate}

\item One can verify that $Z\bigl( \sfN(\mtheta) \bigr) = \perm(\mtheta)$.

\item If $ \gtheta(\vgamzo) > 0 $, then $ \vgamzo \in \setP_{[n]} $,
  \ie, $\vgamzo$ is a permutation matrix of size $n \times n$. (In
  fact, if all entries of $\mtheta$ are strictly positive, then
  $ \gtheta(\vgamzo) > 0 $ if and only if $ \vgamzo \in \setP_{[n]} $.)

\end{enumerate}


For the following definition, recall that $\Gamma_{n}$ is the set of all
doubly stochastic matrices of size $n \times n$ and that $\Gamma_{M,n}$ is the
set of all doubly stochastic matrices of size $n \times n$ where all entries
are multiples of $1/M$.

For the following definition, recall that $\Gamma_{n}$ is the set of all
doubly stochastic matrices of size $n \times n$ and that $\Gamma_{M,n}$ is the
set of all doubly stochastic matrices of size $n \times n$ where all entries
are multiples of $1/M$.


\begin{definition}
  \label{sec:1:def:10}

  Consider the S-NFG $ \sfN(\mtheta) $. Let $M \in \sZpp $.  We make the
  following definitions.
  \begin{enumerate}

  \item We define $\Gamnthe$ to be the set of matrices in $\Gamma_{n}$ whose
    support is contained in the support of $\mtheta$, \ie,
    \begin{align*}
      \Gamnthe
        &\defeq 
           \bigl\{ 
             \vgam \in \Gamma_{n}
           \bigm|
             \gamma(i,j) = 0 \text{ if } \theta(i,j) = 0
           \bigr\}.
    \end{align*}

  \item We define $\GamMnthe$ to be the set of matrices in $\Gamma_{M,n}$
    whose support is contained in the support of $\mtheta$, \ie,
     \begin{align*}
       \GamMnthe
         &\defeq 
            \bigl\{ 
              \vgam \in \Gamma_{M,n}
            \bigm|
              \gamma(i,j) = 0 \text{ if } \theta(i,j) = 0
            \bigr\}.
     \end{align*}

  \item We define $\PiA{\mtheta}$ to be the set of vectors representing
    probability mass functions over $\setP_{[n]}$ whose support is contained
    in $\setA(\mtheta)$, \ie,
    \begin{align*}
    \PiA{\mtheta} \defeq
         \left\{ 
           \vp = \bigl( p(\mP_{\sigma}) \bigr)_{\! \mP_{\sigma} \in \setP_{[n]}}
        \ \middle| \!
          \begin{array}{c}
            p(\mP_{\sigma}) \geq 0, \, 
              \forall \mP_{\sigma} \in \setP_{[n]} \\
            p(\mP_{\sigma}) = 0, \, 
              \forall \mP_{\sigma} 
                        \notin \set{A}(\mtheta) \\
            \sum\limits_{\mP_{\sigma} \in \setP_{[n]}} p(\mP_{\sigma}) = 1
          \end{array} \!
        \right\}.
    \end{align*}

  \item Let $\vgam \in \Gamma_n$. We define
    \begin{align*}
      \PiA{\mtheta}( \vgam )
        &\defeq 
          \left\{ 
              \vp 
              \in \PiA{\mtheta}
          \ \middle| \ 
              \sum\limits_{\mP_{\sigma} \in \setP_{[n]}}
                p(\mP_{\sigma}) \cdot \mP_{\sigma} = \vgam
          \right\}.
    \end{align*}

   \item For any matrix $ \matr{B} \in \sR^{n \times n}_{\geq 0} $, we define
   \begin{align*}
       \setS_{[n]}(\matr{B})
         &\defeq 
            \bigl\{ 
              \sigma \in \setS_{[n]} 
            \bigm|
              B\bigl( i, \sigma(i) \bigr) > 0, \, \forall i \in [n]
            \bigr\} \\
         &= \left\{ 
              \sigma \in \setS_{[n]} 
            \ \middle| \ 
              \prod_i 
                B\bigl( i, \sigma(i) \bigr) > 0
            \right\}.
   \end{align*}
   
   \item We define 
     \begin{align*}
       \vsigma_{[M]}
         &\defeq
            (\sigma_{m})_{m \in [M]} \in (\setS_{[n]})^{M} , \\
       \mP_{\vsigma_{[M]}}
         &\defeq
            ( \mP_{\sigma_{m}} )_{ m \in [M] } \in (\setP_{[n]})^{M}.
     \end{align*}

   \item Let $ \vgam \in \Gamma_{M,n} $. We define
     \begin{align*}
       \mtheta^{ M \cdot \vgam }
         &\defeq
            \prod\limits_{i,j}
              \bigl(
                \theta(i,j)
              \bigr)^{\! M \cdot \gamma(i,j)}.
     \end{align*}

   \item Let $\pmtheta$ be the probability mass function on $\setS_{[n]}$
     induced by~$\mtheta$, \ie,
     \begin{align*}
       \pmtheta(\sigma)
         &\defeq
            \frac{ \prod\limits_{i \in [n]} \theta\bigl( i,\sigma(i) \bigr)}
                 { \perm(\mtheta) }
          = \frac{ \mtheta^{\mP_{\sigma}}}
                 { \perm(\mtheta) } , 
                   \quad \sigma \in \setS_{[n]}.
     \end{align*}
     Clearly, $\pmtheta(\sigma) \geq 0$ for all $\sigma \in \setS_{[n]}$ and
     $\sum\limits_{\sigma \in \setS_{[n]}} \pmtheta(\sigma) = 1$.

   \end{enumerate}
  \edefinition
\end{definition}

Note that if $\matr{B}$ is a doubly stochastic matrix of size $n \times n$, then
  the set $\setS_{[n]}(\matr{B})$ is non-empty. Indeed, this follows from
  considering
   \begin{align*}
      \perm( \matr{B} ) 
      = \sum\limits_{\sigma \in \setS_{[n]}(\matr{B})} 
      \prod\limits_{i}
      B\bigl( i, \sigma(i) \bigr) 
      \overset{(a)}{\geq} n!/n^{n} > 0 ,
   \end{align*}
   where step $(a)$ follows from van der Waerden's inequality, which was
   proven in~\cite[Theorem 1]{Egorychev1981} and~\cite[Theorem
   1]{Falikman1981}.

\begin{lemma}
  \label{sec:1:lemma:8}

  The set of valid configurations $ \setA(\mtheta) $ of $\sfN(\mtheta)$
  satisfies
  \begin{align*}
    \setA(\mtheta) = 
      \left\{ 
         \left. 
         \mP_{\sigma} \in \setP_{[n]}
      \ \right| \ 
          \sigma \in \setS_{[n]}(\vtheta)
      \right\} , 
  \end{align*}
  which implies a bijection between $ \setA(\mtheta) $ and
  $ \setS_{[n]}(\vtheta) $.
\end{lemma}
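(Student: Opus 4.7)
The plan is to unpack the definitions of $\setA(\mtheta)$, of $\gtheta$, and of the local functions $\fr{i}$, $\fc{j}$, and then trace through the support conditions row by row and column by column. By Definition~\ref{def: details of S-NFG of permanent}, a configuration $\vgamzo$ lies in $\setA(\mtheta)$ if and only if $\gtheta(\vgamzo) > 0$, which in turn requires every row factor $\fr{i}\bigl(\vgamzo(i,:)\bigr)$ and every column factor $\fc{j}\bigl(\vgamzo(:,j)\bigr)$ to be strictly positive. So the first step is to use the definition of $\fr{i}$ to deduce that each row $\vgamzo(i,:)$ must coincide with some $\vuj$, \ie, have a single entry equal to $1$ (in some column $\sigma(i)$) and all other entries equal to $0$; symmetrically, using $\fc{j}$, each column $\vgamzo(:,j)$ must equal some $\vui$. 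These two row/column conditions together force $\vgamzo \in \setP_{[n]}$, \ie, $\vgamzo = \mP_{\sigma}$ for a uniquely determined $\sigma \in \setS_{[n]}$.

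The second step is to compute $\gtheta(\mP_\sigma)$ explicitly. Substituting $\vgamzo(i,:) = \vect{u}_{\sigma(i)}$ into $\fr{i}$ and $\vgamzo(:,\sigma(i)) = \vect{u}_{i}$ into $\fc{\sigma(i)}$, one obtains
\begin{align*}
  \gtheta(\mP_\sigma)
    &= \prod\limits_{i} \sqrt{\theta(i,\sigma(i))} \cdot \prod\limits_{j} \sqrt{\theta(\sigma^{-1}(j),j)}
     = \prod\limits_{i} \theta\bigl(i,\sigma(i)\bigr),
\end{align*}
where the last equality uses a change of index $j = \sigma(i)$ in the second product. Hence $\gtheta(\mP_\sigma) > 0$ if and only if $\theta(i,\sigma(i)) > 0$ for every $i \in [n]$, which, by the definition of $\setS_{[n]}(\vtheta)$, is equivalent to $\sigma \in \setS_{[n]}(\vtheta)$.

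Combining these two steps yields the set equality in the statement of the lemma. The bijection is then immediate: the map $\sigma \mapsto \mP_{\sigma}$ is well known to be injective on $\setS_{[n]}$ (a permutation matrix uniquely determines its permutation), and by the above its image is exactly $\setA(\mtheta)$. No step presents a substantial obstacle; the entire argument is a direct unpacking of Definition~\ref{def: details of S-NFG of permanent}, with the only mild care needed being the bookkeeping that the product of row square-roots and the product of column square-roots combine to give $\prod_i \theta(i,\sigma(i))$ rather than a stray square-root factor.
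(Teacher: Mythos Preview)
Your proof is correct and takes essentially the same approach as the paper. The paper's proof simply refers back to Observation~2 after Definition~\ref{def: details of S-NFG of permanent} (that $\gtheta(\vgamzo) > 0$ implies $\vgamzo \in \setP_{[n]}$), whereas you spell out in full the unpacking of the local functions $\fr{i}$, $\fc{j}$ and the computation $\gtheta(\mP_\sigma) = \prod_i \theta(i,\sigma(i))$ that underlies this observation.
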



\begin{proof}
  This result follows from the Observation~2 after
  Definition~\ref{sec:1:def:5}.
\end{proof}

\section{Free Energy Functions Associated with 
               \texorpdfstring{$\sfN(\mtheta)$}{}}
\label{sec:free:energy:functions:1}

We introduce various free energy functions associated with
$\sfN(\mtheta)$. First we introduce the Gibbs free energy function, whose
minimum yields $\perm(\mtheta)$. Afterwards, we introduce the Bethe free
energy function and the scaled Sinkhorn free energy function, which are
approximations of the Gibbs free energy functions. Their minima yield the
Bethe permanent $\permb(\mtheta)$ and the scaled Sinkhorn permanent
$\permscs(\mtheta)$, respectively, which are approximations of
$\perm(\mtheta)$. Finally, we introduce the degree-$M$ Bethe permanent
$\permbM{M}(\mtheta)$ and the degree-$M$ scaled Sinkhorn permanent
$\permscsM{M}(\mtheta)$.


\begin{definition}
  \label{sec:1:def:18}
  \index{Gibbs free energy function!for the permanent case}
  Consider the S-NFG $ \sfN(\mtheta) $. The Gibbs free energy function
  associated with $ \sfN(\mtheta) $ (see~\cite[Section III]{Yedidia2005}) is
  defined to be
  \begin{align*}
    \FGthe: \ \PiA{\mtheta} &\to \sR \\
              \vp           &\mapsto \UGthe(\vp) - \HG(\vp)
  \end{align*}
  with
  \begin{align*}
    \UGthe(\vp)   
      &\defeq 
         - 
         \sum\limits_{\mP_{\sigma} \in \setP_{[n]}}
           p( \mP_{\sigma} ) \cdot \log \bigl( g( \mP_{\sigma} ) \bigr) , \\
    \HG(\vp)   
      &\defeq 
         - 
         \sum\limits_{\mP_{\sigma} \in \setP_{[n]}} 
         p( \mP_{\sigma} ) \cdot \log \bigl( p( \mP_{\sigma} ) \bigr) ,
  \end{align*}
  where $ \UGthe $ and $ \HG $ are called the Gibbs average
  energy function and the Gibbs entropy function, respectively.\footnote{Note
    that $\set{P}_{[n]} \subseteq \set{X}$, where the configuration set
    $\set{X}$ of $\sfN(\mtheta)$ was defined in Definition~\ref{def: details
      of S-NFG of permanent}. With this, $\mP_{\sigma} \in \setP_{[n]}$ is a
    configuration of $\sfN(\mtheta)$ and $g(\mP_{\sigma})$ is well defined.}
  \edefinition
\end{definition}


One can verify that
\begin{align*}
  Z\bigl( \sfN(\mtheta) \bigr) 
    &= \exp
         \biggl( 
           - \min_{\vp \in \PiA{\mtheta}} \FGthe(\vp)
         \biggr).
\end{align*}


Recall the Birkhoff--von Neumann theorem (see, \eg,~\cite[Theorem
4.3.49]{Horn2012}), which states that the convex hull of all permutation
matrices of size $n \times n$ equals the set of all doubly stochastic matrices
of size $ n \times n $. Because $\vp \in \PiA{\mtheta}$ is a probability mass
function over the set of all permutation matrices of size $n \times n$ with support contained in the support of $ \mtheta $, the
Gibbs free energy function can be reformulated as follows as a function over
$\Gamma_{n}(\mtheta)$ instead of over~$\PiA{\mtheta}$.



\begin{definition}
  \label{sec:1:def:18:part:2}
  \index{Gibbs free energy function!for the permanent case}
  Consider the S-NFG $ \sfN(\mtheta) $. The modified Gibbs free energy
  function associated with $ \sfN(\mtheta) $ is defined to be
  \begin{align*}
    \FGthe': \ \Gamma_{n}(\mtheta) &\to \sR \\
               \vgam               &\mapsto \UGthe'(\vgam) - \HG'(\vgam)
  \end{align*}
  with
  \begin{align*}
    \UGthe'(\vgam)
      &\defeq 
         - 
         \sum\limits_{ i,j }  
           \gamma(i,j) \cdot \log\bigl( \theta(i,j) \bigr) , \\ 
    \HG'(\vgam)
      &\defeq
         \max\limits_{ \vp \in \PiA{\vgam}( \vgam ) }
           \HG(\vp) ,
  \end{align*}
  where $ \UGthe' $ and $ \HG' $ are called the modified Gibbs average energy
  function, and the modified Gibbs entropy function, respectively.%
  \footnote{\label{footnote:PiA:support:condition:1}%
    Note that the definition
    $\HG'(\vgam) \defeq \max\limits_{ \vp \in \PiA{\mtheta}( \vgam ) } \HG(\vp)$
    would have been more natural/straightforward. However, we prefer the
    definition
    $\HG'(\vgam) \defeq \max\limits_{ \vp \in \PiA{\vgam}( \vgam ) } \HG(\vp)$, as it
    does not depend on $\mtheta$ and, most importantly is equivalent to the
    former definition for $\vgam \in \Gamma_{n}(\mtheta)$. This equivalence
    stems from the following observation. Namely, let
    $\vp \in \PiA{\mtheta}( \vgam )$. Then $p(\mP_{\sigma}) > 0$ only if
    $\supp(\mP_{\sigma}) \subseteq \supp(\vgam)$.} \\
  \mbox{} \edefinition
\end{definition}



One can verify that
\begin{align*}
  Z\bigl( \sfN(\mtheta) \bigr) 
    &= \exp
     \biggl( 
       - \min_{\vgam \in \Gamnthe} \FGthe'(\vgam)
     \biggr).
\end{align*}


However, neither formulation of the Gibbs free energy function leads to a
tractable optimization problem in general unless~$n$ is small. This motivates
the study of approximations of the permanent based on approximating the Gibbs
free energy function, especially based on approximating the modified Gibbs
free energy function in the reformulation presented in
Definition~\ref{sec:1:def:18:part:2}. In the following, we will first discuss
an approximation known as the Bethe free energy function and then an
approximation known as the scaled Sinkhorn free energy function.


In general, the Bethe free energy function associated with an S-NFG is a
function over the so-called local marginal polytope associated with the
S-NFG. (For a general discussion of the local marginal polytope associated
with an S-NFG, see, \eg, \cite[Section 4]{JWainwright2008}.) The thesis
\cite[Section~IV]{Vontobel2013a} shows that the local marginal polytope of the
$\sfN(\mtheta)$ can be parameterized by the set of doubly stochastic
matrices. This is a consequence of the Birkhoff--von Neumann theorem.


\begin{lemma}\!\!\cite[Corollary 15]{Vontobel2013a}
  \label{sec:1:def:16}
  Consider the S-NFG $ \sfN(\mtheta) $. The Bethe free energy function
  associated with $ \sfN(\mtheta) $, as defined in Definition~\ref{sec:SNFG:def:1}, 
  can be reformulated by the mapping $ \FBthe $, which is defined to be
  \begin{align*}
    \FBthe: \ \Gamma_{n}(\mtheta) &\to \sR, \\
              \vgam              &\mapsto \UBthe( \vgam ) - \HBthe( \vgam )
  \end{align*}
  with
  \begin{align*}
    \UBthe(\vgam)
      &\defeq
        -\sum\limits_{ i,j }
        \gamma(i,j) \cdot \log\bigl( \theta(i,j) \bigr) , \\
    \HBthe( \vgam )
      &\defeq 
        -
        \sum\limits_{ i,j }
          \gamma(i,j) \cdot \log\bigl(  \gamma(i,j) \bigr) \\
      &\quad\ 
        + 
        \sum\limits_{ i,j }
          \bigl( 1 - \gamma(i,j) \bigr) 
            \cdot 
            \log\bigl( 1 - \gamma(i,j) \bigr).
  \end{align*}
  Here, $ \UBthe $ and $ \HBthe $ are called the Bethe average energy function
  and the Bethe entropy function, respectively. With this, we have
  \begin{align*}
      \ZB\bigl( \sfN(\mtheta) \bigr) = \permb(\mtheta),
  \end{align*}
  where the Bethe permanent $ \permb(\mtheta) $ is defined to be
  \begin{align}
    \permb(\mtheta)
      &\defeq 
         \exp
           \biggl(
             - \min_{\vgam \in \Gamnthe} \FBthe( \vgam ) 
           \biggr).
        \label{sec:1:eqn:192}
  \end{align}
  \index{Bethe permanent}
  \elemma
\end{lemma}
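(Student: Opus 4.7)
The plan is to show that the Bethe free energy function from Definition~\ref{sec:SNFG:def:1}, specialized to $\sfN(\mtheta)$, can be rewritten as a function of a single doubly stochastic matrix $\vgam$ by exploiting the very restrictive support of the local functions $\fr{i}$ and $\fc{j}$. The Birkhoff--von Neumann theorem then justifies taking the domain to be (a subset of) $\Gamma_{n}$, and the finiteness requirement pins the domain to $\Gamnthe$.

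First I would parameterize the local marginal polytope. Because $\fr{i}\bigl( \vgamzo(i,:) \bigr) = 0$ unless $\vgamzo(i,:) = \vuj$ for some $j$, any $\vbeli \in \LMP(\sfN(\mtheta))$ that has finite Bethe free energy must satisfy $\beli_{\fr{i}}\bigl( \vgamzo(i,:) \bigr) = 0$ whenever $\vgamzo(i,:)$ is not a unit row vector, and similarly for $\fc{j}$. Define
\begin{align*}
  \gamma(i,j) \defeq \beli_{\fr{i}}(\vuj), \qquad i,j \in [n].
\end{align*}
The edge-consistency constraint $\beli_{(i,j)}(1) = \sum_{\vgamzo(i,:): \gamma(i,j) = 1} \beli_{\fr{i}}(\vgamzo(i,:))$ forces $\beli_{(i,j)}(1) = \gamma(i,j)$ and $\beli_{(i,j)}(0) = 1 - \gamma(i,j)$, and the symmetric consistency with $\fc{j}$ forces $\beli_{\fc{j}}(\vui) = \gamma(i,j)$. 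The row/column normalization conditions $\sum_j \beli_{\fr{i}}(\vuj) = 1$ and $\sum_i \beli_{\fc{j}}(\vui) = 1$ then translate into double stochasticity $\vgam \in \Gamma_n$. Conversely, every $\vgam \in \Gamma_n$ determines an element of $\LMP(\sfN(\mtheta))$ via these formulas, giving a bijection between finite-energy elements of the LMP and $\Gamma_n$.

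Next I would substitute into $\FB(\vbeli) = \UB(\vbeli) - \HB(\vbeli)$. For the average energy, each factor $\sqrt{\theta(i,j)}$ appears once in $\fr{i}$ and once in $\fc{j}$, so
\begin{align*}
  \UB(\vbeli)
    = -\sum_{i,j} \gamma(i,j) \log\!\bigl( \sqrt{\theta(i,j)} \bigr)
      -\sum_{i,j} \gamma(i,j) \log\!\bigl( \sqrt{\theta(i,j)} \bigr)
    = -\sum_{i,j} \gamma(i,j) \log\!\bigl( \theta(i,j) \bigr),
\end{align*}
which is $\UBthe(\vgam)$. For the entropy, the function-node contribution is
\begin{align*}
  \sum_f \HBn{f}(\vbeli_f)
    = -2 \sum_{i,j} \gamma(i,j) \log\!\bigl( \gamma(i,j) \bigr),
\end{align*}
and the edge contribution is
\begin{align*}
  \sum_e \HBn{e}(\vbeli_e)
    = -\sum_{i,j} \Bigl[ \gamma(i,j) \log\!\bigl( \gamma(i,j) \bigr)
      + \bigl(1 - \gamma(i,j)\bigr) \log\!\bigl( 1 - \gamma(i,j) \bigr) \Bigr].
\end{align*}
Subtracting yields exactly $\HBthe(\vgam)$ as stated. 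The convention $0 \cdot \log(0) = 0$ makes each $\gamma(i,j) = 0$ term vanish; when $\theta(i,j) = 0$ but $\gamma(i,j) > 0$ the average-energy term is $+\infty$, so the minimization is effectively restricted to $\Gamnthe$, which gives the claimed domain.

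The last step is immediate: applying $\exp(-\,\cdot\,)$ to $\min_{\vgam \in \Gamnthe} \FBthe(\vgam)$ and using~\eqref{sec:SNFG:eqn:4} together with the bijection above yields $\ZB(\sfN(\mtheta)) = \permb(\mtheta)$. The main obstacle is bookkeeping in the parameterization step, specifically verifying that the consistency constraints of the LMP are fully captured by double stochasticity of $\vgam$ (no additional constraints are imposed by, e.g., higher-order marginal matchings) and that restricting the domain from $\Gamma_n$ to $\Gamnthe$ does not exclude the minimizer; both follow from the observation that any $\vgam$ with $\gamma(i,j) > 0$ on the zero-support of $\mtheta$ gives $\FBthe(\vgam) = +\infty$.
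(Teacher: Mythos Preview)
Your proposal is correct and follows essentially the same route as the cited reference \cite[Corollary~15]{Vontobel2013a}; the present paper does not give its own proof but simply imports this result, noting (just before the lemma) that the local marginal polytope of $\sfN(\mtheta)$ is parameterized by doubly stochastic matrices. Your direct parameterization of the finite-energy part of $\LMP(\sfN(\mtheta))$ via $\gamma(i,j)=\beli_{\fr{i}}(\vuj)$, together with the term-by-term evaluation of $\UB$ and $\HB$, is exactly the intended argument.
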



The degree-$M$ Bethe permanent of $\mtheta$ was defined
in~\cite{Vontobel2013a} and yields a combinatorial characterization of the
Bethe permanent. (Note that the definition of $\permb(\mtheta)$
in~\eqref{sec:1:eqn:192} is analytical in the sense that $\permb(\mtheta)$ is
given by the solution of an optimization problem.)


\begin{definition}
  \label{def:matrix:degree:M:cover:1}
  \index{Degree-$M$ Bethe permanent}
  Consider the S-NFG $ \sfN(\mtheta) $. Let $M \in \sZpp$.  The degree-$M$
  Bethe permanent is defined to be
  \begin{align*}
    \permbM{M}(\mtheta)
      &\defeq
        \sqrt[M]{
          \bigl\langle
            \perm( \mtheta^{\uparrow \mP_{M}} )
          \bigr\rangle_{\mP_{M} \in \tPsi_{M} }
       }  , 
  \end{align*}
  where 
  \begin{align*}
    \tPsi_{M}
      &\defeq 
        \left\{ 
          \mP_{M} \defeq \Bigl( \mP^{(i,j)} \Bigr)_{\!\! i,j\in [n]}
        \ \middle| \ 
          \mP^{(i,j)} \in \setP_{[M]}
        \right\} ,
  \end{align*}
  where $ \setP_{[M]} $ is defined in~\eqref{sec:1:eqn:181} and the
  $ \mP_{M} $-lifting of $ \mtheta $ is defined to be a real-valued matrix of size $ M n $-by-$Mn$:
  \begin{align*}
    \!\!
    \mtheta^{\uparrow \mP_{M}}
      &\defeq 
        \begin{pmatrix}
            \theta(1,1) \cdot  \mP^{(1,1)} 
            & \cdots & \theta(1,n) \cdot  \mP^{(1,n)} \\
            \vdots & \ddots & \vdots \\
            \theta(n,1) \cdot  \mP^{(n,1)} 
            & \cdots & \theta(n,n) \cdot  \mP^{(n,n)} 
           \end{pmatrix} 
        .
  \end{align*}
  \edefinition
\end{definition}


\begin{proposition}
  \label{sec:1:thm:8}
  \index{Graph-cover theorem!for the permanent case}
  Consider the S-NFG $ \sfN(\mtheta) $. It holds that
  \begin{align}
    \permbM{M}(\mtheta) &= \ZBM\bigl( \sfN(\mtheta) \bigr), 
    \qquad M \in \sZpp, \nonumber\\
    \limsup_{M \to \infty} \,
      \permbM{M}(\mtheta)
      &= \permb(\mtheta). 
            \label{sec:1:eqn:193}
  \end{align}
  The limit in~\eqref{sec:1:eqn:193} is visualized in
  Fig.~\ref{fig:combinatorial:characterization:1}~(left).
\end{proposition}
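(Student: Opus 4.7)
The plan is to split the proof into two parts: first establish the identity $\permbM{M}(\mtheta) = \ZBM(\sfN(\mtheta))$ for each fixed $M \in \sZpp$ by direct inspection of the degree-$M$ covers, and then pass to $M \to \infty$ using the general graph-cover theorem for S-NFGs (the combinatorial characterization summarized in Fig.~\ref{fig: combinatorial chara for sfg}) together with Lemma~\ref{sec:1:def:16}, which identifies $\ZB(\sfN(\mtheta))$ with $\permb(\mtheta)$.

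First I would give an explicit description of an arbitrary degree-$M$ cover $\hgraphN$ of $\sfN(\mtheta)$. Since the base graph is bipartite with row nodes $\{\fr{i}\}_{i \in [n]}$, column nodes $\{\fc{j}\}_{j \in [n]}$, and one edge per pair $(i,j) \in \setEfull$, any degree-$M$ cover has $M$ copies of each row node and each column node, and for every base edge $(i,j)$ the $M$ copies of $\fr{i}$ are matched to the $M$ copies of $\fc{j}$ by a permutation matrix $\mP^{(i,j)} \in \setP_{[M]}$. Hence the covers are indexed by tuples $\mP_M = (\mP^{(i,j)})_{i,j \in [n]} \in \tPsi_M$. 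Because the local functions of the cover are just copies of those of the base graph, a careful relabeling of the $Mn$ row copies and the $Mn$ column copies exhibits $\hgraphN$ as an S-NFG of the form $\sfN(\mtheta^{\uparrow \mP_M})$, with $\mtheta^{\uparrow \mP_M}$ as in Definition~\ref{def:matrix:degree:M:cover:1}. Applying the observation after Definition~\ref{def: details of S-NFG of permanent}, namely that $Z(\sfN(\matr{B})) = \perm(\matr{B})$ for every non-negative square matrix $\matr{B}$, yields $Z(\hgraphN) = \perm(\mtheta^{\uparrow \mP_M})$. Taking the $M$-th root of the arithmetic mean of $Z(\hgraphN)$ over the $(M!)^{n^2}$ degree-$M$ covers then matches the definition of $\permbM{M}(\mtheta)$ term-by-term, giving $\ZBM(\sfN(\mtheta)) = \permbM{M}(\mtheta)$.

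Given the first identity, the second follows immediately: the general graph-cover theorem for S-NFGs gives $\limsup_{M \to \infty} \ZBM(\sfN(\mtheta)) = \ZB(\sfN(\mtheta))$, and Lemma~\ref{sec:1:def:16} reduces the right-hand side to $\permb(\mtheta)$, so the chain closes. The one delicate step in this plan is the identification of $\hgraphN$ with $\sfN(\mtheta^{\uparrow \mP_M})$: one must check that the bipartite block structure produced by the cover construction is precisely, and not merely isomorphic to, the block structure of $\mtheta^{\uparrow \mP_M}$, so that the partition functions genuinely coincide and the averaging over $\mP_M \in \tPsi_M$ on the two sides of the claimed equality is performed against identical families. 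Once this bookkeeping is settled, the remainder of the argument is a direct invocation of previously established machinery.
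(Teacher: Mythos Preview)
Your proposal is correct and matches the approach the paper endorses: the paper's own proof simply cites \cite[Section~VI]{Vontobel2013a}, and the Remark following Theorem~\ref{sec:GraCov:thm:1} makes explicit that the argument is exactly the two-step one you outline---identify each degree-$M$ cover of $\sfN(\mtheta)$ with $\sfN(\mtheta^{\uparrow \mP_M})$ to obtain $\ZBM(\sfN(\mtheta)) = \permbM{M}(\mtheta)$, then invoke the general graph-cover theorem (Theorem~\ref{sec:GraCov:thm:1}) together with Lemma~\ref{sec:1:def:16}. Your flagged ``delicate step'' (the bookkeeping that matches covers to liftings $\mtheta^{\uparrow \mP_M}$) is precisely the content of \cite[Definition~38]{Vontobel2013a}.
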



\begin{proof}
  See~\cite[Section~VI]{Vontobel2013a}.
\end{proof}

As we will see in Section~\ref{sec: finite graph covers}, the degree-$M$ Bethe permanent $ \permbM{M}(\mtheta) $ is equal to the degree-$M$ Bethe partition function of the S-NFG $ \sfN(\mtheta) $, and the proof of Proposition~\ref{sec:1:thm:8} is based on using the method of types to characterize the valid configurations in the degree-$M$ covers of the S-NFG $ \sfN(\mtheta) $.


The Sinkhorn free energy function was defined in~\cite{Vontobel2014}. Here we
consider a variant called the scaled Sinkhorn free energy function that was
introduced in~\cite{N.Anari2021}. These free energy functions are also defined
over $\Gamma_{n}(\mtheta)$.


\begin{definition} 
  \label{sec:1:def:20}
  \index{(Scaled) Sinkhorn free energy function for the permanent case}
  Consider the S-NFG $ \sfN(\mtheta) $. The scaled Sinkhorn free energy
  function associated with $ \sfN(\mtheta) $ is defined to be
  \begin{align*}
    \FscSthe: \ \Gamma_{n}(\vtheta) &\to \sR \\
                \vgam          &\mapsto \UscSthe( \vgam ) - \HscSthe( \vgam )
  \end{align*}
  with
  \begin{align*}
    \UscSthe(\vgam) 
      &\defeq
         - 
         \sum\limits_{ i,j }
           \gamma(i,j) \cdot \log\bigl( \theta(i,j) \bigr) , \\
    \HscSthe(\vgam)
      &\defeq 
         - \,
         n
         -
         \sum\limits_{ i,j }
           \gamma(i,j) \cdot 
             \log\bigl( \gamma(i,j) \bigr).
    \end{align*}
    Here, $ \UscSthe $ and $ \HscSthe $ are called the scaled Sinkhorn average
    energy function and the scaled Sinkhorn entropy function,
    respectively. With this, the scaled Sinkhorn permanent is defined to be
    \begin{align}
      \permscs(\mtheta)
        &\defeq 
           \exp
             \biggl(
               - \min_{\vgam \in \Gamnthe} \FscSthe( \vgam ) 
             \biggr).
          \label{sec:1:eqn:192:part:2}
    \end{align}
    \index{(Scaled) Sinkhorn permanent}
    \edefinition
\end{definition}


Note that $\HscSthe(\vgam)$ can be obtained from $\HBthe( \vgam )$ by
approximating $\bigl( 1 - \gamma(i,j) \bigr) \cdot \log( 1 - \gamma(i,j) )$ by
the first-order Taylor series expansion term $- \gamma(i,j)$ and using the
fact that the sum of all entries of the matrix
$ \vgam \in \Gamma_{n}(\vtheta) $ equals $n$.


In the following, we introduce the degree-$M$ scaled Sinkhorn permanent of
$\mtheta$, which yields a combinatorial characterization of the scaled
Sinkhorn permanent of $\mtheta$. (Note that the definition of
$\permscs(\mtheta)$ in~\eqref{sec:1:eqn:192:part:2} is analytical in the sense
that $\permscs(\mtheta)$ is given by the solution of an optimization problem.)


\begin{definition}
  \label{def:Sinkhorn:degree:M:cover:1}
  \index{Degree-$M$ (scaled) Sinkhorn permanent}
  Consider the S-NFG $ \sfN(\mtheta) $. Let $M \in \sZpp$.  The degree-$M$
  scaled Sinkhorn permanent is defined to be
  \begin{align*}
    \permscsM{M}(\mtheta)
      &\defeq 
         \sqrt[M]{
           \perm
             \Bigl(
               \bigl\langle
                 \mtheta^{\uparrow \mP_{M}}
               \bigr\rangle_{\mP_{M} \in \tPsi_{M} }
             \Bigr)
         } \nonumber \\
      &= \sqrt[M]{
           \perm\bigl( \mtheta \otimes \mU_{M,M}\bigr)
         }  , 
  \end{align*}
  where $\otimes$ denotes the Kronecker product of two matrices and where
  $ \mU_{M,M} $ is the matrix of size $M \times M$ with all entries equal to
  $1/M$.\edefinition
\end{definition}


\begin{proposition}
  \label{sec:1:prop:14}
  \label{SEC:1:PROP:14}
    
  Consider the S-NFG $ \sfN(\mtheta) $. It holds that
  \begin{align}
    \limsup_{M \to \infty} \,
      \permscsM{M}(\mtheta)
      &= \permscs(\mtheta). 
            \label{sec:1:eqn:193:part:2}
  \end{align}
  The limit in~\eqref{sec:1:eqn:193:part:2} is visualized in
  Fig.~\ref{fig:combinatorial:characterization:1}~(right).
\end{proposition}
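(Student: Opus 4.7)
The plan is to expand $\perm(\mtheta \otimes \mU_{M,M})$ combinatorially using the Kronecker structure, then apply Stirling's formula to identify the exponential growth rate in~$M$.

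First, I would compute $\perm(\mtheta \otimes \mU_{M,M})$ explicitly. Factoring $1/M$ out of each of the $Mn$ rows gives $\perm(\mtheta \otimes \mU_{M,M}) = M^{-Mn}\cdot \perm(\mtheta \otimes \mJ_{M,M})$, where $\mJ_{M,M}$ is the $M \times M$ all-ones matrix. Indexing rows and columns by pairs in $[n] \times [M]$, every bijection $\sigma$ contributes a weight that depends only on the integer matrix $\bigl( n_{ij} \bigr)_{i,j}$ with $n_{ij} \defeq \bigl| \{ a \in [M] : \sigma(i,a) \in \{j\}\times[M] \} \bigr|$; its contribution is $\prod_{i,j} \theta(i,j)^{n_{ij}}$. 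A standard counting argument — first choose, for each $i$, the ordered partition $(S_{ij})_j$ of $[M]$ with $|S_{ij}| = n_{ij}$; then for each $j$, the ordered partition $(T_{ij})_i$ with $|T_{ij}| = n_{ij}$; then a bijection $S_{ij} \to T_{ij}$ — shows that the number of $\sigma$'s realising a fixed $(n_{ij})$ equals $(M!)^{2n} / \prod_{i,j} n_{ij}!$. Setting $n_{ij} = M \cdot \gamma(i,j)$, this yields the identity
\begin{align*}
  \bigl( \permscsM{M}(\mtheta) \bigr)^{\! M}
    = \sum_{\vgam \in \GamMnthe}
        \mtheta^{M \cdot \vgam}
        \cdot
        \CscSgen{M}{n}(\vgam),
  \qquad
  \CscSgen{M}{n}(\vgam)
    = \frac{(M!)^{2n}}{M^{Mn}\prod_{i,j}(M\gamma(i,j))!},
\end{align*}
which incidentally also establishes~\eqref{sec:1:eqn:168}.

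Second, I would apply Stirling's approximation $\log(k!) = k\log k - k + O(\log k)$ to every factorial. The coefficients $-Mn\log M$, $+2nM \log M$, and $-M\log M \sum_{i,j}\gamma(i,j) = -Mn \log M$ telescope (using $\sum_{i,j}\gamma(i,j) = n$), as do the linear-in-$M$ terms, leaving
\begin{align*}
  \log \CscSgen{M}{n}(\vgam)
    = M \cdot \HscSthe(\vgam) + O(\log M),
\end{align*}
uniformly in $\vgam \in \GamMnthe$. Combined with $\log \mtheta^{M\cdot\vgam} = -M \cdot \UscSthe(\vgam)$, this shows that the summand equals $\exp\!\bigl( -M \cdot \FscSthe(\vgam) + O(\log M) \bigr)$. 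Since $|\GamMnthe|$ grows only polynomially in $M$, squeezing the sum between its largest term and that largest term times $|\GamMnthe|$ gives
\begin{align*}
  \frac{1}{M} \log\bigl( \permscsM{M}(\mtheta) \bigr)^{\! M}
    = \max_{\vgam \in \GamMnthe} \bigl( -\FscSthe(\vgam) \bigr) + o(1).
\end{align*}

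Third, I would pass to the limit using continuity of $\FscSthe$ on the compact set $\Gamnthe$ together with the fact that $\bigcup_{M \in \sZpp} \GamMnthe$ is dense in $\Gamnthe$. The inclusion $\GamMnthe \subseteq \Gamnthe$ yields $\limsup_{M \to \infty} \log \permscsM{M}(\mtheta) \leq -\min_{\vgam \in \Gamnthe} \FscSthe(\vgam) = \log \permscs(\mtheta)$, and conversely, approximating a minimiser $\vgam^{\star} \in \Gamnthe$ by a sequence $\vgam_M \in \GamMnthe$ with $\vgam_M \to \vgam^{\star}$ (which exists for all sufficiently large $M$) gives the matching $\liminf$ bound via continuity. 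The two bounds together imply the equality in~\eqref{sec:1:eqn:193:part:2} (in fact the $\limsup$ coincides with $\liminf$). The main obstacle I anticipate is the bookkeeping in the first step: one must carefully parametrise $\sigma$ by the triple (row-partition, column-partition, within-block bijection) to avoid overcounting and to arrive at the clean expression $(M!)^{2n}/\prod_{i,j} n_{ij}!$; once that identity is in place, the Stirling estimate and the density argument are routine.
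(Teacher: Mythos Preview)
Your proposal is correct and follows essentially the same route as the paper: the paper's proof (deferred to an appendix) combines Lemma~\ref{lem: expression of permanents w.r.t. C} (the expansion~\eqref{sec:1:eqn:168}, which the paper attributes to~\cite[Theorem~2.1]{Friedland1979} rather than rederiving as you do) with the Stirling-based asymptotic of Proposition~\ref{prop:coefficient:asymptotitic:characterization:1}, and then takes the limit via the same polynomial-size/density/continuity argument you outline. Your first step is a clean self-contained proof of the Friedland identity, and your $O(\log M)$ error is sharper than the paper's stated $o(M)$, but structurally the two arguments coincide.
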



\begin{proof}
  See Appendix~\ref{apx:24}. (Note that this proof needs some results from
  Lemma~\ref{lem: expression of permanents w.r.t. C} and
  Proposition~\ref{prop:coefficient:asymptotitic:characterization:1}, which
  are proven only in the upcoming sections. Of course, Lemma~\ref{lem:
    expression of permanents w.r.t. C} and
  Proposition~\ref{prop:coefficient:asymptotitic:characterization:1} are
  proven independently of the statement in Proposition~\ref{sec:1:prop:14}.)
\end{proof}


Except for small values of $n$, computing $ \permbM{M}(\mtheta) $ and
$ \permscsM{M}(\mtheta) $ appears to be intractable in general for finite
$M$.\footnote{More formally, we leave it as an open problem to show that
  computing $ \permbM{M}(\mtheta) $ and $ \permscsM{M}(\mtheta) $ is in the
  complexity class \#P for finite $ M \in \sZpp $.} However, as $ M $ goes to
infinity, the limit superior of $ \permbM{M}(\mtheta) $ and
$ \permscsM{M}(\mtheta) $ are equal to $ \permb(\mtheta) $ and
$ \permscs(\mtheta) $, respectively, and they can be computed
efficiently~\cite{Vontobel2013a,Linial2000}.

Note that the modified Gibbs entropy function $ \HG'(\vgam) $, the Bethe entropy function  $ \HBthe( \vgam ) $, and the scaled Sinkhorn entropy function $ \HscSthe(\vgam) $, as defined in Definitions~\ref{sec:1:def:18:part:2},~\ref{sec:1:def:16}, and~\ref{sec:1:def:20}, respectively, are well defined for $ \vgam \in \Gamma_n $, not just for $ \vgam \in \Gamma_n(\vtheta) $. This observation allows us to consider $ \vgam \in \Gamma_{M,n} $ 
in Proposition~\ref{prop:coefficient:asymptotitic:characterization:1}, \ie, $\vgam$ is an element of a set that is independent of $\mtheta$.

\section{Double-Edge Normal Factor Graphs (DE-NFGs)}
\label{sec: basic of denfg}

Now we move to the quantum case. Instead of using the bra-ket notation of quantum mechanics, we will use standard linear algebra notation similar as in~\cite{Loeliger2017,Loeliger2020}.

In this section, we review DE-NFGs, which are based on the works in~\cite{Loeliger2017,Loeliger2020,Cao2017}. In general, a DE-NFG can
contain four types of edges: full double edges, full single edges, half double
edges, and half single edges. For the purposes of this thesis, it is sufficient
to consider only DE-NFGs where all the edges are full double edges. The
reasoning is as follows: similar to Section~\ref{sec:SNFG}, half
(double/single) edges can be turned into full
(double/single) edges by attaching a dummy $1$-valued function node to every
half (double/single) edge without changing any marginals or the partition
function. Moreover, DE-NFGs with full single edges can be turned into DE-NFGs
with only full double edges by changing full single edges into full double
edges and suitably reformulating function nodes.

We start by presenting an example DE-NFG.

\begin{example}
  \label{sec:DENFG:exp:1}




  \begin{figure}[t]
  \centering
  \captionsetup{font=scriptsize}
\begin{minipage}[t]{0.45\linewidth}
  \centering
  \captionsetup{font=scriptsize}
  \begin{tikzpicture}[node distance=2cm, on grid,auto]
    \input{figures/length.tex}
    \input{figures/head_files_figs.tex}
    \node[state] (f1) at (0,0) [label=above: $f_1$] {};
    \node[state] (f2) at (1.5,0) [label=above: $f_2$] {};
    \node[state] (f3) at (0,-1.5) [label=below: $f_3$] {};
    \node[state] (f4) at (1.5,-1.5) [label=below: $f_4$] {};
    \begin{pgfonlayer}{background}
      \draw[double]
          (f1) -- node[above] {$ \tilde{x}_{1} $} (f2)
          (f1) -- node[left]  {$ \tx_{2} $} (f3) 
          (f1) -- node[left, xshift=-0.0cm]  {$ \tx_{3} $} (f4)
          (f2) -- node[right]  {$ \tx_{4} $} (f4)
          (f3) -- node[below] {$ \tx_{5} $} (f4);
    \end{pgfonlayer}
  \end{tikzpicture}
 \end{minipage}%
  \caption{DE-NFG in Example~\ref{sec:DENFG:exp:1}.\label{sec:DENFG:fig:4}}
\end{figure}
  The DE-NFG in Fig.~\ref{sec:DENFG:fig:4} depicts the following factorization
  \begin{align*}
      g(x_{1},\ldots,x_{5},x_{1}',\ldots,&x_{5}')
      \defeq f_{1}(x_{1},x_{2},x_{3},x_{1}',x_{2}',x_{3}') 
          \nonumber\\
          &
          \cdot f_{2}(x_{1},x_{4},x_{1}',x_{4}')
          \cdot f_{3}(x_{2},x_{5},x_{2}',x_{5}')
          \cdot f_{4}(x_{3},x_{4},x_{5},x_{3}',x_{4}',x_{5}').
  \end{align*}
  For each double edge $e \in [5]$, we associate it with the variable
  $\tx_e = (x_e, x'_e)$, where $x_e$ and $x'_e$ take values in the same
  alphabet $\set{X}_e$. Moreover, if $e$ is incident on a function node
  $f$, then $\tx_e$ is an argument of the local function $f$, \ie, both
  $x_e$ and $x'_e$ are arguments of $f$.

  \eexample
\end{example}



\begin{definition}\label{sec:DENFG:def:4}
  \index{Normal factor graph!DE-NFG}
  A DE-NFG $\graphN(\setF,\setEfull,\tset{X})$ consists of the following
  objects:
  \begin{enumerate}

    \item\label{sec:DENFG:def:4:item:5} A graph $(\setF,\setEfull)$ with vertex set $\setF$ and edge set
    $\setEfull$. The set $ \setF $ represents the set of function nodes, while $ \setEfull $ represents the set of full double edges. (As
    mentioned above, every edge $e \in \setEfull$ will be assumed to be a full
    double edge connecting two function nodes.)

    \item\label{sec:DENFG:def:4:item:1} An alphabet $\tset{X}\defeq\prod\limits_{e \in \setEfull}\tset{X}_e$, where
    $\tset{X}_e\defeq \set{X}_e\times \set{X}_e$ is the alphabet of variable
    $\tx_e \defeq (x_e,x_e')$ associated with edge $e \in \setEfull$. 
    Without loss of generality,
    for each $ e \in \setEfull $, we suppose that $ (0,0) \in \tset{X}_e $.

  \end{enumerate}
    For a given DE-NFG $\graphN(\setF,\setEfull,\tset{X})$, we fix the order of the function nodes by $ \set{F} = \{ f_{1},\ldots,f_{|\set{F}|} \}. $ Then we make the following definitions.
        
    
  \begin{enumerate}
    \setcounter{enumi}{3}

    \item\label{sec:DENFG:def:4:item:2} For every function node $f \in \setF$, the set $\setpf$ is defined to be
    the set of edges incident on $f$. 


    \item For every edge $e \in \setEfull$, the set $\setpe$ represents the set
    of function nodes that are connected to $e$. 

    \item\label{sec:DENFG:def:4:item:3} Consider $ \set{I} \subseteq \setEfull $. The alphabets $ \tset{X}_{\set{I}} $ and $ \set{X}_{\set{I}} $ are defined to be
    \begin{align*}
      \tset{X}_{\set{I}} \defeq \prod\limits_{e \in \set{I}} \tset{X}_{e}
      = \prod\limits_{e \in \set{I}} ( \set{X}_e \times \set{X}_e ), \qquad 
      \set{X}_{\set{I}} \defeq \prod\limits_{e \in \set{I}} \set{X}_{e}.
    \end{align*}
    For any mapping $ h: \tset{X}_{\set{I}} \to \sC $, we define
    the following complex-valued square matrix:
    \begin{align}
        \matr{C}_{h}
        &\defeq 
        \bigl( 
          h( \vx_{\set{I}}, \vx_{\set{I}}' )
        \bigr)_{ \! \tvx_{\set{I}} = 
        (\vx_{\set{I}}, \vx_{\set{I}}') \in \tset{X}_{\set{I}}}
        \in \sC^{|\set{X}_{\set{I}}| \times |\set{X}_{\set{I}}|}
        \label{eqn: def of Choi matrix representation}
    \end{align}
    with row indices $\vx_{\set{I}} \in \set{X}_{\set{I}}$ and column indices 
    $\vx_{\set{I}}' \in \set{X}_{\set{I}}$.

    
    \item Consider $ \set{I} \subseteq \setEfull $. The sets 
    $ \setHerm{ \set{X}_{\set{I}} }$ and 
    $ \setPSD{ \set{X}_{\set{I}} } $ are defined to be 
    the set of Hermitian matrices of size 
    $ |\set{X}_{\set{I}}| \times |\set{X}_{\set{I}}|$
    and 
    the set of Hermitian, positive semi-definite (PSD) matrices of size 
    $|\set{X}_{\set{I}}| \times |\set{X}_{\set{I}}|$, respectively.

    \item An assignment $\tvx \defeq ( \tvx_{e} )_{e \in \setEfull} \in \tset{X}$
    is called a configuration of the DE-NFG. For each $f \in \setF$, a
    configuration $\tvx \in \tset{X}$ induces the vector
    \begin{align*}
      \tvx_{\setpf} = (\tx_{e})_{e \in \setpf} \in \tset{X}_{\setpf}.
    \end{align*}


  \item\label{sec:DENFG:def:4:item:4} For each function node $f \in \setF$, there is a local function associated with it,  with some slight abuse of notation, also called $f$. Depending on the
    conditions imposed on $f$, we distinguish between a strict-sense and a
    weak-sense DE-NFG.
    \begin{enumerate}

    \item\label{sec:DENFG:def:4:item:4:item:1} In the case of a \textbf{strict-sense DE-NFG}, the local function
      $f$ can be an arbitrary mapping from $\tset{X}_{\setpf}$ to $\sC$ satisfying
      the following property: the square matrix 
      \begin{align}
          \matr{C}_f
          \in \setPSD{ \set{X}_{\setpf} }
          \label{sec:DENFG:eqn:6}
      \end{align}
      with row indices $\vx_{\setpf}$ and column indices $\vx_{\setpf}'$ is a complex-valued,
      Hermitian, positive semi-definite (PSD) matrix. Motivated by the use of
      DE-NFG in quantum information processing (see, \eg,
      \cite{Wood2015}), the matrix $\matr{C}_f$ will be
      called the Choi-matrix representation of $f$ or the Choi matrix associated
      with $f$. 
  
    \item In the case of a \textbf{weak-sense DE-NFG}, we require that
      $\matr{C}_f \in \setHerm{ \set{X}_{\setpf} } $, \ie, the matrix $ \matr{C}_f $ is a complex-valued, Hermitian matrix. However, we do
      \emph{not} require $\matr{C}_f$ to be a PSD matrix.  

    \item For each $ f \in \setF $ in a strict-sense or weak-sense DE-NFG, 
        the mappings 
        \begin{align*}
          u_{f, \ell_f}: \tset{X}_{\setpf} 
          &\to \sC, \nonumber\\
          \lambda_{f}: \set{L}_f  &\to \sR,
        \end{align*}
        where $\ell_f \in \set{L}_f$,
        are defined to be the mappings satisfying the following equalities and inequalities:
        \begin{align}
          f\bigl( \tvx_{\setpf} \bigr) 
            &= \sum\limits_{\ell_f \in \set{L}_f}
                \lambda_{f}(\ell_f) \cdot
                 u_{f,\ell_f}(\vx_{\setpf}) 
                 \cdot 
                 \overline{u_{f,\ell_f}(\vx_{\setpf}')},
                 \qquad \tvx_{\setpf} \in \tset{X}_{\setpf},
                 \label{sec:DENFG:eqn:5}
            \\
            [\ell_f \!=\! \ell_f']
            &= \sum\limits_{\tvx_{\setpf} \in \tset{X}_{\setpf}}
            u_{f,\ell_f}(\vx_{\setpf}) 
            \cdot 
            \overline{u_{f,\ell_f'}(\vx_{\setpf}')},\qquad
            \ell_f,\ell_f' \in \set{L}_f, 
            \\
            \lambda_{f}(0) &\geq \lambda_{f}(1) \geq 
            \cdots \geq \lambda_{f}(|\set{L}_f|-1),
            \label{sec:DENFG:eqn:8}
        \end{align}
        where $\set{L}_f \defeq \{0,\ldots,|\tset{X}_{\setpf}|-1\}$ 
        is a finite set with $ |\set{L}_f| = |\tset{X}_{\setpf}| $.
        If we define $ \vect{u}_{f,\ell_f} $ to be the column vector associated with $ u_{f,\ell_f} $ as follows
        \begin{align*}
          \vect{u}_{f,\ell_f} \defeq \bigl( 
            u_{f,\ell_f}(\vx_{\setpf}) 
          \bigr)_{\! \vx_{\setpf} \in \set{X}_{\setpf} } 
          \in \sC^{|\set{X}_{\setpf}|},
        \end{align*}
        then the conditions in~\eqref{sec:DENFG:eqn:5}--\eqref{sec:DENFG:eqn:8} are equivalent to
        \begin{align}
          \matr{C}_f &= \sum\limits_{\ell_f \in \set{L}_f} 
          \lambda_{f}(\ell_{f}) \cdot \vect{u}_{f,\ell_f} 
          \cdot \bigl( \vect{u}_{f,\ell_f} \bigr)^{\!\Herm},
          \label{sec:DENFG:eqn:9}\\
          \bigl( \vect{u}_{f,\ell'_f} \bigr)^{\!\Herm} 
          \cdot \vect{u}_{f,\ell_f} &= 
          \bigl[ \ell'_f \!=\! \ell_f \bigr], \qquad 
          \ell_f, \ell'_f \in \set{L}_f.
          \nonumber
        \end{align}
        
        In summary, the vectors $ \bigl( \vect{u}_{f,\ell_f} \bigr)_{ \ell_{f} \in \set{L}_f } $ and the coefficients $ \bigl( \lambda_{f}(\ell_{f}) \bigr)_{ \ell_{f} \in \set{L}_f } $ form an eigenvalue decomposition of the Hermitian matrix $ \matr{C}_f $, where $ \lambda_{f}(\ell_{f}) $ is the eigenvalue associated with the right-eigenvector 
        $\vect{u}_{f,\ell_f}$.
  \end{enumerate}

  \item The global function $g$ is defined to be the mapping
    \begin{align*}
      g: \tset{X} \to \sC, \qquad \tvx \mapsto \prod\limits_{f \in \setF} 
      f\bigl( \tvx_{\setpf} \bigr).
    \end{align*}

  \item A configuration $\tvx \in \tset{X}$ satisfying $g(\tvx) \neq 0$ is
    called a valid configuration.

  \item The partition function is defined to be
   \begin{align*}
      Z(\graphN) \defeq \sum\limits_{\tvx \in \tset{X}}g(\tvx).
   \end{align*}
    For a strict-sense
    DE-NFG, the partition function satisfies $Z(\graphN) \in \sRp$, whereas
    for a weak-sense DE-NFG, the partition function satisfies
    $Z(\graphN) \in \sR$ (see
    Appendix~\ref{apx:property of ZN}).
    \label{partition function of denfg}

  \end{enumerate}
  \edefinition
\end{definition}

If there is no ambiguity, when we consider S-NFG, we will use the short-hands
$\sum\limits_{\tvx}$, 
$\sum\limits_{\txe}$, 
$\sum\limits_{e}$, 
$\sum\limits_{f}$, 
$\prod\limits_{\tvx}$,  
$\prod\limits_{\txe}$,
$\prod\limits_{e}$, 
$\prod\limits_{f}$,
$ (\cdot)_{\txe} $,
and
$ (\cdot)_{e} $ for
$\sum\limits_{\tvx \in \tset{X}}$, 
$\sum\limits_{\txe \in \tset{X}_{e}}$, 
$\sum\limits_{e \in \setEfull}$, 
$\sum\limits_{f \in \setF}$,
$\prod\limits_{\tvx \in \tset{X}}$, 
$\prod\limits_{\txe \in \tset{X}_{e}}$, 
$\prod\limits_{e \in \setEfull}$, 
$\prod\limits_{f \in \setF}$, 
$ (\cdot)_{\txe \in \tset{X}_{e}} $,
and 
$ (\cdot)_{e\in \setEfull} $,
respectively. 
For any $ \set{I} \subseteq \setEfull$, we will use the short-hands
$\sum\limits_{\tvx}$, $\sum\limits_{\tvx_{\set{I}}}$, $\prod\limits_{\tvx}$, $\prod\limits_{\tvx_{\set{I}}}$, 
$(\cdot)_{ \tvx_{\set{I}} }$,
and $\max\limits_{\tvx_{\set{I}}}$ for
$\sum\limits_{\tvx \in \tset{X}}$, 
$\sum\limits_{\tvx_{\set{I}} \in \tset{X}_{\set{I}}}$, 
$\prod\limits_{\tvx \in \tset{X}}$, 
$\prod\limits_{ \tvx_{\set{I}} \in \tset{X}_{\set{I}} }$, 
$(\cdot)_{ \tvx_{\set{I}} \in \tset{X}_{\set{I}} }$,
and $\max\limits_{\tvx_{\set{I}} \in \tset{X}_{\set{I}}}$,
respectively. 
Moreover, $\setpf \setminus e$ will be short-hand notation for
$\setpf \setminus \{ e \}$.


\begin{assumption}\label{sec:DENFG:asum:2}
  In the following, unless stated otherwise, when referring to a DE-NFG, we mean a strict-sense DE-NFG, \ie, we consider $ \lambda_{f}(\ell_{f}) \in \sR_{\geq 0} $ for all $ \ell_{f} \in \set{L}_{f} $ and $ f \in \setF $. 

  Furthermore, we will only consider the (strict-sense) DE-NFG
  $\graphN$ for which $Z(\graphN) \in \sRpp$.\footnote{In this thesis, we only consider the NFGs whose pseudo-dual formulation of the Bethe partition function is well-defined. (See the detailed discussion for S-NFG and DE-NFGs in 
  Remark~\ref{sec:SNFG:remk:1} and Assumptions~\ref{sec:DENFG:asum:2} and~\ref{sec:DENFG:remk:1}.)}
  \eassumption
\end{assumption}


One of the motivations for considering DE-NFGs is that many NFGs
with complex-valued local functions in quantum information processing
can be transformed into DE-NFGs. (See~\cite{Cao2017} for further examples and details on this transformation.)

\begin{figure}[t]
  \centering
  \captionsetup{font=scriptsize}
  \subfloat{
    \begin{minipage}[t]{0.5\textwidth}
    \centering
      \begin{tikzpicture}[node distance=0.6cm, on grid,auto]
        \input{figures/head_files_figs.tex}
        \node[state] (rho) at (0,0) [] {$\rho$};
        \node[state_long] (U) [above right =1cm and 2cm of rho] {$U$};
        \node[state_long] (UH) [below right =1cm and 2cm of rho] {$U^{H}$};
        \node[state_long] (B) [right = 2.4cm of U] {$B$};
        \node[state_long] (BH) [right = 2.4cm of UH] {$B^{H}$};
        \node[state] (eqn) [below right=1cm and 2cm of B] {$I$};
        \draw
            (rho) |-node[above right, pos=.6] {$x_{1}$} (U)--node[above] {$x_{2}$} (B)
            -| node[above left, pos=.4] {$x_{3}$} (eqn)|-node[above left, pos=.6] 
            {$x_{3}'$}(BH)
            --node[above] {$x_{2}'$} (UH)-| node[above right, pos=.4] {$x_{1}'$} (rho);
         \node [dot=black] at (rho.north) {};
         \node [dot=black] at (U.east) {};
         \node [dot=black] at (UH.west) {};
         \node [dot=black] at (B.east) {};
         \node [dot=black] at (BH.west) {};
         \node [dot=black] at (eqn.south) {};
      \end{tikzpicture}
    \end{minipage}
  }
  \subfloat{
    \begin{minipage}[t]{0.5\textwidth}
      \centering
      \begin{tikzpicture}[node distance=0.8cm, on grid,auto]
        \input{figures/head_files_figs.tex}
        \input{figures/length.tex}
        \begin{pgfonlayer}{main}
          \node[state] (rho) at (0,0) [] {$\rho$};
          \node[state_long] (U) [right =2cm of rho] {$\tilde{U}$};
          \node[state_long] (B) [right = 2.4cm of U] {$\tilde{B}$};
          \node[state] (eqn) [right= 2cm of B] {$I$};
        \end{pgfonlayer}
        \begin{pgfonlayer}{background}
          \draw[double]
            (rho) -- node[above] {$\tx_{1}$} (U)
            -- node[above] {$\tx_{2}$} (B)
            -- node[above] {$\tx_{3}$} (eqn);
        \end{pgfonlayer}
        \node [dot=black] at (2.35,0) {};
        \node [dot=black] at (4.75,0) {};
        \node[] (f3) at (0,0) [] {};
        \node (U) [above right =1.3cm and 2cm of rho] {};
        \node (UH) [below right =1.3cm and 2cm of rho] {};
      \end{tikzpicture}
    \end{minipage}
  }
  \vspace{0.2cm}
  \caption{Left: NFG in Example \ref{sec:DENFG:exp:2}. Right: DE-NFG in Example \ref{sec:DENFG:exp:2}.}\label{sec:DENFG:fig:2}
\end{figure}
\begin{example}[\cite{Loeliger2012,Loeliger2017}]\label{sec:DENFG:exp:2}
  Let us consider an NFG in Fig.~\ref{sec:DENFG:fig:2}(left). Note that at the end of each edge there is a dot used for denoting the row index of the matrix w.r.t. the local function. In particular, it describes a quantum system with two consecutive unitary evolutions:
  \begin{itemize}
    \item At the first stage, a quantum mechanical system is 
    prepared in a mixed state represented by a complex-valued PSD density matrix:
    \begin{align*}
        \matr{\rho} \defeq \bigl( \rho(x_{1},x_{1}') 
        \bigr)_{\! x_{1},x_{1}' \in \set{X}_{1} } \in \setPSD{ \set{X}_{1} }
    \end{align*}
    with trace one, where the row is indexed by $ x_{1} $ and the column indexed by $ x_{1}' $.

    \item Then the system experiences two consecutive unitary 
    evolutions, which are represented by the pair of function nodes 
    $U$ and $B$, respectively. The associated matrices $ \bigl( U(x_{2},x_{1}) \bigr)_{\!x_{2} \in \set{X}_{2},x_{1} \in \set{X}_{1}} $ with row indices $ x_{2} $ and column indices $ x_{1} $ and $ \bigl( B(x_{3},x_{2}) \bigr)_{\!x_{3} \in \set{X}_{3},x_{2} \in \set{X}_{2}} $ with row indices $ x_{3} $ and column indices $ x_{2} $ are unitary.
  \end{itemize}
  In order to transform this NFG into a DE-NFG, certain modifications are made:
  \begin{itemize}
    \item The variables $ x_{e} \in \set{X}_{e} $ 
    and $ x_{e}' \in \set{X}_{e} $ are merged into the variable 
    \begin{align*}
      \tx_{e} = ( x_{e}, x_{e}' ) \in \tset{X}_{e}, \qquad
      e \in [3].
    \end{align*}

    \item New local functions are defined based on the rearranged variables.
  \end{itemize}
  The resulting DE-NFG is shown in Fig.~\ref{sec:DENFG:fig:2}(right). 
  As an example, local function $\tilde{U}$ is defined to be 
  \begin{align*}
    \tilde{U}(\tx_{2},\tx_{1}) \defeq U(x_{2},x_{1})
    \cdot \overline{U(x_{2}',x_{1}')}, \qquad
    \tx_{2} \in \tset{X}_{2},\, 
    \tx_{1} \in \tset{X}_{1}.
  \end{align*}
  One can verify that the associated Choi-matrix representation
  $\matr{C}_{\tilde{U}}$ is a PSD matrix with row indices 
  $(x_{2},x_{1})$ and 
  column indices 
  $(x_{2}',x_{1}')$.
  \eexample
\end{example}


\begin{definition}\label{sec:DENFG:def:2}
  \index{SPA!on DE-NFG}
   The SPA for DE-NFG is a straightforward application of the SPA for S-NFG. Consider an arbitrary DE-NFG $\graphN$. The details of the SPA for $\graphN$ are presented as follows.
    \begin{enumerate}

        \item (Initialization) For every $e \in \setEfull$, $f \in \setpe$, the
        messages $\mu_{\ef}^{(0)}: \tset{X}_{e} \to \sC$, are initialized as some
        arbitrary function. (A typical choice is 
        $\mu_{\ef}^{(0)}(\txe) \defeq |\tset{X}_{e}|^{-1} $ for
        all $\tx_e \in \tset{X}_{e}$.) 
     
        \item\label{details of SPA in DE-NFG} (Iteration) For $t = 1, 2, 3, \ldots$, the following calculations are performed until a termination criterion is met.\footnote{The termination
          criterion is typically a combination of numerical convergence and an
          upper bound on the number of iterations.}
        \begin{enumerate}

        \item\label{sec:DENFG:def:2:item:1} For $ e = (f_{i},f_{j}) \in \setEfull $,
        we define the scaling factors $\kappa_{\efj}^{(t)} \in \sC $ to be
        \begin{align*}
            \kappa_{\efj}^{(t)} &\defeq 
            \sum\limits_{\tvx_{\setpfi}}
            f_{i}\bigl( \tvx_{\setpfi} \bigr) 
            \cdot \prod\limits_{e' \in \setpfi \setminus e}
            \mu_{\epfi}^{(t-1)}(\tx_{e'}).
        \end{align*}
        One can obtain the definition of $ \kappa_{\efi}^{(t)} $ by switching $ f_{i}  $ with $ f_{j} $ in the above definition.

        \item If $ \prod\limits_{e \in \setEfull} \bigl(\kappa_{\efi}^{(t)} \cdot \kappa_{\efj}^{(t)} \bigr) = 0 $, we generate random values for each element in the message vector $\vmu_{\etof}^{(t)}$ by sampling from a uniform distribution in the interval $[0, 1]$. 
        
        \item If $ \prod\limits_{e \in \setEfull} \bigl(\kappa_{\efi}^{(t)} \cdot \kappa_{\efj}^{(t)} \bigr) \neq 0 $, valid message updates are performed. 
        For each $ e = (f_{i},f_{j}) \in \setEfull $, we define the message from the edge $ e $ to 
        $ f_{j} $ via the mapping
        \begin{align*}
          \mu_{\efj}^{(t)}: \tset{X}_{e} &\to \sC, \nonumber\\
          \tze& \mapsto
          \bigl( \kappa_{\efj}^{(t)} \bigr)^{\!-1}
             \cdot
             \sum\limits_{\tvx_{\setpfj}: \, \txe = \tze}
              f_{j}\bigl( \tvx_{\setpfj} \bigr)
               \cdot \prod\limits_{e'\in \setpfj \setminus e}
                \mu_{\epfj}^{(t-1)}(\tx_{e'}).
        \end{align*}
        (If desired or necessary, other generalization procedures is allowed.)
                 
        One can obtain the definition of $ \mu_{\efi}^{(t)} $ by switching $ f_{i} $ with $ f_{j} $ in the above definition.

        \item For every $f \in \setF$, we define
        \begin{align*}
          \kappa_f^{(t)} \defeq
          \sum\limits_{ \tvx_{\setpf} }
          f\bigl( \tvx_{\setpf} \bigr)
          \cdot
          \prod\limits_{e \in \setpf}
          \mu_{\ef}^{(t)}(\txe).
        \end{align*}
        If $ \kappa_f^{(t)} \neq 0 $, then we define the belief function
        at the function node $f$ to be the mapping
        \begin{align}
            \beli_f: \tset{X}_{\setpf} &\to \sC, \nonumber\\
            \tvx_{\setpf} & \mapsto
              \bigl( \kappa_f^{(t)} \bigr)^{\!-1}
              \cdot
              f\bigl( \tvx_{\setpf} \bigr)
              \cdot
              \prod\limits_{e \in \setpf}
                \mu_{\ef}^{(t)}(\txe).
            \label{sec:DENFG:eqn:1}
        \end{align}
      
    \item For every $e \in \setEfull$, we define
    \begin{align*}
        \kappa_e^{(t)} \defeq
        \sum\limits_{\txe}
        \prod\limits_{f \in \setpe}
        \mu_{\ef}^{(t)}(\txe).
    \end{align*}
    If $ \kappa_e^{(t)} \neq 0 $, then we define the belief function at edge $e$ to be the mapping
      \begin{align}
        \beli_{e}: \tset{X}_{e} &\to \sC, \nonumber\\
        \txe
        &\mapsto
          \bigl( \kappa_e^{(t)} \bigr)^{\!-1}
          \cdot
          \prod\limits_{f \in \setpe}
          \mu_{\ef}^{(t)}(\txe).
        \label{sec:DENFG:eqn:2}
      \end{align}

    \end{enumerate}
     \item At iteration $t \in \sZpp$, the SPA messages can be collected as the SPA message vector $ \vmu^{(t)} $:
      \begin{align*}
          \vmu^{(t)} \defeq 
          \Bigl( 
            \vmu_{\etof}^{(t)}
          \Bigr)_{\! e \in \setpf,\, f \in \setF}, 
      \end{align*}
      where
      \begin{align*}
          \vmu_{\etof}^{(t)} \defeq 
          \Bigl( \mu_{\etof}^{(t)}(\txe) \Bigr)_{\!\txe}.
      \end{align*}
      With this, the SPA message update rules at Step~\ref{details of SPA in DE-NFG} can be written as
      \begin{align*}
        \vmu^{(t)} = \fSPAN\bigl( \vmu^{(t-1)} \bigr)
      \end{align*}
      for some suitably defined function $\fSPAN$.

      \item \index{SPA fixed point!for DE-NFG} 
      An SPA message vector 
      $ \vmu \defeq ( \vmu_{\etof} )_{e \in \setpf, f \in \setF} $ with 
      $
        \vmu_{\etof} \defeq 
        \bigl( \mu_{\etof}(\txe) \bigr)_{ \! \txe }
      $
      is called an SPA fixed-point message vector if 
      \begin{align}
        \vmu = \fSPAN(\vmu).  \label{sec:DENFG:eqn:15}
      \end{align}
      The definitions of the belief functions $ \beli_f $ and $ \beli_e $
      evaluated by the SPA fixed-point message vector $ \vmu $ 
      are defined similarly as the definitions in~\eqref{sec:DENFG:eqn:1} 
      and~\eqref{sec:DENFG:eqn:2}, where we need to replace 
      $ \vmu_{\etof}^{(t)} $ with $ \vmu_{\etof} $ for all
      $ e \in \setpf $ and $ f \in \setF $.

  \end{enumerate}
  (If desired or necessary, other normalization procedures for messages and
  beliefs can be specified. Note that other normalization procedures for beliefs do not change the results in this thesis.)
  \edefinition
\end{definition}

\begin{assumption}
  \label{asmp: assume messages are PSD}
  In the context of DE-NFG, if not mentioned otherwise, we will only consider SPA
  message vectors $ \vmu^{(t)} $ for
  which 
  \begin{align*}
    \vmu_{\etof}^{(t)} \in \setPSD{\setxe}, \qquad 
  \end{align*}
  for all $e \in \setpf$,
  $f \in \setF$.
  Furthermore, we consider the following normalization procedure for the SPA message vectors $ \vmu^{(t)} $ and the collection of the beliefs $ \vbeli^{(t)} $:
  \begin{align*}
      \sum\limits_{\txe} \mu_{\etof}^{(t)}(\txe) &= 1, \qquad e \in \setpf, \, f \in \setF,
      \nonumber\\
      \sum\limits_{ \txe } \beli^{(t)}_{e}(\txe) &= 1, \qquad e \in \setEfull, \nonumber\\
      \sum\limits_{ \tvx_{\setpf} } \beli^{(t)}_{f}(\tvx_{\setpf}) &= 1, \qquad f \in \setF.
  \end{align*}

  Furthermore, we only consider the SPA fixed-point message vector $ \vmu $ such that
  \begin{align}
      \sum\limits_{\txe}
      \prod\limits_{f \in \setpe}
      \mu_{\ef}(\txe) \neq 0, \qquad e \in \setEfull.
      \label{eqn:assume that Ze is positive}
  \end{align}
  \eassumption
\end{assumption}

\begin{lemma}
  \label{sec:DENFG:lem:1}
  Consider running the SPA defined in Definition~\ref{sec:DENFG:def:2} on a DE-NFG $\graphN$. The resulting belief functions $ \beli_f^{(t)} $ and $ \beli_e^{(t)} $ for $ f \in \setF $ and $ e \in \setEfull $ have the following properties.
  \begin{enumerate}

    \item\label{sec:DENFG:lem:1:item:1} For all $t \in \sZpp $, we have 
    $ C_{\mu_{\ef}^{(t)}} \in \setPSD{\set{X}_{e}} $ for all $ e \in \setpf $ and $ f \in \setF $, which further implies
    $\matr{C}_{\beli_f^{(t)}} \in \setPSD{\tset{X}_{\setpf}}$ for all $f \in \setF$ and 
    $\matr{C}_{\beli_{e}^{(t)}} \in \setPSD{\set{X}_e}$ for all $e \in \setE$.

    \item\label{sec:DENFG:lem:1: PSD of SPA fixed point message} For any SPA fixed-point message vector $ \vmu $, it holds that $ \matr{C}_{\mu_{\ef}} \in \setPSD{\set{X}_{e}} $ for all $ e \in \setpf $ and $ f \in \setF $

    \item\label{sec:DENFG:lem:1:item:2} The beliefs induced by an SPA fixed-point message vector $\vmu$ satisfy the following local consistency properties:
    \begin{align*}
         \sum\limits_{ \tvz_{\setpf}:\, \tze = \tx_{e}} 
         \beli_f(\tvz_{\setpf}) 
         = \beli_{e}(\tx_{e}), \qquad 
         \txe \in \tset{X}_{e}, \, f \in \setpe,\, e \in \setEfull.
    \end{align*}
  \end{enumerate}
\end{lemma}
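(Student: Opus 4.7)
For Item~\ref{sec:DENFG:lem:1:item:1}, I would proceed by induction on $t$. The base case $t=0$ is already encoded in Assumption~\ref{asmp: assume messages are PSD}. For the inductive step, fix $e = (f_i,f_j)\in\setEfull$ and translate the scalar SPA update for $\mu_{\efj}^{(t)}$ into an identity between Choi matrices (reading the outgoing message from $f_i$ across $e$ as aggregating $f_i$ and the incoming messages at $f_i$): the local function contributes $\matr{C}_{f_i}\in\setPSD{\set{X}_{\setpfi}}$ by~\eqref{sec:DENFG:eqn:6}; the product $\prod_{e'\in\setpfi\setminus e}\mu_{\epfi}^{(t-1)}(\tx_{e'})$, viewed as a function of $\tvx_{\setpfi\setminus e}$, has Choi matrix $\bigotimes_{e'\in\setpfi\setminus e}\matr{C}_{\mu_{\epfi}^{(t-1)}}$, which is PSD by the inductive hypothesis together with the fact that Kronecker products of PSD matrices are PSD; pointwise multiplication of these two functions corresponds (after trivially extending the second Choi matrix by identity on the $\set{X}_{e}$ subsystem) to a Hadamard product of Hermitian PSD matrices, which is PSD by the Schur product theorem; finally, the marginalization $\sum_{\tvx_{\setpfi}:\,\txe=\tze}$ pairs up the primed and unprimed indices on every $e'\in\setpfi\setminus e$ and therefore realizes a partial trace over those subsystems, which preserves PSD-ness. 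The scalar $\kappa_{\efj}^{(t)}$ is a trace of the PSD matrix just constructed, hence real and non-negative, and strictly positive under Assumption~\ref{asmp: assume messages are PSD}; thus $\matr{C}_{\mu_{\efj}^{(t)}}\in\setPSD{\set{X}_{e}}$. Applying the same three preservation facts to~\eqref{sec:DENFG:eqn:1} and~\eqref{sec:DENFG:eqn:2} immediately yields PSD-ness of $\matr{C}_{\beli_f^{(t)}}$ and $\matr{C}_{\beli_{e}^{(t)}}$.

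Item~\ref{sec:DENFG:lem:1: PSD of SPA fixed point message} then follows at once: an SPA fixed-point message vector admissible under Assumption~\ref{asmp: assume messages are PSD} satisfies $\vmu=\fSPAN(\vmu)$, so each $\mu_{\ef}$ is simultaneously a component of $\vmu$ and the output of one SPA step applied to the other (PSD) components of $\vmu$; the argument used for Item~\ref{sec:DENFG:lem:1:item:1} (now without induction) certifies that each such output is PSD.

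For Item~\ref{sec:DENFG:lem:1:item:2}, fix $e=(f_i,f_j)$ and $\txe\in\tset{X}_{e}$. Substituting~\eqref{sec:DENFG:eqn:1} into the marginal and pulling the factor that does not depend on the summation variables out of the sum gives
\[
\sum_{\tvz_{\setpfi}:\,\tze=\txe}\beli_{f_i}(\tvz_{\setpfi})
= \kappa_{f_i}^{-1}\,\mu_{\efi}(\txe)
\sum_{\tvz_{\setpfi}:\,\tze=\txe} f_i(\tvz_{\setpfi}) \prod_{e'\in\setpfi\setminus e}\mu_{\epfi}(\tz_{e'}).
\]
By the fixed-point identity~\eqref{sec:DENFG:eqn:15} applied in the direction $(e,f_j)$, the remaining sum equals $\kappa_{\efj}\cdot\mu_{\efj}(\txe)$, so the right-hand side is a positive scalar multiple of $\mu_{\efi}(\txe)\,\mu_{\efj}(\txe)$, which by~\eqref{sec:DENFG:eqn:2} is a positive scalar multiple of $\beli_{e}(\txe)$. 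Summing this proportionality over $\txe\in\tset{X}_{e}$ and using the sum-to-one normalizations of $\beli_{f_i}$ and $\beli_{e}$ fixed in Assumption~\ref{asmp: assume messages are PSD} forces the constant of proportionality to be exactly $1$; the case $f=f_j$ is symmetric.

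The main subtlety lives in Item~\ref{sec:DENFG:lem:1:item:1}: the argument rests on recognising that the SPA marginalization over each double variable $\tx_{e'}=(x_{e'},x_{e'}')$ is not an arbitrary contraction but genuinely a partial trace, which requires a careful bookkeeping of the primed/unprimed indices so that they match the row/column split underlying the Choi-matrix representation in~\eqref{eqn: def of Choi matrix representation}. Once that identification is in place, PSD-preservation under Kronecker product, Schur product, and partial trace assembles mechanically into the induction step, and the remaining two items reduce to direct substitutions into the SPA fixed-point equations together with the chosen normalization conventions.
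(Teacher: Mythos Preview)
Your proposal is correct and follows essentially the same approach as the paper. The paper's own proof is a three-line sketch that attributes Item~\ref{sec:DENFG:lem:1:item:1} to ``Assumption~\ref{asmp: assume messages are PSD} and a variant of Schur's product theorem,'' Item~\ref{sec:DENFG:lem:1: PSD of SPA fixed point message} to Item~\ref{sec:DENFG:lem:1:item:1}, and Item~\ref{sec:DENFG:lem:1:item:2} to ``the SPA message update rules''; your write-up simply unpacks these hints into the explicit ingredients (Kronecker product, Hadamard/Schur product, partial trace for Item~\ref{sec:DENFG:lem:1:item:1}; the fixed-point identity plus normalization for Item~\ref{sec:DENFG:lem:1:item:2}), which is exactly what is required to make the sketch rigorous.
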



\begin{proof}
  The proof for each statement in the lemma is listed as follows.
  \begin{enumerate}
    \item The statement in Item~\ref{sec:DENFG:lem:1:item:1} is a consequence of Assumption~\ref{asmp: assume messages are PSD} and a variant of Schur's product theorem, which states that the entry-wise
  product of two PSD matrices is again a PSD
  matrix.\footnote{Schur's product theorem states that the entry-wise product
    of two positive definite matrices is a positive definite matrix. Note that
    the entry-wise product of two matrices is also known as the Hadamard
    product.} 

    \item The statement in Item~\ref{sec:DENFG:lem:1: PSD of SPA fixed point message} is a straightforward result of Item~\ref{sec:DENFG:lem:1:item:1}.

    \item The statement in Item~\ref{sec:DENFG:lem:1:item:2} is a consequence of the SPA message update rules.
  \end{enumerate}
  
\end{proof}


The following definition generalizes Definition~\ref{sec:SNFG:def:2}
from S-NFGs to DE-NFGs.


\begin{definition}(The pseudo-dual formulation of the Bethe partition function for DE-NFGs)\label{sec:DENFG:def:3}
  \index{Pseudo-dual Bethe partition function!for DE-NFG} 
  Consider a DE-NFG $\graphN$ and let
  $\vmu $ be an SPA
  message vector for $\graphN$, as defined in~\eqref{sec:DENFG:eqn:15}.\footnote{The SPA message vector $\vmu$ does
    not necessarily have to be an SPA fixed-point message vector.} Define  the following quantities:
    \begin{align*}
        Z_{e}(\vmu)
      &\defeq
        \sum\limits_{\txe}
            \mu_{\efi}(\txe)
            \cdot
            \mu_{\efj}(\txe), \qquad e \in \setEfull.
    \end{align*}
   If 
   \begin{align}
    Z_{e}(\vmu) \neq 0, \qquad e \in \setEfull,
    \label{eqn:positivity of Ze}
   \end{align}
   then we define the $\vmu$-based Bethe partition function to be
  \begin{align}
    \ZBSPA(\graphN,\vmu)
      &\defeq \frac{\prod\limits_f Z_f(\vmu)}
        {\prod\limits_{e} Z_{e}(\vmu)},
        \label{sec:DENFG:eqn:3}
  \end{align}
  where
  \begin{align*}
    Z_f(\vmu)
    &\defeq
      \sum\limits_{\tvx_{\setpf}}
    f\bigl( \tvx_{\setpf} \bigr)
        \cdot
        \prod\limits_{e \in \setpf}
        \mu_{\ef}(\txe), \qquad
        f \in \setF.
  \end{align*}
  (For DE-NFGs with full and half edges, the product in the denominator on the
  right-hand side of~\eqref{sec:DENFG:eqn:3} is only over all full
  edges. Note that the expression for $\ZBSPA(\graphN,\vmu)$ is scaling
  invariant, meaning that if any of the messages in $\vmu$ is multiplied by a
  non-zero complex number, then $\ZBSPA(\graphN,\vmu)$ remains unchanged.)
  (As we will see in the following lemma, the function $ \ZBSPA(\graphN,\vmu) $ is non-negative real-valued for any SPA message vector $ \vmu $.)
  The SPA-based Bethe partition function $\ZBSPA^{*}(\graphN)$ is then defined
  to be
  \begin{align*}
    \begin{aligned}
      \ZBSPA^{*}(\graphN)
        &\defeq 
        \max\limits_{\vmu}
        \ZBSPA(\graphN,\vmu)
        \nonumber\\
        &\qquad \mathrm{s.t.} \
        \text{$\vmu$ is an SPA fixed-point message vector 
        satisfy~\eqref{eqn:positivity of Ze}}.
    \end{aligned}
  \end{align*}
  \edefinition
\end{definition}

\begin{lemma}\label{lem: non-negative of ZBSPA for DE-NFG}
  Consider a DE-NFG $\graphN$ and an SPA message vector $ \vmu $. It holds that 
    \begin{align*}
        Z_f(\vmu) &\in  \sR_{\geq 0}, \qquad f \in \setF, 
        \nonumber\\
        Z_{e}(\vmu)
        & \in \sRp,
        \qquad e \in \setEfull.
    \end{align*}
\end{lemma}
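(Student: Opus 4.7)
The plan is to reduce both non-negativity statements to the classical matrix fact that $\mathrm{tr}(\matr{A}\matr{B}) \in \sRp$ whenever $\matr{A}$ and $\matr{B}$ are Hermitian PSD matrices of the same size; this follows from the cyclic-trace identity $\mathrm{tr}(\matr{A}\matr{B}) = \mathrm{tr}(\matr{A}^{1/2}\matr{B}\matr{A}^{1/2})$ together with the observation that $\matr{A}^{1/2}\matr{B}\matr{A}^{1/2}$ is again Hermitian PSD. The three inputs I will draw on are: (i)~Lemma~\ref{sec:DENFG:lem:1}, Item~\ref{sec:DENFG:lem:1:item:1}, which under Assumption~\ref{asmp: assume messages are PSD} yields $\matr{C}_{\mu_{\ef}} \in \setPSD{\set{X}_{e}}$ for every $e \in \setpf$, $f \in \setF$; (ii)~the strict-sense convention of Definition~\ref{sec:DENFG:def:4}, Item~\ref{sec:DENFG:def:4:item:4:item:1} (in force by Assumption~\ref{sec:DENFG:asum:2}), which gives $\matr{C}_{f} \in \setPSD{\set{X}_{\setpf}}$ for every $f \in \setF$; and (iii)~the elementary facts that $\matr{A}^{\tran} = \overline{\matr{A}}$ remains Hermitian PSD whenever $\matr{A}$ is, and that the Kronecker product of Hermitian PSD matrices is Hermitian PSD.

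For $Z_{e}(\vmu)$, first I would unfold $\txe = (x_{e},x_{e}')$ and use the Choi-matrix notation of~\eqref{eqn: def of Choi matrix representation} to rewrite
\begin{align*}
  Z_{e}(\vmu)
    = \sum\limits_{x_{e},x_{e}'}
        (\matr{C}_{\mu_{\efi}})_{x_{e},x_{e}'}
        \cdot
        (\matr{C}_{\mu_{\efj}})_{x_{e},x_{e}'}
    = \mathrm{tr}\bigl(
        \matr{C}_{\mu_{\efi}} \cdot \matr{C}_{\mu_{\efj}}^{\tran}
      \bigr).
\end{align*}
By inputs~(i) and~(iii), both $\matr{C}_{\mu_{\efi}}$ and $\matr{C}_{\mu_{\efj}}^{\tran}$ are Hermitian PSD, so the trace-of-PSD-product fact above delivers $Z_{e}(\vmu) \in \sRp$ at once.

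For $Z_{f}(\vmu)$, I would fix an arbitrary ordering of the edges in $\setpf$ and observe that $\prod_{e \in \setpf} \mu_{\ef}(\txe)$ coincides with the $(\vx_{\setpf},\vx_{\setpf}')$-entry of $\matr{M}_{f} \defeq \bigotimes_{e \in \setpf} \matr{C}_{\mu_{\ef}}$. By input~(iii), $\matr{M}_{f}$ is Hermitian PSD, and hence so is $\matr{M}_{f}^{\tran}$. Therefore
\begin{align*}
  Z_{f}(\vmu)
    = \sum\limits_{\vx_{\setpf},\vx_{\setpf}'}
        (\matr{C}_{f})_{\vx_{\setpf},\vx_{\setpf}'}
        \cdot
        (\matr{M}_{f})_{\vx_{\setpf},\vx_{\setpf}'}
    = \mathrm{tr}\bigl( \matr{C}_{f} \cdot \matr{M}_{f}^{\tran} \bigr),
\end{align*}
and combining input~(ii) with the trace-of-PSD-product fact gives $Z_{f}(\vmu) \in \sRp$.

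I do not foresee any substantive obstacle. The only step that requires any care is the notational identification of $\prod_{e \in \setpf} \mu_{\ef}(\txe)$ with a single Kronecker-product entry under a fixed edge ordering; once this bookkeeping is in place, the entire proof is two applications of the trace-of-two-Hermitian-PSD-matrices inequality, one for the edge sum and one for the function-node sum.
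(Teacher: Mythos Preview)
Your proof is correct and follows essentially the same approach as the paper: both arguments reduce to the fact that the ``inner product'' $\sum_{a,b} A_{ab} B_{ab}$ of two Hermitian PSD matrices is non-negative, applied once at each edge and once at each function node. The only cosmetic difference is that you package this fact as $\mathrm{tr}(\matr{A}\matr{B}^{\tran}) \geq 0$, whereas the paper invokes Schur's product theorem to say the Hadamard product $\matr{C}_f \circ \bigl(\bigotimes_{e} \matr{C}_{\mu_{\ef}}\bigr)$ is PSD and then sums all its entries; these two finishing moves are equivalent here.
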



\begin{proof}
    By Lemma~\ref{sec:DENFG:lem:1}, we know that $ \vmu_{\ef} \in \setPSD{\set{X}_{e}} $ for all $ e \in \setpf $ and $ f \in \setF $. Then we
    can use Schur's product theorem to show that 
    the following matrix with row indices $ \vx_{\setpf} $ and column indices $ \vx_{\setpf}' $ is a PSD matrix:
    \begin{align*}
      \Biggl( 
        f\bigl( \tvx_{\setpf} \bigr)
        \cdot
        \prod\limits_{e \in \setpf}
        \mu_{\ef}(\txe)
      \Biggr)_{ \!\! \vx_{\setpf}, \vx_{\setpf}' \in \set{X}_{\setpf} }.
    \end{align*}
    Therefore, we get
    \begin{align*}
      Z_f(\vmu) 
      = 
      \sum\limits_{\tvx_{\setpf}} 
      f\bigl( \tvx_{\setpf} \bigr)
      \cdot
      \prod\limits_{e \in \setpf}
      \mu_{\ef}(\txe) \in \sR_{\geq 0}.
    \end{align*}

    The proof of $ Z_{e}(\vmu) \in \sR_{\geq 0} $ for all $ e \in \setEfull $ is similar and thus it is omitted here.
\end{proof}

\begin{assumption}\label{sec:DENFG:remk:1}
    In Assumption~\ref{asmp: assume messages are PSD}, we assume that the inequalities in~\eqref{eqn:assume that Ze is positive} hold.
    We further assume that $\ZBSPA^{*}(\graphN)$ is well-defined, \ie, there exists an SPA fixed-point message vector $ \vmu $ that satisfies 
    \begin{align*}
        Z_{e}(\vmu) &\in \sR_{> 0}, \qquad e \in \setEfull, \nonumber\\
        \ZBSPA(\graphN,\vmu) &= \ZBSPA^{*}(\graphN).
    \end{align*}
    \eremark
\end{assumption}

\begin{proposition}
  \label{prop: sufficient condition for pd messages}
  \label{PROP: SUFFICIENT CONDITION FOR PD MESSAGES}

  A sufficient condition for $ \ZBSPA(\graphN,\vmu) \in \sR_{>0} $ for all SPA fixed-point message vector $ \vmu $ is that 
  for each $ f \in \setF $ in $ \graphN $, 
  the associated Choi-matrix representation $ \matr{C}_{f} $ is a positive definite matrix.
  
\end{proposition}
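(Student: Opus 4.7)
The plan is to establish strict positivity of both the numerator and the denominator of the ratio in~\eqref{sec:DENFG:eqn:3} defining $\ZBSPA(\graphN, \vmu)$. Strict positivity of each $Z_e(\vmu)$ is already guaranteed by Assumption~\ref{sec:DENFG:remk:1}, so the main task reduces to showing that $Z_f(\vmu) > 0$ for every $f \in \setF$, using only the additional hypothesis that the Choi matrix $\matr{C}_f$ is positive definite.

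The central step is a trace rewriting of $Z_f(\vmu)$. Writing $\tvx_{\setpf} = (\vx_{\setpf}, \vx_{\setpf}')$ and $\txe = (x_e, x_e')$, one has
\begin{align*}
  Z_f(\vmu)
    &= \sum\limits_{\vx_{\setpf},\, \vx_{\setpf}'}
         \matr{C}_f(\vx_{\setpf}, \vx_{\setpf}')
         \cdot
         \prod\limits_{e \in \setpf} \matr{C}_{\mu_{\ef}}(x_e, x_e')
     = \mathrm{tr}\bigl( \matr{C}_f \cdot \matr{N}_f^{\tran} \bigr),
\end{align*}
where $\matr{N}_f$ is the matrix whose $(\vx_{\setpf}, \vx_{\setpf}')$-entry is $\prod_{e \in \setpf} \mu_{\ef}(x_e, x_e')$. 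After fixing an ordering of the edges incident on $f$ and using the standard multi-index identification, $\matr{N}_f$ can be written as the Kronecker product $\bigotimes_{e \in \setpf} \matr{C}_{\mu_{\ef}}$. Item~\ref{sec:DENFG:lem:1: PSD of SPA fixed point message} of Lemma~\ref{sec:DENFG:lem:1} gives that every $\matr{C}_{\mu_{\ef}}$ is PSD, and since a Kronecker product of PSD matrices is PSD (and the transpose of a Hermitian PSD matrix is again Hermitian PSD, with the same spectrum), both $\matr{N}_f$ and $\matr{N}_f^{\tran}$ are PSD.

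To conclude I would verify that $\matr{N}_f^{\tran} \neq \matr{0}$. The normalization $\sum_{\txe} \mu_{\ef}(\txe) = 1$ from Assumption~\ref{asmp: assume messages are PSD} rules out $\matr{C}_{\mu_{\ef}} = \matr{0}$, and a Kronecker product of non-zero matrices is non-zero. Combining this with positive definiteness of $\matr{C}_f$, one applies the standard linear-algebra fact: if $\matr{A}$ is positive definite and $\matr{B}$ is PSD with $\matr{B} \neq \matr{0}$, then $\mathrm{tr}(\matr{A} \matr{B}) > 0$. This follows from the identity $\mathrm{tr}(\matr{A} \matr{B}) = \mathrm{tr}(\matr{A}^{1/2} \matr{B} \matr{A}^{1/2})$, because the matrix on the right is PSD, and its trace can vanish only if $\matr{A}^{1/2} \matr{B} \matr{A}^{1/2} = \matr{0}$; conjugating by the invertible $\matr{A}^{-1/2}$ would then force $\matr{B} = \matr{0}$, contradicting $\matr{B} \neq \matr{0}$.

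The only point that deserves some care is the index bookkeeping in the trace rewriting, since the Choi matrices are complex-valued Hermitian and one must verify that the Kronecker-product identification correctly preserves the Hermitian-PSD structure (Lemma~\ref{lem: non-negative of ZBSPA for DE-NFG} already tells us $Z_f(\vmu)$ is real). Once this is settled, $Z_f(\vmu) > 0$ for every $f$, $Z_e(\vmu) > 0$ for every $e$, and hence $\ZBSPA(\graphN, \vmu) \in \sR_{>0}$, as required.
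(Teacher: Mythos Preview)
Your trace argument for $Z_f(\vmu) > 0$ is correct and clean: the identity $Z_f(\vmu) = \mathrm{tr}\bigl(\matr{C}_f \cdot \overline{\matr{N}_f}\bigr)$ with $\matr{N}_f = \bigotimes_{e \in \setpf} \matr{C}_{\mu_{\ef}}$ PSD and nonzero, combined with the hypothesis that $\matr{C}_f$ is positive definite, gives the strict inequality exactly as you describe.

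There is, however, a citation/logic slip in how you dispose of $Z_e(\vmu) > 0$. Assumption~\ref{sec:DENFG:remk:1} only asserts the \emph{existence} of one fixed point with $Z_e > 0$; it says nothing about all fixed points. More to the point, the purpose of this proposition is precisely to furnish a sufficient condition under which Assumption~\ref{sec:DENFG:remk:1} holds, so invoking it here is circular. What actually rescues your argument is Assumption~\ref{asmp: assume messages are PSD} (which restricts the fixed points under consideration to those with $Z_e \neq 0$) together with Lemma~\ref{lem: non-negative of ZBSPA for DE-NFG} ($Z_e \geq 0$). With that correction your approach goes through.

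The paper's route, as signalled by the appendix label ``sufficient condition for pd messages'', is different: it shows that if every $\matr{C}_f$ is positive definite, then every fixed-point message Choi matrix $\matr{C}_{\mu_{\ef}}$ is itself positive definite, not merely PSD. One writes the outgoing message via the SPA fixed-point equation and expands in the eigenbases of the incoming messages; each quadratic form $\bm{v}^{\Herm}\matr{C}_{\mu_{\ef}}\bm{v}$ becomes a nonnegative combination of terms $\bm{w}^{\Herm}\matr{C}_f\bm{w}$, and choosing eigenmodes with strictly positive eigenvalues produces a strictly positive contribution. This stronger conclusion immediately yields both $Z_e(\vmu) > 0$ and $Z_f(\vmu) > 0$ without leaning on the $Z_e \neq 0$ clause of Assumption~\ref{asmp: assume messages are PSD}. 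Your trace shortcut is more economical for $Z_f$ alone, but the paper's PD-message statement is more informative and fully self-contained.
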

\begin{proof}
    See Appendix~\ref{apx: sufficient condition for pd messages}.
\end{proof}

\begin{proposition}\label{sec:DENFG:prop:1}
    For any cycle-free DE-NFG $\graphN$,
    the SPA has a unique fixed-point message vector $ \vmu $, 
    and the SPA-based Bethe partition function 
    gives the exact value of the partition function, 
    \ie, 
    \begin{align*}
      \ZBSPA^{*}(\graphN) 
      = \ZBSPA(\graphN,\vmu) 
      = Z(\graphN).
    \end{align*}
\end{proposition}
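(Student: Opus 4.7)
The statement is the DE-NFG analogue of Proposition~\ref{prop: property of the SPA and ZB for cycle-free SNFG} for S-NFGs. The approach is to mimic the classical tree argument: the SPA update rule in Definition~\ref{sec:DENFG:def:2} is algebraically identical to the S-NFG SPA once one treats $\txe \in \tset{X}_e$ as a single variable and allows complex-valued messages, so cycle-freeness will let us propagate messages from leaves inward and compute the partition function exactly by telescoping. Throughout, I assume without loss of generality that $\graphN$ is a tree (the forest case reduces to it factor by factor).

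For existence, for every edge $e \in \setEfull$ and every $f \in \partial e$, let $\graphN_{e \to f}$ denote the connected component of $\graphN \setminus \{e\}$ containing the endpoint of $e$ other than $f$, together with the edge $e$ itself. I would define the canonical candidate
\begin{align*}
  \mu^{\star}_{e, f}(\txe)
    &\defeq
       \kappa_{e,f}^{-1}
       \cdot
       \sum\limits_{\tvx_{\graphN_{e \to f} \setminus e}}
         \prod\limits_{f'' \in \graphN_{e \to f}}
           f''\bigl( \tvx_{\partial f''} \bigr),
\end{align*}
with $\kappa_{e,f}$ chosen so that $\sum\limits_{\txe} \mu^{\star}_{e,f}(\txe) = 1$. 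A leaf-to-root induction over $\graphN_{e \to f}$ verifies that $\vmu^{\star} \defeq (\mu^{\star}_{e,f})$ satisfies the SPA fixed-point equation~\eqref{sec:DENFG:eqn:15}. The Choi matrix of each $\mu^{\star}_{e,f}$ is PSD since marginalization preserves PSD-ness and Schur's product theorem (Item~\ref{sec:DENFG:lem:1:item:1} of Lemma~\ref{sec:DENFG:lem:1}) preserves it under the entry-wise products arising at each function node. Because $Z(\graphN) \in \sRpp$, each subtree sum $\kappa_{e,f}$ is strictly positive — otherwise $Z(\graphN)$ itself would vanish — so the normalization in Assumption~\ref{asmp: assume messages are PSD} is well-defined and~\eqref{eqn:assume that Ze is positive} holds, making $\ZBSPA(\graphN,\vmu^{\star})$ well-defined.

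For uniqueness, I would peel leaves: if $f$ is a leaf function node with incident edge $e$ and opposite endpoint $f_{\mathrm{o}}$, then the SPA update for $\mu_{e, f_{\mathrm{o}}}$ involves only $f$, and the chosen normalization pins it down uniquely. Removing $f$ reduces $\graphN$ to a smaller tree, and the induction hypothesis forces every remaining SPA fixed-point message to coincide with $\mu^{\star}_{e,f}$. To establish $\ZBSPA(\graphN,\vmu^{\star}) = Z(\graphN)$, I would run a second leaf-peeling induction on $|\setF|$: given a leaf $f$ with edge $e = (f, f')$, define a reduced DE-NFG $\graphN'$ by deleting $f$ and $e$ and replacing $f'$ with $\tilde{f}'(\tvx_{\partial f' \setminus e}) \defeq \sum\limits_{\txe} f'\bigl( \tvx_{\partial f'} \bigr) \cdot \kappa_{e,f'} \cdot \mu^{\star}_{e,f'}(\txe)$. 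A direct computation shows $Z(\graphN') = Z(\graphN)$ and that the restriction of $\vmu^{\star}$ is the unique SPA fixed point of $\graphN'$; matching $Z_f(\vmu^{\star})$ and $Z_e(\vmu^{\star})$ across the reduction makes $\prod_f Z_f/\prod_e Z_e$ telescope, reducing the claim to the one-edge base case, where it is immediate from the fixed-point equations. Uniqueness of the fixed point then yields $\ZBSPA^{*}(\graphN) = \ZBSPA(\graphN,\vmu^{\star}) = Z(\graphN)$.

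The main obstacle I anticipate is the careful bookkeeping between the two viewpoints of a message: as a scalar function entering the Bethe product formula~\eqref{sec:DENFG:eqn:3}, and as a PSD Choi matrix whose structure must be preserved through every reduction. Complex-valuedness of local functions is not itself an obstacle — no destructive cancellations arise, because on a tree every intermediate quantity is a literal sum over a single subtree — but verifying that the absorbed factor $\tilde{f}'$ in $\graphN'$ retains a PSD Choi representation, so that the inductive hypothesis applies verbatim and Assumption~\ref{sec:DENFG:remk:1} transfers to $\graphN'$, is the delicate technical point that distinguishes the DE-NFG argument from its S-NFG counterpart.
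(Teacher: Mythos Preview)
Your proposal is correct. The paper itself omits the proof entirely, writing only ``The proof is straightforward and thus is omitted here,'' so there is no detailed argument to compare against; your leaf-peeling induction is precisely the standard argument one would supply here, and your attention to PSD preservation under the reduction (via Schur's product theorem and the partial-trace structure) is exactly the point that distinguishes the DE-NFG case from the S-NFG case.

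One small cleanup: in your uniqueness argument, ``removing $f$ reduces $\graphN$ to a smaller tree'' is slightly imprecise as stated, because simply deleting the leaf $f$ and its edge $e$ changes the SPA fixed-point equations at $f_{\mathrm{o}}$. What you want is to absorb the now-determined message $\mu^{\star}_{e,f_{\mathrm{o}}}$ into the local function at $f_{\mathrm{o}}$ (exactly as you do later when proving $\ZBSPA = Z$); then the SPA fixed-point equations of the reduced graph agree with the restrictions of those of $\graphN$, and the induction goes through. You clearly have this in mind, but it is worth making explicit so that the message $\mu_{e,f}$ toward the leaf is also pinned down (it is recovered afterward from the SPA update at $f_{\mathrm{o}}$ once all its other incoming messages are fixed).
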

\begin{proof}
    The proof is straightforward and thus is omitted here. 
\end{proof}


Although \emph{formal}
Definitions~\ref{sec:SNFG:def:3} and~\ref{sec:SNFG:def:1} could be potentially extended from S-NFGs to DE-NFGs, there is a challenge in dealing with the multi-valued nature of the complex logarithm function. We leave this issue for
future research.

In conclusion, this section provides interesting insights and remarks regarding the DE-NFGs and their associated SPA. In the following, we summarize these points:
\begin{enumerate}
  \item For each DE-NFG, the SPA fixed point messages are utilized to define the Bethe partition function, which serves as an approximation of the partition function.

  \item For any strict-sense DE-NFG $ \sfN $, both the partition function $Z(\graphN)$ and the Bethe partition function $\ZBSPA^{*}(\graphN)$ are non-negative real-valued functions. 

  \item Each S-NFG can be viewed as a special case of a DE-NFG. This observation indicates that the concepts, techniques, and theoretical results that are developed for DE-NFGs, can be applied to S-NFGs as well.

\end{enumerate}

\section{Finite Graph Covers}
\label{sec: finite graph covers}
\label{sec:GraCov}



This section reviews the concept of finite graph covers of a factor graph in general~\cite{Stark:Terras:96:1, Koetter2007,Vontobel2013}.


\begin{definition}
  A graph $\hat{\mathsf{G}} = (\hat{\set{V}}, \hat{\set{E}})$ with vertex set
  $\hat{\set{V}}$ and edge set $\hat{\set{E}}$ is the cover of a graph
  $\mathsf{G} = (\set{V}, \set{E})$ if there exists a graph homomorphism
  $\pi: \hat{\set{V}} \to \set{V}$ such that for each $v \in \set{V}$ and
  $\hat{v} \in \pi^{-1}(v)$, where $\pi^{-1}$ is the pre-image of $v$ under
  the mapping $\pi$, the neighborhood of $\hat{v}$, \ie, 
  the set of all vertices in $ \hat{\set{V}} $ adjacent to $ \hat{v} $, is
  mapped bijectively onto the neighborhood of $v$, \ie, the set of all vertices in $ \set{V} $ adjacent to $ v $. Given a cover
  $\hat{\mathsf{G}} = (\hat{\set{V}}, \hat{\set{E}})$, if there is a positive
  integer $M$ such that $|\pi^{-1}(v)| = M$ for all $v\in \set{V}$, then
  $\hat{\mathsf{G}}$ is called an $M$-cover.
  \edefinition
\end{definition}


Without loss of generality, we can denote the vertex set of an $M$-cover by
$\hat{\set{V}} \defeq \set{V} \times [M]$ and define the mapping
$\pi: \hat{\set{V}} \to \set{V}$ to be such that $\pi\bigl( (v,m) \bigr) = v$ for all
$(v,m) \in \hat{\set{V}}$. Consider an edge in the original graph connecting nodes $v$ and $v'$, and the corresponding edge in the $M$-cover connecting $(v,m)$ and $(v',m')$. It holds that
\begin{align*}
  \Bigl( \pi \bigl( (v,m) \bigr), \pi \bigl( (v',m') \bigr) \Bigr) = (v,v').
\end{align*}
For every edge $e =(v,v')$ in the original graph, there exists a permutation $\sigma_e \in \set{S}_{[M]}$ associated with the corresponding edges in the $M$-cover such that the set of the corresponding edges can be represented by the following set:
\begin{align*}
  \left\{ 
    \Bigl( (v,m),\bigl(v',\sigma_{e}(m)\bigr) \Bigr) 
    \ \middle| \ 
    m \in [M] 
  \right\}.
\end{align*}
This observation motivates the following definition.


\begin{definition}
  \label{sec:GraCov:def:1}
  \index{Finite graph covers} 
  Given an NFG $\mathsf{N} = (\set{F}, \set{E}, \set{X})$, we define
  $\hat{\set{N}}_{M}$ to be the set of all $M$-covers. Each $M$-cover
  $\hat{\mathsf{N}} = (\hat{\set{F}}, \hat{\set{E}})$ consists of the vertex set
  $\hat{\set{F}} \defeq \set{F} \times [M]$ and the edge set $\hat{\set{E}}$,
  where 
  \begin{align*}
    \hat{\set{E}}
      &\defeq \left\{
            \Bigl( \bigl( f_{i},m_{i} \bigr), \bigl(f_{j}, \sigma_{e}(m_{i}) \bigr) \Bigr)
         \ \middle|
           \ e = (f_{i}, f_{j}) \in \set{E},\, m_{i} \in [M]
         \right\}
  \end{align*}
  for some collection of permutations 
  $( \sigma_{e} )_{e \in \setEfull} \in \set{S}_{[M]}^{|\setEfull|}$.
  \edefinition
\end{definition}


For a given graph $\mathsf{G}$, we consider all graphs in $\set{G}_{M}$ to be
distinct graphs, despite the fact that some graphs in the set $\set{G}_{M}$
might be isomorphic to each other. (See also the comments on labeled graph
covers after~\cite[Definition~19]{Vontobel2013}.)


Let $\graphN$ be some S-NFG or DE-NFG. It is straightforward to
define the set $ \setcovN{M} $ of $M$-covers $\hgraphN$ of $\graphN$. (For a
technical description, see~\cite{Vontobel2013}.) More importantly, for finite graph covers of DE-NFGs, the double edges are permuted together\footnote{ A graph-cover definition based on splitting the two edges comprising a double edge could be considered. However, the resulting graph covers would not be immediately relevant for analyzing the SPA in the way that it is formulated. We leave it for future research.}

\begin{figure}[t]
  \begin{center}
  \begin{minipage}[t]{0.35\linewidth}
  \centering
    \begin{tikzpicture}[node distance=2cm, on grid,auto]
      \input{figures/head_files_figs.tex}
      \node[state] (f11) [] {};
      \node[state,right of=f11] (f21) [] {};
      \node[state,below of=f11] (f31) [] {};
      \node[state,right of=f31] (f41) [] {};
      \path[-,draw]
          (f11) edge node[above]  {} (f21)
          (f11) edge node[left]  {} (f31) 
          (f11) edge node[left] {} (f41)
          (f21) edge node[right] {} (f41)
          (f31) edge node[below] {} (f41);
    \end{tikzpicture}
  \end{minipage}
  \begin{minipage}[t]{0.2\linewidth}
    \begin{tikzpicture}[node distance=2cm, on grid,auto]
      \input{figures/head_files_figs.tex}

      \begin{pgfonlayer}{main}
        \node[state] (f11) [] {};
        \node[state,right of=f11] (f21) [] {};
        \node[state,below of=f11] (f31) [] {};
        \node[state,right of=f31] (f41) [] {};
        \path[-,draw]
          (f11) edge node[above]  {} (f21)
          (f11) edge node[left]  {} (f31) 
          (f11) edge node[left] {} (f41)
          (f21) edge node[right] {} (f41)
          (f31) edge node[below] {} (f41);
      \end{pgfonlayer}

      \begin{pgfonlayer}{behind}
        \node[state] (f12) [above right=0.2cm and 0.2cm of f11] {};
        \node[state,right of=f12] (f22) [] {};
        \node[state,below of=f12] (f32) [] {};
        \node[state,right of=f32] (f42) [] {};
        \path[-,draw]
          (f12) edge node[above]  {} (f22)
          (f12) edge node[left]  {} (f32) 
          (f12) edge node[left] {} (f42)
          (f22) edge node[right] {} (f42)
          (f32) edge node[below] {} (f42);
      \end{pgfonlayer}

    \end{tikzpicture}
  \end{minipage}
  \begin{minipage}[t]{0.2\linewidth}
    \begin{tikzpicture}[node distance=2cm, on grid,auto]
      \input{figures/head_files_figs.tex}
      \begin{pgfonlayer}{main}
        \node[state] (f11) [] {};
        \node[state,right of=f11] (f21) [] {};
        \node[state,below of=f11] (f31) [] {};
        \node[state,right of=f31] (f41) [] {};
        \path[-,draw]
          (f11) edge node[above]  {} (f21)
          (f11) edge node[left]  {} (f31) 
          (f11) edge node[left] {} (f42)
          (f21) edge node[right] {} (f41)
          (f31) edge node[below] {} (f41);
      \end{pgfonlayer}
      \begin{pgfonlayer}{behind}
        \node[state] (f12) [above right=0.2cm and 0.2cm of f11] {};
        \node[state,right of=f12] (f22) [] {};
        \node[state,below of=f12] (f32) [] {};
        \node[state,right of=f32] (f42) [] {};
          \path[-,draw]
            (f12) edge node[above]  {} (f22)
            (f12) edge node[left]  {} (f32) 
            (f12) edge node[left] {} (f41)
            (f22) edge node[right] {} (f42)
            (f32) edge node[below] {} (f42);
      \end{pgfonlayer}
    \end{tikzpicture}
  \end{minipage}
  \begin{minipage}[t]{0.2\linewidth}
    \begin{tikzpicture}[node distance=2cm, on grid,auto]
      \input{figures/head_files_figs.tex}
      \begin{pgfonlayer}{main}
        \node[state] (f11) [] {};
        \node[state,right of=f11] (f21) [] {};
        \node[state,below of=f11] (f31) [] {};
        \node[state,right of=f31] (f41) [] {};
        \path[-,draw]
            (f11) edge node[above]  {} (f22)
            (f11) edge node[left]  {} (f31) 
            (f11) edge node[left] {} (f41)
            (f21) edge node[right] {} (f41);
      \end{pgfonlayer}
      \begin{pgfonlayer}{behind}
        \node[state] (f12) [above right=0.2cm and 0.2cm of f11] {};
        \node[state,right of=f12] (f22) [] {};
        \node[state,below of=f12] (f32) [] {};
        \node[state,right of=f32] (f42) [] {};
        \path[-,draw]
          (f31) edge node[below] {} (f42)
          (f12) edge node[above]  {} (f21)
          (f12) edge node[left]  {} (f32) 
          (f12) edge node[left] {} (f42)
          (f22) edge node[right] {} (f42)
          (f32) edge node[below] {} (f41);
      \end{pgfonlayer}
    \end{tikzpicture}
  \end{minipage}
    \medskip
    \caption{Left: NFG $\graphN$. Right: samples of possible 2-covers
    $\mathsf{\hat N}$ of $\graphN$.\label{sec:GraCov:fig:9}}

    \vspace{1cm} 

      \begin{minipage}[t]{0.35\linewidth}
  \centering
  \begin{tikzpicture}[node distance=2cm, on grid,auto]
    \input{figures/head_files_figs.tex}
    \begin{pgfonlayer}{glass}
    \node[state] (f11) [] {};
    \node[state,right of=f11] (f21) [] {};
    \node[state,below of=f11] (f31) [] {};
    \node[state,right of=f31] (f41) [] {};
    \end{pgfonlayer}
    \begin{pgfonlayer}{main}
      \draw[double]
        (f11) -- (f21)
        (f11) -- (f31) 
        (f11) -- (f41)
        (f21) -- (f41)
        (f31) -- (f41);
    \end{pgfonlayer}
  \end{tikzpicture}
  \end{minipage}
  \begin{minipage}[t]{0.2\linewidth}
    \begin{tikzpicture}[node distance=2cm, on grid,auto]
      \input{figures/head_files_figs.tex}
      \begin{pgfonlayer}{glass}
        \node[state] (f11) [] {};
        \node[state,right of=f11] (f21) [] {};
        \node[state,below of=f11] (f31) [] {};
        \node[state,right of=f31] (f41) [] {};
      \end{pgfonlayer}

      \begin{pgfonlayer}{above}
        \node[state] (f12) [above right=0.2cm and 0.2cm of f11] {};
        \node[state,right of=f12] (f22) [] {};
        \node[state,below of=f12] (f32) [] {};
        \node[state,right of=f32] (f42) [] {};
      \end{pgfonlayer}
      \begin{pgfonlayer}{background}
      \draw[double]
        (f12) -- (f22)
        (f12) -- (f32) 
        (f12) -- (f42)
        (f22) -- (f42)
        (f32) -- (f42);
      \end{pgfonlayer}
      \begin{pgfonlayer}{behind}
      \draw[double]
        (f11) -- (f21)
        (f11) -- (f31) 
        (f11) -- (f41)
        (f21) -- (f41)
        (f31) -- (f41);
      \end{pgfonlayer}  
    \end{tikzpicture}
  \end{minipage}
  \begin{minipage}[t]{0.2\linewidth}
    \begin{tikzpicture}[node distance=2cm, on grid,auto]
      \input{figures/head_files_figs.tex}
      \begin{pgfonlayer}{glass}
        \node[state] (f11) [] {};
        \node[state,right of=f11] (f21) [] {};
        \node[state,below of=f11] (f31) [] {};
        \node[state,right of=f31] (f41) [] {};
      \end{pgfonlayer}

      \begin{pgfonlayer}{above}
        \node[state] (f12) [above right=0.2cm and 0.2cm of f11] {};
        \node[state,right of=f12] (f22) [] {};
        \node[state,below of=f12] (f32) [] {};
        \node[state,right of=f32] (f42) [] {};
      \end{pgfonlayer}
      \begin{pgfonlayer}{background}
        \draw[double]
            (f12) -- (f22)
            (f12) -- (f32) 
            (f22) -- (f42)
            (f32) -- (f42);
      \end{pgfonlayer}
      \begin{pgfonlayer}{behind}
        \draw[double]
          (f12) -- (f41);
      \end{pgfonlayer}
      \begin{pgfonlayer}{main}
        \draw[double]
          (f11) -- (f21)
          (f11) -- (f31) 
          (f11) -- (f42)
          (f31) -- (f41);
      \end{pgfonlayer}
      \begin{pgfonlayer}{above}
        \draw[double]
          (f21) -- (f41);
      \end{pgfonlayer}
    \end{tikzpicture}
  \end{minipage}
  \begin{minipage}[t]{0.2\linewidth}
    \begin{tikzpicture}[node distance=2cm, on grid,auto]
      \input{figures/head_files_figs.tex}
      \begin{pgfonlayer}{glass}
        \node[state] (f11) [] {};
        \node[state,right of=f11] (f21) [] {};
        \node[state,below of=f11] (f31) [] {};
        \node[state,right of=f31] (f41) [] {};
      \end{pgfonlayer}

      \begin{pgfonlayer}{above}
        \node[state] (f12) [above right=0.2cm and 0.2cm of f11] {};
        \node[state,right of=f12] (f22) [] {};
        \node[state,below of=f12] (f32) [] {};
        \node[state,right of=f32] (f42) [] {};
      \end{pgfonlayer}

      \begin{pgfonlayer}{behind}
        \draw[double]
          (f11) -- (f22)
          (f11) -- (f31) 
          
          (f31) -- (f42);
      \end{pgfonlayer}

      \begin{pgfonlayer}{background}
        \node[state] (f12) [above right=0.2cm and 0.2cm of f11] {};
        \node[state,right of=f12] (f22) [] {};
        \node[state,below of=f12] (f32) [] {};
        \node[state,right of=f32] (f42) [] {};
        \draw[double]
          (f12) -- (f21)
          (f12) -- (f42)
          (f22) -- (f42)
          (f32) -- (f41)
          (f12) -- (f32);
      \end{pgfonlayer}
      
      \begin{pgfonlayer}{main}
        \draw[double]
          (f11) -- (f41)
          (f21) -- (f41);
      \end{pgfonlayer}

    \end{tikzpicture}
  \end{minipage}
    \medskip
    \caption{
      Left: DE-NFG $\graphN$. Right: samples of possible 2-covers
      $\mathsf{\hat N}$ of $\graphN$.
    }\label{fig:Exam_Graph_cover_DE_NFG}
  
  \end{center}
\end{figure}


\begin{example}


  An S-NFG $\graphN$ is depicted in
  Fig.~\ref{sec:GraCov:fig:9}(left), consisting of $4$ vertices and $5$ edges. Fig.~\ref{sec:GraCov:fig:9}(right) shows possible $2$-covers of $\graphN$. It is mentioned that  for $ M \in \sZpp$, any $M$-cover of $\graphN$ has $M\cdot 4$ function nodes and
  $M\cdot 5$ edges.


  For comparison, a DE-NFG $\graphN$ is shown in
  Fig.~\ref{fig:Exam_Graph_cover_DE_NFG}(left), which consists of $4$ vertices and $5$ double edges. 
  The possible $2$-covers of $\graphN$ are
  shown in Fig.~\ref{fig:Exam_Graph_cover_DE_NFG}(right). Similarly, for $ M \in \sZpp$, each $M$-cover of $\graphN$ has $M\cdot 4$ function nodes and
  $M\cdot 5$ double edges.
  \eexample
\end{example}




\begin{definition}\label{sec:GraCov:def:2}
  \index{Degree-$M$ Bethe partition function}
  Let $\graphN$ be an S-NFG or a DE-NFG. For any positive integer $M$, we
  define the degree-$M$ Bethe partition function of $\graphN$ to be
  \begin{align*}
    \ZBM(\graphN) 
      \defeq \sqrt[M]{
           \Bigl\langle
             Z\bigl( \hgraphN \bigr)
           \Bigr\rangle_{ \!\! \hgraphN \in \hat{\set{N}}_{M}}
         } \ ,
  \end{align*}
  where
  \begin{align*}
    \bigl\langle Z\bigl( \hgraphN \bigr) \bigr\rangle_{ \hgraphN \in \hat{\set{N}}_{M}}\
    \defeq
    \frac{1}{|\hat{\set{N}}_{M}|}
    \cdot \sum\limits_{ \hgraphN \in \hat{\set{N}}_{M} }
    Z\bigl( \hgraphN \bigr)
  \end{align*}
  represents the average (more precisely, the arithmetic mean) of
  $Z\bigl( \hgraphN \bigr)$ over all $\hgraphN \in \hat{\set{N}}_{M}$, 
  where
  $\hat{\set{N}}_{M}$ is the set of all $M$-covers of $\graphN$ (see
  Definition~\ref{sec:GraCov:def:1}).
  \edefinition
\end{definition}




\begin{figure}[t]
  \centering
  \begin{tikzpicture}
    \pgfmathsetmacro{\Ws}{1.3};
    \begin{pgfonlayer}{main}
      \node (ZB1)     at (0,0) [] 
      {$\qquad\qquad\qquad\quad\left. \ZBM(\sfN) \right|_{M = 1} = Z(\sfN)$};
      \node (ZBM)     at (0,\Ws) [] {$\hspace{0.5 cm}\ZBM(\sfN)$};
      \node (ZBinfty) at (0,2*\Ws) {$\qquad\qquad\qquad\qquad
      \left. \ZBM(\sfN) \right|_{M \to \infty} 
      = \ZB(\sfN)$};
      \draw[]
        (ZB1) -- (ZBM) -- (ZBinfty);
    \end{pgfonlayer}
  \end{tikzpicture}
  \caption[A combinatorial characterization of the Bethe partition function.]{A combinatorial characterization of the Bethe partition function for an S-NFG.
  \label{fig: combinatorial chara for snfg}}
\end{figure}

\begin{theorem}[\!\!\cite{Vontobel2013}]
  \label{sec:GraCov:thm:1}
  \index{Graph-cover theorem!for S-NFG}
  Consider some S-NFG $\graphN$. It holds that
  \begin{align*}
    \limsup_{M\to \infty}
      \ZBM(\graphN)
      &= \ZB(\graphN).
      \nonumber
  \end{align*}
  This combinatoral characterization is visualized in Fig.~\ref{fig: combinatorial chara for snfg}.
  \etheorem
\end{theorem}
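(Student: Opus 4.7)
The plan is to use the method of types on the $M$-covers and identify the exponential rate of growth of $\langle Z(\hgraphN)\rangle_{\hgraphN \in \setcovN{M}}$ with the negative Bethe free energy, so that taking the $M$th root and the $\limsup$ produces $\exp(-\min_{\vbeli \in \LMP(\graphN)} \FB(\vbeli)) = \ZB(\graphN)$.

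First, I would fix an $M$-cover $\hgraphN$ and a valid configuration $\hat{\vx} \in \LCTset{X}$ and group the local factor contributions by function node. Writing $\hat{\vx}_{\partial(f,m)}$ for the value of the configuration at the copy $(f,m)$, one has
\begin{align*}
Z(\hgraphN) \;=\; \sum_{\hat{\vx}} \prod_{f \in \setF} \prod_{m \in [M]} f(\hat{\vx}_{\partial(f,m)}).
\end{align*}
For each function node $f$ and each $\xf \in \setxf$, let $n_f(\xf)$ be the number of lifts $m \in [M]$ whose configuration at $(f,m)$ equals $\xf$, and define $\beli_f(\xf) \defeq n_f(\xf)/M$. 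I would similarly define $\beli_e$ for each edge $e$. The pair $\vbeli = ((\vbeli_f)_f, (\vbeli_e)_e)$ is the \emph{type} of $\hat{\vx}$ and lies in $\LMP(\graphN) \cap (\frac{1}{M}\sZp)^{\bullet}$, because edge permutations inside a cover force the required marginal consistency at every edge. The product of local factors depends only on the type:
\begin{align*}
\prod_{f,m} f(\hat{\vx}_{\partial(f,m)}) \;=\; \prod_{f}\prod_{\xf} f(\xf)^{M\cdot\beli_f(\xf)} \;=\; \exp\bigl(-M\cdot \UB(\vbeli)\bigr).
\end{align*}

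Next, I would compute $\langle Z(\hgraphN)\rangle_{\hgraphN \in \setcovN{M}}$ by exchanging the sums and averaging over covers before summing over configurations. For a fixed type $\vbeli$, a standard double-counting/symmetry argument (choose, at each $f$, a partition of $[M]$ into blocks of sizes $M\beli_f(\xf)$; at each edge $e$, uniformly random permutations have to match the chosen local labels) gives the number of (cover, configuration) pairs with that type as
\begin{align*}
N_M(\vbeli) \;=\; \exp\bigl(M\cdot \HB(\vbeli) + o(M)\bigr)\cdot |\setcovN{M}|,
\end{align*}
by Stirling applied to the multinomial coefficients at function nodes (giving $\prod_f \exp(M\,\HBn{f}(\vbeli_f))$) and at edges (giving $\prod_e \exp(-M\,\HBn{e}(\vbeli_e))$ after accounting for the number of edge permutations compatible with the prescribed marginals). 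Combining these two ingredients yields
\begin{align*}
\bigl\langle Z(\hgraphN)\bigr\rangle_{\hgraphN \in \setcovN{M}} \;=\; \sum_{\vbeli \in \LMP(\graphN)\cap(\frac{1}{M}\sZp)^{\bullet}} \exp\bigl(-M\cdot \FB(\vbeli) + o(M)\bigr).
\end{align*}

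Finally, since the number of types is polynomial in $M$, Laplace's method (the largest term dominates up to a polynomial factor) gives
\begin{align*}
\ZBM(\graphN) \;=\; \sqrt[M]{\bigl\langle Z(\hgraphN)\bigr\rangle} \;=\; \exp\!\Bigl(-\min_{\vbeli\in\LMP(\graphN)\cap(\frac{1}{M}\sZp)^{\bullet}} \FB(\vbeli) + o(1)\Bigr).
\end{align*}
As $M\to\infty$ the rational-valued beliefs become dense in $\LMP(\graphN)$. Because $\FB$ is continuous on the compact polytope $\LMP(\graphN)$, one obtains $\limsup_{M\to\infty}\ZBM(\graphN) = \exp(-\min_{\vbeli\in\LMP(\graphN)}\FB(\vbeli)) = \ZB(\graphN)$. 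The main obstacle that will need care is the combinatorial/Stirling estimate for $N_M(\vbeli)$: one must verify that the counting of edge permutations consistent with a prescribed pair $(\vbeli_f,\vbeli_e)$ at each endpoint really yields $\exp(-M\HBn{e}(\vbeli_e))\cdot (M!)$ up to $e^{o(M)}$, and that the error term is uniform in $\vbeli$, so that the exponential rate survives both the sum over types and the $M$th root.
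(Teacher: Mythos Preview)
Your proposal is correct and follows essentially the same approach as the cited reference: the paper does not prove this theorem itself but attributes it to \cite{Vontobel2013} and explicitly remarks (see the discussion after Conjecture~\ref{sec:GraCov:conj:1} and before Definition~\ref{sec:SNFG:def:2}) that ``the proof of Theorem~\ref{sec:GraCov:thm:1} heavily relies on the method of types and the primal formulation of the Bethe partition function.'' Your sketch---assigning to each cover configuration a type in $\LMP(\graphN)$, counting (cover, configuration) pairs of a given type via multinomial/Stirling estimates to recover $\exp(M\cdot \HB(\vbeli)+o(M))$, and then applying Laplace's principle over polynomially many types---is precisely this argument, and the caveat you flag about uniformity of the $o(M)$ term is indeed the only technical point that requires care.
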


\begin{remark}
  Consider the S-NFG $ \sfN(\mtheta) $ defined in Definition~\ref{def: details of S-NFG of permanent}. 
  Vontobel in~\cite[Definition 38]{Vontobel2013a} showed that 
  \begin{align*}
    \ZBM\bigl( \sfN(\mtheta) \bigr) 
    = \permbM{M}\bigl( \mtheta \bigr),
  \end{align*}
  where the degree-$M$ Bethe permanent is defined in Definition~\ref{def:matrix:degree:M:cover:1}.
  Also the limit~\eqref{sec:1:eqn:193} in Theorem~\ref{sec:GraCov:thm:1} is equivalent to the limit proven in Proposition~\ref{sec:1:thm:8} in this case.
\end{remark}


Given the fact that DE-NFGs are in many ways a natural generalization of
S-NFGs, one wonders if there is a characterization of the Bethe partition
function of a DE-NFG in terms of the partition functions of its graph covers
analogous to Theorem~\ref{sec:GraCov:thm:1}, which motivates the following conjecture.
Note that the right-hand side of the expression in the following conjecture
features $\ZBSPA^{*}(\graphN)$, not $\ZB(\graphN)$, because only the
former is defined for DE-NFGs.


\begin{conjecture}
  \label{sec:GraCov:conj:1}

  Consider some DE-NFG $\graphN$. It holds that
  \begin{align}
    \limsup_{M\to \infty}
      \ZBM(\graphN)
      &= \ZBSPA^{*}(\graphN).
      \label{sec:CheckCon:eqn:25}
  \end{align}
  \econjecture
\end{conjecture}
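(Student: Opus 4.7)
The plan is to follow the three-step program sketched in Section~\ref{intro: characterize ZB for DE-FG}, because the method-of-types argument that works for S-NFGs breaks down here: the partition functions $Z(\hgraphN)$ of the $M$-covers of a DE-NFG are sums of complex-valued terms, so one cannot approximate $\langle Z(\hgraphN) \rangle$ by the term of largest magnitude. The strategy is instead to \emph{rewrite} the average over $\setcovN{M}$ as a single structured object whose asymptotics can be controlled.

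First, I would fix an SPA fixed-point message vector $\vmu$ achieving $\ZBSPA^{*}(\graphN)$ and apply the loop-calculus transform (LCT) to $\graphN$ based on $\vmu$. The LCT is a holographic transform that leaves $Z(\graphN)$, and more importantly every $Z(\hgraphN)$ for $\hgraphN \in \setcovN{M}$, invariant, while normalizing the transformed SPA messages into an elementary edge- and direction-independent form. After this transform, the value $\ZBSPA(\graphN,\vmu)^{M}$ is encoded simply as the product over function nodes of $\LCT{f}(\bm{0})$, and the ratio $Z(\hgraphN)/\ZBSPA(\graphN,\vmu)^{M}$ is generated by contributions coming from the ``non-zero'' part of each edge-configuration, i.e., by loop corrections.

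Second, using the symmetric-subspace transform (SST), I would express the average $\bigl\langle Z(\hgraphN) \bigr\rangle_{\hgraphN \in \setcovN{M}}$ as the partition function of a single ``average degree-$M$ cover'' $\hgraphNavg$. The averaging over the edge permutations $(\sigma_{e})_{e \in \setEfull}$ projects the $M$ parallel copies of each (double) edge-variable onto the symmetric subspace, and this projector can be written as an integral over a suitable coherent-state family. This turns $\bigl\langle Z(\hgraphN) \bigr\rangle / \ZBSPA(\graphN,\vmu)^{M}$ into an integral of the form $\int \exp\!\bigl( M \cdot \funcFS(\cdot) + o(M) \bigr) \, \dd(\cdot)$, with an exponent $\funcFS$ that is determined by the LCT-transformed local functions and whose value at the origin is $0$.

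Third, the remaining task is to extract $\limsup_{M \to \infty} \ZBM(\graphN)$ from this integral, and this is where the main obstacle lies. One would like to apply a Laplace / saddle-point argument to conclude that the $M$-th root of the integral tends to $\exp\!\bigl(\max \funcFS\bigr) = 1$, which would yield $\ZBM(\graphN)^{M} \sim \ZBSPA(\graphN,\vmu)^{M}$ and hence~\eqref{sec:CheckCon:eqn:25}. The difficulty is the numerical sign problem: because $\funcFS$ is in general complex-valued, the integrand oscillates, and one cannot bound the integral below by its pointwise maximum as was done in~\eqref{eqn: limit of 1 M log ZM}. The plan is therefore to identify an easily checkable local condition on $\graphN$ at the SPA fixed point — essentially that the Hessian of $\re(\funcFS)$ at the origin is negative semi-definite and dominates $|\im(\funcFS)|$ locally — under which a standard stationary-phase estimate shows that the origin's contribution dominates and destructive interference cannot occur. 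Proving the conjecture in full generality would require ruling out accidental long-range cancellations in the integrand, which the numerical evidence suggests never occurs but which seems to demand a genuinely new idea beyond local saddle-point analysis; this is left as the open problem indicated after Step~3.
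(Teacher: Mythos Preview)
Your three-step outline matches the paper's strategy at the structural level: apply the LCT at an SPA fixed point achieving $\ZBSPA^{*}(\graphN)$, rewrite $\bigl(\ZBM(\graphN)\bigr)^{M}$ as the partition function of a single average degree-$M$ cover, then analyze the limit $M\to\infty$. You are also right that neither you nor the paper proves the conjecture in full generality; both settle for a partial result under an additional hypothesis.

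Where your plan diverges is in Step~3. You propose a \emph{local} stationary-phase condition on the SST integrand (negative semi-definite Hessian of $\re(\funcFS)$ at the origin, dominating $|\im(\funcFS)|$). The paper's actual sufficient condition, Theorem~\ref{sec:CheckCon:thm:1}, is instead the \emph{global} $L^{1}$-type bound
\[
\frac{3}{2}\cdot \ZBSPA^{*}(\graphN) \;>\; \prod_{f}\sum_{\LCTtv{x}_{\setpf}}\bigl|\LCT{f}(\LCTtv{x}_{\setpf})\bigr|,
\]
and the proof (Appendix~\ref{apx:check_graph_thm}) works directly with the $P_{e}$-representation of $\bigl(\ZBM(\LCT{\graphN})\bigr)^{M}$ from~\eqref{eqn: Z for degree M average cover of DENFG}, rather than via a Laplace analysis of the SST integral. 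The paper in fact presents the SST as being ``of interest beyond proving the main results,'' i.e., it is developed as a tool but is not the engine of the proof of Theorem~\ref{sec:CheckCon:thm:1}. Your saddle-point route is a reasonable alternative program, but you should be aware that the condition you sketch is not the one the paper verifies, and it is not obvious that a purely local Hessian condition can control the full integral when $\ZSSTf$ takes complex values away from the origin; the paper's $L^{1}$ bound sidesteps this by crudely but uniformly dominating all non-zero contributions.
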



As already mentioned in the introduction, the proof of
Theorem~\ref{sec:GraCov:thm:1} heavily relies on
the method of types and the primal formulation of the Bethe partition function. However, the method of types does not seem to be useful
for proving Conjecture~\ref{sec:GraCov:conj:1}, and it is unclear how to generalize the primal formulation of the Bethe partition function, \ie, Definition~\ref{sec:SNFG:def:1}, to the case of DE-NFGs, so novel techniques have to be developed. In Theorem~\ref{sec:CheckCon:thm:1} in Section~\ref{sec:CheckCon}, we prove
Conjecture~\ref{sec:GraCov:conj:1} for a class of DE-NFGs
satisfying an easily checkable condition. We conjecture that it holds for more general DE-NFGs.

In the following, we provide promising numerical results supporting
Conjecture~\ref{sec:GraCov:conj:1}.



\begin{figure}
  \centering
  \captionsetup{font=scriptsize}
  \subfloat[\label{sec:GraCov:fig:1}]{
    \begin{minipage}[t]{0.45\textwidth}
      \centering
      \begin{tikzpicture}[node distance=2cm, on grid,auto]
        \input{figures/head_files_figs.tex}
        \begin{pgfonlayer}{main}
          \node[state] (f1) at (0,0) [label=above: $f_1$] {};
          \node[state,right of=f1] (f2) [label=above: $f_2$] {};
          \node[state,below of=f1] (f3) [label=below: $f_3$] {};
          \node[state,right of=f3] (f4) [label=below: $f_4$] {};
        \end{pgfonlayer}
        \begin{pgfonlayer}{background}
          \draw[double]
            (f1) -- node[above] {$\tilde{x}_{1}$} (f2)
            (f1) -- node[left]  {$\tilde{x}_{2}$} (f3) 
            (f1) -- node[above right=.03cm of f1,xshift=.4cm] { $\tilde{x}_{3}$} (f4)
            (f2) -- node[right] {$\tilde{x}_{4}$} (f4)
            (f3) -- node[below] {$\tilde{x}_{5}$} (f4);
        \end{pgfonlayer}
      \end{tikzpicture}
    \end{minipage}
  }
  \subfloat[\label{sec:GraCov:fig:5}]{
    \begin{minipage}[t]{0.45\textwidth}
      \centering
      \begin{tikzpicture}[node distance=2cm, on grid,auto]
        \input{figures/head_files_figs.tex}
        \begin{pgfonlayer}{main}
          \node[state] (f1) at (1,1) [label=above: $f_1$] {};
          \node[state,right of=f1] (f2) [label=above:  $f_2$] {};
          \node[state,below of=f1] (f3) [label=below:  $f_3$] {};
          \node[state,right of=f3] (f4) [label=below:  $f_4$] {};
        \end{pgfonlayer}
        \begin{pgfonlayer}{background}
          \draw[double]
            (f1) -- node[above] { $\tilde{x}_{1}$} (f2)
            (f1) -- node[left]  { $\tilde{x}_{2}$} (f3) 
            (f1) -- node[above left= 0.5cm, xshift=.4cm] {$\tilde{x}_{3}$} (f4)
            (f2) -- node[right] {$\tilde{x}_{4}$} (f4)
            (f3) -- node[below right= 0.4cm, xshift=-.8cm] {$\tilde{x}_{5}$} (f2)
            (f3) -- node[below] {$\tilde{x}_{6}$} (f4);
        \end{pgfonlayer}
        \begin{pgfonlayer}{behind}
          \draw[double]
            (f3) -- (f2);
        \end{pgfonlayer}
      \end{tikzpicture}
    \end{minipage}\label{sec:GraCov:fig:2}
  }\\
  
  \subfloat[\label{sec:GraCov:fig:3}\vspace*{-0.5cm}]{
    \begin{minipage}[t]{0.5\textwidth}
      \centering
      \includegraphics[scale=0.37]{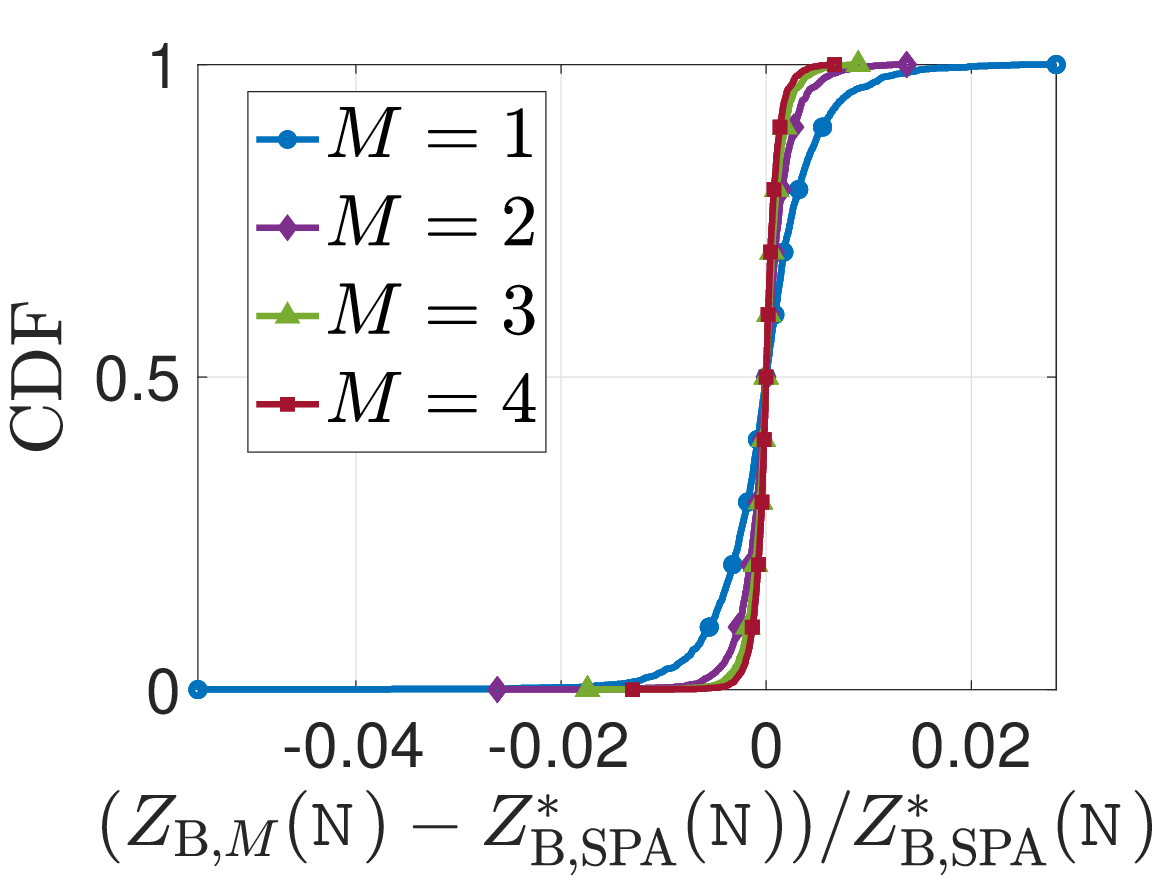}
    \end{minipage}%
  }%
   \subfloat[\label{sec:GraCov:fig:7}\vspace*{-0.5cm}]{
    \begin{minipage}[t]{0.5\textwidth}
      \centering
      \includegraphics[scale=0.37]{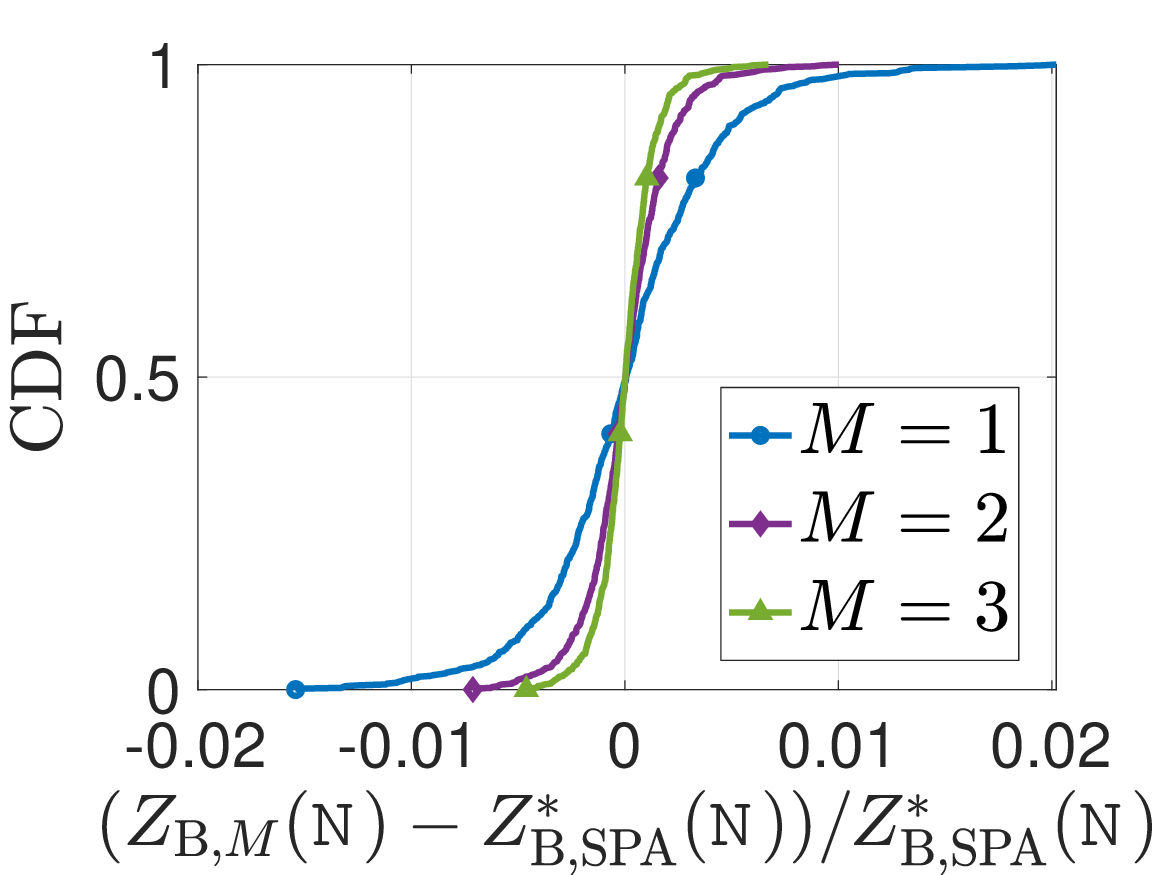}
    \end{minipage}%
  }%

  \subfloat[\label{sec:GraCov:fig:4}]{
    \begin{minipage}[t]{0.5\textwidth}
      \centering
      \includegraphics[scale=0.37]{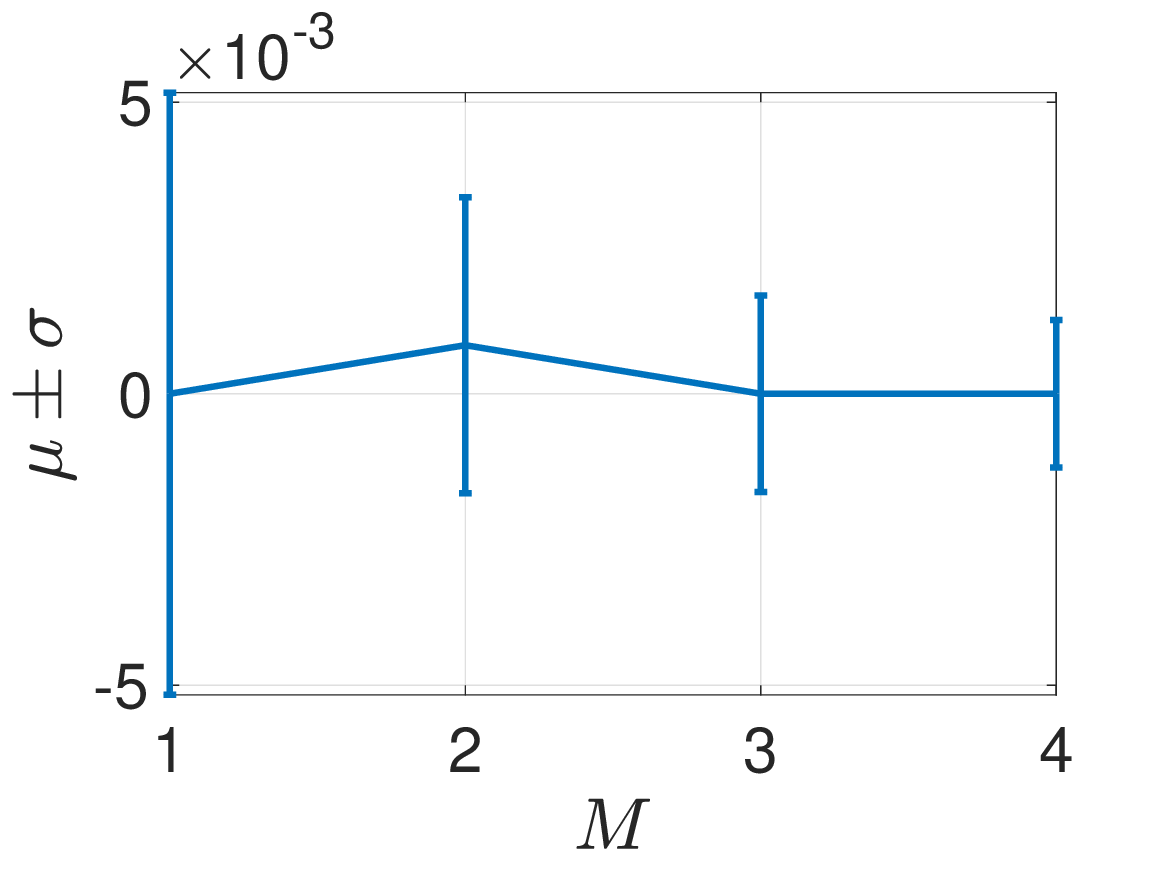}
    \end{minipage}%
  }
  \subfloat[\label{sec:GraCov:fig:8}]{
    \begin{minipage}[t]{0.45\textwidth}
      \centering
      \includegraphics[scale=0.37]{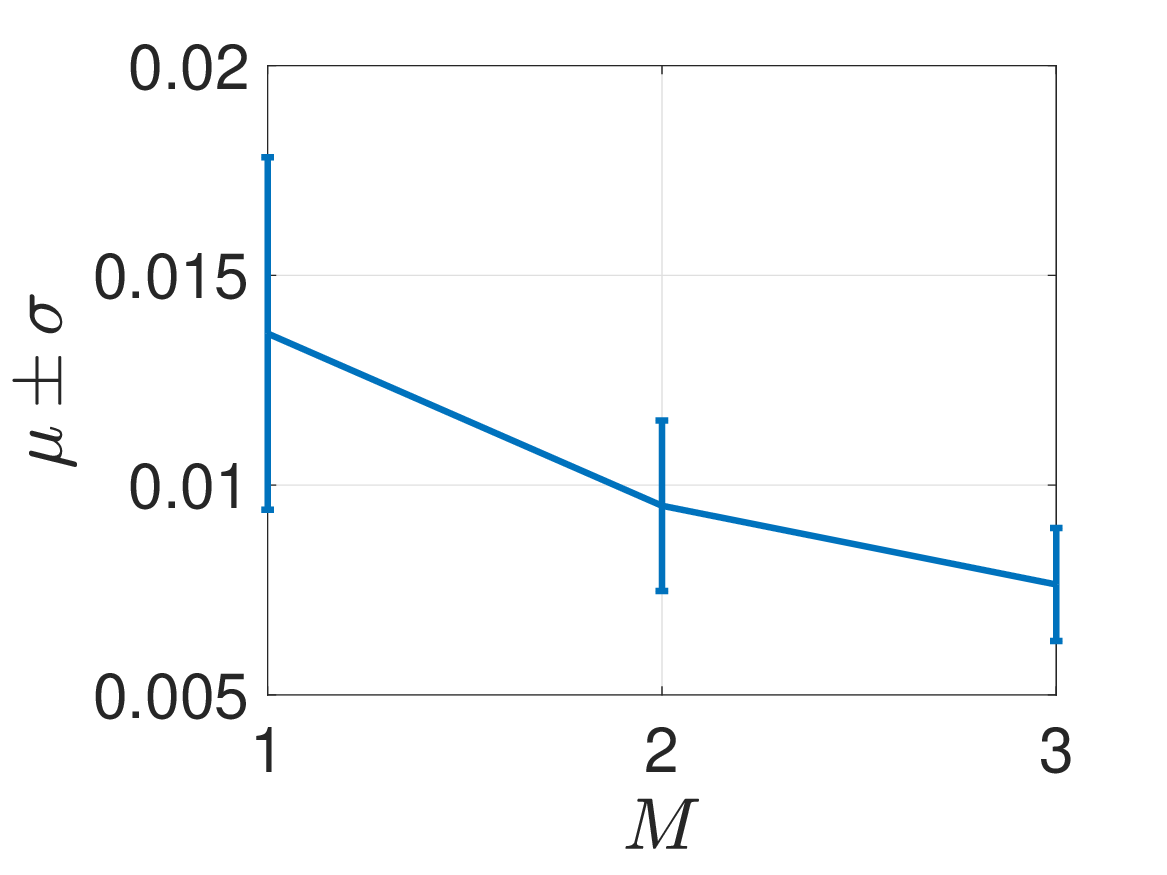}
    \end{minipage}%
  }

  \caption{DE-NFGs and the numerical results in Examples~\ref{sec:GraCov:exp:1} and~\ref{sec:GraCov:exp:2}. }
\end{figure}
\begin{example}
  \label{sec:GraCov:exp:1}

  Consider the DE-NFG $\graphN$ in Fig.~\ref{sec:GraCov:fig:1}, where
  $\set{X}_e = \{ 0, 1 \}$ and $ \tset{X}_e  = \{ (0,0),(0,1),(1,0), (1,1) \} $ 
  for all $e \in \setEfull$. Here, $\graphN$ is a
  randomly generated DE-NFG, where the local functions for function nodes are independently
  generated. 

  For each instance of $\graphN$, we run the SPA until it reaches
  a fixed point and compute $\ZBSPA^{*}(\graphN)$. (We assume that
  $\ZBSPA^{*}(\graphN) = \ZBSPA(\graphN,\vmu)$, where $\vmu$ is the
  obtained SPA fixed-point message vector.) For
  each instance of $\graphN$, we compute $\ZBM(\graphN)$ for $ M \in [4]$. The following simulation results are for $3000$ randomly
  generated instances of $\graphN$. 

  Specifically, Fig.~\ref{sec:GraCov:fig:3} shows the cumulative distribution
  function (CDF) of
  $\bigl( \ZBM(\graphN) \! - \! \ZBSPA^{*}(\graphN) \bigr) /
  \ZBSPA^{*}(\graphN)$ for $ M \in \{1,2,3,4\} $. (Note that 
  $ Z_{\mathrm{B},1}(\graphN) = Z(\graphN) $.) 
  Fig.~\ref{sec:GraCov:fig:4} presents the mean $\mu$ 
  and standard deviation $\sigma$ of the empirical distribution of
  $\bigl| \ZBM(\graphN) \! - \! \ZBSPA^{*}(\graphN) \bigr| /
  \ZBSPA^{*}(\graphN)$. 

  The simulation results indicate that, as $M$ increases, the degree-$M$ Bethe
  partition function $\ZBM(\graphN)$ approaches SPA-based Bethe partition function $\ZBSPA^{*}(\graphN)$. 

  \hfill \eexample
\end{example}

\begin{example}
  \label{sec:GraCov:exp:2}

  The setup in this example is essentially the same as the setup in Example~\ref{sec:GraCov:exp:1}. However,
  instead of the DE-NFG in Fig.~\ref{sec:GraCov:fig:1}, we consider the DE-NFG in
  Fig.~\ref{sec:GraCov:fig:5}. The simulation results for $734$ randomly
  generated instances of $\graphN$ are presented in
  Figs.~\ref{sec:GraCov:fig:7} and~\ref{sec:GraCov:fig:8} 
  \eexample
\end{example} 


\chapter[Bound for the Matrix Permanent]{Finite-Graph-Covers-based Bound for the Permanent of a Non-negative Square Matrix}
\label{CHAPT: FGC BOUND PERMANENT}
\label{chapt: finite graph cover based bound for matrix permanent}

In this chapter, we study the degree-$M$ Bethe permanent and degree-$M$ Sinkhorn permanent, whose definitions are motivated by the degree-$M$ finite graph covers of the associated S-NFG, as defined in Section~\ref{sec: snfg for permanent of nonnagtive square matrix}. In particular, in Section~\ref{sec:recursions:coefficients:1}, we present the recursions of the coefficients in expressions of the degree-$M$ Bethe permanent and degree-$M$ Sinkhorn permanent. Based on these recursions, in Section~\ref{sec:5}, we prove the main results in Section~\ref{sec:main:constributions:1}, \ie, we bound the permanent via its degree-$M$ Bethe permanent and degree-$M$ Sinkhorn permanent.
Then in Section~\ref{sec:asymptotic:1}, we gives asymptotic expressions for the coefficients in the degree-$M$ Bethe permanent and degree-$M$ Sinkhorn permanent. Finally, in Section~\ref{apx:22}, we focus on the case where $ M = 2 $ and connects some of the results in this chapter with results in other papers.

\section[Recursions for the Coefficients]{Recursions for \texorpdfstring{the \\
              Coefficients
              $\CM{n}( \vgam )$,
              $\CBM{n}( \vgam )$, 
              $\CscSgen{M}{n}( \vgam )$}{}}
\label{sec:recursions:coefficients:1}


In the first part of this section, we prove Lemma~\ref{lem: expression of
  permanents w.r.t. C}. Afterwards, we prove some lemmas that will help us in
the proofs of Theorems~\ref{thm: inequalities for the coefficients}
and~\ref{th:main:permanent:inequalities:1} in the next section. These latter
lemmas are recursions (in $M$) on the coefficients $\CM{n}( \vgam )$,
$\CBM{n}( \vgam )$, and $\CscSgen{M}{n}( \vgam )$. Note that these recursions
on the coefficients $\CM{n}( \vgam )$, $\CBM{n}( \vgam )$, and
$\CscSgen{M}{n}( \vgam )$ are somewhat analogous to recursions for multinomial
coefficients, in particular also recursions for binomial coefficients like the
well-known recursion  
\begin{align*}
  \binom{M}{k} 
  = \binom{M-1}{k-1} 
  + \binom{M-1}{k}.
\end{align*}
(See also the discussion in
Example~\ref{example:pascal:triangle:generalization:1}).


Throughout this section, the positive integer $n$ and the matrix
$\mtheta \in \sR_{\geq 0}^{n \times n}$ are fixed. Recall the definition of
the notation $\vsigma_{[M]}$ in Definition~\ref{sec:1:def:10}.


\begin{definition}
  \label{sec:1:lem:43}

  Consider $M \in \sZpp $ and $ \vgam \in \Gamma_{M,n} $.
  \begin{enumerate}

  \item The coefficient $ \CM{n}( \vgam ) $ is defined to be the
    number of $ \vsigma_{[M]} $ in $ \setS_{[n]}^{M} $ such that
    $ \vsigma_{[M]} $ decomposes $ \vgam $, \ie,
    \begin{align}
      \CM{n}( \vgam )
        &\defeq
           \sum\limits_{ \vsigma_{[M]} \in \setS_{[n]}^{M} }
             \Bigl[ 
               \vgam = \bigl\langle \mP_{\sigma_{m}} \bigr\rangle_{m \in [M]}
             \Bigr] ,
               \label{sec:1:eqn:56}
    \end{align}
    where
    $\bigl\langle \mP_{\sigma_{m}} \bigr\rangle_{m \in [M]} \defeq \frac{1}{M}
    \cdot \sum\limits_{m\in [M]} \mP_{\sigma_{m}}$ and where the notation
    $[S]$ represents the Iverson bracket, \ie, $[S] \defeq 1$ if the statement
    $S$ is true and $[S] \defeq 0$ if the statement is false.
    
  \item The coefficient $ \CBM{n}(\vgam) $ is defined to be
    \begin{align*}
      \CBM{n}(\vgam) 
        &\defeq
           (M!)^{ 2n -  n^2} 
           \cdot
           \prod\limits_{i,j}
             \frac{ \bigl(M - M \cdot \gamma(i,j) \bigr)! }
                  { \bigl(M \cdot \gamma(i,j) \bigr)! } .
    \end{align*}
    
  \item The coefficient $ \CscSgen{M}{n}(\vgam) $ is defined to be
    \begin{align*}
      \CscSgen{M}{n}(\vgam)
        &\defeq
           M^{-n\cdot M} 
           \cdot
           \frac{ (M!)^{2n} }
                { \prod\limits_{i,j}
                    \bigl( M \cdot \gamma(i,j) \bigr)! 
                }.
    \end{align*}

  \end{enumerate}
  Note that for $M = 1$ it holds that
  \begin{align*}
    \Cgen{1}{n}(\vgam) 
      &= \CBgen{1}{n}(\vgam) 
       = \CscSgen{1}{n}(\vgam) 
       = 1 ,
           \quad \vgam \in \Gamma_{1,n}.
  \end{align*}
  \edefinition
\end{definition}


We are now in a position to prove Lemma~\ref{lem: expression of permanents
  w.r.t. C} in Section~\ref{sec:main:constributions:1}.


\medskip

\noindent
\textbf{Proof of Lemma~\ref{lem: expression of permanents w.r.t. C}}.
\begin{enumerate}

\item The expression in~\eqref{sec:1:eqn:43} follows from
  \begin{align*}
    \begin{aligned}
      \bigl( \perm(\mtheta) \bigr)^{\! M}
        &= \Biggl(
             \sum\limits_{\sigma \in \setS_{[n]}} 
               \prod\limits_{i \in [n]}
                 \theta\bigl( i, \sigma(i) \bigr)
           \Biggr)^{\!\!\! M} \\
        &\overset{(a)}{=}
           \sum\limits_{\vgam \in \GamMnthe} 
             \mtheta^{ M \cdot \vgam }
             \cdot
             \CM{n}( \vgam ) , 
     \end{aligned}
  \end{align*}
  where step~$(a)$ follows from expanding the right-hand side of the first
  line, along with using the Birkhoff--von Neumann theorem and
  Eq.~\eqref{sec:1:eqn:56}.

  While it is clear that $\CM{n}( \vgam ) \geq 0$ for all
    $\vgam \in \GamMnthe$, note that the first inequality
    in~\eqref{sec:1:eqn:58} in Theorem~\ref{thm: inequalities for the
      coefficients} (proven later in this thesis) can be used to show that
    $\CM{n}( \vgam ) > 0$ for all $\vgam \in \GamMnthe$.

\item 

    Eq.~\eqref{sec:1:eqn:190} follows from~\cite[Lemma~29]{Vontobel2013},
    where the role of the set $\set{B}'_M$ in~\cite[Lemma~29]{Vontobel2013} is
    played by the set $\Gamma'_{M,n}(\mtheta)$ in this thesis.

    The set $\Gamma'_{M,n}(\mtheta)$ is defined as follows. Namely, as
    mentioned in Section~\ref{sec:free:energy:functions:1}, the paper
    \cite[Section~IV]{Vontobel2013a} shows that the local marginal polytope of
    $\sfN(\mtheta)$ can be parameterized by the set of doubly stochastic
    matrices, more precisely it can be parameterized by
    $\Gamma_{n}(\mtheta)$. For arbitrary $ M \in \sZ_{\geq 1} $, the set
    $\Gamma'_{M,n}(\mtheta) \subseteq \Gamma_{n}(\mtheta)$ is then defined to
    be the parameterization of the set of pseudo-marginal vectors in the local
    marginal polytope that are $M$-cover lift-realizable pseudo-marginal
    vectors. (For the definition of $M$-cover lift-realizable pseudo-marginal
    vectors, see~\cite[Section~VI]{Vontobel2013}.)

    We claim $\Gamma'_{M,n}(\mtheta) = \Gamma_{M,n}(\mtheta)$. Indeed, it
    follows from the construction of pseudo-marginal vectors based on
    $M$-covers that $\Gamma'_{M,n}(\mtheta) \subseteq \Gamma_{M,n}(\mtheta)$,
    and it follows from the construction in the proof
    of~\cite[Lemma~29]{Vontobel2013} that for every element of
    $\Gamma_{M,n}(\mtheta)$ the corresponding pseudo-marginal vector is
    $M$-cover lift-realizable, \ie,
    $\Gamma'_{M,n}(\mtheta) \supseteq \Gamma_{M,n}(\mtheta)$.

\item Eq.~\eqref{sec:1:eqn:168} follows
  from~\cite[Theorem~2.1]{Friedland1979}.

\elemmaproof

\end{enumerate}


The matrices that are introduced in the following definition will turn out to
be useful to formulate the upcoming results.


\begin{definition}    
  \label{sec:1:def:10:part:2}
  \label{sec:1:def:15}

  Consider a matrix $ \vgam \in \Gamma_n $.
  \begin{enumerate}
    
  \item We define
    \begin{align*}
      \setR 
        &\defeq 
           \setR(\vgam)
         \defeq 
           \bigl\{ 
             i \in [n] 
           \bigm|
             \exists j\in [n], \text{ s.t. } 0 < \gamma(i,j) < 1 
           \bigr\} , \\
      \setC
        &\defeq 
           \setC(\vgam)
         \defeq
           \bigl\{ 
             j \in [n]
           \bigm|
             \exists i \in [n], \text{ s.t. } 0 < \gamma(i,j) < 1
           \bigr\} ,
    \end{align*}
    \ie, the set $\setR$ consists of the row indices of rows where $\vgam$
    contains at least one fractional entry and, similarly, the set $\setC$
    consists of the column indices of columns where $\vgam$ contains at least
    one fractional entry. Based on $\setR$ and $\setC$, we define
    \begin{align*}
      \vgamRCp 
        &\defeq 
           \bigl( \gamma(i,j) \bigr)_{\! (i,j) \in \setR \times \setC}.
    \end{align*}
    On the other hand, the set $[n] \setminus \setR$ consists of the row
    indices of rows where $\vgam$ contains exactly one $1$ and $0$s
    otherwise. Similarly, the set $[n] \setminus \setC$ consists of the
    column indices of columns where $\vgam$ contains exactly one $1$ and $0$s
    otherwise.
  
  \item We define $ r(\vgam) \defeq |\setR|$. One can verify the
    following statements:
    \begin{itemize}

    \item $|\setR| = |\setC|$.

    \item Either $r(\vgam) = 0$ or $r(\vgam) \geq 2$, \ie,
      $r(\vgam) = 1$ is impossible.

    \item If $r(\vgam) = 0$, then
      $\bigl| \setS_{[n]}(\vgam) \bigr| = 1$.

    \item If $r(\vgam) > 0$, then $\vgamRCp \in \Gamma_{r(\vgam)}$, \ie, the
      matrix $\vgamRCp$ is a doubly stochastic matrix of size
      $r(\vgam) \times r(\vgam)$.

  \end{itemize}

  \item Based on the matrix $\vgamRCp$, we define the matrix
    \begin{align*}
      \hvgamRCp
        &\defeq
           \bigl(
             \hgamRCp(i,j)
           \bigr)_{\! (i,j) \in \setR \times \setC} ,
    \end{align*}
    where
    \begin{align*}
      \hgamRCp(i,j)
        &\defeq 
          \frac{\gamRCp(i,j) \cdot \bigl( 1- \gamRCp(i,j) \bigr)}
               {\Biggl(
                  \prod\limits_{(i',j') \in \setR \times \setC} 
                     \bigl( 1 - \gamRCp(i',j') \bigr)
                \Biggr)^{\!\!\! 1 / r(\vgam)}
               }.
    \end{align*}
    Note that the permanent of $ \hvgamRCp $ is given by
    \begin{align}
      \perm(\hvgamRCp)
        &= \sum\limits_{\sigma \in \setS_{\setR\to\setC} }
             \prod\limits_{i \in \setR}
               \frac{\gamRCp\bigl(i,\sigma(i) \bigr) 
                     \cdot
                     \Bigl( 1- \gamRCp\bigl(i,\sigma(i) \bigr) \Bigr)
                    }
                 {\Biggl(
                    \prod\limits_{(i',j') \in \setR \times \setC} 
                       \bigl( 1 - \gamRCp(i',j') \bigr)
                  \Biggr)^{\!\!\! 1 / r(\vgam)}
                 } 
        \nonumber \\
        &= \frac{ 
                  \sum\limits_{\sigma \in \setS_{\setR\to\setC} }
                    \prod\limits_{i \in \setR}
                      \gamRCp\bigl(i,\sigma(i) \bigr)
                      \cdot
                      \Bigl( 1- \gamRCp\bigl(i,\sigma(i) \bigr) \Bigr)
                }
                { 
                  \prod\limits_{i \in \setR,j \in \setC}
                    \bigl( 1- \gamRCp(i,j) \bigr) 
                }.
        \label{eqn: permament of hvgamRCp}
    \end{align}

  \end{enumerate}
  \edefinition
\end{definition}


\begin{example}
  \label{example:vgam:1:1}

  Consider the matrix
  \begin{align*}
    \vgam 
       \defeq
         \frac{1}{3}
         \cdot
         \begin{pmatrix}
           3 & 0 & 0 & 0 \\
           0 & 0 & 3 & 0 \\
           0 & 1 & 0 & 2 \\
           0 & 2 & 0 & 1
         \end{pmatrix}
           \in \Gamma_4.
  \end{align*}
  We obtain $\setR = \{ 3, 4 \}$, $\setC = \{ 2, 4 \}$, $r(\vgam) =
  2$, 
  \begin{align*}
    \vgamRCp
      &= \frac{1}{3}
         \cdot
         \begin{pmatrix}
           1 & 2 \\
           2 & 1
         \end{pmatrix} ,
    \qquad
    \hvgamRCp
       = \begin{pmatrix}
           1 & 1 \\
           1 & 1
         \end{pmatrix} ,
  \end{align*}
  and $\perm(\hvgamRCp) = 2$.
  \eexample
\end{example}


\begin{definition}    
  \label{sec:1:def:10:part:2:add}

  Consider $M \in \sZ_{\geq 2}$, $ \vgam \in \Gamma_{M,n} $, and
  $ \sigma_1 \in \setS_{[n]}(\vgam) $. (The discussion after
    Definition~\ref{sec:1:def:10} ensures that such a $ \sigma_1 $ exists.)
  We define
  \begin{align*}
    \vgam_{\sigma_1}
      &\defeq
         \frac{1}{M-1} 
         \cdot
         \bigl( M \cdot \vgam - \mP_{\sigma_1} \bigr).
  \end{align*}
  Note that $ \vgam_{\sigma_1} \in \Gamma_{M-1,n} $. 
  The matrix
  $\vgam_{\sigma_1}$ can be thought of, modulo normalizations, as being
  obtained by ``peeling off'' the permutation matrix $\mP_{\sigma_1}$ from
  $\vgam$. 
  \edefinition
\end{definition}


\begin{example}
  \label{example:vgam:1:2}

  Consider again the matrix $\vgam \in \Gamma_4$ that was introduced in
  Example~\ref{example:vgam:1:1}. Let $M = 3$. Note that
  $ \vgam \in \Gamma_{3,4} $. If $\sigma_1 \in \setS_{[n]}(\vgam)$ is chosen
  to be
  \begin{align*}
    \sigma_1(1) = 1 , \ \ 
    \sigma_1(2) = 3 , \ \ 
    \sigma_1(3) = 2 , \ \ 
    \sigma_1(4) = 4 ,
  \end{align*}
  then
  \begin{align*}
    \vgam_{\sigma_1}
       \defeq
         \frac{1}{2}
         \cdot
         \begin{pmatrix}
           2 & 0 & 0 & 0 \\
           0 & 0 & 2 & 0 \\
           0 & 0 & 0 & 2 \\
           0 & 2 & 0 & 0
         \end{pmatrix}
           \in \Gamma_{2,4}.
  \end{align*}
  On the other hand, if $\sigma_1 \in \setS_{[n]}(\vgam)$ is
  chosen to be
  \begin{align*}
    \sigma_1(1) = 1 , \ \ 
    \sigma_1(2) = 3 , \ \ 
    \sigma_1(3) = 4 , \ \ 
    \sigma_1(4) = 2 ,
  \end{align*}
  then
  \begin{align*}
    \vgam_{\sigma_1}
       \defeq
         \frac{1}{2}
         \cdot
         \begin{pmatrix}
           2 & 0 & 0 & 0 \\
           0 & 0 & 2 & 0 \\
           0 & 1 & 0 & 1 \\
           0 & 1 & 0 & 1
         \end{pmatrix}
           \in \Gamma_{2,4}.
  \end{align*}
  \eexample
\end{example}


The following three lemmas are the key technical results that will allow us to
prove Theorems~\ref{thm: inequalities for the coefficients}
and~\ref{th:main:permanent:inequalities:1} in the next section.


\begin{lemma}
  \label{sec:1:lem:29:copy:2}
  
  Let $ M \in \sZ_{\geq 2} $ and $ \vgam \in \Gamma_{M,n} $. It holds that
  \begin{align*}
     \CM{n}( \vgam )
        &= \sum\limits_{\sigma_1 \in \setS_{[n]}( \vgam )}
             \Cgen{M-1}{n}\bigl( 
                \vgam_{\sigma_1}
        \bigr).
    \end{align*}
\end{lemma}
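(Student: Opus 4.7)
The plan is to start from the combinatorial definition of $\CM{n}(\vgam)$ in~\eqref{sec:1:eqn:56} and condition on the first permutation in the tuple. Writing $\vsigma_{[M]} = (\sigma_1, \sigma_2, \ldots, \sigma_M)$, I would split the sum as
\begin{align*}
\CM{n}(\vgam)
&= \sum_{\sigma_1 \in \setS_{[n]}}
   \sum_{(\sigma_2,\ldots,\sigma_M) \in \setS_{[n]}^{M-1}}
   \biggl[\, M \cdot \vgam \;=\; \mP_{\sigma_1} + \sum_{m=2}^{M} \mP_{\sigma_m} \,\biggr],
\end{align*}
and then rewrite the bracketed condition as $M \cdot \vgam - \mP_{\sigma_1} = \sum_{m=2}^{M} \mP_{\sigma_m}$. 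The right-hand side of this equation is a sum of $M-1$ permutation matrices, so its entries are non-negative integers; consequently, the indicator can only be nonzero when $M \cdot \vgam - \mP_{\sigma_1}$ itself has non-negative integer entries.

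Since $\vgam \in \Gamma_{M,n}$, the matrix $M \cdot \vgam$ already has non-negative integer entries, so the non-negativity constraint on $M \cdot \vgam - \mP_{\sigma_1}$ reduces to the requirement $M \cdot \gamma(i, \sigma_1(i)) \geq 1$ for every $i \in [n]$, equivalently $\gamma(i, \sigma_1(i)) > 0$ for every $i$. This is precisely the condition $\sigma_1 \in \setS_{[n]}(\vgam)$ from Definition~\ref{sec:1:def:10}. For any other $\sigma_1$, the inner sum contributes zero, so the outer sum may be restricted to $\sigma_1 \in \setS_{[n]}(\vgam)$ without changing the value.

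For any such $\sigma_1 \in \setS_{[n]}(\vgam)$, the matrix $M \cdot \vgam - \mP_{\sigma_1}$ has non-negative integer entries and all row and column sums equal to $M-1$; thus $\vgam_{\sigma_1} \defeq (M \cdot \vgam - \mP_{\sigma_1})/(M-1)$ lies in $\Gamma_{M-1,n}$ as in Definition~\ref{sec:1:def:10:part:2:add}. The bracketed condition becomes $\vgam_{\sigma_1} = \frac{1}{M-1} \sum_{m=2}^{M} \mP_{\sigma_m}$, and after reindexing the tuple $(\sigma_2,\ldots,\sigma_M)$ as an element of $\setS_{[n]}^{M-1}$, the inner sum is exactly $\Cgen{M-1}{n}(\vgam_{\sigma_1})$ by the defining formula~\eqref{sec:1:eqn:56} applied to degree $M-1$. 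Combining these steps yields the claimed recursion.

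The argument is essentially bookkeeping, and the only delicate point is justifying the restriction of the outer summation to $\setS_{[n]}(\vgam)$: one must carefully note that a sum of permutation matrices cannot have a negative entry, which forces the $(i,\sigma_1(i))$-entry of $M \cdot \vgam$ to be at least $1$ whenever $\sigma_1$ contributes a nonzero term. Once this is established, the rest of the proof is a direct application of the definitions, and no additional machinery (such as the Birkhoff--von Neumann theorem) is needed.
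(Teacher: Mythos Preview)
Your proof is correct and follows essentially the same approach as the paper: condition on the first permutation $\sigma_1$ in the tuple $\vsigma_{[M]}$, observe that any contributing $\sigma_1$ must lie in $\setS_{[n]}(\vgam)$, and recognize the inner sum as $\Cgen{M-1}{n}(\vgam_{\sigma_1})$. The paper's version is much terser (splitting into the cases $r(\vgam)=0$ and $r(\vgam)\geq 2$ and asserting the result), whereas you spell out the non-negativity argument for restricting to $\setS_{[n]}(\vgam)$ more carefully.
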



\begin{proof}
  If $r(\vgam) = 0$, then $\bigl| \setS_{[n]}( \vgam ) \bigr| = 1$ and
  the result in the lemma statement is straightforward. If
  $r(\vgam) \geq 2$, then the result in the lemma statement follows
  from the fact that all permutations in $ \setS_{[n]}(\vgam) $ are distinct
  and can be used as $ \sigma_1 $ in $ \vsigma_{[M]} $ for decomposing the
  matrix $ \vgam $.
\end{proof}


\begin{lemma}
  \label{sec:1:lem:15:copy:2}
  \label{SEC:1:LEM:15:COPY:2}

  Let $M \in \sZ_{\geq 2}$ and $\vgam \in \Gamma_{M,n}$. Based on $\vgam$, we
  define the set $\setR$, the set $\setC$, and the matrix $\hvgamRCp$
  according to Definition~\ref{sec:1:def:15}. Moreover, for any
  $\sigma_1 \in \setS_{[n]}( \vgam )$, we define $\vgam_{\sigma_1}$ according
  to Definition~\ref{sec:1:def:10:part:2:add}. It holds
  that\footnote{Consider
       the case where both $\setR$ and $\setC$ are non-empty sets. Because each entry
      in $ \hvgamRCp $ is positive-valued, we have $\perm(\hvgamRCp) > 0$, and
      so the fraction $1 / \perm(\hvgamRCp)$ is well defined.}
  \begin{align*}
    \CBM{n}(\vgam)
      &= \frac{1}{\perm(\hvgamRCp)}
         \cdot
         \sum\limits_{\sigma_1 \in \setS_{[n]}( \vgam )}
           \CBgen{M-1}{n}
             \bigl( 
               \vgam_{\sigma_1}
             \bigr) ,
  \end{align*}
  where we define $\perm(\hvgamRCp) \defeq 1$ if $\setR$ and $\setC$ are
  empty sets.
\end{lemma}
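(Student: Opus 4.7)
My plan is to prove the recursion by direct computation: I will evaluate the ratio $\CBgen{M-1}{n}(\vgam_{\sigma_1}) \big/ \CBM{n}(\vgam)$ from the explicit factorial formula in Definition~\ref{sec:1:lem:43} and show that, after summing over $\sigma_1$, the remaining expression is exactly $\perm(\hvgamRCp)$ as given in~\eqref{eqn: permament of hvgamRCp}.

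\noindent\emph{Step 1 (unwinding the ratio).} Writing $(M-1)\cdot\gamma_{\sigma_1}(i,j) = M\gamma(i,j) - P_{\sigma_1}(i,j)$ and substituting into the defining product, I will split the product over $(i,j)$ into the $n$ entries with $\sigma_1(i)=j$ and the $n^2-n$ entries with $\sigma_1(i)\neq j$. For entries with $\sigma_1(i)=j$ the factor contributes $\tfrac{(M-M\gamma(i,j))!}{(M\gamma(i,j)-1)!}$; dividing by the corresponding factor in $\CBM{n}(\vgam)$ yields $M\gamma(i,j)$. For entries with $\sigma_1(i)\neq j$ the analogous calculation yields $\bigl(M-M\gamma(i,j)\bigr)^{-1}$. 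Combined with the overall prefactor $((M-1)!/M!)^{2n-n^2} = M^{n^2-2n}$, the $M$-powers telescope: $M^{n^2-2n}\cdot M^n \cdot M^{-(n^2-n)} = M^{0}$. Consequently
\begin{align*}
\frac{\CBgen{M-1}{n}(\vgam_{\sigma_1})}{\CBM{n}(\vgam)}
  = \prod\limits_{i\in[n]} \gamma\bigl(i,\sigma_1(i)\bigr)
    \cdot
    \prod\limits_{\substack{(i,j)\in [n]\times[n] \\ j\neq \sigma_1(i)}}
       \frac{1}{1-\gamma(i,j)},
\end{align*}
which is independent of $M$.

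\noindent\emph{Step 2 (restricting the sum to $\setR\to\setC$).} I will observe that every $\sigma_1 \in \setS_{[n]}(\vgam)$ must map $[n]\setminus\setR$ to $[n]\setminus\setC$ via the unique choice forced by the $\{0,1\}$-entries of $\vgam$ on those rows, and hence restricts to a bijection in $\setS_{\setR\to\setC}$; conversely every such bijection extends. On entries $(i,j)$ with $\gamma(i,j)\in\{0,1\}$ outside $\setR\times\setC$ the factors above are all equal to $1$, so only the entries in $\setR\times\setC$ survive.

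\noindent\emph{Step 3 (matching $\perm(\hvgamRCp)$).} Multiplying and dividing by $\prod_{i\in\setR}\bigl(1-\gamma(i,\sigma_1(i))\bigr)$ turns the surviving second product into $\prod_{(i,j)\in\setR\times\setC}\bigl(1-\gamma(i,j)\bigr)^{-1}$ times $\prod_{i\in\setR}\bigl(1-\gamma(i,\sigma_1(i))\bigr)$, giving
\begin{align*}
\sum\limits_{\sigma_1\in\setS_{[n]}(\vgam)}
  \frac{\CBgen{M-1}{n}(\vgam_{\sigma_1})}{\CBM{n}(\vgam)}
  = \frac{\sum\limits_{\sigma\in\setS_{\setR\to\setC}}
       \prod\limits_{i\in\setR}
         \gamma\bigl(i,\sigma(i)\bigr)\bigl(1-\gamma(i,\sigma(i))\bigr)}
       {\prod\limits_{(i,j)\in\setR\times\setC}\bigl(1-\gamma(i,j)\bigr)},
\end{align*}
which by~\eqref{eqn: permament of hvgamRCp} equals $\perm(\hvgamRCp)$. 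Rearranging yields the claimed identity. The degenerate case $\setR=\setC=\varnothing$ (where $\vgam=\mP_\sigma$ and both sides are trivially equal to $1$ under the convention $\perm(\hvgamRCp)\defeq 1$) is handled separately and is immediate.

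\noindent\emph{Anticipated difficulty.} The only delicate point is the careful bookkeeping of the $M$-dependent factors in Step~1 (verifying that the exponents of $M$ from the three sources---the prefactor, the $\sigma_1(i)=j$ entries, and the $\sigma_1(i)\neq j$ entries---cancel exactly). Once that cancellation is in place, the remainder of the argument is combinatorial and relies purely on recognizing the sum over $\setS_{\setR\to\setC}$ as the expansion of $\perm(\hvgamRCp)$.
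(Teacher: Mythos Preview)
Your proof is correct and follows the natural direct-computation approach that the paper's appendix argument must also take: evaluate the ratio $\CBgen{M-1}{n}(\vgam_{\sigma_1})/\CBM{n}(\vgam)$ from the factorial definition, verify the $M$-powers cancel, and recognize the resulting sum as $\perm(\hvgamRCp)$ via~\eqref{eqn: permament of hvgamRCp}. One small imprecision to tighten in Step~2: not every bijection in $\setS_{\setR\to\setC}$ extends to an element of $\setS_{[n]}(\vgam)$, only those with $\gamma\bigl(i,\sigma(i)\bigr)>0$ for all $i\in\setR$; however this is harmless, since the extra terms in the sum over all of $\setS_{\setR\to\setC}$ in Step~3 vanish due to the factor $\gamma\bigl(i,\sigma(i)\bigr)$ in the product.
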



\begin{proof}
  See Appendix~\ref{app:sec:1:lem:15:copy:2}.
\end{proof}


\begin{lemma}
  \label{sec:1:lem:30}
  \label{SEC:1:LEM:30}
  
  Let $M \in \sZ_{\geq 2}$ and $\vgam \in \Gamma_{M,n}$. For any
  $\sigma_1 \in \setS_{[n]}( \vgam )$, we define $\vgam_{\sigma_1}$ according
  to Definition~\ref{sec:1:def:10:part:2:add}. It holds that\footnote{Note
    that $\vgam \in \Gamma_{M,n}$ and so $\perm(\vgam) \geq n! / n^n$ by van
    der Waerden's inequality. With this, the fraction $1 / \perm(\vgam)$ is
    well defined.}
  \begin{align*}
    \CscSgen{M}{n}(\vgam)
      &\!=\! \frac{1}{
           \bigr( \chi(M) \bigl)^{\! n}
           \, \cdot \,
           \perm(\vgam)
         }
        \cdot \!\!\!
        \sum\limits_{\sigma_1 \in \setS_{[n]}( \vgam )} \!\!\!
          \CscSgen{M-1}{n}( \vgam_{\sigma_1} ) , 
  \end{align*}
  where
  \begin{align*}
    \chi(M)
      &\defeq
         \biggl( \frac{M}{M-1} \biggr)^{\!\! M-1} ,
  \end{align*}
  which starts at $\bigl. \chi(M) \bigr|_{M = 2} \! = \! 2$
      and increases to $\bigl. \chi(M) \bigr|_{M \to \infty} \! = \! \e$.
\end{lemma}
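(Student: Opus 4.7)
The plan is to prove Lemma~\ref{sec:1:lem:30} by direct computation from the explicit factorial formula
\[
\CscSgen{M}{n}(\vgam) \;=\; M^{-nM} \cdot \frac{(M!)^{2n}}{\prod\limits_{i,j} \bigl(M\cdot\gamma(i,j)\bigr)!}
\]
given in Definition~\ref{sec:1:lem:43}. Unlike the Bethe case (Lemma~\ref{sec:1:lem:15:copy:2}), no combinatorial identity like $\perm(\hvgamRCp)$ will appear here, because the scaled Sinkhorn coefficient is purely multinomial and does not involve any $(1-\gamma(i,j))!$ terms; this should make the calculation essentially bookkeeping.

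First I would expand $\CscSgen{M-1}{n}(\vgam_{\sigma_1})$ using the definition. Since $\vgam_{\sigma_1} = \frac{1}{M-1}(M\vgam - \mP_{\sigma_1})$, the entries of $(M-1)\vgam_{\sigma_1}$ coincide with those of $M\vgam$ except that for each $i\in[n]$, the entry at $(i,\sigma_1(i))$ is decreased by $1$. Consequently the factorial product telescopes as
\[
\prod\limits_{i,j} \bigl((M-1)\cdot\gamma_{\sigma_1}(i,j)\bigr)! \;=\; \frac{\prod\limits_{i,j}\bigl(M\cdot\gamma(i,j)\bigr)!}{\prod\limits_{i\in[n]} \bigl(M\cdot\gamma(i,\sigma_1(i))\bigr)},
\]
so that
\[
\CscSgen{M-1}{n}(\vgam_{\sigma_1}) \;=\; (M-1)^{-n(M-1)} \cdot \frac{\bigl((M-1)!\bigr)^{2n}}{\prod\limits_{i,j}\bigl(M\cdot\gamma(i,j)\bigr)!} \cdot M^{n} \cdot \prod\limits_{i\in[n]} \gamma\bigl(i,\sigma_1(i)\bigr).
\]

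Next I would sum this over $\sigma_1 \in \setS_{[n]}(\vgam)$. Only the last product depends on $\sigma_1$, and by the very definition of $\setS_{[n]}(\vgam)$ together with Definition~\ref{sec:1:def:13},
\[
\sum\limits_{\sigma_1 \in \setS_{[n]}(\vgam)} \prod\limits_{i \in [n]} \gamma\bigl(i,\sigma_1(i)\bigr) \;=\; \sum\limits_{\sigma_1 \in \setS_{[n]}} \prod\limits_{i \in [n]} \gamma\bigl(i,\sigma_1(i)\bigr) \;=\; \perm(\vgam),
\]
since permutations outside $\setS_{[n]}(\vgam)$ contribute a zero factor.

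Finally I would divide the resulting expression by $\bigl(\chi(M)\bigr)^{n}\cdot \perm(\vgam)$ and verify, using $(M!)^{2n} = M^{2n}\bigl((M-1)!\bigr)^{2n}$ and $\chi(M) = (M/(M-1))^{M-1}$, that the factors of $M$, $(M-1)$, and $((M-1)!)^{2n}$ collapse to exactly the defining formula for $\CscSgen{M}{n}(\vgam)$; equivalently one checks the scalar identity $\bigl(\chi(M)\bigr)^{n}\cdot (M-1)^{n(M-1)} = M^{n(M-1)}$, which is immediate from the definition of $\chi(M)$. There is no real obstacle here: the only mildly non-trivial step is recognizing the sum over $\sigma_1\in \setS_{[n]}(\vgam)$ as $\perm(\vgam)$, which is precisely why the factor $1/\perm(\vgam)$ (well defined via van der Waerden's inequality, as noted in the footnote) appears on the right-hand side.
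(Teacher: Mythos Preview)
Your proposal is correct and takes essentially the same approach as the paper's proof in Appendix~\ref{apx:23}: a direct verification from the closed-form factorial formula for $\CscSgen{M}{n}(\vgam)$, using that $(M-1)\vgam_{\sigma_1}$ differs from $M\vgam$ only by subtracting $1$ at the positions $(i,\sigma_1(i))$, and then recognising $\sum_{\sigma_1\in\setS_{[n]}(\vgam)}\prod_i\gamma(i,\sigma_1(i))=\perm(\vgam)$. The remaining scalar bookkeeping is exactly as you describe.
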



\begin{proof}
  See Appendix~\ref{apx:23}.
\end{proof}


All objects and quantities in this thesis assume $M \in \sZ_{\geq
  1}$. Interestingly, it is possible to define these objects and quantities
also for $M = 0$, at least formally. However, we will not do this except in
the following example, as it leads to nicer figures (see
Figs.~\ref{fig:pascal:triangle:generalization:1}%
--\ref{fig:pascal:triangle:generalization:3}).


\begin{example}
  \label{example:pascal:triangle:generalization:1}

  The purpose of this example is to visualize the recursions in
  Lemmas~\ref{sec:1:lem:29:copy:2}--\ref{sec:1:lem:30} for the special case
  $n = 2$, \ie, for matrices $\vgam$ of size $2 \times 2$. For,
  $k_1, k_2 \in \sZ_{\geq 0}$, define\footnote{The case $k_1 = k_2 = 0$ is
    special: the matrix $\vgam^{(0,0)}$ has to be considered as some formal
    matrix and the set $\Gamma_{0,2}$ is defined to be the set containing only
    the matrix $\vgam^{(0,0)}$.}
  \begin{align*}
    \vgam^{(k_1,k_2)}
      &\defeq
         \frac{1}{k_1 + k_2} 
         \cdot
         \begin{pmatrix}
           k_1 & k_2 \\
           k_2 & k_1
         \end{pmatrix}
           \in \Gamma_{k_1+k_2,2}.
             \quad
  \end{align*}
  Define
  \begin{align*}
    \Cgen{0}{n}\bigl( \vgam^{(0,0)} \bigr)
      &\defeq
         1 , \ 
    \CBgen{0}{n}\bigl( \vgam^{(0,0)} \bigr)
       \defeq
         1 , \ 
    \CscSgen{0}{n}\bigl( \vgam^{(0,0)} \bigr)
       \defeq
         1.
  \end{align*}
  One can the verify that with these definitions, the recursions in
  Lemmas~\ref{sec:1:lem:29:copy:2}--\ref{sec:1:lem:30} hold also for $M = 1$.


  With these definitions in place, we can draw diagrams that can be seen as
  generalizations of Pascal's triangle,\footnote{These generalizations of
    Pascal's triangle are drawn such that $M$ increases from the bottom to the
    top. This is different from the typical drawings of Pascal's triangle
    (where $M$ increases from the top to the bottom), but matches better the
    convention used in Fig.~\ref{fig:combinatorial:characterization:1} and
    similar figures in~\cite{Vontobel2013} (\ie, \cite[Figs.~7
    and~10]{Vontobel2013}), along with better matching the fact that
    $\permbM{M}(\mtheta)$ is based on \emph{liftings} of~$\mtheta$.} see
  Figs.~\ref{fig:pascal:triangle:generalization:1}%
  --\ref{fig:pascal:triangle:generalization:3}. The meaning of these figures
  is as follows. For example, consider the quantity
  $\Cgen{3}{n}\bigl(\vgam^{(1,2)}\bigr)$. According to
  Lemma~\ref{sec:1:lem:29:copy:2}, this quantity can be obtained via
  \begin{align}
    \Cgen{3}{n}\bigl(\vgam^{(1,2)}\bigr)
      &= 1 
           \cdot
           \Cgen{2}{n}\bigl(\vgam^{(1,1)}\bigr)
         +
         1
           \cdot
           \Cgen{2}{n}\bigl(\vgam^{(0,2)}\bigr) \nonumber \\
      &= 1
           \cdot 2
         +
         1
           \cdot 1 \nonumber \\
      &= 3 .
      \label{eq:CM:recursion:special:case:1}
  \end{align}
  The recursion in~\eqref{eq:CM:recursion:special:case:1} is visualized in
  Fig.~\ref{fig:pascal:triangle:generalization:1} as follows. First, the
  vertex/box above the label ``$\Cgen{3}{n}\bigl(\vgam^{(1,2)}\bigr)$''
  contains the value of $\Cgen{3}{n}\bigl(\vgam^{(1,2)}\bigr)$. Second, the
  arrows indicate that $\Cgen{3}{n}\bigl(\vgam^{(1,2)}\bigr)$ is obtained by
  adding $1$ times $\Cgen{2}{n}\bigl(\vgam^{(1,1)}\bigr)$ and $1$ times
  $\Cgen{3}{n}\bigl(\vgam^{(0,2)}\bigr)$.

  Moreover, notice that the value of $\Cgen{3}{n}\bigl(\vgam^{(1,2)}\bigr)$
  equals the number of directed paths from the vertex labeled
  ``$\Cgen{0}{n}\bigl(\vgam^{(0,0)}\bigr)$'' to the vertex labeled
  ``$\Cgen{3}{n}\bigl(\vgam^{(1,2)}\bigr)$''. In general, for
  $k_1, k_2 \in \sZ_{\geq 0}$, the value of
  $\Cgen{k_1+k_2}{n}\bigl(\vgam^{(k_1,k_2)}\bigr)$ equals the number of
  directed paths from the vertex labeled
  ``$\Cgen{0}{n}\bigl(\vgam^{(0,0)}\bigr)$'' to the vertex labeled
  ``$\Cgen{k_1+k_2}{n}\bigl(\vgam^{(k_1,k_2)}\bigr)$''. In this context,
  recall from
  Proposition~\ref{prop:coefficient:asymptotitic:characterization:1} that
  \begin{align*}
    \Cgen{k_1+k_2}{n}\bigl(\vgam^{(k_1,k_2)}\bigr)
      &\approx
         \exp
           \Bigl(
             (k_1 + k_2) \cdot \HG'\bigl( \vgam^{(k_1,k_2)} \bigr)
           \Bigr) ,
  \end{align*}
  where the approximation is up to $o(k_1+k_2)$ in the exponent.


  Similar statements can be made for
  $\CBgen{k_1+k_2}{n}\bigl(\vgam^{(k_1,k_2)}\bigr)$ and
  $\CscSgen{k_1+k_2}{n}\bigl(\vgam^{(k_1,k_2)}\bigr)$ in
  Fig.~\ref{fig:pascal:triangle:generalization:2} and
  Fig.~\ref{fig:pascal:triangle:generalization:3}, respectively. Namely, for
  $k_1, k_2 \in \sZ_{\geq 0}$, the value of
  $\CBgen{k_1+k_2}{n}\bigl(\vgam^{(k_1,k_2)}\bigr)$ equals the sum of weighted
  directed paths from the vertex labeled
  ``$\CBgen{0}{n}\bigl(\vgam^{(0,0)}\bigr)$'' to the vertex labeled
  ``$\CBgen{k_1+k_2}{n}\bigl(\vgam^{(k_1,k_2)}\bigr)$'', and \\
  $\CscSgen{k_1+k_2}{n}\bigl(\vgam^{(k_1,k_2)}\bigr)$ equals the sum of
  weighted directed paths from the vertex labeled
  ``$\CscSgen{0}{n}\bigl(\vgam^{(0,0)}\bigr)$'' to the vertex labeled
  ``$\CscSgen{k_1+k_2}{n}\bigl(\vgam^{(k_1,k_2)}\bigr)$''. (Here, the weight
  of a directed path is given by the product of the weights of the directed
  edges in the path.) In this context, recall from
  Proposition~\ref{prop:coefficient:asymptotitic:characterization:1} that
  \begin{align*}
  \CBgen{k_1+k_2}{n}\bigl(\vgam^{(k_1,k_2)}\bigr)
    &\approx
       \exp
         \Bigl(
           (k_1+k_2) \cdot \HBthe\bigl( \vgam^{(k_1,k_2)} \bigr)
         \Bigr) , \\
  \CscSgen{k_1+k_2}{n}\bigl(\vgam^{(k_1,k_2)}\bigr)
    &\approx
       \exp
         \Bigl(
           (k_1+k_2) \cdot \HscSthe\bigl( \vgam^{(k_1,k_2)} \bigr)
         \Bigr) ,
  \end{align*}
  where the approximation is up to $o(k_1+k_2)$ in the exponent.
  \eexample
\end{example}


\begin{figure}
  \begin{center}
    \scalebox{0.95}{
      \begin{tikzpicture}
  {\scriptsize

  \node at (0,0) 
    {\parbox{2cm}{\centering
                    \fbox{$1$} \\[0.05cm]
                    $\Cgen{0}{n}\bigl(\vgam^{(0,0)}\bigr)$
                    
                 }
    };

  \draw[->] (-0.25,+0.50) -- node[below left]  {$1$} (-0.75,+1.50);
  \draw[->] (+0.25,+0.50) -- node[below right] {$1$} (+0.75,+1.50);

  \node at (-1,+2) 
    {\parbox{2cm}{\centering
                    \fbox{$1$} \\[0.05cm]
                    $\Cgen{1}{n}\bigl(\vgam^{(1,0)}\bigr)$
                 }
    };

  \node at (+1,+2) 
    {\parbox{2cm}{\centering
                    \fbox{$1$} \\[0.05cm]
                    $\Cgen{1}{n}\bigl(\vgam^{(0,1)}\bigr)$
                 }
    };

  \draw[->] (-1.25,+2.50) -- node[below left]  {$1$} (-1.75,+3.50);
  \draw[->] (-0.75,+2.50) -- node[below right] {$1$} (-0.25,+3.50);

  \draw[->] (+0.75,+2.50) -- node[below left]  {$1$} (+0.25,+3.50);
  \draw[->] (+1.25,+2.50) -- node[below right] {$1$} (+1.75,+3.50);

  \node at (-2,+4) 
    {\parbox{2cm}{\centering
                    \fbox{$1$} \\[0.05cm]
                    $\Cgen{2}{n}\bigl(\vgam^{(2,0)}\bigr)$
                 }
    };

  \node at ( 0,+4) 
    {\parbox{2cm}{\centering
                    \fbox{$2$} \\[0.05cm]
                    $\Cgen{2}{n}\bigl(\vgam^{(1,1)}\bigr)$
                 }
    };

  \node at (+2,+4) 
    {\parbox{2cm}{\centering
                    \fbox{$1$} \\[0.05cm]
                    $\Cgen{2}{n}\bigl(\vgam^{(0,2)}\bigr)$
                 }
    };

  \draw[->] (-2.25,+4.50) -- node[below left]  {$1$} (-2.75,+5.50);
  \draw[->] (-1.75,+4.50) -- node[below right] {$1$} (-1.25,+5.50);

  \draw[->] (-0.25,+4.50) -- node[below left]  {$1$} (-0.75,+5.50);
  \draw[->] (+0.25,+4.50) -- node[below right] {$1$} (+0.75,+5.50);

  \draw[->] (+1.75,+4.50) -- node[below left]  {$1$} (+1.25,+5.50);
  \draw[->] (+2.25,+4.50) -- node[below right] {$1$} (+2.75,+5.50);

  \node at (-3,+6) 
    {\parbox{2cm}{\centering
                    \fbox{$1$} \\[0.05cm]
                    $\Cgen{3}{n}\bigl(\vgam^{(3,0)}\bigr)$
                 }
    };

  \node at (-1,+6) 
    {\parbox{2cm}{\centering
                    \fbox{$3$} \\[0.05cm]
                    $\Cgen{3}{n}\bigl(\vgam^{(2,1)}\bigr)$
                 }
    };

  \node at (+1,+6) 
    {\parbox{2cm}{\centering
                    \fbox{$3$} \\[0.05cm]
                    $\Cgen{3}{n}\bigl(\vgam^{(1,2)}\bigr)$
                 }
    };

  \node at (+3,+6) 
    {\parbox{2cm}{\centering
                    \fbox{$1$} \\[0.05cm]
                    $\Cgen{3}{n}\bigl(\vgam^{(0,3)}\bigr)$
                 }
    };

  }
\end{tikzpicture}
    }
    \vspace{-0.3cm}
    \caption[Generalization of Pascal's triangle in
      Lemma~\ref{sec:1:lem:29:copy:2}.]{Generalization of Pascal's triangle visualizing the recursion in
      Lemma~\ref{sec:1:lem:29:copy:2}.}
    \label{fig:pascal:triangle:generalization:1}

    \vspace{0.7cm}

    \scalebox{0.95}{
      \begin{tikzpicture}
  {\scriptsize

  \node at (0,0) 
    {\parbox{2cm}{\centering
                    \fbox{$1$} \\[0.05cm]
                    $\CBgen{0}{n}\bigl(\vgam^{(0,0)}\bigr)$
                 }
    };

  \draw[->] (-0.25,+0.50) -- node[below left]  {$1$} (-0.75,+1.50);
  \draw[->] (+0.25,+0.50) -- node[below right] {$1$} (+0.75,+1.50);

  \node at (-1,+2) 
    {\parbox{2cm}{\centering
                    \fbox{$1$} \\[0.05cm]
                    $\CBgen{1}{n}\bigl(\vgam^{(1,0)}\bigr)$
                 }
    };

  \node at (+1,+2) 
    {\parbox{2cm}{\centering
                    \fbox{$1$} \\[0.05cm]
                    $\CBgen{1}{n}\bigl(\vgam^{(0,1)}\bigr)$
                 }
    };

  \draw[->] (-1.25,+2.50) -- node[below left]  {$1$} (-1.75,+3.50);
  \draw[->] (-0.75,+2.50) -- node[below right] {$\frac{1}{2}$} (-0.25,+3.50);

  \draw[->] (+0.75,+2.50) -- node[below left]  {$\frac{1}{2}$} (+0.25,+3.50);
  \draw[->] (+1.25,+2.50) -- node[below right] {$1$} (+1.75,+3.50);

  \node at (-2,+4) 
    {\parbox{2cm}{\centering
                    \fbox{$1$} \\[0.05cm]
                    $\CBgen{2}{n}\bigl(\vgam^{(2,0)}\bigr)$
                 }
    };

  \node at ( 0,+4) 
    {\parbox{2cm}{\centering
                    \fbox{$1$} \\[0.05cm]
                    $\CBgen{2}{n}\bigl(\vgam^{(1,1)}\bigr)$
                 }
    };

  \node at (+2,+4) 
    {\parbox{2cm}{\centering
                    \fbox{$1$} \\[0.05cm]
                    $\CBgen{2}{n}\bigl(\vgam^{(0,2)}\bigr)$
                 }
    };

  \draw[->] (-2.25,+4.50) -- node[below left]  {$1$} (-2.75,+5.50);
  \draw[->] (-1.75,+4.50) -- node[below right] {$\frac{1}{2}$} (-1.25,+5.50);

  \draw[->] (-0.25,+4.50) -- node[below left]  {$\frac{1}{2}$} (-0.75,+5.50);
  \draw[->] (+0.25,+4.50) -- node[below right] {$\frac{1}{2}$} (+0.75,+5.50);

  \draw[->] (+1.75,+4.50) -- node[below left]  {$\frac{1}{2}$} (+1.25,+5.50);
  \draw[->] (+2.25,+4.50) -- node[below right] {$1$} (+2.75,+5.50);

  \node at (-3,+6) 
    {\parbox{2cm}{\centering
                    \fbox{$1$} \\[0.05cm]
                    $\CBgen{3}{n}\bigl(\vgam^{(3,0)}\bigr)$
                 }
    };

  \node at (-1,+6) 
    {\parbox{2cm}{\centering
                    \fbox{$1$} \\[0.05cm]
                    $\CBgen{3}{n}\bigl(\vgam^{(2,1)}\bigr)$
                 }
    };

  \node at (+1,+6) 
    {\parbox{2cm}{\centering
                    \fbox{$1$} \\[0.05cm]
                    $\CBgen{3}{n}\bigl(\vgam^{(1,2)}\bigr)$
                 }
    };

  \node at (+3,+6) 
    {\parbox{2cm}{\centering
                    \fbox{$1$} \\[0.05cm]
                    $\CBgen{3}{n}\bigl(\vgam^{(0,3)}\bigr)$
                 }
    };

  }
\end{tikzpicture}
    }
    \vspace{-0.3cm}
    \caption[Generalization of Pascal's triangle in
      Lemma~\ref{sec:1:lem:15:copy:2}.]{Generalization of Pascal's triangle visualizing the recursion in
      Lemma~\ref{sec:1:lem:15:copy:2}.}
    \label{fig:pascal:triangle:generalization:2}
  
    \vspace{0.7cm}

  \scalebox{0.95}{\begin{tikzpicture}
  {\scriptsize
  \pgfmathsetmacro{\Ws}{1.1};
  \node at (0,0) 
    {\parbox{2cm}{\centering
                    \fbox{$1$} \\[0.05cm]
                    $\CscSgen{0}{n}\bigl(\vgam^{(0,0)}\bigr)$
                 }
    };

  \draw[->] (-0.25*\Ws,+0.50) -- node[below left]  {$1$} (-0.75*\Ws,+1.50);
  \draw[->] (+0.25*\Ws,+0.50) -- node[below right] {$1$} (+0.75*\Ws,+1.50);

  \node at (-1*\Ws,+2) 
    {\parbox{2cm}{\centering
                    \fbox{$1$} \\[0.05cm]
                    $\CscSgen{1}{n}\bigl(\vgam^{(1,0)}\bigr)$
                 }
    };

  \node at (+1*\Ws,+2) 
    {\parbox{2cm}{\centering
                    \fbox{$1$} \\[0.05cm]
                    $\CscSgen{1}{n}\bigl(\vgam^{(0,1)}\bigr)$
                 }
    };

  \draw[->] (-1.25*\Ws,+2.50) -- node[below left]  {$\frac{1}{4}$} (-1.75*\Ws,+3.50);
  \draw[->] (-0.75*\Ws,+2.50) -- node[below right] {$\frac{1}{2}$} (-0.25*\Ws,+3.50);

  \draw[->] (+0.75*\Ws,+2.50) -- node[below left]  {$\frac{1}{2}$} (+0.25*\Ws,+3.50);
  \draw[->] (+1.25*\Ws,+2.50) -- node[below right] {$\frac{1}{4}$} (+1.75*\Ws,+3.50);

  \node at (-2*\Ws,+4) 
    {\parbox{2cm}{\centering
                    \fbox{$\frac{1}{4}$} \\[0.05cm]
                    $\CscSgen{2}{n}\bigl(\vgam^{(2,0)}\bigr)$
                 }
    };

  \node at ( 0,+4) 
    {\parbox{2cm}{\centering
                    \fbox{$1$} \\[0.05cm]
                    $\CscSgen{2}{n}\bigl(\vgam^{(1,1)}\bigr)$
                 }
    };

  \node at (+2*\Ws,+4) 
    {\parbox{2cm}{\centering
                    \fbox{$\frac{1}{4}$} \\[0.05cm]
                    $\CscSgen{2}{n}\bigl(\vgam^{(0,2)}\bigr)$
                 }
    };

  \draw[->] (-2.25*\Ws,+4.50) -- node[below left]  {$\frac{16}{81}$} (-2.75*\Ws,+5.50);
  \draw[->] (-1.75*\Ws,+4.50) -- node[below right] {$\frac{16}{45}$} (-1.25*\Ws,+5.50);

  \draw[->] (-0.25*\Ws,+4.50) -- node[below left]  {$\frac{16}{45}$} (-0.75*\Ws,+5.50);
  \draw[->] (+0.25*\Ws,+4.50) -- node[below right] {$\frac{16}{45}$} (+0.75*\Ws,+5.50);

  \draw[->] (+1.75*\Ws,+4.50) -- node[below left]  {$\frac{16}{45}$} (+1.25*\Ws,+5.50);
  \draw[->] (+2.25*\Ws,+4.50) -- node[below right] {$\frac{16}{81}$} (+2.75*\Ws,+5.50);

  \node at (-3*\Ws,+6) 
    {\parbox{2cm}{\centering
                    \fbox{$\frac{4}{81}$} \\[0.05cm]
                    $\CscSgen{3}{n}\bigl(\vgam^{(3,0)}\bigr)$
                 }
    };

  \node at (-1*\Ws,+6) 
    {\parbox{2cm}{\centering
                    \fbox{$\frac{4}{9}$} \\[0.05cm]
                    $\CscSgen{3}{n}\bigl(\vgam^{(2,1)}\bigr)$
                 }
    };

  \node at (+1*\Ws,+6) 
    {\parbox{2cm}{\centering
                    \fbox{$\frac{4}{9}$} \\[0.05cm]
                    $\CscSgen{3}{n}\bigl(\vgam^{(1,2)}\bigr)$
                 }
    };

  \node at (+3*\Ws,+6) 
    {\parbox{2cm}{\centering
                    \fbox{$\frac{4}{81}$} \\[0.05cm]
                    $\CscSgen{3}{n}\bigl(\vgam^{(0,3)}\bigr)$
                 }
    };

  }
\end{tikzpicture}}
  \vspace{-0.3cm}
  \caption[Generalization of Pascal's triangle in
    Lemma~\ref{sec:1:lem:30}.]{Generalization of Pascal's triangle visualizing the recursion in
    Lemma~\ref{sec:1:lem:30}.}
  \label{fig:pascal:triangle:generalization:3}
  \end{center}
\end{figure}


In the context of Example~\ref{example:pascal:triangle:generalization:1}, note
that for $n = 2$ the matrix
\begin{align*}
  \matr{1}_{2 \times 2} \defeq \begin{pmatrix} 1 & 1 \\ 1 &
  1 \end{pmatrix} 
\end{align*}
achieves the upper bound $2^{n/2} = 2$
in~\eqref{eq:ratio:permanent:bethe:permanent:1}, and with that the matrix
$\matr{1}_{2 \times 2}$ is the ``worst-case matrix'' for $n = 2$ in the sense
that the ratio $\frac{\perm(\mtheta)}{\permb(\mtheta)} = 2$ for
$\mtheta = \matr{1}_{2 \times 2}$ is the furthest away from the ratio
$\frac{\perm(\mtheta)}{\permb(\mtheta)} = 1$ for diagonal
matrices~$\mtheta$. However, this behavior of the all-one matrix for $n = 2$
is atypical. Namely, letting $\matr{1}_{n \times n}$ be the all-one matrix for
$n \in \sZ_{\geq 1}$, one obtains, according to
\cite[Lemma~48]{Vontobel2013a},
\begin{align*}
  \left.
    \frac{\perm(\mtheta)}{\permb(\mtheta)}
  \right|_{\mtheta = \matr{1}_{n \times n}}
    &= \sqrt{\frac{2 \pi n}{\e}}
       \cdot
       \bigl( 1 + o_n(1) \bigr) ,
\end{align*}
\ie, the ratio
$\frac{\perm(\matr{1}_{n \times n})}{\permb(\matr{1}_{n \times n})}$ is much
closer to $1$ than to $2^{n/2}$ for large values of $n$. (In the above
expression, the term $o_n(\ldots)$ is based on the usual little-o notation for
a function in $n$.)


In the case of the scaled Sinkhorn permanent, the ``worst-case matrix'' is
given by the all-one matrix not just for $n = 2$, but for all
$n \in \sZ_{\geq 2}$, \ie, the ratio
$\frac{\perm(\mtheta)}{\permscs(\mtheta)} = \e^n \cdot \frac{n!}{n^n}$ for
$\mtheta = \matr{1}_{n \times n}$ is the furthest away from the value of the
ratio $\frac{\perm(\mtheta)}{\permscs(\mtheta)} = \e^n$ for diagonal matrices
$\mtheta$ for all $n \in \sZ_{\geq 2}$.


\ifx\withclearpagecommands\x
\clearpage
\fi

\section{Bounding the Permanent of \texorpdfstring{$\mtheta$}{}}
\label{sec:5}


In this section, we first prove Theorem~\ref{thm: inequalities for the
  coefficients} with the help of Lemmas~\ref{sec:1:lem:15:copy:2}
and~\ref{sec:1:lem:30}, where, in order for Lemmas~\ref{sec:1:lem:15:copy:2}
and~\ref{sec:1:lem:30} to be useful, we need to find lower and upper bounds on
$\perm(\hvgamRCp)$ and on $\perm(\vgam)$, respectively. Afterwards, we use
Theorem~\ref{thm: inequalities for the coefficients} to prove
Theorem~\ref{th:main:permanent:inequalities:1}.


\subsection{Bounding the Ratio \texorpdfstring{$\perm(\mtheta) /  
                                                                                     \permbM{M}(\mtheta)$}{}}


In this subsection, we first prove the inequalities in~\eqref{sec:1:eqn:58} in
Theorem~\ref{thm: inequalities for the coefficients} that give bounds on the
ratio $\CM{n}(\vgam) / \CBM{n}(\vgam)$. Then, based on these bounds, we prove
the inequalities in~\eqref{SEC:1:EQN:147} in
Theorem~\ref{th:main:permanent:inequalities:1} that give bounds on the ratio
$\perm(\mtheta) / \permbM{M}(\mtheta)$.


\begin{lemma}
  \label{sec:1:lem:17}

  Consider the setup in Lemma~\ref{sec:1:lem:15:copy:2}. The matrix
  $\hvgamRCp$ satisfies
  \begin{align}
    1
      &\leq  
         \perm(\hvgamRCp)
       \leq
         2^{n/2}.
           \label{sec:1:eqn:145}
  \end{align}
\end{lemma}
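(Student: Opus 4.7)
The plan is to reduce both bounds to the Bethe-permanent inequalities~\eqref{eq:ratio:permanent:bethe:permanent:1} applied to an auxiliary matrix. The case $r(\vgam) = 0$ holds trivially by the convention $\perm(\hvgamRCp) \defeq 1$, so assume $r \defeq r(\vgam) \geq 2$. Set $\matr{A} \defeq \vgamRCp$, which by Definition~\ref{sec:1:def:15} and the argument following it is an $r \times r$ doubly stochastic matrix with entries in $[0, 1)$, and let $\matr{B}$ be the $r \times r$ matrix with entries $b(i,j) \defeq 1 - \gamRCp(i,j) > 0$. Distributing the geometric-mean denominator of $\hgamRCp(i,j)$ across the product over $i \in \setR$ in~\eqref{eqn: permament of hvgamRCp} immediately yields
$$
\perm(\hvgamRCp)
= \frac{\perm(\matr{A} \circ \matr{B})}{\prod_{(i,j) \in \setR \times \setC}\bigl(1-\gamRCp(i,j)\bigr)},
$$
where $\matr{A}\circ\matr{B}$ denotes the entry-wise (Hadamard) product. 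Note that $\perm(\matr{A} \circ \matr{B}) > 0$, since each $\sigma \in \setS_{[r]}(\matr{A})$ is also in $\setS_{[r]}(\matr{A}\circ\matr{B})$.

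The crux of the argument is then the identity $\permb(\matr{A} \circ \matr{B}) = \prod_{(i,j)}\bigl(1 - \gamRCp(i,j)\bigr)$. Set $\mtheta \defeq \matr{A} \circ \matr{B}$, so $\theta(i,j) = \gamRCp(i,j) \cdot \bigl(1-\gamRCp(i,j)\bigr)$, and observe that $\supp(\mtheta) = \supp(\matr{A})$ because $\gamRCp(i,j) < 1$ throughout. Plugging $\vgam \defeq \matr{A}$ into the Bethe free energy function of Lemma~\ref{sec:1:def:16}, the $\sum \gamma \log \theta$ term splits as $\sum a \log a + \sum a \log(1-a)$, and the $\sum\gamma\log\gamma$ piece cancels, leaving the clean identity
$$
\FBthe(\matr{A})
= -\sum_{(i,j)} \bigl[a(i,j) + (1-a(i,j))\bigr] \log\bigl(1-a(i,j)\bigr)
= -\log \prod_{(i,j)}\bigl(1-\gamRCp(i,j)\bigr).
$$
To promote this value to the global minimum of $\FBthe$, note that
$\partial \FBthe / \partial\gamma(i,j) = 2 + \log\bigl(\gamma(i,j)(1-\gamma(i,j))/\theta(i,j)\bigr)$
evaluates to the constant $2$ at every $(i,j) \in \supp(\matr{A})$ when $\vgam = \matr{A}$; the KKT conditions for the doubly-stochastic constraints are then satisfied with the trivial multipliers $\mu_i + \nu_j = 2$. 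Since $\FBthe$ is convex on $\Gamma_r(\mtheta)$ by~\cite[Section~IV]{Vontobel2013a}, this stationary point is the global minimizer, whence $\permb(\matr{A}\circ\matr{B}) = \exp\bigl(-\FBthe(\matr{A})\bigr) = \prod (1-\gamRCp(i,j))$, as claimed.

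Combining the two displayed identities with~\eqref{eq:ratio:permanent:bethe:permanent:1} applied to $\matr{A} \circ \matr{B}$, and using $r \leq n$, we conclude
$$
1
\leq \frac{\perm(\matr{A}\circ\matr{B})}{\permb(\matr{A}\circ\matr{B})}
= \perm(\hvgamRCp)
\leq 2^{r/2}
\leq 2^{n/2},
$$
which is precisely~\eqref{sec:1:eqn:145}. The main obstacle in this plan is establishing that $\matr{A}$ is the \emph{global} (not merely local) minimizer of $\FBthe$ at $\mtheta = \matr{A}\circ\matr{B}$; this is where the convexity result of~\cite{Vontobel2013a} is essential. All remaining steps are elementary algebra and direct invocations of already-stated results.
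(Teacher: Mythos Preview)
Your proof is correct and follows essentially the same approach as the paper: both arguments identify $\vgamRCp$ as the Bethe-optimal point (via KKT plus the convexity result of~\cite{Vontobel2013a}) and then invoke the two inequalities in~\eqref{eq:ratio:permanent:bethe:permanent:1}. The only cosmetic difference is that the paper works directly with the normalized matrix $\hvgamRCp$ and shows $\permb(\hvgamRCp)=1$, whereas you work with the unnormalized matrix $\matr{A}\circ\matr{B}$ and show $\permb(\matr{A}\circ\matr{B})=\prod_{(i,j)}(1-\gamRCp(i,j))$; these are equivalent by the degree-$r$ homogeneity of $\permb$.
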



\begin{proof}
  It is helpful to first prove that $\permb(\hvgamRCp) = 1$. Indeed,
  \begin{align*}
    \permb(\hvgamRCp)
      &\overset{(a)}{=} 
         \exp
           \biggl(
             - \min_{\vgam \in \Gamma_{|\setR|}(\hvgamRCp)}
                 F_{\mathrm{B},\hvgamRCp}( \vgam )
           \biggr) \\
      &\overset{(b)}{=}
         \exp\bigl( - F_{\mathrm{B},\hvgamRCp}( \vgamRCp ) \bigr) \\
      &\overset{(c)}{=} \exp(0) \\
      &= 1 ,
  \end{align*}
  where step~$(a)$ follows from the definition of the Bethe permanent in
  Definition~\ref{sec:1:def:16}, where step~$(b)$ follows from the convexity
  of the minimization problem
  $\min_{\vgam \in \Gamma_{|\setR|}(\hvgamRCp)} F_{\mathrm{B},\hvgamRCp}(
  \vgam )$ (which was proven in~\cite{Vontobel2013a}) and from studying the
  Karush–Kuhn–Tucker (KKT) 
  conditions for this minimization problem at $\vgam = \vgamRCp$ (the
  details are omitted), and where step~$(c)$ follows from simplifying the
  expression for $F_{\mathrm{B},\hvgamRCp}( \vgamRCp )$.

  With this, the first inequality in~\eqref{sec:1:eqn:145} follows from
  \begin{align*}
    1
      &= \permb(\hvgamRCp)
       \overset{(a)}{\leq}
         \perm(\hvgamRCp) ,
  \end{align*}
  where step~$(a)$ follows from the first inequality
  in~\eqref{eq:ratio:permanent:bethe:permanent:1}, and the second inequality
  in~\eqref{sec:1:eqn:145} follows from
  \begin{align*}
    \perm(\hvgamRCp) 
      &\overset{(b)}{\leq}
         \underbrace
         {
           2^{|\setR|/2}
         }_{\leq \ 2^{n/2}} 
         \cdot
         \underbrace{
           \permb(\hvgamRCp)
         }_{= \ 1}
       \leq 
         2^{n/2} ,
  \end{align*}
  where step~$(b)$ follows from the second inequality
  in~\eqref{eq:ratio:permanent:bethe:permanent:1}.

  An alternative proof of the first inequality in~\eqref{sec:1:eqn:145} goes
  as follows. Namely, this inequality can be obtained by 
  combining~\eqref{eqn: permament of hvgamRCp} and
  \begin{align*}
    \hspace{4cm}&\hspace{-4cm}
    \sum\limits_{\sigma \in \setS_{\setR \to \setC}}
      \prod\limits_{i \in \setR} 
        \gamRCp\bigl(i,\sigma(i)\bigr)
        \cdot 
        \Bigl( 1 \! - \! \gamRCp\bigl(i,\sigma(i) \bigr) \Bigr) 
      \overset{(a)}{\geq}
         \!\! \prod\limits_{(i,j) \in \setR \times \setC} \!\!
           \bigl( 1 \! - \! \gamRCp(i,j) \bigr) ,
  \end{align*}
  where step~$(a)$ follows from~\cite[Corollary~1c of
  Theorem~1]{Schrijver1998}. Note that this alternative proof for the first
  inequality in~\eqref{sec:1:eqn:145} is not really that different from the
  above proof because the first inequality
  in~\eqref{eq:ratio:permanent:bethe:permanent:1} is also based
  on~\cite[Corollary~1c of Theorem~1]{Schrijver1998}
\end{proof}


\noindent
\textbf{Proof of the inequalities in~\eqref{sec:1:eqn:58} in Theorem~\ref{thm:
    inequalities for the coefficients}}.
\begin{itemize}

\item We prove the first inequality in~\eqref{sec:1:eqn:58} by induction on
  $M$. The base case is $M = 1$. From Definition~\ref{sec:1:lem:43}, it
  follows that $\CM{n}(\vgam) = \CBM{n}(\vgam) = 1$ for
  $\vgam \in \Gamma_{M,n}$ and so
  \begin{align*}
    \frac{\CM{n}(\vgam)}
         {\CBM{n}(\vgam)}
      &\geq 1 ,
         \quad \vgam \in \Gamma_{M,n} ,
  \end{align*}
  is indeed satisfied for $M = 1$.

  Consider now an arbitrary $M \in \sZ_{\geq 2}$. By the induction hypothesis,
  we have
  \begin{align}
    \frac{ \Cgen{M-1}{n}(\vgam) }{\CBgen{M-1}{n}(\vgam) }
      &\geq 
        1 , \qquad \vgam \in \Gamma_{M-1,n}.
    \label{eqn: ineq. used in induction hypothesis w.r.t. cbm}
  \end{align}
  Then we obtain
  \begin{align*}
    \CM{n}( \vgam )
    &\overset{(a)}{=} \sum\limits_{\sigma_1 \in \setS_{[n]}( \vgam )}
         \Cgen{M-1}{n}\bigl( 
            \vgam_{\sigma_1}
    \bigr) 
    \nonumber \\
    &\overset{(b)}{\geq}
    \sum\limits_{\sigma_1 \in \setS_{[n]}( \vgam )}
         \CBgen{M-1}{n}\bigl( 
            \vgam_{\sigma_1}
    \bigr)
    \nonumber \\ 
    &\overset{(c)}{=} \perm(\hvgamRCp) \cdot \CBM{n}(\vgam)
    \nonumber \\
    &\overset{(d)}{\geq} \CBM{n}(\vgam) ,
\end{align*}
where step~$(a)$ follows from Lemma~\ref{sec:1:lem:29:copy:2}, where step
$(b)$ follows from the induction hypothesis in~\eqref{eqn: ineq. used in
  induction hypothesis w.r.t. cbm}, where step~$(c)$ follows from
Lemma~\ref{sec:1:lem:15:copy:2}, and where step~$(d)$ follows from
Lemma~\ref{sec:1:lem:17}. 

\item A similar line of reasoning can be used to prove the second inequality
  in~\eqref{sec:1:eqn:58}; the details are omitted.

\end{itemize}
\etheoremproof


\medskip

\noindent
\textbf{Proof of the inequalities in~\eqref{SEC:1:EQN:147} in
  Theorem~\ref{th:main:permanent:inequalities:1}}.
\begin{itemize}

\item We prove the first inequality in~\eqref{SEC:1:EQN:147} by observing that
  \begin{align*}
    \bigl( \perm(\mtheta) \bigr)^{\!M}
      &\overset{(a)}{=} 
         \sum\limits_{\vgam \in \GamMnthe} 
           \mtheta^{ M \cdot \vgam }
           \cdot
           \CM{n}( \vgam )
             \nonumber \\
      &\overset{(b)}{\geq}
         \sum\limits_{\vgam \in \GamMnthe} 
           \mtheta^{ M \cdot \vgam }
           \cdot
           \CBM{n}( \vgam )
             \nonumber \\
      &\overset{(c)}{=}
         \bigl( \permbM{M}(\mtheta) \bigr)^{\!M} ,
  \end{align*}
  where step~$(a)$ follows from~\eqref{sec:1:eqn:43}, where step~$(b)$ follows
  from the first inequality in~\eqref{sec:1:eqn:58}, and where step~$(c)$
  follows from~\eqref{sec:1:eqn:190}. The desired inequality is then obtained
  by taking the $M$-th root on both sides of the expression.

\item A similar line of reasoning can be used to prove the second inequality
  in~\eqref{SEC:1:EQN:147}; the details are omitted.

\end{itemize}
\etheoremproof


\medskip

The expression in the following lemma can be used to obtain a slightly
alternative proof of the inequalities in~\eqref{SEC:1:EQN:147} in
Theorem~\ref{th:main:permanent:inequalities:1}. However, this result is also
of interest in itself.


\begin{lemma}
  \label{lemma:ratio:perm:permbM:1}
  \label{LEMMA:RATIO:PERM:PERMBM:1}
  Let $\pmtheta$ be the probability mass function on $\setS_{[n]}$ induced
  by~$\mtheta$ (see Definition~\ref{sec:1:def:10}). It holds that
  \begin{align}
    \frac{\perm(\mtheta)}
         {\permbM{M}(\mtheta)} &= \left( \!
           \sum\limits_{ \vsigma_{[M]} \in \setS_{[n]}^{M} }
             \left(
               \prod\limits_{m \in [M]} \!\!
                 \pmtheta(\sigma_m)
             \right)
             \! \cdot \!
             \frac{\CBM{n}
                     \Bigl(
                       \bigl\langle \mP_{\sigma_{m}} \bigr\rangle_{m \in [M]}
                     \Bigr)}
                  {\CM{n}
                     \Bigl(
                       \bigl\langle \mP_{\sigma_{m}} \bigr\rangle_{m \in [M]}
                     \Bigr)}
         \right)^{\!\!\! -1/M} \!\! .
           \label{eq:ratio:perm:permbM:1}
  \end{align}
  Note that from the first inequality in~\eqref{sec:1:eqn:58}
    in Theorem~\ref{thm: inequalities for the coefficients} it follows that
    $ \CM{n}(\vgam) \geq \CBM{n}(\vgam) > 0 $ for all $ \vgam \in \GamMn $,
    \ie, the ratio appearing in~\eqref{eq:ratio:perm:permbM:1} is well
    defined.
\end{lemma}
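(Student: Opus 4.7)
The plan is to take the $-M$-th power of both sides and verify the identity by a direct algebraic manipulation, using the expressions for $\bigl(\perm(\mtheta)\bigr)^{\!M}$ and $\bigl(\permbM{M}(\mtheta)\bigr)^{\!M}$ from Lemma~\ref{lem: expression of permanents w.r.t. C} and the definition of $\CM{n}(\vgam)$ from~\eqref{sec:1:eqn:56}.

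First I would note that Eq.~\eqref{eq:ratio:perm:permbM:1} is equivalent to
\begin{align*}
  \frac{\bigl(\permbM{M}(\mtheta)\bigr)^{\!M}}
       {\bigl(\perm(\mtheta)\bigr)^{\!M}}
    &= \sum\limits_{\vsigma_{[M]} \in \setS_{[n]}^{M}}
         \Biggl(\prod\limits_{m \in [M]}
                  \pmtheta(\sigma_m)
         \Biggr)
         \cdot
         \frac{\CBM{n}\bigl(\langle \mP_{\sigma_m}\rangle_{m\in [M]}\bigr)}
              {\CM{n} \bigl(\langle \mP_{\sigma_m}\rangle_{m\in [M]}\bigr)} .
\end{align*}
So it suffices to evaluate the right-hand side. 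Substituting $\pmtheta(\sigma_m) = \mtheta^{\mP_{\sigma_m}}/\perm(\mtheta)$ from Definition~\ref{sec:1:def:10} and using the identity $\prod_{m \in [M]} \mtheta^{\mP_{\sigma_m}} = \mtheta^{\sum_{m}\mP_{\sigma_m}} = \mtheta^{M \cdot \langle \mP_{\sigma_m}\rangle_{m \in [M]}}$, the right-hand side becomes
\begin{align*}
  \frac{1}{\bigl(\perm(\mtheta)\bigr)^{\!M}}
  \cdot
  \sum\limits_{\vsigma_{[M]} \in \setS_{[n]}^{M}}
    \mtheta^{M \cdot \langle \mP_{\sigma_m}\rangle_{m \in [M]}}
    \cdot
    \frac{\CBM{n}\bigl(\langle \mP_{\sigma_m}\rangle_{m\in [M]}\bigr)}
         {\CM{n} \bigl(\langle \mP_{\sigma_m}\rangle_{m\in [M]}\bigr)} .
\end{align*}

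Next, I would group the outer sum by the value $\vgam = \langle \mP_{\sigma_m}\rangle_{m \in [M]} \in \GamMn$. By the very definition of $\CM{n}(\vgam)$ in~\eqref{sec:1:eqn:56}, the number of tuples $\vsigma_{[M]} \in \setS_{[n]}^M$ that produce a given $\vgam$ is exactly $\CM{n}(\vgam)$. Hence the inner sum over all such tuples simply multiplies the integrand (which depends on $\vsigma_{[M]}$ only through $\vgam$) by $\CM{n}(\vgam)$, cancelling the denominator and leaving
\begin{align*}
  \frac{1}{\bigl(\perm(\mtheta)\bigr)^{\!M}}
  \cdot
  \sum\limits_{\vgam \in \GamMn}
    \mtheta^{M \cdot \vgam} \cdot \CBM{n}(\vgam)
  = \frac{\bigl(\permbM{M}(\mtheta)\bigr)^{\!M}}
         {\bigl(\perm(\mtheta)\bigr)^{\!M}},
\end{align*}
where the last equality is~\eqref{sec:1:eqn:190}. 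Taking the $-1/M$-th power of both sides then yields~\eqref{eq:ratio:perm:permbM:1}.

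There are no real obstacles here: the whole argument is a one-line algebraic rearrangement, once one notices that $\prod_m \mtheta^{\mP_{\sigma_m}}$ collapses to $\mtheta^{M \cdot \vgam}$ and that $\CM{n}(\vgam)$ is literally the number of tuples $\vsigma_{[M]}$ decomposing $\vgam$. The only thing to double-check is well-definedness of the ratio $\CBM{n}/\CM{n}$, which follows from the first inequality of~\eqref{sec:1:eqn:58} guaranteeing $\CM{n}(\vgam) \geq \CBM{n}(\vgam) > 0$ on $\GamMn$ (with $\CBM{n}(\vgam) > 0$ visible directly from its factorial formula in Definition~\ref{sec:1:lem:43}), as already noted in the lemma statement. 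Terms with $\pmtheta(\sigma_m) = 0$ (\ie, $\sigma_m \notin \setS_{[n]}(\mtheta)$) contribute zero and therefore cause no issues even though the outer sum is nominally over all of $\setS_{[n]}^{M}$.
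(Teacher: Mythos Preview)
Your proof is correct and follows what is essentially the only natural route: rewrite the claim as $\bigl(\permbM{M}(\mtheta)\bigr)^{M}/\bigl(\perm(\mtheta)\bigr)^{M}$ equals the sum in parentheses, substitute the definition of $\pmtheta$, collapse the product $\prod_m \mtheta^{\mP_{\sigma_m}}$ into $\mtheta^{M\cdot\vgam}$, and then group the outer sum by $\vgam$ so that the count $\CM{n}(\vgam)$ from~\eqref{sec:1:eqn:56} cancels the denominator, leaving precisely the expression~\eqref{sec:1:eqn:190}. This is the same algebraic rearrangement the paper carries out in its appendix.
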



\begin{proof}
  See Appendix~\ref{app:proof:lemma:ratio:perm:permbM:1}.
\end{proof}

Based on Lemma~\ref{lemma:ratio:perm:permbM:1}, we  
give an alternative proof of the inequalities in~\eqref{SEC:1:EQN:147} in
  Theorem~\ref{th:main:permanent:inequalities:1} in Appendix~\ref{apx: alternative proof of perm geq permb}.

Note that the first inequality in~\eqref{SEC:1:EQN:147} is tight. Indeed,
choosing the matrix $\mtheta$ to be diagonal with strictly positive diagonal
entries shows that 
\begin{align*}
  \frac{\perm(\mtheta)}{\permbM{M}(\mtheta)} = 1, \qquad M \in \sZpp.
\end{align*}
Furthermore,
\begin{align*}
  \frac{ \perm(\mtheta) }{\permb(\mtheta)} = 
  \liminf_{M \to \infty} \frac{ \perm(\mtheta) }{\permbM{M}(\mtheta)}
  = 1.
\end{align*}


On the other hand, the second inequality in~\eqref{SEC:1:EQN:147} is
\emph{not} tight for finite $M \geq 2$. Indeed, for this inequality to be
tight, all the inequalities
in~\eqref{eq:ratio:perm:permbM:proof:upper:bound:1} in Appendix~\ref{app:proof:lemma:ratio:perm:permbM:1} need to be tight. In
particular, the inequality in step~$(a)$ needs to be tight, which means that
for all $\vsigma_{[M]} \in \setS_{[n]}^{M}$ with
$\prod\limits_{m \in [M]} \pmtheta(\sigma_m) > 0$, it needs to hold that
$\CBM{n}(\ldots) / \CM{n}(\ldots) = (2^{n/2})^{-(M-1)}$. However, this is not
the case for $\vsigma_{[M]} = (\sigma, \ldots, \sigma)$ with
$\sigma \in \setS_{[n]}(\vtheta)$, for which it holds that
$\CBM{n}(\ldots) / \CM{n}(\ldots) = 1$.


Observe that the non-tightness of the second inequality
in~\eqref{SEC:1:EQN:147} for finite $M \geq 2$ is in contrast to the case
$M \to \infty$, where it is known that for
\begin{align*}
  \mtheta = \matr{I}_{(n/2) \times (n/2)} \otimes \begin{pmatrix} 1
  & 1 \\ 1 & 1 \end{pmatrix},
\end{align*}
it holds that
 \begin{align*}
    \frac{\perm(\mtheta)}{ \permb(\mtheta)} 
    = \liminf_{M \to \infty} 
    \frac{\perm(\mtheta)}{\permbM{M}(\mtheta)}
    = 2^{n/2}.
 \end{align*}
 (Here, $n$ is assumed to be even and
$\matr{I}_{(n/2) \times (n/2)}$ is defined to be the identity matrix of size
$(n/2) \times (n/2)$.)


\vspace{0.2 cm}

\noindent
We also give another alternative proof of the inequalities in~\eqref{SEC:1:EQN:147} in
  Theorem~\ref{th:main:permanent:inequalities:1} in Appendix~\ref{apx: alternative prooof of perm geq permb: 2}.
In particular, Lemma~\ref{prop: a special case of Navin conjecture} proven in Appendix~\ref{apx: alternative prooof of perm geq permb: 2} proves the $M_{1}= 1$ case of the following conjecture.
\begin{conjecture}\label{Navin conjecture}
  For any integers $ M_{1} \in \sZpp $ and $ M_{2} \in \sZpp $, it holds that
  \begin{align*}
    \frac{ \bigl( \permbM{M_{1}}(\mtheta) \bigr)^{\! M_{1}} \cdot
    \bigl( \permbM{M_{2}}(\mtheta) \bigr)^{\! M_{2}}
    }{ \bigl( \permbM{M_{1} + M_{2}}(\mtheta) \bigr)^{\! M_{1} + M_{2}}}
    \geq 1.
  \end{align*}
  \econjecture
\end{conjecture}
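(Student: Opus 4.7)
My plan is to reduce Conjecture~\ref{Navin conjecture} to the stronger coefficient-wise inequality
\begin{align}
\sum_{\substack{(\vgam_1,\vgam_2)\,\in\,\Gamma_{M_1,n}\times\Gamma_{M_2,n}:\\ M_1\vgam_1+M_2\vgam_2\,=\,(M_1+M_2)\vgam}} \CBgen{M_1}{n}(\vgam_1)\,\CBgen{M_2}{n}(\vgam_2) \;\geq\; \CBgen{M_1+M_2}{n}(\vgam), \label{eq:navin:plan:key}
\end{align}
which should hold for every $\vgam \in \Gamma_{M_1+M_2,n}$. To deduce the conjecture from~\eqref{eq:navin:plan:key}, I would expand the product $(\permbM{M_1}(\mtheta))^{M_1}(\permbM{M_2}(\mtheta))^{M_2}$ using~\eqref{sec:1:eqn:190} and group the pairs $(\vgam_1,\vgam_2)$ by their combination $\vgam \defeq (M_1\vgam_1+M_2\vgam_2)/(M_1+M_2) \in \Gamma_{M_1+M_2,n}$ to obtain
\begin{align*}
\bigl(\permbM{M_1}(\mtheta)\bigr)^{\!M_1}\bigl(\permbM{M_2}(\mtheta)\bigr)^{\!M_2}
\;=\;\sum_{\vgam \in \Gamma_{M_1+M_2,n}} \mtheta^{(M_1+M_2)\vgam}\cdot D(\vgam;M_1,M_2),
\end{align*}
where $D(\vgam;M_1,M_2)$ is the left-hand side of~\eqref{eq:navin:plan:key}. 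Multiplying~\eqref{eq:navin:plan:key} by the non-negative monomial $\mtheta^{(M_1+M_2)\vgam}$ and summing over $\vgam$ bounds this expansion from below by $\sum_\vgam \mtheta^{(M_1+M_2)\vgam}\,\CBgen{M_1+M_2}{n}(\vgam) = \bigl(\permbM{M_1+M_2}(\mtheta)\bigr)^{M_1+M_2}$, which is exactly the conjecture. For $M_1=1$, inequality~\eqref{eq:navin:plan:key} reduces via Lemma~\ref{sec:1:lem:15:copy:2} to $\perm(\hvgamRCp) \cdot \CBgen{M_1+M_2}{n}(\vgam) \geq \CBgen{M_1+M_2}{n}(\vgam)$, which collapses by Lemma~\ref{sec:1:lem:17}---this is precisely the content of Lemma~\ref{prop: a special case of Navin conjecture}. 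Small numerical checks at $n=2$ with $M_1 = M_2 = 2$ are consistent with~\eqref{eq:navin:plan:key} in general.

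My preferred attack on~\eqref{eq:navin:plan:key} proceeds through the combinatorial formula
$(M!)^{n^2}\CBM{n}(\vgam) = \bigl|\bigl\{(\tau,\mP_M):\text{type}(\tau)=\vgam,\ \mP_M\text{ compatible with }\tau\bigr\}\bigr|$
that drops out of the proof of~\eqref{sec:1:eqn:190} (and is implicit in~\cite[Lemma~29]{Vontobel2013}). In this language, the right-hand side of~\eqref{eq:navin:plan:key} counts compatible configurations in $(M_1+M_2)$-covers of type $\vgam$, while the left-hand side counts pairs of compatible configurations in $M_1$- and $M_2$-covers whose combined type is $\vgam$. The block-diagonal embedding $\mP_{M_1+M_2} = \mP_{M_1}\oplus\mP_{M_2}$ (applied pointwise over $(i,j)\in[n]\times[n]$) naturally lifts pairs of configurations into $(M_1+M_2)$-cover configurations and satisfies the exact factorization $\perm(\mtheta^{\uparrow(\mP_{M_1}\oplus\mP_{M_2})}) = \perm(\mtheta^{\uparrow\mP_{M_1}})\cdot\perm(\mtheta^{\uparrow\mP_{M_2}})$. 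Thus Conjecture~\ref{Navin conjecture} is in fact equivalent to the positive-correlation statement that the uniform average of $\perm(\mtheta^{\uparrow\mP})$ over the block-diagonal subset $\mathsf{Block}_{M_1,M_2}^{n^2} \subseteq \tPsi_{M_1+M_2}$ dominates its uniform average over all of $\tPsi_{M_1+M_2}$. I would try to build a weighted surjection from general configurations $(\tau,\mP_{M_1+M_2})$ onto pairs of product configurations using the tile-wise Vandermonde identity $\sum_{a_{ij}} \binom{c_{ij}}{a_{ij}}\binom{(M_1+M_2)-c_{ij}}{M_1-a_{ij}} = \binom{M_1+M_2}{M_1}$, arranged so that the doubly-stochastic marginal constraints on $(\vgam_1,\vgam_2)$ emerge automatically by aggregating tile-wise identities across $(i,j)$.

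The hard part will be handling cross-edges: bijections $\tau$ that send copies in $[M_1]$ to copies in $[M_2]$, and permutations $P^{(i,j)}$ whose off-diagonal blocks $P^{(i,j)}_{12}, P^{(i,j)}_{21}$ are nonzero. For $M_1 = 1$ these cross-edges collapse into a single peeled permutation $\sigma_1$ whose contribution is captured exactly by Lemma~\ref{sec:1:lem:17}, but for $M_1 \geq 2$ there is no canonical ordering for peeling one copy at a time, and a naive iteration of Lemma~\ref{sec:1:lem:15:copy:2} generates an accumulation of $\perm(\hvgamRCp)$ factors of growing complexity that I do not yet see how to consolidate into a single global bound. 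My best candidate for breaking this deadlock is a two-step strategy: first, enlarge the left-hand side of~\eqref{eq:navin:plan:key} by allowing ``twisted'' lifts in which the off-diagonal blocks are permitted but weighted by factors derived from the tile-wise Vandermonde identity; second, show that this twisted sum equals $\CBgen{M_1+M_2}{n}(\vgam)$ exactly while the original untwisted sum is a non-negative overcount. A separate route, whose success I am less confident about, is to search for a capacity-style log-concavity inequality for the bivariate polynomial $(\mP_{M_1},\mP_{M_2})\mapsto\perm(\mtheta^{\uparrow\mP_{M_1}})\cdot\perm(\mtheta^{\uparrow\mP_{M_2}})$ adapted from the real-stable machinery of~\cite{Straszak2019,Anari2021}, but it is not yet clear to me which capacity statement would give the correct direction.
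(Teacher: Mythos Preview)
This statement is a \emph{Conjecture} in the paper, not a theorem: the paper does not prove it for general $M_1,M_2$. The paper explicitly states that only the special case $M_1=1$ is established (as Lemma~\ref{prop: a special case of Navin conjecture} in an appendix), and that this case is equivalent to the first inequality in~\eqref{SEC:1:EQN:147}. Your reduction to the coefficient-wise inequality~\eqref{eq:navin:plan:key} is valid as a sufficient condition, and your derivation of the $M_1=1$ case from Lemmas~\ref{sec:1:lem:15:copy:2} and~\ref{sec:1:lem:17} is exactly the mechanism the paper uses. So on the only part the paper actually proves, you and the paper are aligned.

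For $M_1\geq 2$, neither you nor the paper have a proof, and your honest acknowledgment of the obstruction is accurate. One remark on your iteration idea: applying the $M_1=1$ case twice yields $\sum_{\sigma_1,\sigma_2}\CBgen{M_2}{n}\bigl((\vgam_{\sigma_1})_{\sigma_2}\bigr)\geq \CBgen{M_2+2}{n}(\vgam)$, but grouping the ordered pairs $(\sigma_1,\sigma_2)$ by $\vgam_1=\tfrac{1}{2}(\mP_{\sigma_1}+\mP_{\sigma_2})$ produces the multiplicity $\Cgen{2}{n}(\vgam_1)$, not $\CBgen{2}{n}(\vgam_1)$; since $\Cgen{2}{n}\geq\CBgen{2}{n}$ by~\eqref{sec:1:eqn:58}, the iterated bound is \emph{weaker} than your target~\eqref{eq:navin:plan:key}, confirming that a different idea is needed. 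Note also that~\eqref{eq:navin:plan:key} is strictly stronger than the conjecture (it implies it, but the converse is not clear, since one cannot isolate a single $\vgam$-term by choice of $\mtheta$), so even if the conjecture is true, your coefficient-wise route may require more than the conjecture alone guarantees. Your block-diagonal reformulation is a correct restatement; your twisted-lift and capacity proposals are reasonable directions but remain speculative, which is consistent with the statement's status as an open problem in the paper.
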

A useful consequence of Conjecture~\ref{Navin conjecture} is that
\begin{align*}
  \lim\limits_{M \to \infty} 
  \permbM{M}(\mtheta) 
  = \limsup_{M \to \infty} 
  \permbM{M}(\mtheta) 
  =
  \permb(\mtheta) ,
\end{align*}
where the first equality follows from Conjecture~\ref{Navin conjecture} and Fekete’s lemma~\cite{Fekete1923} (see, \eg,~\cite[Lemma 11.6]{Lint2001}), and where the second equality follows from~\cite[Theorem 39]{Vontobel2013a}.


\subsection{Bounding the Ratio \texorpdfstring{$\perm(\mtheta) /  
                                                                                     \permscsM{M}(\mtheta)$}{}}


In this subsection, we first prove the inequalities in~\eqref{sec:1:eqn:180}
in Theorem~\ref{thm: inequalities for the coefficients} that give bounds on
the ratio $\CM{n}(\vgam) / \CscSgen{M}{n}(\vgam)$. Then, based on these
bounds, we prove the inequalities in~\eqref{sec:1:eqn:200} in
Theorem~\ref{th:main:permanent:inequalities:1} that give bounds on the ratio
$\perm(\mtheta) / \permscsM{M}(\mtheta)$.


\begin{lemma}
  \label{sec:1:lem:36}

  Consider the setup in Lemma~\ref{sec:1:lem:30}. The matrix $\vgam$ satisfies
  \begin{align*}
    \frac{n!}{n^{n}}
      &\leq 
         \perm(\vgam)
       \leq 1.
  \end{align*}
\end{lemma}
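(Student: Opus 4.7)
The plan is to prove the two inequalities separately, since each has a short independent argument using tools already available in the thesis.

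For the lower bound $n!/n^n \leq \perm(\vgam)$, I would simply invoke van der Waerden's inequality, which asserts that for every doubly stochastic matrix of size $n\times n$ the permanent is at least $n!/n^n$. This is the inequality proved by Egorychev~\cite{Egorychev1981} and Falikman~\cite{Falikman1981}, and it has already been cited in the discussion immediately following Definition~\ref{sec:1:def:10}. Since $\vgam \in \Gamma_{M,n} \subseteq \Gamma_n$ is doubly stochastic, this bound applies verbatim.

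For the upper bound $\perm(\vgam) \leq 1$, my approach would be a one-line probabilistic interpretation of the permanent that uses only the row-stochasticity of $\vgam$. Because $\vgam$ is doubly stochastic, in particular each row $\gamma(i,\cdot)$ is a probability mass function on $[n]$. Introducing independent random variables $X_1,\ldots,X_n$ with $\Pr[X_i = j] = \gamma(i,j)$, and observing that the events $\{X_i = \sigma(i) \text{ for all } i \in [n]\}$ are pairwise disjoint for distinct $\sigma \in \setS_{[n]}$, I would write
\begin{align*}
  \perm(\vgam)
    &= \sum\limits_{\sigma \in \setS_{[n]}} \prod\limits_{i \in [n]} \gamma\bigl( i, \sigma(i) \bigr)
     = \sum\limits_{\sigma \in \setS_{[n]}} \Pr\bigl[ X_i = \sigma(i), \ \forall i \in [n] \bigr] \\
    &= \Pr\bigl[ (X_1, \ldots, X_n) \text{ is a permutation of } [n] \bigr]
     \leq 1,
\end{align*}
where independence is used in the second equality and disjointness of the events in the third.

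There is no significant obstacle in this proof: each bound is a one-step consequence of a known fact. The only point worth flagging is an asymmetry between the two bounds, namely that the upper bound already holds for the larger class of row-stochastic matrices, whereas the lower bound genuinely requires both row and column stochasticity, since for a row-stochastic matrix that fails to be column-stochastic the permanent can be made arbitrarily close to~$0$.
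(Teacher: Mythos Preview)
Your proof is correct and essentially the same as the paper's. The lower bound is identical (van der Waerden), and your probabilistic argument for the upper bound is just a reinterpretation of the paper's one-line observation that $\perm(\vgam) \leq \prod_i \bigl(\sum_j \gamma(i,j)\bigr) = 1$ because the permanent picks out only the bijective terms from the full expansion; your random variables $X_i$ encode exactly this expansion as $\Pr[(X_1,\ldots,X_n)\in [n]^n]=1$.
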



\begin{proof}
  The lower bound follows from van der Waerden's inequality proven
  in~\cite[Theorem 1]{Egorychev1981} and~\cite[Theorem 1]{Falikman1981}. The
  upper bound follows from
  \begin{align*}
    \perm(\vgam) 
      &\overset{(a)}{\leq }
         \prod\limits_{i} 
           \Biggl( \sum\limits_{j} \gamma(i,j) \Biggr)
       = 
         \prod\limits_{i} 
           1
       = 1 ,
  \end{align*}
  where step~$(a)$ follows from the observation that $\perm(\vgam)$ contains a
  subset of the (non-negative) terms in
  $\prod\limits_{i} \bigl( \sum\limits_{j} \gamma(i,j) \bigr)$.
\end{proof}


\noindent
\textbf{Proof of the inequalities in~\eqref{sec:1:eqn:180} in Theorem~\ref{thm:
    inequalities for the coefficients}}.
\begin{itemize}

\item We prove the first inequality in~\eqref{sec:1:eqn:180} by induction on
  $M$. The base case is $M = 1$. From Definition~\ref{sec:1:lem:43}, it
  follows that $\CM{n}(\vgam) = \CscSgen{M}{n}(\vgam) = 1$ for all
  $\vgam \in \Gamma_{M,n}$ and so
  \begin{align*}
    \frac{\CM{n}(\vgam)}
         {\CscSgen{M}{n}(\vgam)}
      &\geq
         1 , 
           \qquad \vgam \in \Gamma_{M,n}.
  \end{align*}
  (Note that when $M = 1$, then the first inequality in~\eqref{sec:1:eqn:180}
  simplifies to $1$.)

  Consider now an arbitrary $ M \in \sZ_{\geq 2} $. By the induction
  hypothesis, we have
  \begin{align}
    \frac{\Cgen{M-1}{n}(\vgam)}
         {\CscSgen{M-1}{n}(\vgam)} 
      &\! \geq \!
         \Biggl( \!
           \frac{(M \! - \! 1)^{M-1}}
                {(M \! - \! 1)!} \!
         \Biggr)^{\!\!\! n} \!\!
         \cdot \!
         \biggl(
           \frac{ n! }
                { n^{n} }
         \biggr)^{\!\! M-2} \!\!\!\! ,
           \qquad \vgam \in \Gamma_{M-1,n}.
    \label{eqn: ineq. used in induction hypothesis w.r.t. cscsm}
  \end{align}
  Then we obtain
  \begin{align*}
    \CM{n}( \vgam ) 
    &\overset{(a)}{=}
       \sum\limits_{\sigma_1 \in \setS_{[n]}( \vgam )}
         \Cgen{M-1}{n}
           \bigl( 
             \vgam_{\sigma_1}
           \bigr) 
             \nonumber \\
    &\overset{(b)}{\geq}
       \biggl( \frac{(M \! - \! 1)^{M-1}}{(M \! - \! 1)!} \biggr)^{\!\! n}
       \cdot
       \biggl(
         \frac{ n! }{ n^{n} }
       \biggr)^{\!\! M-2} 
       \!\!\!\!\!\!
       \cdot
       \sum\limits_{\sigma_1 \in \setS_{[n]}( \vgam )} \!\!
         \CscSgen{M-1}{n}(\vgam_{\sigma_1})
           \nonumber \\ 
    &\overset{(c)}{=} 
       \biggl( \frac{M^{M}}{M!} \biggr)^{\!\! n}
       \cdot
       \biggl(
         \frac{ n! }{ n^{n} }
       \biggr)^{\!\! M-2} 
       \cdot
       \perm(\vgam)
       \cdot
       \CscSgen{M}{n}(\vgam)
         \nonumber \\
    &\overset{(d)}{\geq}
       \biggl( \frac{M^{M}}{M!} \biggr)^{\!\! n}
       \cdot
       \biggl(
         \frac{ n! }{ n^{n} }
       \biggr)^{\!\! M-1} 
       \cdot 
       \CscSgen{M}{n}(\vgam) ,
  \end{align*}
  where step~$(a)$ follows from Lemma~\ref{sec:1:lem:29:copy:2}, where step
  $(b)$ follows from the induction hypothesis 
  in~\eqref{eqn: ineq. used in induction hypothesis w.r.t. cscsm}, 
  where step~$(c)$ follows from
  Lemma~\ref{sec:1:lem:30}, and where step~$(d)$ follows from
  Lemma~\ref{sec:1:lem:36}.

\item A similar line of reasoning can be used to prove the second inequality
  in~\eqref{sec:1:eqn:180}; the details are omitted.

\end{itemize}
\etheoremproof


\medskip

\noindent
\textbf{Proof of the inequalities in~\eqref{sec:1:eqn:200} in
  Theorem~\ref{th:main:permanent:inequalities:1}}.
\begin{itemize}

\item We prove the first inequality in~\eqref{sec:1:eqn:200} by observing that
  \begin{align*}
    \bigl( \perm(\mtheta) \bigr)^{\!M} 
      &\overset{(a)}{=}
         \sum\limits_{\vgam \in \GamMnthe} 
           \mtheta^{ M \cdot \vgam }
         \cdot
         \CM{n}( \vgam )
           \nonumber \\
      &\overset{(b)}{\geq} 
         \biggl( \frac{M^{M}}{M!} \biggr)^{\!\! n}
         \cdot
         \biggl(
           \frac{ n! }{ n^{n} }
         \biggr)^{\!\! M-1} 
         \! \cdot \!\!\!\!\!
         \sum\limits_{\vgam \in \GamMnthe} 
           \mtheta^{ M \cdot \vgam }
         \cdot
         \CscSgen{M}{n}( \vgam )
          \nonumber \\
      &\overset{(c)}{=} 
         \biggl( \frac{M^{M}}{M!} \biggl)^{\!\! n}
         \cdot
         \biggl(
           \frac{ n! }{ n^{n} }
         \biggl)^{\!\! M-1} 
         \cdot 
         \bigl( \permscsM{M}(\mtheta) \bigr)^{\!M} ,
  \end{align*}
  where step~$(a)$ follows from~\eqref{sec:1:eqn:43}, where step~$(b)$ follows
  from the first inequality in~\eqref{sec:1:eqn:180}, and where step~$(c)$
  follows from~\eqref{sec:1:eqn:168}. The desired inequality is then obtained
  by taking the $M$-th root on both sides of the expression.

\item A similar line of reasoning can be used to prove the second inequality
  in~\eqref{sec:1:eqn:200}; the details are omitted.

\etheoremproof
\end{itemize}


The expression in the following lemma can be used to obtain a slightly
alternative proof of the inequalities in~\eqref{sec:1:eqn:200} in
Theorem~\ref{th:main:permanent:inequalities:1}. However, this result is also
of interest in itself.


\begin{lemma}
  \label{lemma:ratio:perm:permscM:1}

  Let $\pmtheta$ be the probability mass function on $\setS_{[n]}$ induced
  by~$\mtheta$ (see Definition~\ref{sec:1:def:10}). It holds that
  \begin{align*}
    \frac{\perm(\mtheta)}
         {\permscsM{M}(\mtheta)} 
      &= \left( \!
           \sum\limits_{ \vsigma_{[M]} \in \setS_{[n]}^{M} } \!
             \Biggl(
               \prod\limits_{m \in [M]} \!\!\!
                 \pmtheta(\sigma_m)
             \Biggr)
             \!\! \cdot \!
             \frac{\CscSgen{M}{n}
                     \Bigl(
                       \bigl\langle \mP_{\sigma_{m}} \bigr\rangle_{m \in [M]}
                     \Bigr)}
                  {\CM{n}
                     \Bigl(
                       \bigl\langle \mP_{\sigma_{m}} \bigr\rangle_{m \in [M]}
                     \Bigr)} \!
         \right)^{\!\!\! -1/M} \!\! .
  \end{align*}
\end{lemma}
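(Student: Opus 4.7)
The plan is to follow the same direct-computation template that yields Lemma~\ref{lemma:ratio:perm:permbM:1}, but with the Bethe coefficient $\CBM{n}(\cdot)$ replaced by the scaled Sinkhorn coefficient $\CscSgen{M}{n}(\cdot)$ throughout. The strategy is simply to rewrite the sum appearing on the right-hand side in terms of quantities that we can immediately identify as $(\perm(\mtheta))^M$ and $(\permscsM{M}(\mtheta))^M$.

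First, I would use the definition of $\pmtheta$ in Definition~\ref{sec:1:def:10} to rewrite
\begin{align*}
  \prod_{m \in [M]} \pmtheta(\sigma_m)
    &= \frac{1}{(\perm(\mtheta))^M} \prod_{m \in [M]} \mtheta^{\mP_{\sigma_m}}
     = \frac{\mtheta^{M \cdot \vgam}}{(\perm(\mtheta))^M},
\end{align*}
where $\vgam \defeq \langle \mP_{\sigma_m}\rangle_{m \in [M]} \in \GamMn$. Next, I would group the sum over $\vsigma_{[M]} \in \setS_{[n]}^M$ according to the value of $\vgam$. By Definition~\ref{sec:1:lem:43}, the number of $\vsigma_{[M]}$ that decompose a given $\vgam$ is precisely $\CM{n}(\vgam)$. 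Therefore,
\begin{align*}
  \sum_{\vsigma_{[M]} \in \setS_{[n]}^{M}}
    \Bigl(\prod_{m} \pmtheta(\sigma_m) \Bigr)
    \cdot
    \frac{\CscSgen{M}{n}(\vgam)}{\CM{n}(\vgam)}
  &= \frac{1}{(\perm(\mtheta))^M}
     \sum_{\vgam \in \GamMn}
       \CM{n}(\vgam) \cdot \mtheta^{M\cdot\vgam} \cdot
       \frac{\CscSgen{M}{n}(\vgam)}{\CM{n}(\vgam)}.
\end{align*}

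At this point, the factors $\CM{n}(\vgam)$ cancel (which is legitimate because whenever $\mtheta^{M\cdot\vgam}$ is nonzero we have $\vgam \in \GamMn(\mtheta)$, and by the first inequality in~\eqref{sec:1:eqn:58} we then have $\CM{n}(\vgam) > 0$), leaving
\begin{align*}
  \frac{1}{(\perm(\mtheta))^M}
    \sum_{\vgam \in \GamMn} \mtheta^{M\cdot\vgam} \cdot \CscSgen{M}{n}(\vgam)
  &= \frac{(\permscsM{M}(\mtheta))^M}{(\perm(\mtheta))^M},
\end{align*}
where the equality uses~\eqref{sec:1:eqn:168} from Lemma~\ref{lem: expression of permanents w.r.t. C}. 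Raising both sides to the power $-1/M$ yields the claimed formula.

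There is no real obstacle here beyond bookkeeping: one only needs to verify (as in the proof of Lemma~\ref{lemma:ratio:perm:permbM:1}) that the support restrictions are consistent, namely that a $\vsigma_{[M]}$ contributes to the sum only when each $\sigma_m \in \setS_{[n]}(\mtheta)$, which in turn guarantees $\supp(\vgam) \subseteq \supp(\mtheta)$, so the division by $\CM{n}(\vgam)$ is always justified on the support where $\mtheta^{M\cdot\vgam} \neq 0$. Apart from this, the argument is a one-line change of variables and is essentially identical to the proof of Lemma~\ref{lemma:ratio:perm:permbM:1} found in Appendix~\ref{app:proof:lemma:ratio:perm:permbM:1}.
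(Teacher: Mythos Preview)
Your proposal is correct and follows exactly the approach the paper takes: the paper's proof simply states that it is analogous to the proof of Lemma~\ref{lemma:ratio:perm:permbM:1}, with $\permbM{M}(\mtheta)$ and $\CBM{n}(\cdot)$ replaced by $\permscsM{M}(\mtheta)$ and $\CscSgen{M}{n}(\cdot)$, which is precisely the substitution you carry out.
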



\begin{proof}
  The proof of this statement is analogous to the proof of
  Lemma~\ref{lemma:ratio:perm:permbM:1} in
  Appendix~\ref{app:proof:lemma:ratio:perm:permbM:1}; one simply has to
  replace $\permbM{M}(\mtheta)$ and $\CBM{n}(\ldots)$ by
  $\permscsM{M}(\mtheta)$ and $\CscSgen{M}{n}(\ldots)$, respectively.
\end{proof}


\noindent
\textbf{Alternative proof of the inequalities in~\eqref{sec:1:eqn:200} in
  Theorem~\ref{th:main:permanent:inequalities:1}}.

\noindent
Based on Lemma~\ref{lemma:ratio:perm:permscM:1}, an alternative proof of the
inequalities in~\eqref{sec:1:eqn:200} can be given that is very similar to the
alternative proof presented
in Appendix~\ref{apx: alternative proof of perm geq permb}; 
the details are omitted.
\etheoremproof


\medskip

Note that the second inequality in~\eqref{sec:1:eqn:200} is tight. Indeed,
choosing the matrix $\mtheta$ to be diagonal with strictly positive diagonal
entries shows that
\begin{align*}
  \frac{ \perm(\mtheta) }{ \permscsM{M}(\mtheta) }
  = \frac{M^{n}}{(M!)^{n/M}}, \qquad M \in \sZpp.
\end{align*}
Furthermore,
\begin{align*}
  \frac{ \perm(\mtheta) }{\permscs(\mtheta)} 
  = \liminf_{M \to \infty} \frac{ \perm(\mtheta) }{ \permscsM{M}(\mtheta) }
  = \e^n.
\end{align*}


On the other hand, the first inequality in~\eqref{sec:1:eqn:200} is \emph{not}
tight for finite $M \geq 2$. (The proof is similar to the proof of the
non-tightness of the second inequality in~\eqref{SEC:1:EQN:147} for finite
$M \geq 2$.) This is in contrast to the case $M \to \infty$, where it is known
that for $\mtheta$ being equal to the all-one matrix of size $n \times n$ it
holds that
\begin{align*}
  \frac{ \perm(\mtheta) }{ \permscs(\mtheta) } 
  = \liminf_{M \to \infty} 
  \frac{ \perm(\mtheta) }{ \permscsM{M}(\mtheta) }
  = \e^n \cdot \frac{n!}{n^n}.
\end{align*}


\ifx\withclearpagecommands\x
\clearpage
\fi

\section[Asymptotic Expressions for the 
              Coefficients]{Asymptotic Expressions for the 
              Coefficients \texorpdfstring{$\CM{n}( \vgam )$, 
                                   $\CBM{n}( \vgam )$, $\CscSgen{M}{n}( \vgam )$}{}}
\label{sec:asymptotic:1}


In this section, we prove
Proposition~\ref{prop:coefficient:asymptotitic:characterization:1}. Whereas
Definitions~\ref{sec:1:def:18:part:2}, \ref{sec:1:def:16},
and~\ref{sec:1:def:20} give an analytical definition of the entropy functions
$\HG'$, $ \HBthe $, and $ \HscSthe $, respectively, the expressions in
Eqs.~\eqref{sec:1:eqn:125}--\eqref{sec:1:eqn:207} in
Proposition~\ref{prop:coefficient:asymptotitic:characterization:1} give a
combinatorial characterization of these entropy functions. The proof of
Proposition~\ref{prop:coefficient:asymptotitic:characterization:1} is based on
standard results of the method of types and considerations that are similar to
the considerations in~\cite[Sections~IV and~VII]{Vontobel2013}.



\begin{definition}
  \label{sec:1:def:17}

  Fix some $ M \in \sZpp $ and $ \vgam \in \Gamma_{n} $. 
  Recall the definition of the notations
  $\vsigma_{[M]}$ and $\mP_{\vsigma_{[M]}}$ in
  Definition~\ref{sec:1:def:10}. We define the following objects.
  \begin{enumerate}

  \item Let $\mP_{\vsigma_{[M]}} \in \setA(\vgam)^{M}$. The type of
    $ \mP_{\vsigma_{[M]}} $ is defined to be
    \begin{align*}
      \bm{t}_{\mP_{\vsigma_{[M]}}}
        &\defeq 
           \bigl(
             t_{ \mP_{\vsigma_{[M]}} }( \mP_{\sigma} ) 
          \bigr)_{ \! \mP_{\sigma} \in \setA(\vgam) }
           \in \PiA{\vgam} ,
    \end{align*}
    where
    \begin{align*}
      t_{\mP_{\vsigma_{[M]}}}( \mP_{\sigma} )
        &\defeq
           \frac{1}{M} 
           \cdot 
           \Bigl| 
             \bigl\{ 
               m \in [M] 
             \bigm|
               \mP_{\sigma_{m}} = \mP_{\sigma} 
             \bigr\}
           \Bigr|.
    \end{align*}

  \item Let $ \set{B}_{\setA(\vgam)^{M}} $ be the set of all the possible types that are based on the elements in $ \setA(\vgam)^{M} $:
    \begin{align*}
      \set{B}_{\setA(\vgam)^{M}}
        &\defeq 
           \left\{ 
             \bm{t} 
               \defeq 
                 \Bigl( t( \mP_{\sigma} ) \Bigr)_{ \!\! \mP_{\sigma} \in \setA(\vgam) }
           \ \middle|
             \begin{array}{c}
               \exists \mP_{\vsigma_{[M]}} \in \setA(\vgam)^{M} \\
               \text{s.t.} \ \bm{t}_{\mP_{\vsigma_{[M]}}} = \bm{t}
             \end{array}
           \right\}.
    \end{align*}

  \item Let $ \bm{t} \in \set{B}_{\setA(\vgam)^{M}} $. The type class of
    $ \bm{t} $ is defined to be
    \begin{align*}
      \set{T}_{\bm{t}}
        &\defeq 
           \bigl\{ 
             \mP_{\vsigma_{[M]}} \in \setA(\vgam)^{M}
           \bigm|
             \bm{t}_{\mP_{\vsigma_{[M]}}} = \bm{t}
           \bigr\}.
    \end{align*}

  \end{enumerate}
  \edefinition
\end{definition}


\begin{lemma}
  \label{sec:1:lem:21}

  Consider $ M \in \sZpp $ and $ \vgam \in \Gamma_{n} $. It holds that
  \begin{align*}
    \bigl| \set{B}_{\setA(\vgam)^{M}} \bigr|
      &= \binom{M + |\setA(\vgam)| - 1}{|\setA(\vgam)| - 1} 
           \in O\bigl( M^{|\setA(\vgam)|-1} \bigr) , \nonumber \\
    |\set{T}_{\bm{t}}| 
      &= \frac{ M! }
              { \prod\limits_{ \mP_{\sigma} \in \setA(\vgam) } 
                  \bigl( M \cdot t(\mP_{\sigma}) \bigr)! } ,
           \qquad \bm{t} \in \set{B}_{\setA(\vgam)^{M}}.
  \end{align*}
\end{lemma}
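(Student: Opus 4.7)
The plan is to recognize that Lemma~\ref{sec:1:lem:21} is a specialization of the standard counting results from the method of types (see, \eg,~\cite[Section~11.1]{Cover2006}) to sequences in $\setA(\vgam)^{M}$, where the ``alphabet'' is the finite set $\setA(\vgam)$ rather than an arbitrary finite alphabet. Once this reformulation is made explicit, both identities follow by elementary combinatorics, and the only real care needed is to handle the notational translation and to justify the asymptotic bound $O\bigl( M^{|\setA(\vgam)|-1} \bigr)$.

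For the first identity, I would observe that a type $\bm{t} \in \set{B}_{\setA(\vgam)^{M}}$ is fully determined by the vector of non-negative integers $\bigl( M \cdot t(\mP_{\sigma}) \bigr)_{ \mP_{\sigma} \in \setA(\vgam) }$, which by definition of $\set{B}_{\setA(\vgam)^{M}}$ sums to $M$. Conversely, every such integer vector summing to $M$ is realized by at least one sequence $\mP_{\vsigma_{[M]}} \in \setA(\vgam)^{M}$, namely the one that simply concatenates $M \cdot t(\mP_{\sigma})$ copies of each $\mP_{\sigma}$. Hence $\bigl| \set{B}_{\setA(\vgam)^{M}} \bigr|$ equals the number of weak compositions of $M$ into $|\setA(\vgam)|$ parts, which is the stars-and-bars count $\binom{M + |\setA(\vgam)| - 1}{|\setA(\vgam)| - 1}$. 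The asymptotic bound then follows because for fixed $|\setA(\vgam)|$, this binomial coefficient is a polynomial in $M$ of degree $|\setA(\vgam)| - 1$.

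For the second identity, I would fix $\bm{t} \in \set{B}_{\setA(\vgam)^{M}}$ and count sequences $\mP_{\vsigma_{[M]}} \in \setA(\vgam)^{M}$ with $\bm{t}_{\mP_{\vsigma_{[M]}}} = \bm{t}$. Such a sequence is specified by choosing, for each $\mP_{\sigma} \in \setA(\vgam)$, which $M \cdot t(\mP_{\sigma})$ of the $M$ index positions in $[M]$ are assigned the value $\mP_{\sigma}$. Since these choices must partition $[M]$ into blocks of sizes $\bigl( M \cdot t(\mP_{\sigma}) \bigr)_{ \mP_{\sigma} \in \setA(\vgam) }$, the count is the multinomial coefficient
\begin{align*}
  |\set{T}_{\bm{t}}|
    &= \binom{M}{ \bigl( M \cdot t(\mP_{\sigma}) \bigr)_{ \mP_{\sigma} \in \setA(\vgam) } }
     = \frac{ M! }
            { \prod\limits_{ \mP_{\sigma} \in \setA(\vgam) }
                \bigl( M \cdot t(\mP_{\sigma}) \bigr)! }.
\end{align*}

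There is no real obstacle here; the argument is routine once one translates Definition~\ref{sec:1:def:17} into the language of empirical distributions on a finite alphabet. The only mild subtlety is being precise about the fact that $\setA(\vgam)$, not $\setS_{[n]}$, is the effective alphabet (so that the type components $t(\mP_{\sigma})$ are automatically zero for $\mP_{\sigma} \notin \setA(\vgam)$, and no vacuous factors appear in the denominator). I would expect the entire proof to fit in roughly half a page.
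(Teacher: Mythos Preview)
Your proposal is correct and is exactly the standard stars-and-bars / multinomial counting argument the paper has in mind; in fact the paper's own proof simply states ``These are standard combinatorial results and so the details of the derivation are omitted,'' so you have supplied more detail than the authors do.
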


\begin{proof}
  These are standard combinatorial results and so the details of the
  derivation are omitted.
\end{proof}


\noindent
\textbf{Proof sketch of
  Proposition~\ref{prop:coefficient:asymptotitic:characterization:1}}.
\begin{itemize}

\item The expression in~\eqref{sec:1:eqn:125} can be proven as
  follows. Namely, let $\vgam \in \GamMn$. For a large integer $M$, we obtain
  \begin{align*}
    \CM{n}( \vgam )
      &\overset{(a)}{=}
         \sum\limits_{ \vsigma_{[M]} \in \setS_{[n]}^{M} }
           \Bigl[ 
             \vgam = \bigl\langle \mP_{\sigma_{m}} \bigr\rangle_{m \in [M]}
           \Bigr] \\
      &\overset{(b)}{=}
         \sum\limits_{ \bm{t} \in \set{B}_{\setA(\vgam)^{M}} }
           \underbrace
           {
             \sum\limits_{ \mP_{\vsigma_{[M]}} \in \set{T}_{\bm{t}}}
               \Bigl[ 
                 \vgam = \bigl\langle \mP_{\sigma_{m}} \bigr\rangle_{m \in [M]}
               \Bigr]
           }_{\defeq \ (*)} \\
      &\overset{(c)}{=}
         \sum\limits_{ \bm{t} \in \set{B}_{\setA(\vgam)^{M}} \cap \PiA{\vgam}( \vgam )}
           |\set{T}_{\bm{t}}| \\
      &\overset{(d)}{=}
         \sum\limits_{ \bm{t} \in \set{B}_{\setA(\vgam)^{M}} \cap \PiA{\vgam}( \vgam )}
           \exp\bigl( M \cdot \HG( \bm{t} ) + o( M ) \bigr) \\
      &\overset{(e)}{=}
         \max\limits_{ \bm{t} \in \set{B}_{\setA(\vgam)^{M}} \cap \PiA{\vgam}( \vgam )}
           \exp\bigl( M \cdot \HG( \bm{t} ) + o( M ) \bigr)  ,
  \end{align*}
  where step~$(a)$ follows from the definition of $\CM{n}( \vgam )$
  in~\eqref{sec:1:eqn:56}, where step~$(b)$ follows from the definition of
  $\set{B}_{\setA(\vgam)^{M}}$ and $\set{T}_{\bm{t}}$, where
  step~$(c)$ follows from the observation that $(*)$ equals
  $|\set{T}_{\bm{t}}|$ if 
  $\bm{t} \in \set{B}_{\setA(\vgam)^{M}} \cap \PiA{\vgam}(\vgam)$ and equals zero
  if $\bm{t} \in \set{B}_{\setA(\vgam)^{M}} \setminus \PiA{\vgam}(\vgam)$,
  where step~$(d)$ follows from 
  Stirling's approximation of the factorial function (see, \eg,~\cite{Robbins1955}), 
  Lemma~\ref{sec:1:lem:21}, and the definition of
  $\HG$ in Definition~\ref{sec:1:def:18}, and where step~$(e)$ follows from
  the fact that the size of the set $\set{B}_{\setA(\vgam)^{M}}$ is polynomial
  in $M$ (see Lemma~\ref{sec:1:lem:21}) and therefore can be absorbed in the
  $o(M)$ term in the exponent.
  
  In the limit $M \to \infty$, the set $\set{B}_{\setA(\vgam)^{M}}$ is dense
  in $\PiA{\vgam}( \vgam )$, and so
  \begin{align*}
    \CM{n}( \vgam )
      &= \max\limits_{ \bm{t} \in \PiA{\vgam}( \vgam )}
           \exp\bigl( M \cdot \HG( \bm{t} ) + o( M ) \bigr) \\
      &\overset{(a)}{=}
         \exp\bigl( M \cdot \HG'( \vgam ) + o( M ) \bigr) ,
  \end{align*}
  where step~$(a)$ used the definition of $\HG'$ in
  Definition~\ref{sec:1:def:18:part:2}.

\item For the proof of the expression in~\eqref{sec:1:eqn:206},
  see~\cite[Section~VII]{Vontobel2013}.

\item A similar line of reasoning as above can be used to prove the expression
  in~\eqref{sec:1:eqn:207}; the details are omitted.

  \epropositionproof

\end{itemize}


\ifx\withclearpagecommands\x
\clearpage
\fi

\section[Results for the \texorpdfstring{Degree-$2$}{} Case]{Results for the Permanent and the \texorpdfstring{Degree-$2$ Bethe Permanent}{}}
\label{apx:22}


The upcoming proposition, Proposition~\ref{prop:ratio:perm:permbM:2:1}, is the
same as~\cite[Proposition~2]{KitShing2022}. The purpose of this section is
to rederive this proposition based on some of the results that have been
established so far in this thesis.


\begin{proposition}
  \label{prop:ratio:perm:permbM:2:1}
  \label{PROP:RATIO:PERM:PERMBM:2:1}
  Let $\pmtheta$ be the probability mass function on $\setS_{[n]}$ induced
  by~$\mtheta$ (see Definition~\ref{sec:1:def:10}). It holds that
  \begin{align}
    \frac{ \perm(\mtheta) }{ \permbM{2}(\mtheta) }
      &= \Biggl(
            \sum\limits_{ \sigma_1,\sigma_{2} \in \setS_{[n]} } 
              \pmtheta( \sigma_1 )
              \cdot
              \pmtheta( \sigma_{2} ) 
              \cdot
              2^{-c(\sigma_1,\sigma_2)}
         \Biggr)^{\!\!\! -1/2} ,
           \label{eq:ratio:perm:permbM:2:result:1}
  \end{align}
  where $c(\sigma_1,\sigma_2)$ is the number of cylces of length larger than
  one in the cycle notation
  expression\footnote{\label{footnote:cycle:notation:example:1}%
    Let us recall the cycle notation for permutations with the help of an
    example. Namely, consider the permutation $\sigma: [7] \to [7]$ defined by
    $\sigma(1) = 1$, $\sigma(2) = 2$, $\sigma(3) = 4$, $\sigma(4) = 3$,
    $\sigma(5) = 6$, $\sigma(6) = 7$, $\sigma(7) = 5$. In terms of the cycle
    notation, $\sigma = (1)(2)(34)(567)$. As can be seen from this expression,
    the permutation $\sigma$ contains two cycles of length $1$ (one cycle
    involving only~$1$ and one cycle involving only~$2$), one cycle of length
    $2$ (involving $3$ and $4$), and one cycle of length $3$ (involving $5$,
    $6$, and $7$).} of the permutation ${\sigma_1 \circ \sigma_2^{-1}}$. Here,
  $\sigma_1 \circ \sigma_2^{-1}$ represents the permutation that is obtained
  by first applying the inverse of the permutation $\sigma_2$ and then the
  permutation $\sigma_1$.
  \eproposition
 \end{proposition}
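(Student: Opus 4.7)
The plan is to deduce Proposition~\ref{prop:ratio:perm:permbM:2:1} directly from Lemma~\ref{lemma:ratio:perm:permbM:1} specialized to $M = 2$, by explicitly evaluating the ratio $\CBtwo{n}(\vgam) / \Ctwo{n}(\vgam)$ when $\vgam = \bigl\langle \mP_{\sigma_1}, \mP_{\sigma_2} \bigr\rangle = \tfrac{1}{2}( \mP_{\sigma_1} + \mP_{\sigma_2} )$. Setting $M = 2$ in~\eqref{eq:ratio:perm:permbM:1} gives
\begin{align*}
\frac{\perm(\mtheta)}{\permbM{2}(\mtheta)}
  = \Biggl(
      \sum\limits_{\sigma_1,\sigma_2 \in \setS_{[n]}}
        \pmtheta(\sigma_1) \cdot \pmtheta(\sigma_2)
        \cdot
        \frac{\CBtwo{n}(\vgam)}{\Ctwo{n}(\vgam)}
    \Biggr)^{\!\! -1/2} ,
\end{align*}
so it suffices to prove $\CBtwo{n}(\vgam) / \Ctwo{n}(\vgam) = 2^{-c(\sigma_1,\sigma_2)}$ for $\vgam = \tfrac{1}{2}(\mP_{\sigma_1}+\mP_{\sigma_2})$.

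For the numerator, I would substitute $M=2$ into the closed-form expression for $\CBM{n}(\vgam)$ in Definition~\ref{sec:1:lem:43} and use the fact that every entry of $\vgam$ is $0$, $1/2$, or~$1$: let $a \defeq |\{i : \sigma_1(i) = \sigma_2(i)\}|$ be the number of agreements. Then there are exactly $a$ entries equal to~$1$, exactly $2(n-a)$ entries equal to~$1/2$, and $n^2 - 2n + a$ entries equal to~$0$. Multiplying the corresponding factors $\tfrac{0!}{2!}$, $\tfrac{1!}{1!}$, and $\tfrac{2!}{0!}$, the exponent of $2$ telescopes to $n^2 - 2n$, which exactly cancels the prefactor $(2!)^{2n-n^2}$, yielding $\CBtwo{n}(\vgam) = 1$ independently of $\sigma_1,\sigma_2$.

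For the denominator, I would give a combinatorial argument identifying $\Ctwo{n}(\vgam)$, which by~\eqref{sec:1:eqn:56} counts ordered pairs $(\sigma_1',\sigma_2') \in \setS_{[n]}^2$ with $\mP_{\sigma_1'} + \mP_{\sigma_2'} = \mP_{\sigma_1} + \mP_{\sigma_2}$, with the cycle structure of $\tau \defeq \sigma_1 \circ \sigma_2^{-1}$. View $\mP_{\sigma_1}+\mP_{\sigma_2}$ as a $2$-regular bipartite multigraph on $[n] \sqcup [n]$, where the positions with $\sigma_1(i) = \sigma_2(i)$ form forced double edges and the remaining edges decompose into even-length simple cycles. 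These cycles are in natural bijection with the non-trivial cycles of $\tau$; within each such cycle there are exactly two ways to orient (i.e., split) the edges into two perfect matchings, corresponding to a permutation pair $(\sigma_1', \sigma_2')$. Independent choices across the $c(\sigma_1,\sigma_2)$ cycles give $\Ctwo{n}(\vgam) = 2^{c(\sigma_1,\sigma_2)}$.

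Combining the two evaluations yields $\CBtwo{n}(\vgam)/\Ctwo{n}(\vgam) = 2^{-c(\sigma_1,\sigma_2)}$, and substituting into the expression above completes the proof. The main subtlety is the cycle-decomposition step: one has to verify carefully that the non-trivial cycles of $\sigma_1 \circ \sigma_2^{-1}$ correspond precisely to the connected components of the symmetric-difference multigraph of $\mP_{\sigma_1}$ and $\mP_{\sigma_2}$, and that each such component admits exactly two ordered decompositions into two permutation matrices (which are distinct because the component has length $\geq 2$). Everything else is a direct manipulation of the definitions and of Lemma~\ref{lemma:ratio:perm:permbM:1}.
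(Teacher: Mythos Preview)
Your proposal is correct and matches the paper's own alternative proof essentially step for step: the paper also specializes Lemma~\ref{lemma:ratio:perm:permbM:1} to $M=2$ and then invokes the two identities $\CBtwo{n}\bigl(\tfrac{1}{2}(\mP_{\sigma_1}+\mP_{\sigma_2})\bigr)=1$ and $\Ctwo{n}\bigl(\tfrac{1}{2}(\mP_{\sigma_1}+\mP_{\sigma_2})\bigr)=2^{c(\sigma_1,\sigma_2)}$, deferring their verification to an appendix. Your sketched arguments for both identities (the factorial count for $\CBtwo{n}$ and the bipartite $2$-regular multigraph decomposition for $\Ctwo{n}$) are exactly the kind of computations one expects there, and they are correct.
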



Note that $c(\sigma_1,\sigma_2)$ satisfies\footnote{While the lower bound on
  $c(\sigma_1,\sigma_2)$ is immediately clear, the upper bound on
  $c(\sigma_1,\sigma_2)$ follows from permutations of $[n]$ that contain
  $\lfloor n/2 \rfloor $ cycles of length~$2$.}
\begin{align*}
  0 
    &\leq 
       c(\sigma_1,\sigma_2) 
     \leq 
       \Bigl\lfloor \frac{n}{2} \Bigr\rfloor  , 
       \qquad \sigma_1,\sigma_{2} \in \setS_{[n]}(\mtheta).
\end{align*}
These bounds on $c(\sigma_1,\sigma_2)$ yield
\begin{align}
  1
    &\overset{(a)}{\leq}
       \frac{ \perm(\mtheta) }{ \permbM{2}(\mtheta) }
     \overset{(b)}{\leq}
       \sqrt{2^{\lfloor n/2 \rfloor}}
     \leq
        2^{n/4} ,
          \label{eq:bounds:on"ratio:perm:permbM:2:1}
\end{align}
where the inequality~$(a)$ is obtained by lower bounding the expression
in~\eqref{eq:ratio:perm:permbM:2:result:1} with the help of
$c(\sigma_1,\sigma_2) \geq 0$ and where the inequality~$(b)$ is obtained by
upper bounding the expression in~\eqref{eq:ratio:perm:permbM:2:result:1} with
the help of $c(\sigma_1,\sigma_2) \leq \lfloor n/2 \rfloor$. Note that the
lower and upper bounds in~\eqref{eq:bounds:on"ratio:perm:permbM:2:1} on the
ratio $\perm(\mtheta) / \permbM{2}(\mtheta)$ match exactly the lower and upper
bounds in~\eqref{SEC:1:EQN:147} for $M = 2$.


\begin{remark}
  A variant of Proposition~\ref{prop:ratio:perm:permbM:2:1} (and with that
  of~\cite[Proposition 2]{KitShing2022}), appears in~\cite[Lemma
  4.2]{Csikvari2017}. The setup in~\cite{Csikvari2017} is more special than
  that in~\cite{KitShing2022} in the sense that Csikv{\'a}ri
  in~\cite{Csikvari2017} considered $\{0,1\}$-valued matrices only, not
  arbitrary non-negative square matrices. However, the result in~\cite{Csikvari2017}
  is more general than the result in~\cite{KitShing2022} in the sense that
  Csikv{\'a}ri studied matchings of size $k \in [n]$, and not just matchings
  of size $n$ (\ie, perfect matchings).  \eremark
\end{remark}


\noindent
\textbf{Alternative proof Proposition~\ref{prop:ratio:perm:permbM:2:1}.} 
Let us now show how the results in this thesis can be used to rederive
Proposition~\ref{prop:ratio:perm:permbM:2:1}. Namely, we obtain
\begin{align*}
  \frac{\perm(\mtheta)}
       {\permbM{2}(\mtheta)}
    &\overset{(a)}{=}
       \left( \!
         \sum\limits_{ \sigma_1, \sigma_2 \in \setS_{[n]} }
           \pmtheta(\sigma_1)
           \cdot
           \pmtheta(\sigma_2)
           \cdot
           \frac{\CBtwo{n}\Bigl( 
                            \frac{1}{2} 
                            \bigl( 
                              \mP_{\sigma_1} \! + \! \mP_{\sigma_2}
                            \bigr)
                          \Bigr)}
                {\Ctwo{n}\Bigl( 
                           \frac{1}{2} 
                           \bigl( 
                             \mP_{\sigma_1} \! + \! \mP_{\sigma_2}
                           \bigr)
                         \Bigr)}
       \right)^{\!\!\! -1/2} \\
    &\overset{(b)}{=}
       \Biggl( 
         \sum\limits_{ \sigma_1, \sigma_2 \in \setS_{[n]} }
           \pmtheta(\sigma_1)
           \cdot
           \pmtheta(\sigma_2)
           \cdot
           \frac{1}
                {2^{c(\sigma_1,\sigma_2)}}
       \Biggr)^{\!\!\! -1/2} ,
\end{align*}
which is the same expression as
in~\eqref{eq:ratio:perm:permbM:2:result:1}. Here, step~$(a)$ follows from
Lemma~\ref{lemma:ratio:perm:permbM:1} for $M = 2$ and step~$(b)$ follows from
\begin{alignat}{2}
  \CBtwo{n}\biggl( 
             \frac{1}{2} 
             \bigl( 
               \mP_{\sigma_1} \! + \! \mP_{\sigma_2} 
             \bigr)
           \biggr)
    &= 1  , 
         &&\quad \sigma_1, \sigma_2 \in \setS_{[n]} ,
             \label{eq:CBtwo:result:1} \\
  \Ctwo{n}\biggl( 
            \frac{1}{2} 
            \bigl( 
              \mP_{\sigma_1} \! + \! \mP_{\sigma_2} 
            \bigr)
          \biggr)
    &= 2^{c(\sigma_1,\sigma_2)},
         &&\quad \sigma_1, \sigma_2 \in \setS_{[n]}.
             \label{eq:Ctwo:result:1}
\end{alignat}
Eqs.~\eqref{eq:CBtwo:result:1} and~\eqref{eq:Ctwo:result:1} are established in
Appendix~\ref{app:alt:proof:prop:ratio:perm:permbM:2:1}.
\epropositionproof

\ifx\withclearpagecommands\x
\clearpage
\fi



\chapter{Loop-calculus Transform (LCT)}
\label{chapt: LCT}
\label{CHAPT: LCT}

\begin{figure}


  \centering
  \captionsetup{font=scriptsize}
  \subfloat[\label{sec:LCT:fig:3}]{
    \begin{minipage}[t]{0.9\textwidth}
      \centering
      \begin{tikzpicture}[on grid,auto]
        \input{figures/lct_examples/scale_small.tex}
        \input{figures/head_files_figs.tex}
        \input{figures/lct_examples/snfg/background_nodes.tex}
        \input{figures/lct_examples/snfg/background_lines.tex}
        \begin{pgfonlayer}{main} 
            \node (x1) at (0,-0.2*\sdis) 
            [label=above: $x_{1}$] {}; 
            \node (x3) at (-0.5*\ldis-2.1*\sdis,1.3*\sdis) 
            [label=above: $x_{2}$] {}; 
            \node (x4) at (-0.5*\ldis-2.1*\sdis,-1.3*\sdis) 
            [label=below: $x_{3}$] {};

            \node (x3) at (0.5*\ldis+2.1*\sdis,1.3*\sdis) 
            [label=above: $x_{4}$] {}; 
            \node (x4) at (0.5*\ldis+2.1*\sdis,-1.3*\sdis) 
            [label=below: $x_{5}$] {}; 
        \end{pgfonlayer}
      \end{tikzpicture}
    \end{minipage}
  }

  \subfloat[\label{sec:LCT:fig:4}]{
    \begin{minipage}[t]{0.9\textwidth}
      \centering
      \begin{tikzpicture}[on grid,auto]
        \input{figures/lct_examples/scale_large_snfg.tex}
        \input{figures/head_files_figs.tex}
        \input{figures/lct_examples/snfg/background_nodes.tex}
        \input{figures/lct_examples/snfg/background_lines.tex}
        \input{figures/lct_examples/otb_funs.tex}
        \input{figures/lct_examples/otb_dashed_boxes.tex}
        \input{figures/lct_examples/snfg/variables_after_lct.tex}
      \end{tikzpicture}
    \end{minipage}
  }\\
  \subfloat[\label{sec:LCT:fig:5}]{
    \begin{minipage}[t]{0.9\textwidth}
      \centering
      \begin{tikzpicture}[on grid,auto]
        \input{figures/lct_examples/scale_large_snfg.tex}
        \input{figures/head_files_figs.tex}
        \input{figures/lct_examples/snfg/background_nodes.tex}
        \input{figures/lct_examples/snfg/background_lines.tex}
        \input{figures/lct_examples/otb_funs.tex}
        \input{figures/lct_examples/ctb_dashed_boxes.tex}
        \input{figures/lct_examples/snfg/variables_before_lct.tex}
        \input{figures/lct_examples/snfg/variables_after_lct.tex}
      \end{tikzpicture}
    \end{minipage}
  }

  \subfloat[\label{sec:LCT:fig:6}]{
    \begin{minipage}[t]{0.9\textwidth}
      \centering
      \begin{tikzpicture}[on grid,auto]
        \input{figures/lct_examples/scale_small.tex}
        \input{figures/head_files_figs.tex}
        \input{figures/lct_examples/snfg/background_nodes_after_ctb.tex}
        \input{figures/lct_examples/snfg/background_lines.tex}
        \input{figures/lct_examples/snfg/variables_after_lct.tex}
      \end{tikzpicture}
    \end{minipage}
  }
  \caption{Exemplifying the LCT for a part of an example
  S-NFG.}\label{sec:LCT:fig:1}
\end{figure}

Loop calculus, as defined by Chertkov and Chernyak~\cite{Chertkov2006,
  Chernyak2007} and further developed by Mori~\cite{Mori2015}, can be
formulated as a certain holographic transform~\cite{AlBashabsheh2011} applied
to an S-NFG, and will henceforth be called the loop-calculus transform
(LCT). After applying the LCT to an S-NFG, the partition function equals the sum of the Bethe partition function and some correction terms, where the sum of the correction terms are related to a weighted sum over all generalized loops in the S-NFG. 

In this chapter, we introduce the LCT for both S-NFGs and DE-NFGs. 
The LCT for DE-NFGs has been discussed by Cao in~\cite[Theorem 2.21]{Cao2021}. Let us make a comparison between the LCT proposed in~\cite{Mori2015,Cao2021} and our proposed LCT.
\begin{itemize}
  \item Mori in~\cite{Mori2015} considered S-NFGs and defined
  the LCT based on the SPA fixed-point message vector that consists of positive-valued entries only. Note that he did not define the LCT for the SPA fixed-point message vector that consists of at least one zero-valued entry.

  \item Cao in~\cite{Cao2021} considered DE-NFGs and defined the LCT for DE-NFGs based on the SPA fixed-point message vector 
  where the entries are allowed to take values in the field of complex numbers.
  However, he only
  stated the constraints that the LCT has to satisfy, 
  and he did not provide explicit expressions that enable us to applying the LCT 
  for DE-NFGs.

  \item In this section, we explicitly define the LCT for both S-NFGs and DE-NFGs. Our proposed LCT is defined based on the SPA fixed-point message vector whose entries can take values in the field of complex numbers, \ie, the SPA fixed-point message vector is allowed to have zero-valued entries. Our proposed LCT generalizes the results in~\cite{Mori2015,Cao2021}.

  \item In Chapter~\ref{CHAPT: GCT DENFG}, we use the LCT for DE-NFGs to prove the graph-cover theorem, \ie, Conjecture~\ref{sec:GraCov:conj:1} for a class of DE-NFGs that satisfy an easily checkable condition.
\end{itemize}


\section{LCT for S-NFGs}
\label{sec:LCT:SNFG:1}



Before giving the general definition of the LCT for S-NFGs, we consider a
specific S-NFG $\graphN$ and assume that a part of it looks as shown in
Fig.~\ref{sec:LCT:fig:3}. Let
$\vmu $ be some SPA
fixed-point message vector for $\graphN$. The LCT is a sequence of
modifications to the original S-NFG that leave the partition function
unchanged. This is done by applying opening-the-box and closing-the box
operations~\cite{Loeliger2004} as follows.
\begin{enumerate}

\item The S-NFG in Fig.~\ref{sec:LCT:fig:4} is obtained from the S-NFG
  in Fig.~\ref{sec:LCT:fig:3} by applying suitable opening-the-box
  operations~\cite{Loeliger2004}. Importantly, the new function nodes $h_i$
  are based on $\vmu$ and are defined in such a way that the partition function remains unchanged.

  For example, for the edge $1$ that connects the function nodes $f_{i}$ and $f_{j}$, as shown in Fig.~\ref{sec:LCT:fig:3}, we
  introduce the function nodes $h_3$ and $h_8$ with the local functions
  $h_3: \set{X}_{1} \times \LCTset{X}_{1} \to \sC$ and
  $h_8: \set{X}_{1} \times \LCTset{X}_{1} \to \sC$, where
  $\LCTset{X}_{1} \defeq \set{X}_{1} $. The partition function of the S-NFG remains unchanged because $h_3$ and $h_8$ are
  defined such that the following equalities hold
  \begin{align*}
    \sum\limits_{\LCT{x}_{1} \in \LCTset{X}_{1}} 
    h_3(x_{1,f_{i}}, \LCT{x}_{1}) \cdot h_8(x_{1,f_{j}}, \LCT{x}_{1}) =
    [x_{1,f_{i}} \! = \! x_{1,f_{j}}], \qquad 
    x_{1,f_{i}}, x_{1,f_{j}} \in \set{X}_{1}.
  \end{align*}

\item The S-NFG in Fig.~\ref{sec:LCT:fig:5} is essentially the same as
  the S-NFG in Fig.~\ref{sec:LCT:fig:4}.
  
\item The S-NFG in Fig.~\ref{sec:LCT:fig:6} is obtained from the S-NFG
  in Fig.~\ref{sec:LCT:fig:5} by applying suitable closing-the-box
  operations~\cite{Loeliger2004} to the S-NFG in
  Fig.~\ref{sec:LCT:fig:5}. Consider the dashed boxes on the left-hand side in Fig.~\ref{sec:LCT:fig:5}. Its exterior function is defined to be the sum, over the internal variables, of the product of the internal local functions. Replacing this dashed box by a single function node that represents this exterior function is known as closing-the-box operation~\cite{Loeliger2004}.
  For example, the local function
  $\LCT{f}_{i}$ with the set of adjacent edges $\setpfi = \{1,2,3\}$ is
  defined to be
  \begin{align*}
    \LCT{f}_{i}(\LCT{x}_{1}, \LCT{x}_{2}, \LCT{x}_{3})
      &\defeq \!\!\!\!\! 
      \sum\limits_{
          x_{1,f_{i}} \in \set{X}_{1}, x_{2,f_{i}} \in  \set{X}_{2}, 
          x_{3,f_{i}} \in \set{X}_{2}
        } \!\!\!\!\!\!\!
            f_{i}(x_{1,f_{i}}, x_{2,f_{i}}, x_{3,f_{i}})
            \cdot
            h_3(x_{1,f_{i}}, \LCT{x}_{1})
            \nonumber\\
            &\hspace{3.5 cm}
            \cdot
            h_2(x_{2,f_{i}}, \LCT{x}_{2})
            \cdot
            h_4(x_{3,f_{i}}, \LCT{x}_{3}).
  \end{align*}
  The partition function of the S-NFG remains unchanged due to the way that $\LCT{f}$ is defined.
  
\end{enumerate}


Let us now give the general definition of the LCT for an S-NFG $\graphN$ that
we use in this thesis, along with stating its main properties.


\begin{definition}
  \label{sec:LCT:def:1}
  \index{LCT!for S-NFG}
  Consider some S-NFG $\graphN$ and let
  $\vmu $ be some SPA
  fixed-point message vector. 
  In the following, for each $ e = (f_{i}, f_{j}) \in \setEfull $, we use the
  short-hand notation $Z_e$ for $Z_e(\vmu)$, where
  (see~\eqref{sec:SNFG:eqn:6})
  \begin{align*}
    Z_e
      = \sum\limits_{x_e} \mu_{\efi}(\xe) \cdot \mu_{\efj}(\xe).
  \end{align*}
  (Recall that in Assumption~\ref{asmp: assume messages are non-negative}, we assume that $ Z_e(\vmu) >0 $ for all $ e \in \setEfull $.) 
  We apply the following steps.
  \begin{enumerate}
  
  \item (generalization of the first step above) For every edge $e \in
    \setEfull$, we do the following.
  
    \begin{itemize}

    \item For every edge $e \in \setE$, we pick a special element of
      $\set{X}_e$. In order to simplify the notation, and without loss of
      generality, this special element will be called $0$ for all edges, \ie, we suppose that $ 0 \in \setxe $.
  
    \item We define the alphabet
      $\LCTset{X}_{e} \defeq \set{X}_e$.\footnote{The set $ \LCTset{X}_{e} $ can be different from $ \set{X}_e $. Essentially, the only requirement is that 
      $ |\LCTset{X}_{e}| = |\set{X}_e|$. In this thesis, we use the definition $ \LCTset{X}_{e} \defeq \set{X}_e $ for simplicity.}
  
    \item The edge $e$ (with associated variable $x_e$) connecting $f_{i}$ to $f_{j}$
      is replaced by the following alternating sequence of edges and function
      nodes: an edge with associated variable $x_{e,f_{i}} \in \set{X}_e$, a
      function node $M_{\efi}$, an edge with associated variable
      $\LCT{x}_e \in \LCTset{X}_{e}$, a function node $M_{\efj}$, and an edge
      with associated variable $x_{e,f_{j}} \in \set{X}_e$. Here:
      \begin{itemize}
  
      \item The function node $M_{\efi}$ is associated with the local function
        \begin{align*}
          M_{\efi}: \set{X}_e \times \LCTset{X}_{e} \to \sR.
        \end{align*}
      
      \item The function node $M_{\efj}$ is associated with the local function
        \begin{align*}
          M_{\efj}: \set{X}_e \times \LCTset{X}_{e} \to \sR.
        \end{align*}
  
      \end{itemize}
      In order to ensure that the partition function of the resulting factor graph equals the
      partition function of the original S-NFG, the local functions $M_{\efi}$
      and $M_{\efj}$ have to satisfy
      \begin{alignat}{2}
        \sum\limits_{\LCT{x}_e}
          M_{\efi}(x_{\efi}, \LCT{x}_e)
          \cdot
          M_{\efj}(x_{\efj}, \LCT{x}_e) 
          &= [x_{\efi} \! = \!  x_{\efj}],
                \qquad x_{e,f_i}, x_{e,f_{j}} \in \set{X}_e,
                  \label{sec:LCT:exp:1}
      \end{alignat}
      where $ \sum\limits_{\LCT{x}_e} $ represents $ \sum\limits_{\LCT{x}_e \in \LCTset{X}_{e}} $ for simplicity.
      Note that~\eqref{sec:LCT:exp:1} implies\footnote{Let
        $\matr{M}_1$ and $\matr{M}_2$ be square matrices satisfying
        $\matr{M}_1 \cdot \matr{M}_2^\tran = \matr{I}$, where $\matr{I}$ is
        the identity matrix of the corresponding size. From linear algebra it
        is well known that this implies
        $\matr{M}_2^\tran \cdot \matr{M}_1 = \matr{I}$. Transposing both
        sides, one obtains $\matr{M}_1^\tran \cdot \matr{M}_2 = \matr{I}$.}
      \begin{alignat}{2}
        \sum\limits_{x_e}
          M_{\efi}(\xe, \LCT{x}_{e,f_{i}})
          \cdot
          M_{\efj}(\xe, \LCT{x}_{e,f_{j}})
          &= \bigl[ \LCT{x}_{e,f_{i}} \! = \!  \LCT{x}_{e,f_{j}} \bigr]
          \qquad \LCT{x}_{e,f_{i}}, \LCT{x}_{e,f_{j}} \in \LCTset{X}_{e}.
          \label{sec:LCT:exp:2}
      \end{alignat}
  
    \item Let $\zeta_{\efi}$ and $\zeta_{\efj}$ be some arbitrary non-zero
      real-valued constants. At the heart of the LCT are the following definitions for 
      $M_{\efi}$ and $M_{\efj}$:
      \begin{alignat}{2}
        M_{\efi}(\xe,\LCT{x}_e)
          &= \zeta_{\efi}
               \cdot
               \mu_{\efi}(\xe),
                 \qquad x_e \in \set{X}_e, \ \LCT{x}_e = 0,
                     \label{sec:LCT:exp:3} \\
        M_{\efj}(\xe,\LCT{x}_e)
          &= \zeta_{\efj}
               \cdot
               \mu_{\efj}(\xe),
                 \qquad x_e \in \set{X}_e, \ \LCT{x}_e = 0.
                     \label{sec:LCT:exp:4}
      \end{alignat}
      Evaluating~\eqref{sec:LCT:exp:2} for $\LCT{x}_{\efi} = 0$ and
      $\LCT{x}_{\efj} = 0$, yields the following constraints on $\zeta_{\efi}$
      and $\zeta_{\efj}$:
      \begin{align}
        \zeta_{\efi}
        \cdot
        \zeta_{\efj}
          &= Z_e^{-1}.
               \label{sec:LCT:exp:5}
      \end{align}

    \item A possible choice for the remaining function values of $M_{\efi}$ and
      $M_{\efj}$ is given as follows:
      \begin{align}
        \nonumber\\
        \nonumber\\
        &\hspace*{-1.6 cm}
        M_{\efi}(\xe,\LCT{x}_e)
          \defeq \zeta_{\efi}
                \cdot
                \chi_{\efi}
          \nonumber\\
          &\hspace*{-1.5 cm} \cdot
              \begin{cases}
                  - \mu_{\efj}(\LCT{x}_e)
                    & x_e = 0, \ 
                       \LCT{x}_e \in \LCTset{X}_{e}\setminus\{0\} \\
                  \delta_{\efi} 
                    \cdot
                    [x_e \!=\! \LCT{x}_e] 
                  +
                  \epsilon_{\efi}
                    \cdot
                    \mu_{\efi}(\xe) 
                    \cdot 
                    \mu_{\efj}(\LCT{x}_e)
                    & x_e \in \set{X}_e \setminus \{ 0 \}, \ 
                       \LCT{x}_e \in \LCTset{X}_{e}\setminus\{0\}
                \end{cases} 
        \label{sec:LCT:eqn:14},\\
        &\hspace*{-1.6 cm}
        M_{\efj}(\xe,\LCT{x}_e)
          \defeq 
          \zeta_{\efj}
                \cdot
                \chi_{\efj}
          \nonumber\\
          &\hspace*{-1.5 cm} 
            \cdot
                \begin{cases}
                  - \mu_{\efi}(\LCT{x}_e)
                    & x_e = 0, \ 
                       \LCT{x}_e \in \LCTset{X}_{e}\setminus\{0\} \\
                  \delta_{\efj} 
                    \cdot
                    [x_e \!=\! \LCT{x}_e] 
                  +
                  \epsilon_{\efj}
                    \cdot
                    \mu_{\efj}(\xe) 
                    \cdot 
                    \mu_{\efi}(\LCT{x}_e)
                    & x_e \in \set{X}_e \setminus \{ 0 \}, \ 
                       \LCT{x}_e \in \LCTset{X}_{e}\setminus\{0\}
                \end{cases},
            \label{sec:LCT:eqn:15}
      \end{align}
      where $\chi_{\efi}$ and $\chi_{\efj}$ are real-valued constants satisfying
      \begin{align}
        \chi_{\efi}
        \cdot
        \chi_{\efj}
          &= 1,
               \label{sec:LCT:exp:6}
      \end{align}
      where $\delta_{\efi}$, $\delta_{\efj}$ are real-valued constants satisfying
      \begin{align}
        \delta_{\efi} 
          \cdot
          \delta_{\efj}
          &= Z_e,
               \label{sec:LCT:exp:7}
      \end{align}
      where for $ \beli_e(0) \neq 1 $, the constants $\delta_{\efi}$, $\delta_{\efj}$, $\epsilon_{\efi}$, and
      $\epsilon_{\efj}$ satisfy
      \begin{align}
        \delta_{\efi} 
        + 
        Z_e
          \cdot 
          \bigl( 1 - \beli_e(0) \bigr)
          \cdot
          \epsilon_{\efi}
          &= \mu_{\efj}(0),
               \label{sec:LCT:exp:8} \\
        \delta_{\efj} 
        + 
        Z_e
          \cdot 
          \bigl( 1 - \beli_e(0) \bigr)
          \cdot
          \epsilon_{\efj}
          &= \mu_{\efi}(0),
               \label{sec:LCT:exp:9}
      \end{align}
      where for $ \beli_e(0) = 1 $, the constants $\delta_{\efi}$, $\delta_{\efj}$ satisfy~\eqref{sec:LCT:exp:8} and~\eqref{sec:LCT:exp:9}, respectively, and the constants $\epsilon_{\efi}$ and
      $\epsilon_{\efj}$ satisfy
      \begin{align}
        1
        +
        \delta_{\efi}
          \cdot
          \epsilon_{\efj}
        +
        \delta_{\efj} 
          \cdot
          \epsilon_{\efi} &= 0, 
          \label{extra constraint on epsilon for beli = [x = 0]}
      \end{align}
      and where $\beli_e(0) = \mu_{\efi}(0) \cdot \mu_{\efj}(0) / Z_e$ (see
      Definition~\ref{def:belief at SPA fixed point for S-NFG}). 
      The proof showing that the above gives a valid
      choice for $M_{\efi}$ and $M_{\efj}$ is given in
      Appendix~\ref{apx:LCT SNFGs}.

      Let $ \matr{M}_{\efi} $ and $ \matr{M}_{\efj} $ to be the matrices associated with $ M_{\efi} $ and $ M_{\efi} $, respectively, 
      with rows indices $ \xe $ and column indices $ \LCT{x}_e $.
      If $ \mu_{\efi}(0) = \mu_{\efj}(0) = 1 $ and $\mu_{\efi}(\xe) = \mu_{\efj}(\xe) = 0$ for $ \xe \in \setxe \setminus \{0\} $, then from~\eqref{sec:LCT:eqn:14} and~\eqref{sec:LCT:eqn:15}, the matrices $ \matr{M}_{\efi} $ and $ \matr{M}_{\efj} $ become
      \begin{align*}
          \matr{M}_{\efi} = \matr{M}_{\efj} = \matr{I},
      \end{align*}
      where $ \matr{I} $ is the identity matrix of size $ |\setxe| \times |\LCTset{X}_{e}| $.
      To sum up, in this case, the LCT is to set “0” in the associated off-diagonal entries in the matrices $ \matr{M}_{\efi} $ and $ \matr{M}_{\efj} $.

  
      Note that when $|\set{X}_e| = 2$, say $\set{X}_e = \{ 0, 1 \}$, and,
      correspondingly, $\LCTset{X}_{e} = \{ 0, 1 \}$, the function values of
      $M_{\efi}(1,1)$ and $M_{\efj}(1,1)$ are uniquely fixed:
        \begin{align*}
            M_{\efi}(1,1)
            &= \zeta_{\efi}
            \cdot
            \chi_{\efi}
            \cdot
            \mu_{\efj}(0), \\
            M_{\efj}(1,1)
            &= \zeta_{\efj}
            \cdot
            \chi_{\efj}
            \cdot
            \mu_{\efi}(0),
        \end{align*}
      \ie, although different choices are possible for $\delta_{\efi}$,
      $\delta_{\efj}$, $\epsilon_{\efi}$, and $\epsilon_{\efj}$, they all
      lead to the same function values for $M_{\efi}(1,1)$ and
      $M_{\efj}(1,1)$.
  
    \end{itemize}
  
  \item (generalization of the second step above) Nothing is done in this step.
  
  \item (generalization of the third step above) For every function node
    $f \in \setF$, we do the following.
    \begin{itemize}
  
    \item Let 
      $\LCTvxf \defeq ( \LCT{x}_e )_{e \in \setpf} \in \LCTsetx_{\setpf}$ and 
      $ \LCTsetx_{\setpf} \defeq \prod\limits_{e \in \setpf} \LCTset{X}_{e} $.
  
  
    \item Define
      \begin{align}
        \LCT{f}(\LCTvxf)
          &\defeq \sum\limits_{\vx_{\setpf}}
            f(\vx_{\setpf})
            \cdot
            \prod\limits_{e \in \setpf}
            M_{\ef}(x_{e}, \LCT{x}_e), \qquad 
            \LCTvxf \in \LCTsetx_{\setpf}.
        \label{sec:LCT:exp:10}
      \end{align}
  
    \item Finally, the LCT $\LCT{\graphN}$ of $\graphN$ based on $\vmu$ is
      defined to be the NFG with function node set $\setF$, edge set $\setE$,
      edge-function node incidence as for $\graphN$, where for every function
      node $f \in \setF$, the associated local function is $\LCT{f}$,
      and where for every edge $e \in \setE$, the associated variable 
      is $\LCT{x}_e$ and the associated alphabet is $ \LCTset{X}_{e} $.
      With this, the global function of $\LCT{\graphN}$ is defined to be the mapping
      \begin{align*}
          \LCT{g}: \prod\limits_{e} \LCTset{X}_{e} &\to \sR, \nonumber\\
          \LCTv{x} &\mapsto \prod_f \LCT{f}(\LCTvxf).
      \end{align*}

  
    \end{itemize}

  \end{enumerate}
  \edefinition
\end{definition}

\begin{remark}
  \label{sec:LCT:rem:1}
  The following remarks provide insights into the limitations and characteristics of the LCT in different scenarios.
  \begin{enumerate}
    \item There are S-NFGs where it is not possible to obtain $ M_{\efi} $ and $ M_{\efj} $ for $ e =(f_{i}, f_{j}) \in \setEfull $ based on the SPA fixed-point messages. 
    Consider a single-cycle S-NFG in Item~\ref{sec:SNFG:remk:1:item:2} in Remark~\ref{sec:SNFG:remk:1}. 
    The SPA fixed-point message vector satisfies
    $ Z_e = 0 $ for all $ e \in \setEfull $ and thus the LCT does not work due to the equations~\eqref{sec:LCT:exp:3}--\eqref{sec:LCT:exp:5}. The existence of the LCT for this single-cycle S-NFG would contradict the non-diagonalizability of the matrix 
    \begin{align*}
        \begin{pmatrix}
          1 & 1 \\ 0 & 1
      \end{pmatrix}.
    \end{align*}

    \item\label{remk: LCT on snfg results an NFG with negative fun nodes}  The LCT $\LCT{\graphN}$ of $\graphN$ is in general \textbf{not} an S-NFG. This
    is because some of the local functions of $\LCT{\graphN}$ might take on
    negative real values. With that, also the global function of $\LCT{\graphN}$
    might take on negative real values.

    \item In the case of a binary alphabet $\set{X}_e$, the LCT we obtain is the same 
    as the one presented in~\cite{Chertkov2006}. However, for non-binary alphabets $\set{X}_e$,
    our LCT, although inspired by previous works such as~\cite{Chertkov2006, Chernyak2007, Mori2015}, is in general different. Notably, our approach does not require any assumption of non-zeroness on the SPA fixed-point messages.
  \end{enumerate}
  
 \eremark
\end{remark}


\begin{proposition}
  \label{sec:LCT:prop:1}
  \label{SEC:LCT:PROP:1}
  Consider an S-NFG $\graphN$ and let
  $\vmu $ be an SPA
  fixed-point message vector for $\graphN$.  
  The LCT of $ \graphN $, as specified in
  Definition~\ref{sec:LCT:def:1}, is denoted by $\LCT{\graphN}$. 
  (Recall that in Assumption~\ref{asmp: assume messages are non-negative}, 
  we assume $ Z_e(\vmu) >0 $ for all $ e \in \setEfull $, which ensures that the LCT of $ \graphN $ is well-defined.) 
  The main properties of $\LCT{\graphN}$ are listed as follows.
  \begin{enumerate}

  \item\label{sec:LCT:prop:1:item:1} The partition function remains the same:
    \begin{align*}
      Z(\graphN)
        &= Z\bigl( \LCT{\graphN} \bigr).
    \end{align*}

  \item\label{sec:LCT:prop:1:item:2} The SPA-based Bethe partition function of the original S-NFG is equal to the global function value of the all-zero configuration of
    $\LCT{\graphN}$:
    \begin{align*}
      \ZBSPA(\graphN,\vmu)
        &= \LCT{g}(\vect{0}).
    \end{align*}

  \item\label{sec:LCT:prop:1:item:3} For every $f \in \setF$ and every $\LCTvxf \in \LCTsetxf$ with
    exactly one non-zero component, \ie,
    \begin{align*}
         \wh( \LCTvxf ) 
         = 1,
    \end{align*}
    it holds that
    \begin{align*}
      \LCT{f}(\LCTvxf)
        &= 0,
    \end{align*}
    where $ \wh( \cdot ) $ is the Hamming weight of the argument.

  \item\label{sec:LCT:prop:1:item:4} 
    For any configuration $\LCTv{x} \in \prod\limits_{e} \LCTset{X}_{e}$ in $ \LCT{\graphN} $, 
    we define 
    the associated subgraph $\bigl( \setF,\setE'(\LCTv{x}) \bigr)$, which consists of
    the set of vertices $ \setF $
    and the set of edges
    \begin{align*}
        \setE'(\LCTv{x})= \{ e \in \setE \ | \ \LCT{x}_{e} \neq 0 \}.
    \end{align*}
    The subgraph $\bigl( \setF,\setE'(\LCTv{x}) \bigr)$ is called a generalized loop 
    if this subgraph
    does not contain any leaves (vertices of degree one). (Note that for the all-zero
    configuration $\LCTv{x} = \vect{0}$, the subgraph $(\setF,\setE'(\LCTv{x}))$ 
    is also considered as a generalized loop.)

    It holds that if $ \LCTv{x} $ is a valid configuration for $ \LCT{\sfN} $,
    then the subgraph $ \bigl( \setF,\setE'(\LCTv{x}) \bigr) $ is a generalized
    loop, or equivalently, 
    if $ \bigl( \setF,\setE'(\LCTv{x}) \bigr) $ is not a generalized loop
    for the configuration $ \LCTv{x} $, then $\LCT{g}(\LCTv{x}) = 0$. 
    If $\graphN$ is cycle-free and with that
    also $\LCT{\graphN}$ is cycle-free, then 
    $\LCT{\graphN}$ has only one valid configuration, which is 
    the all-zero configuration and corresponds to a generalized loop.

  \item\label{sec:LCT:prop:1:item:5} 
  If $ \ZBSPA(\graphN,\vmu) \in \sR_{>0} $,
  it holds that
    \begin{align*}
      Z(\graphN)
        &= \ZBSPA(\graphN,\vmu)
           \cdot
           \left(
             1
             +
             \sum\limits_{\LCTv{x}}
               \frac{\LCT{g}(\LCTv{x})}
                    {\LCT{g}(\vect{0})}
           \right), 
    \end{align*}
    where the summation is over all $ \LCTv{x} \in \prod\limits_{e} \LCTset{X}_{e} $ such that $ \setE'(\LCTv{x}) $ corresponds to a non-zero generalized loop of
    $\LCT{\graphN}$. The term $\LCT{g}(\LCTv{x})/\LCT{g}(\vect{0})$ can
    be considered to be ``correction terms'' to the Bethe approximation. Note
    that if $\graphN$ is cycle-free then there are no correction terms,
    implying the well-known result that
    $Z(\graphN) = \ZBSPA(\graphN,\vmu)$.

  \item\label{sec:LCT:prop:1:item:6} The SPA fixed-point message vector
    $\vmu $ for
    $\graphN$ induces an SPA fixed-point message vector
    $\LCTv{\mu} \defeq ( \LCTv{\mu}_{\ef} )_{f \in \setpe, e \in \setEfull}$ for
    $\LCT{\graphN}$ such that
    \begin{align*}
      \LCTv{\mu}_{\ef}
        &= \begin{pmatrix}
          1, & 0, & \cdots & 0
        \end{pmatrix}^{\!\!\!\tran} \in \sC^{|\LCTset{X}_{e}|},
        \qquad e \in \setE,\, f \in \setpe.
    \end{align*}

  \item\label{sec:LCT:prop:1:item:7} While the above properties are shared by all variants of the LCT, the
    LCT in Definition~\ref{sec:LCT:def:1} has the following additional
    properties (that are not necessarily shared by the other LCT variants):
    \begin{enumerate}
 
    \item For every $e = (f_{i},f_{j}) \in \setE$, the functions $M_{\efi}$ and
      $M_{\efj}$ are formally symmetric in the sense that swapping $f_{i}$ and
      $f_{j}$ in the definition of $M_{\efi}$ leads to the definition of
      $M_{\efj}$, and vice-versa.

    \item\label{sec:LCT:prop:1:item:7:b} Assume that $M_{\efi}$ and $M_{\efj}$ are defined based on the
      choices $\zeta_{\efi} = \zeta_{\efj} \defeq Z_e^{-1/2}$,
      $\chi_{\efi} = \chi_{\efj} \defeq 1$,
      $\delta_{\efi} = \delta_{\efj} \defeq Z_e^{1/2}$. Let
      $e = (f_{i},f_{j}) \in \setE$ be an edge such that
      $\mu_{\efi}(\xe) = \mu_{\efj}(\xe) \in \sR$ for all $x_e \in \set{X}_e$. Then
      the matrices $\matr{M}_{\efi}$
      and $\matr{M}_{\efj}$ are equal
      and orthogonal.

    \end{enumerate}

  \end{enumerate}
\end{proposition}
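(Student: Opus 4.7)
The plan is to establish the seven properties in the stated order, with Items~4--7 falling out as relatively mechanical corollaries of Items~1--3. For Item~1, the identity $Z(\graphN) = Z\bigl(\LCT{\graphN}\bigr)$ should follow directly from the opening/closing-the-box construction: the orthogonality constraint~\eqref{sec:LCT:exp:1} guarantees that summing $M_{\efi}(\xe,\LCT{x}_e) \cdot M_{\efj}(\xe,\LCT{x}_e)$ over $\LCT{x}_e$ reproduces the indicator $[x_{\efi} \! = \! x_{\efj}]$, so summing $\LCT{g}$ over all $\LCTv{x}$ and collapsing each edge in turn recovers the original $g$. For Item~2, I will evaluate $\LCT{f}(\vect{0})$ directly via~\eqref{sec:LCT:exp:3}--\eqref{sec:LCT:exp:4} to get $\LCT{f}(\vect{0}) = Z_f(\vmu) \cdot \prod_{e \in \setpf} \zeta_{\ef}$; taking the product over $f$ pairs the factors $\zeta_{\efi}\zeta_{\efj}$ on each edge, and by~\eqref{sec:LCT:exp:5} these equal $Z_e^{-1}$, which reproduces the pseudo-dual expression~\eqref{sec:SNFG:eqn:5}.

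The technical heart, and the step I expect to be the main obstacle, is Item~3. Fix $f = f_i$ and let $e^* \in \setpfi$ be the unique coordinate with $\LCT{x}_{e^*} \neq 0$, with $f_j$ the other function node adjacent to~$e^*$. By~\eqref{sec:LCT:exp:3}, every factor $M_{e,f_i}(x_e, 0)$ for $e \in \setpfi \setminus e^*$ reduces to $\zeta_{e f_i}\mu_{e f_i}(x_e)$. Substituting into~\eqref{sec:LCT:exp:10} and performing the sum over the variables $x_e$ with $e \in \setpfi \setminus e^*$, the SPA fixed-point equation~\eqref{sec:SNFG:eqn:8} collapses the interior to a constant times $\mu_{e^*, f_j}(x_{e^*})$. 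What remains is $\sum_{x_{e^*}} M_{e^*,f_i}(x_{e^*}, \LCT{x}_{e^*}) \cdot \mu_{e^*, f_j}(x_{e^*})$, which by~\eqref{sec:LCT:exp:4} is proportional to $\sum_{x_{e^*}} M_{e^*,f_i}(x_{e^*}, \LCT{x}_{e^*}) \cdot M_{e^*, f_j}(x_{e^*}, 0)$; this last sum is exactly~\eqref{sec:LCT:exp:2} with one of the $\LCT{x}_{e,\cdot}$ arguments set to $0$, hence equals $[\LCT{x}_{e^*} \! = \! 0]$ and vanishes. The delicacy is that three ingredients have to mesh precisely: the SPA fixed-point equation, the boundary values~\eqref{sec:LCT:exp:3}--\eqref{sec:LCT:exp:4}, and the orthogonality~\eqref{sec:LCT:exp:2}; this is also the reason the LCT requires $\vmu$ to be an SPA fixed point rather than an arbitrary message vector.

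Items~4--6 then fall out quickly. For Item~4, if the subgraph $\bigl(\setF, \setE'(\LCTv{x})\bigr)$ has a leaf $f$, the induced restriction $\LCTvxf$ has Hamming weight one and $\LCT{f}(\LCTvxf) = 0$ by Item~3, forcing $\LCT{g}(\LCTv{x}) = 0$; contrapositively, every valid configuration induces a generalized loop. Item~5 is then obtained by splitting $Z\bigl(\LCT{\graphN}\bigr) = \sum_{\LCTv{x}} \LCT{g}(\LCTv{x})$ into the all-zero term and the rest, invoking Item~1 on the left, dividing by $\LCT{g}(\vect{0}) = \ZBSPA(\graphN, \vmu)$ (assumed positive), and using Item~4 to restrict the remaining sum to non-zero generalized loops. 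For Item~6, I will verify that $\LCT{\vmu}_{\ef} = (1,0,\ldots,0)^{\!\tran}$ is a fixed point by running one SPA update on $\LCT{\graphN}$: with those incoming messages, the outgoing message value at $\LCT{z}_e$ equals $\LCT{f_i}(\LCT{z}_e, 0, \ldots, 0)$, which by Item~3 vanishes for $\LCT{z}_e \neq 0$ and equals $\LCT{f_i}(\vect{0}) \neq 0$ otherwise; after normalization one recovers exactly the same form.

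For Item~7(a), the symmetry is immediate from inspecting~\eqref{sec:LCT:eqn:14}--\eqref{sec:LCT:eqn:15}, since the right-hand sides are exchanged under the swap $(f_i, f_j) \leftrightarrow (f_j, f_i)$ and the parameter constraints~\eqref{sec:LCT:exp:5}--\eqref{sec:LCT:exp:9} are symmetric in the two labels. For Item~7(b), under the specified choices $\zeta_{\efi} = \zeta_{\efj} = Z_e^{-1/2}$, $\chi_{\efi} = \chi_{\efj} = 1$, $\delta_{\efi} = \delta_{\efj} = Z_e^{1/2}$ together with $\mu_{\efi} = \mu_{\efj}$, the formulas~\eqref{sec:LCT:eqn:14}--\eqref{sec:LCT:eqn:15} yield entrywise $\matr{M}_{\efi} = \matr{M}_{\efj}$; the orthogonality relation~\eqref{sec:LCT:exp:1}, which may be written as $\matr{M}_{\efi} \cdot \matr{M}_{\efj}^{\!\tran} = \matr{I}$, then reads $\matr{M}_{\efi} \cdot \matr{M}_{\efi}^{\!\tran} = \matr{I}$, establishing orthogonality.
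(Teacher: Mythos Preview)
Your proposal is correct and follows essentially the same route as the paper's proof (which the paper defers to Appendix~\ref{apx:property of SNFGs}): Item~1 via the closing-the-box identity~\eqref{sec:LCT:exp:1}; Item~2 by direct evaluation of $\LCT{f}(\vect{0})$ and pairing the $\zeta$-factors edge by edge via~\eqref{sec:LCT:exp:5}; Item~3 by combining the SPA fixed-point relation with the boundary values~\eqref{sec:LCT:exp:3}--\eqref{sec:LCT:exp:4} and the orthogonality~\eqref{sec:LCT:exp:2}; and Items~4--6 as immediate corollaries. One small point you glossed over in Item~7(b): equality $\matr{M}_{\efi}=\matr{M}_{\efj}$ also requires $\epsilon_{\efi}=\epsilon_{\efj}$, which in the generic case $\beli_e(0)\neq 1$ is forced by~\eqref{sec:LCT:exp:8}--\eqref{sec:LCT:exp:9} under the symmetric parameter choices, and in the degenerate case $\beli_e(0)=1$ the assumption $\mu_{\efi}=\mu_{\efj}$ together with non-negativity forces $\mu_{\efi}(\xe)=0$ for $\xe\neq 0$, so the $\epsilon$-terms vanish identically.
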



\begin{proof}
  See Appendix~\ref{apx:property of SNFGs}.
\end{proof}

We conclude this section with some remarks.
\begin{itemize}
  
    \item Property~\ref{sec:LCT:prop:1:item:2} in Proposition~\ref{sec:LCT:prop:1} states that the LCT is a reparamerization of the global function based on the SPA fixed point message vector $ \vmu $ such that 
    the global function evaluated at $ \LCTv{x} = \vect{0} $ equals the $\vmu$-based Bethe partition function. The LCT can be viewed as shifting the bulk of the sum over the variables of the global function to some known configuration as well as keeping the partition function unchanged.

    \item Property~\ref{sec:LCT:prop:1:item:6} in Proposition~\ref{sec:LCT:prop:1} highlights that after applying the LCT on $\graphN$ based on the SPA fixed-point message vector $\vmu$, one of the SPA fixed-point message vectors for the LCT-transformed factor graph $\LCT{\graphN}$ has a simple structure. 
    Specifically, this simple SPA fixed-point message vector, denoted by $ \LCT{\vmu} $, satisfies 
    \begin{align*}
      \ZBSPA( \graphN, \vmu ) &= \ZBSPA( \LCT{\graphN}, \LCT{\vmu} ), \nonumber\\
      \LCT{\mu}_{\efi}(\LCT{x}_e) 
      &= \LCT{\mu}_{\efj}(\LCT{x}_e) = [\LCT{x}_e \! = \! 0], \qquad
      \LCT{x}_{e} \in \LCTset{X}_{e},\, e = (f_{i}, f_{j}) \in \setEfull.
    \end{align*}
    This simple structure partially motivates the proof of the graph-cover theorem in Section~\ref{sec:CheckCon}.

\end{itemize}


\section{LCT for DE-NFGs}
\label{sec:LCT:DENFG:1}

The LCT for DE-NFGs has many similarities to the LCT for S-NFGs. Therefore, we will mostly highlight the differences in this section. 

\begin{definition}
    \label{def:DENFG:LCT:1}
    \index{LCT!for DE-NFG}
    The LCT for DE-NFGs is a natural extension of the LCT for S-NFG in Definition~\ref{sec:LCT:def:1}. 
    Further definitions are made for DE-NFGs.
    For each edge $ e \in \setEfull $, we apply the following changes.
    \begin{itemize}

        \item The alphabet $\set{X}_e$ is replaced by the alphabet $\tset{X}_e = \set{X}_e \times \set{X}_e$.

        \item The special element $0 \in \set{X}_e$ is replaced by the special element
        $\tzero \defeq (0,0) \in \tset{X}_e$.

        \item The variable $x_e \in \set{X}_e$ is replaced by the variable
        $\tx_e = (x_e,x'_e) \in \tset{X}_e$.

        \item The alphabet
        $\LCTset{X}_e \defeq \set{X}_e$ is
        replaced by the alphabet
        $\LCTtset{X}_e \defeq \tset{X}_e$.

        \item The element $0 \in \LCTset{X}_e$ is replaced by the element
        $\tzero\in \LCTtset{X}_e$.

        \item The variable $\LCT{x}_e \in \LCTset{X}_e$ is replaced by the variable
        $\LCTt{x}_e = (\LCT{x}_e, \LCT{x}'_e) \in \LCTtset{X}_{e}$.

    \end{itemize}
    Note that in the definition for DE-NFG, we consider that the constants $\zeta_{\efj}$, $\chi_{\efj}$, $\delta_{\ef}$, and $\epsilon_{\ef}$ are real-valued.
    For any finite set $ \set{I} \subseteq \setEfull $, we define
    \begin{align*}
        \LCTtset{X}_{ \set{I} }
        \defeq \prod\limits_{e \in \set{I}} \LCTtset{X}_{ e }
        = \LCTset{X}_{ \set{I} } \times \LCTset{X}_{ \set{I} },
    \end{align*}
    where
    \begin{align*}
      \LCTset{X}_{ \set{I} } \defeq 
        \prod\limits_{e \in \set{I}} \LCTset{X}_{e}.
    \end{align*}
    The associated collection of variables is defined to be
    \begin{align*}
      \LCTtv{x}_{\set{I}} \defeq ( \LCTv{x}_{\set{I}}, \LCTv{x}_{\set{I}}' )
      \in \LCTtset{X}_{ \set{I} }.
    \end{align*}
     For each $ f \in \setF $, we define
      \begin{align}
        \LCT{f}(\LCTtv{x}_{\setpf})
            &\defeq \sum\limits_{\tvx_{\setpf}}
            f(\tvx_{\setpf})
            \cdot
            \prod\limits_{e \in \setpf}
              M_{\ef}(\tx_{e}, \LCTt{x}_e),
        \qquad \LCTtv{x}_{\setpf} \in \LCTtset{X}_{\setpf}.
        \label{sec:LCT:eqn:1}
      \end{align}

    \edefinition
\end{definition}

If there is no ambiguity, for any $ \set{I} \subseteq \setEfull $,
we will use $ \sum\limits_{ \LCTtv{x}_{\set{I}} } $ instead of 
$ \sum\limits_{ \LCTtv{x}_{ \set{I} } \in \LCTtset{X}_{\set{I}} } $ for simplicity.


\begin{proposition}
  \label{prop:DENFG:LCT:1}
  \label{PROP:DENFG:LCT:1}

  Consider a DE-NFG $\graphN$ and an SPA
  fixed-point message vector $ \vmu $ for $\graphN$. 
  Let $\LCT{\graphN}$ denote the LCT of $\graphN$ obtained by $\vmu$, as specified in
  Definition~\ref{def:DENFG:LCT:1}. The main properties of $\LCT{\graphN}$ are listed as follows.
  \begin{enumerate}

  \item Natural extension of Property~\ref{sec:LCT:prop:1:item:1} in Proposition~\ref{sec:LCT:prop:1}:
  $
      Z(\graphN)
        = Z\bigl( \LCT{\graphN} \bigr).
  $
  \item\label{prop:DENFG:LCT:1:item:4} Natural extension of Property~\ref{sec:LCT:prop:1:item:2} in Proposition~\ref{sec:LCT:prop:1}:
    \begin{align*}
      \ZBSPA(\graphN,\vmu)
        = \LCT{g}(\tv{0}) = \prod\limits_{f} \LCT{f}(\tv{0}),
    \end{align*}
    where $ \tv{0} = ( \tzero,\ldots,\tzero ) $.

  \item\label{sec:LCT:prop:1:item:8} Natural extension of Property~\ref{sec:LCT:prop:1:item:3} in Proposition~\ref{sec:LCT:prop:1}: for every $f \in \setF$ and every $\LCTv{x}_{\setpf} \in \LCTsetx_{\setpf}$ with exactly one non-zero component, 
  \ie, 
    \begin{align*}
        \wh( \LCTv{x}_{\setpf} )
        = 1,
    \end{align*}
    with $ \wh( \LCTv{x}_{\setpf} ) \defeq 
    \sum\limits_{ e \in \setpf } \bigl[ \LCTt{x}_{e} \!\neq\! \tzero \bigr] $,
    it holds that
    \begin{align*}
      \LCT{f}(\LCTtv{x}_{\setpf})
        &= 0.
    \end{align*}

  \item Natural extension of Property~\ref{sec:LCT:prop:1:item:4} in Proposition~\ref{sec:LCT:prop:1}.

  \item Natural extension of Property~\ref{sec:LCT:prop:1:item:5} in Proposition~\ref{sec:LCT:prop:1}.

  \item Natural extension of Property~\ref{sec:LCT:prop:1:item:6} in Proposition~\ref{sec:LCT:prop:1}
    Nevertheless, because of its importance
    for later parts of this thesis, the statement is included here. Namely, the
    SPA fixed-point message vector
    $\vmu$ for
    $\graphN$ induces an SPA fixed-point message vector $ \LCTv{\mu} 
    = ( \LCTv{\mu}_{e,f} )_{e \in \setpf, f \in \setF} $
    for
    $\LCT{\graphN}$ with
    \begin{align*}
        \LCT{\mu}_{e,f}(\LCTt{x}_{e})
        &= \left[ \LCTt{x}_{e} \! = \! \tzero \right],
        \qquad
        \LCTt{x}_{e} \in \LCTtset{X}_{e}.
    \end{align*}

  \item Natural extension of Property~\ref{sec:LCT:prop:1:item:7} in Proposition~\ref{sec:LCT:prop:1}.
    Note that Property~\ref{sec:LCT:prop:1:item:7:b} is slightly different:
    Assume that $M_{\efi}$ and $M_{\efj}$ are defined based on the
    choices $\zeta_{\efi} = \zeta_{\efj} \defeq Z_e^{-1/2}$,
    $\chi_{\efi} = \chi_{\efj} \defeq 1$,
    $\delta_{\efi} = \delta_{\efj} \defeq Z_e^{1/2}$. Let
    $e = (f_{i},f_{j}) \in \setE$ be an edge such that
    $\mu_{\efi}(\txe) = \overline{\mu_{\efj}(\txe)}$ for all $x_e \in \set{X}_e$. In this case, both $ \matr{M}_{\efi} $ and $ \matr{M}_{\efj} $ are unitary matrices with row indices $ \xe $ and column indices $ \LCT{x}_{e} $, and they satisfy
    $ \matr{M}_{\efi} = \matr{M}_{\efj}^{\Herm} $.

    \item\label{prop:DENFG:LCT:1:item:1} For each $ e = (f_{i}, f_{j}) $, the Choi-matrix representations 
    $ \matr{ C }_{ M_{\efi} } $ and $ \matr{ C }_{ M_{\efi} } $
    are complex-valued Hermitian matrices.

    \item\label{prop:DENFG:LCT:1:item:2} For each $f \in \setF$, 
      the Choi-matrix representation $\matr{C}_{\LCT{f}}$ is a complex-valued Hermitian matrix,
      where $ \matr{C}_{\LCT{f}} $ is defined to be a matrix with row indices $ \LCTv{x}_{\setpf} $ and column indices $ \LCTv{x}_{\setpf}' $ such that
      \begin{align*}
          \matr{C}_{\LCT{f}} \defeq 
          \Bigl( \LCT{f}(\LCTv{x}_{\setpf}, \LCTv{x}_{\setpf}') 
          \Bigr)_{ (\LCTv{x}_{\setpf}, \LCTv{x}_{\setpf}') 
          \in \LCTtset{X}_{\setpf} }.
      \end{align*}

    \item The resulting DE-NFG $\LCT{\graphN}$ is again a weak-sense DE-NFG. 

  \end{enumerate}

\end{proposition}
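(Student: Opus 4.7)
My plan is to follow the structure of the proof of Proposition~\ref{sec:LCT:prop:1} (carried out in Appendix~\ref{apx:property of SNFGs}) and adapt each step to the DE-NFG setting by systematically replacing every single variable $x_e$ by the double variable $\tx_e = (x_e, x_e')$ and the element $0 \in \set{X}_e$ by $\tzero = (0,0) \in \tset{X}_e$. The core LCT identity then reads
\begin{align*}
\sum_{\LCTt{x}_e} M_{\efi}(\tx_{e,f_i}, \LCTt{x}_e) \cdot M_{\efj}(\tx_{e,f_j}, \LCTt{x}_e) = \bigl[ \tx_{e,f_i} \!=\! \tx_{e,f_j} \bigr],
\end{align*}
and the constructions in Definition~\ref{def:DENFG:LCT:1} transfer verbatim from Definition~\ref{sec:LCT:def:1}. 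Consequently, the claimed ``natural extensions'' of Properties~1--7 follow word-for-word. In particular, invariance of the partition function follows by substituting~\eqref{sec:LCT:eqn:1} into $\sum_{\LCTtv{x}} \prod_f \LCT{f}(\LCTtv{x}_{\setpf})$, swapping sums, and using the displayed identity to collapse each edge sum to $[\tx_{e,f_i} \!=\! \tx_{e,f_j}]$, leaving $Z(\graphN)$. The Bethe identity $\ZBSPA(\graphN,\vmu) = \LCT{g}(\tv 0)$ is obtained by evaluating $\LCT{f}(\tv 0)$ with the $\LCTt{x}_e = \tzero$ clause of~\eqref{sec:LCT:exp:3}--\eqref{sec:LCT:exp:4} and cancelling the $\zeta_{\ef}$ factors pairwise via~\eqref{sec:LCT:exp:5}, recovering $\prod_f Z_f(\vmu) / \prod_e Z_e(\vmu)$. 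The leaf-killing claim, the generalized-loop characterization, the correction-term formula, the induced SPA fixed-point message vector, and the symmetry/unitarity properties are then obtained exactly as in the S-NFG case.

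The genuinely new content is Properties~\ref{prop:DENFG:LCT:1:item:1} and~\ref{prop:DENFG:LCT:1:item:2}, together with the weak-sense conclusion. For Property~\ref{prop:DENFG:LCT:1:item:1}, I invoke Lemma~\ref{sec:DENFG:lem:1}, which says that any SPA fixed-point message satisfies $\matr{C}_{\mu_{\ef}} \in \setPSD{\set{X}_e}$, so in particular $\mu_{\ef}(x_e, x_e') = \overline{\mu_{\ef}(x_e', x_e)}$, and the scaling constants $Z_e(\vmu)$, $\beli_e(\tzero)$, $\mu_{\ef}(\tzero)$ are real and non-negative. Since Definition~\ref{def:DENFG:LCT:1} stipulates that $\zeta_{\ef}, \chi_{\ef}, \delta_{\ef}, \epsilon_{\ef}$ are real, a case analysis of~\eqref{sec:LCT:eqn:14}--\eqref{sec:LCT:eqn:15} (with tildes) shows that swapping the $(x,x')$ components of both $\txe$ and $\LCTt{x}_e$ produces the complex conjugate of $M_{\ef}$, which is precisely Hermiticity of $\matr{C}_{M_{\ef}}$. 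Property~\ref{prop:DENFG:LCT:1:item:2} then follows by the same simultaneous swap $\vx_{\setpf} \leftrightarrow \vx_{\setpf}'$, $\LCTv{x}_{\setpf} \leftrightarrow \LCTv{x}_{\setpf}'$ applied to~\eqref{sec:LCT:eqn:1}: it is a bijective relabelling of the summation variables, and using Hermiticity of $\matr{C}_f$ (which is even PSD by the strict-sense assumption) together with the just-proved Hermiticity of each $\matr{C}_{M_{\ef}}$ shows $\overline{\LCT{f}(\LCTv{x}_{\setpf}, \LCTv{x}_{\setpf}')} = \LCT{f}(\LCTv{x}_{\setpf}', \LCTv{x}_{\setpf})$. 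The weak-sense conclusion is then immediate from Definition~\ref{sec:DENFG:def:4}.

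The main obstacle I anticipate is the bookkeeping step verifying that the choices of $\delta_{\ef}, \epsilon_{\ef}$ satisfying the constraints~\eqref{sec:LCT:exp:5}--\eqref{extra constraint on epsilon for beli = [x = 0]} can always be taken real in the DE-NFG setting, and that the resulting $M_{\ef}$ preserves the Hermitian $(x,x')$-swap symmetry in every case of~\eqref{sec:LCT:eqn:14}--\eqref{sec:LCT:eqn:15}. The key point supporting this is that PSD-ness of $\matr{C}_{\mu_{\ef}}$ makes all the scalar ingredients entering these constraints real and non-negative; once this is checked, the rest of the argument is a routine but careful conjugation computation, and no deeper ideas are needed beyond those already in Appendix~\ref{apx:property of SNFGs}.
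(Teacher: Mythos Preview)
Your proposal is correct and follows essentially the same approach the paper takes: the paper's own proof is deferred to Appendix~\ref{apx:LCT DENFGs}, and since the proposition itself frames Properties~1--7 explicitly as ``natural extensions'' of Proposition~\ref{sec:LCT:prop:1} under the systematic substitution $x_e \to \tx_e$, $0 \to \tzero$, the intended argument is exactly the one you outline. Your treatment of the new Properties~\ref{prop:DENFG:LCT:1:item:1}--\ref{prop:DENFG:LCT:1:item:2} via the $(x,x') \leftrightarrow (x',x)$ swap symmetry, combined with Lemma~\ref{sec:DENFG:lem:1} (Hermiticity of SPA fixed-point messages) and the realness of the constants $\zeta_{\ef}, \chi_{\ef}, \delta_{\ef}, \epsilon_{\ef}$ stipulated in Definition~\ref{def:DENFG:LCT:1}, is the natural route and correctly identifies the only nontrivial verification needed beyond the S-NFG case.
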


\begin{proof}
  See Appendix~\ref{apx:LCT DENFGs}.
\end{proof}

Most of the LCT properties for S-NFG can be straightforwardly extended to the case of DE-NFG. We note that applying the LCT to a strict-sense DE-NFG or a weak-sense DE-NFG yields in another weak-sense DE-NFG, which is similar to the property of the LCT for S-NFG as discussed in Item~\ref{remk: LCT on snfg results an NFG with negative fun nodes} in Remark~\ref{sec:LCT:rem:1}.

\chapter{Symmetric Subspace Transform (SST)}
\label{chapt:SST}
\label{CHAPT:SST}

\begin{figure}[t]
  \captionsetup{skip=0.1cm}
  \captionsetup{font=scriptsize}
  \captionsetup{aboveskip=0pt}
  \subfloat[\label{sec:SST:fig:3}]{
    \begin{minipage}{0.9\textwidth}
      \centering
      \begin{tikzpicture}
        \input{figures/head_files_figs.tex}
        \input{figures/sst_examples/snfg/length.tex}
        \input{figures/sst_examples/snfg/pic_1.tex}
      \end{tikzpicture}
    \end{minipage}
  }\\
  \subfloat[\label{sec:SST:fig:4}]{
    \begin{minipage}[t]{0.45\textwidth}
      \centering
      \begin{tikzpicture}
        \input{figures/head_files_figs.tex}
        \input{figures/sst_examples/snfg/length.tex}
        \input{figures/sst_examples/snfg/background_nodes_lines.tex}
        \input{figures/sst_examples/snfg/lines_not_permuted.tex}
      \end{tikzpicture}
    \end{minipage}
  }
  \subfloat[\label{sec:SST:fig:5}]{
    \begin{minipage}[t]{0.45\textwidth}
      \centering
      \begin{tikzpicture}
        \input{figures/head_files_figs.tex}
        \input{figures/sst_examples/snfg/length.tex}
        \input{figures/sst_examples/snfg/background_nodes_lines.tex}
        \begin{pgfonlayer}{background}
           \foreach \x in {0,1}{
            \coordinate (f7\x) at (-1*0.5*\ldis,\x*2*\ldis) ; 
            \coordinate (f8\x) at (0.5*\ldis,\x*2*\ldis) ; 
          }
          \draw[]
            (f10) -- (f70) -- (f81) -- (f21) 
            (f11) -- (f71) -- (f80) -- (f20);
        \end{pgfonlayer}
      \end{tikzpicture}
    \end{minipage}
  }\\
  \subfloat[\label{sec:SST:fig:6}]{
    \begin{minipage}[t]{0.45\textwidth}
      \centering
      \begin{tikzpicture}
        \input{figures/head_files_figs.tex}
        \input{figures/sst_examples/snfg/length.tex}
        \input{figures/sst_examples/snfg/background_nodes_lines.tex}
        \input{figures/sst_examples/snfg/lines_not_permuted.tex}
        \input{figures/sst_examples/snfg/node_pe_sige.tex}
      \end{tikzpicture}
    \end{minipage}
  }
  \subfloat[\label{sec:SST:fig:7}]{
    \begin{minipage}[t]{0.45\textwidth}
      \centering
      \begin{tikzpicture}
        \input{figures/head_files_figs.tex}
        \input{figures/sst_examples/snfg/length.tex}
        \input{figures/sst_examples/snfg/background_nodes_lines.tex}
        \input{figures/sst_examples/snfg/lines_not_permuted.tex}
        \input{figures/sst_examples/snfg/node_pe_sige.tex}
      \end{tikzpicture}
    \end{minipage}
  }\\
  \subfloat[\label{sec:SST:fig:8}]{
    \begin{minipage}[t]{0.3\textwidth}
      \centering
      \begin{tikzpicture}
        \input{figures/head_files_figs.tex}
        \input{figures/sst_examples/snfg/length.tex}
        \input{figures/sst_examples/snfg/background_nodes_lines.tex}
        \input{figures/sst_examples/snfg/lines_not_permuted.tex}
        \input{figures/sst_examples/snfg/node_pe_sige.tex}
        \begin{pgfonlayer}{background}
          \node[state_dash6] (db1) at (0,0.1*\ldis) [label=below:$P_{e}$] {};
          \node[] (var1) at (-0.18*\ldis,-1*\ldis) [label=right: $\sigma_{e}$] {};
          \node[state] (Psige) at (0,-1.5*\ldis) [label=below: $p_{\Sigma(e)}$] {};
        \end{pgfonlayer}
        \input{figures/sst_examples/snfg/otb_back_nodes_lines.tex}
      \end{tikzpicture}
    \end{minipage}
  }
  \subfloat[\label{sec:SST:fig:9}]{
    \begin{minipage}[t]{0.3\textwidth}
      \centering
      \begin{tikzpicture}
        \input{figures/head_files_figs.tex}
        \input{figures/sst_examples/snfg/length.tex}
        \input{figures/sst_examples/snfg/background_nodes_lines.tex}
        \input{figures/sst_examples/snfg/lines_not_permuted.tex}
        \begin{pgfonlayer}{above}
          \node[state_large] (db1) at (0,1*\ldis) [label=below:$P_{e}$] {};
        \end{pgfonlayer}
        \node (Psige) at (0,-2.59*\ldis) [] {};
      \end{tikzpicture}
    \end{minipage}
  }
  \subfloat[\label{sec:SST:fig:10}]{
    \begin{minipage}[t]{0.3\textwidth}
      \centering
      \begin{tikzpicture}
        \input{figures/head_files_figs.tex}
        \input{figures/sst_examples/snfg/length_large.tex}
        \input{figures/sst_examples/snfg/background_nodes_lines.tex}
        \input{figures/sst_examples/snfg/lines_not_permuted.tex}
        \input{figures/sst_examples/snfg/otb_pe.tex}
        \node[] (var1) at (-0.18*\ldis,-1*\ldis) [label=right: $\cvpsi_{e}$] {};
        \node[state] (Psige) at (0,-1.5*\ldis) [label=below: $ \bigl| \set{B}_{\set{X}_e^M}\! \bigr| \cdot \muFSsimple$] {};
        \input{figures/sst_examples/snfg/otb_back_nodes_lines.tex}
        \begin{pgfonlayer}{background}
          \node[state_dash6] (db1) at (0,0.1*\ldis) [label=below:$P_{e}$] {};
        \end{pgfonlayer}
      \end{tikzpicture}
    \end{minipage}
  }
  \caption{Exemplifying the SST for a part of an example S-NFG.\label{sec:SST:fig:1}}
\end{figure}

In this chapter, we introduce the symmetric-subspace transform (SST) for both S-NFG and DE-NFG. The SST provides a different perspective to understand the $M$-covers of S-NFGs or DE-NFGs. While this transformation has been applied in quantum physics~\cite{Wood2015,Harrow2013}, to the best of our knowledge, it is the first time to introduce the SST in the factor-graph literature.

The developments in this chapter were motivated by the work of Wood \etal~\cite{Wood2015}, where, in terms of the language of the present paper, they transformed a certain integral into the average partition function of some NFGs, with the average taken over double covers. (Although~\cite{Wood2015} focus on double covers, it is clear that their results can be extended to general $M$-covers.) However, in contrast to~\cite{Wood2015}, our approach takes the opposite direction. We express the average partition function of a given NFG in terms of some integral, where the average is over all $M$-covers of the considered NFG.


The developments in this chapter were partially also motivated by the results presented in~\cite{Vontobel2016}.

Let $\graphN$ be some S-NFG or DE-NFG. In
Definition~\ref{sec:GraCov:def:2}, the degree-$M$ Bethe
partition function of $\graphN$ is 
\begin{align*}
  \ZBM(\graphN) 
    = \sqrt[M]{
         \Bigl\langle
           Z\bigl( \hgraphN \bigr)
         \Bigr\rangle_{ \hgraphN \in \hat{\set{N}}_{M}}
       }.
\end{align*}
(See also
Definition~\ref{sec:GraCov:def:1}.) Our
particular interest lies in the limit superior of this quantity as
$M \to \infty$, as stated in Conjecture~\ref{sec:GraCov:conj:1}. Toward proving
Conjecture~\ref{sec:GraCov:conj:1}, we aim to reformulate
$\bigl( \ZBM(\graphN) \bigr)^M$ (or equivalently
$\bigl\langle Z\bigl( \hgraphN \bigr) \bigr\rangle_{ \hgraphN \in \hat{\set{N}}_{M}}$) as
an integral that can be analyzed in 
the limit $M \to \infty$. This approach allows us to study the behavior of the Bethe partition function for large values of $M$.



In order to introduce the approach, we consider a specific example of an S-NFG. In this example, the mathematical formalism that we will present is general, but we will illustrate
it under a particularly simple setup. Specifically, 
we will focus a part of the S-NFG looks like
Fig.~\ref{sec:SST:fig:3} and set $\set{X}_e = \{ 0, 1 \}$ and $M = 2$.


\section{Specifying an \texorpdfstring{$M$}{}-Cover of an S-NFG}

The following construction process defines an $M$-cover of the original S-NFG $\graphN$.

\begin{definition}
  \label{def:graph:cover:construction:1}

  The construction process for an $M$-cover of $\graphN$ is outlined as follows (see also
  Definition~\ref{sec:GraCov:def:1}):
  \begin{enumerate}
  
  \item For each edge $e \in \setEfull$, we specify a permutation
    $\sigma_e \in \set{S}_{[M]}$. We collect all these permutations into a vector $\vsigma \defeq (\sigma_e)_{e \in \setEfull} \in \set{S}_{[M]}^{|\setEfull|}$.
  
  \item\label{def:graph:cover:construction:1:item:1} For each function node $f \in \set{F}$, we draw $M$ copies of $f$. Each copy
    of function node $f$ is associated with a collection of sockets and variables. Specifically,
    if $e \in \setpf$, then the $m$-th copy of $f$ has a socket with
    associated variable denoted by $x_{e,f,m}$.
  
  \item For every edge $e = (f_i, f_{j}) \in \setEfull$, we specify how the
    sockets corresponding to $(x_{\efi,m})_{m \in [M]}$, are connected to the
    sockets corresponding to $(x_{\efj,m})_{m \in [M]}$. 
    In this thesis, the connection is determined based on the chosen permutation $\sigma_e \in \set{S}_{[M]}$. Specifically, the socket corresponding to $x_{\efi,m}$ is connected to the socket corresponding to $x_{\efj,\sigma_{e}(m)}$ for $m \in [M]$. The possible cases for this connection are illustrated in Figs.~\ref{sec:SST:fig:4} and~\ref{sec:SST:fig:5}, considering that $|\set{S}_{[M]}| = 2! = 2$ for $M = 2$.


  
  \item The resulting $M$-cover is denoted by $\hgraphN_{M,\vsigma}$.

  \end{enumerate}
  \edefinition
\end{definition}


In order to simplify the notation, let $\vSigma_{M} \defeq \vSigma \defeq \bigl( \Sigmae \bigr)_{\! e}$ be the random vector consisting of
independent and uniformly distributed (i.u.d.) random permutations from
$\set{S}_{[M]}^{|\setEfull|}$.\footnote{We use $\Sigmae$ instead of $\Sigma_e$ for these random
  variables in order to distinguish them from the symbol used to denote a sum
  over all edges.} The probability distributions of $ \Sigmae $ and $ \vSigma $ are given by
\begin{align}
  p_{\Sigmae}(\sigma_e) 
  &= \frac{1}{|\set{S}_{[M]}|} = \frac{1}{M!}, \qquad e \in \setEfull,\,
  \sigma_e \in \set{S}_{[M]},
  \label{sec:SST:eqn:pdf of SST for each edge}\\
  p_{\vSigma}(\vsigma) &= \prod\limits_{e}p_{\Sigmae}(\sigma_e)
  = \frac{1}{(M!)^{|\setEfull|}}, \qquad 
  \vsigma \in \set{S}_{[M]}^{|\setEfull|} 
  .\label{sec:SST:eqn:31}
\end{align}
With this, $\hgraphN_{M,\vsigma}$ represents a uniformly
sampled $M$-cover of $\graphN$. We can rewrite the quantity of interest, $\bigl( \ZBM(\graphN) \bigr)^M$, as follows:
\begin{align}
  \bigl( \ZBM(\graphN) \bigr)^M
    &= \Bigl\langle
         Z\bigl( \hgraphN \bigr)
       \Bigr\rangle_{ \hgraphN \in \hat{\set{N}}_{M}}
     = \sum\limits_{\vsigma}
         p_{\vSigma}(\vsigma)
         \cdot
         Z\bigl( \hgraphN_{M,\vsigma} \bigr), \label{sec:SST:eqn:15}
\end{align}
where $ \sum\limits_{\vsigma} $ denotes $ \sum\limits_{\vsigma \in \set{S}_{[M]}^{|\setEfull|}} $.

\section{Replacement of \texorpdfstring{$M$}{}-Cover S-NFG 
                       by another S-NFG}


In order to proceed, it is convenient to replace each $M$-cover
$\hgraphN_{M,\vsigma}$ with an S-NFG denoted by $\hgraphPsig$,
which satisfies the property 
\begin{align}
  Z\bigl( \hgraphN_{M,\vsigma} \bigr)
  = Z\bigl( \hgraphPsig \bigr). \label{sec:SST:eqn:33}
\end{align}
Before providing the general definition of $\hgraphPsig$, we briefly
discuss this replacement in terms of the illustrated example. Instead
of working with the S-NFG in Fig.~\ref{sec:SST:fig:4}, we will
work with the S-NFG in Fig.~\ref{sec:SST:fig:6}. Similarly,
instead of working with the S-NFG in
Fig.~\ref{sec:SST:fig:5}, we will work with the S-NFG in
Fig.~\ref{sec:SST:fig:7}.


\begin{definition}
  \label{sec:SST:def:1}

  For a given $\vsigma \in \set{S}_{[M]}^{\setEfull}$, the S-NFG
  $\hgraphPsig$ is defined as follows:
  \begin{enumerate}
  
  \item \label{step: for f covers} For every $f \in \set{F}$, we draw $M$ copies of $f$. 
   Each copy of $f$ is associated with a collection of sockets and variables. Specifically, if $e \in \setpf$, the $m$-th copy of $f$ has a socket with an associated variable $x_{\ef,m}$. (This step is the same as Step~\ref{def:graph:cover:construction:1:item:1} in Definition~\ref{def:graph:cover:construction:1}.)
  
  \item For every edge $e = (f_{i}, f_{j}) \in \setEfull$, we draw a function node
    $P_{e,\sigma_e}$. This node is connected to the sockets associated with the collection of
    variables $(x_{\efi,m})_{m \in [M]}$ and to the sockets associated with the collection of variables
    $(x_{\efj,m})_{m \in [M]}$. The local function $P_{e,\sigma_e}: \set{X}_e^{M} \times \set{X}_e^{M}  \to  \{0,1\} $ is
    defined to be
    \begin{align*}
      P_{e,\sigma_e}\bigl( \vx_{\efi,[M]}, \vx_{\efj,[M]} \bigr)
        &\defeq
           \prod\limits_{m \in [M]}
     \bigl[
       x_{\efi,m} \! = \! x_{\efj,\sigma_e(m)}
     \bigr],
       \quad \vx_{\efi,[M]}, \vx_{\efj,[M]} \in \set{X}_e^M,
    \end{align*}
    where 
    \begin{align*}
      \vx_{\ef,[M]}
      &\defeq
       (x_{\ef,1}, \ldots, x_{\ef,M}) \in  \set{X}_e^M, \qquad 
       f \in \{f_{i}, f_{j}\}.
    \end{align*}
    (Note that $P_{e,\sigma_e}$ is an indicator
    function, as it takes only the values zero and one.)
  
  \end{enumerate}
  \edefinition
\end{definition}


In the case of Fig.~\ref{sec:SST:fig:4}, the permutation
$\sigma_e$ satisfies $\sigma_e(1) = 1$ and $\sigma_e(2) = 2$. Therefore, the corresponding local function $P_{e,\sigma_e}$ with $M = 2$ is given by:
\begin{align*}
  P_{e,\sigma_e}\bigl( \vx_{\efi,[M]}, \vx_{\efj,[M]} \bigr)
  &=
     [x_{\efi,1} \! = \! x_{\efj,1}]
     \cdot
     [x_{\efi,2} \! = \! x_{\efj,2}].
\end{align*}
Representing this function in matrix form, with rows indexed by
$(x_{\efi,1}, x_{\efi,2}) $ and columns
indexed by
$(x_{\efj,1}, x_{\efj,2})$, which take the values following the order $ (0,0), \, (0,1), \, (1,0), \, (1,1) $, we have
\begin{align*}
  \matr{P}_{e,\sigma_e}
    &= \begin{pmatrix}
      1 & 0 & 0 & 0 \\
      0 & 1 & 0 & 0 \\
      0 & 0 & 1 & 0 \\
      0 & 0 & 0 & 1  
     \end{pmatrix}.
\end{align*}
The resulting S-NFG is shown in Fig.~\ref{sec:SST:fig:6}.


On the other hand, in the case of Fig.~\ref{sec:SST:fig:5}, the
permutation $\sigma_e$ satisfies $\sigma_e(1) = 2$ and $\sigma_e(2) = 1$.  Thus, the corresponding local function $P_{e,\sigma_e}$ is given by:
\begin{align*}
  P_{e,\sigma_e}\bigl( \vx_{\efi,[M]}, \vx_{\efj,[M]} \bigr)
  &=
     [x_{\efi,1} \! = \! x_{\efj,2}]
     \cdot
     [x_{\efi,2} \! = \! x_{\efj,1}].
\end{align*}
Representing this function in matrix form (with the same row and column indexing as
above), we get
\begin{align*}
  \matr{P}_{e,\sigma_e}
  &= \begin{pmatrix}
    1 & 0 & 0 & 0 \\
    0 & 0 & 1 & 0 \\
    0 & 1 & 0 & 0 \\
    0 & 0 & 0 & 1  
   \end{pmatrix}.
\end{align*}
The resulting S-NFG is shown in Fig.~\ref{sec:SST:fig:7}.


\section[An NFG Representing the Average \texorpdfstring{$M$}{}-Cover]{An NFG Representing the Average Degree \texorpdfstring{$M$}{}-Cover 
                       of an S-NFG}
\label{sec:SST:NFG:average:cover:1}


The next step is to construct an NFG $\hgraphNavg$, whose partition sum equals
$\sum\limits_{\vsigma} p_{\vSigma}(\vsigma) \cdot Z\bigl( \hgraphN_{M,\vsigma} \bigr)$. We do this
as follows. (See Fig.~\ref{sec:SST:fig:8} for an illustration.)


\begin{definition}
  \label{sec:SST:def:2}


  The construction of $\hgraphNavg$ is essentially the same as the
  construction of $\hgraphPsig$ for any $\vsigma$ in Definition~\ref{sec:SST:def:2}, but with the
  following modifications and additions. 
  Namely, for every $e \in \setEfull$, the
  permutation $\sigma_e$ is not fixed, but a variable. With this, the function
  $P_{e,\sigma_e}\bigl( \vx_{\efi,[M]}, \vx_{\efj,[M]} \bigr)$ is now considered to be a function
  not only of $\vx_{\efi,[M]}$ and $\vx_{\efj,[M]}$, but also of
  $\sigma_e$. Graphically, this is done as follows:
  \begin{itemize}

  \item For every $e \in \setEfull$, we draw a function node $p_{\Sigmae}$.

  \item For every $e \in \setEfull$, we draw an edge that connects
    $P_{e,\sigma_e}$ and $p_{\Sigmae}$ and that represents the variable
    $\sigma_e$. Recall $ p_{\Sigmae}( \sigma_{e} ) = (M!)^{-1} $ for all $ e \in \setEfull $ and $ \sigma_{e} \in \set{S}_{[M]} $, as defined in~\eqref{sec:SST:eqn:pdf of SST for each edge}.

  \end{itemize}
  \edefinition
\end{definition}


\begin{lemma}
  \label{sec:SST:lem:1}

  The partition function of $\hgraphNavg$ satisfies
  \begin{align*}
    Z\bigl( \hgraphNavg \bigr)
      &= \sum\limits_{\vsigma}
           p_{\vSigma}(\vsigma) 
           \cdot Z\bigl( \hgraphN_{M,\vsigma} \bigr)
      = \Bigl\langle
           Z\bigl( \hgraphN \bigr)
         \Bigr\rangle_{ \hgraphN \in \hat{\set{N}}_{M}}
      = \bigl( \ZBM(\graphN) \bigr)^M.
  \end{align*}
\end{lemma}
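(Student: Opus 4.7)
The plan is to prove the lemma by directly unwinding the definition of $Z\bigl( \hgraphNavg \bigr)$ and recognizing that the $\vsigma$-variables act as external randomization variables whose local function $p_{\Sigmae}$ is precisely the probability mass function of $\Sigmae$. The second and third equalities in the statement are immediate: the middle equality is just the definition of an arithmetic mean together with~\eqref{sec:SST:eqn:31}, and the last equality is exactly~\eqref{sec:SST:eqn:15}. So the only real content is the first equality.

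First, I would write the partition function of $\hgraphNavg$ as a double sum over all its edge variables. By construction (Definition~\ref{sec:SST:def:2}), the variables of $\hgraphNavg$ split into two groups: the copies of the original variables $\bigl( \vx_{\ef,[M]} \bigr)_{e \in \setpf,\, f \in \setF}$, and the newly introduced permutation variables $\vsigma = (\sigma_e)_{e \in \setEfull}$. The global function factors as
\begin{align*}
  \Biggl( \prod\limits_{f \in \setF} \prod\limits_{m \in [M]}
      f\bigl( \vx_{\setpf,f,m} \bigr) \Biggr)
  \cdot
  \Biggl( \prod\limits_{e \in \setEfull}
      p_{\Sigmae}(\sigma_{e})
      \cdot
      P_{e,\sigma_e}\bigl( \vx_{\efi,[M]}, \vx_{\efj,[M]} \bigr) \Biggr),
\end{align*}
where $\vx_{\setpf,f,m} \defeq (x_{\ef,m})_{e \in \setpf}$.

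Next, I would interchange the order of summation, pulling the sum over $\vsigma$ to the outside. Since $p_{\Sigmae}(\sigma_e)$ depends only on $\sigma_e$, it can be pulled outside the sum over the $\vx$-variables, yielding
\begin{align*}
  Z\bigl( \hgraphNavg \bigr)
    &= \sum\limits_{\vsigma} \Biggl( \prod\limits_{e} p_{\Sigmae}(\sigma_e) \Biggr)
        \cdot \Biggl[ \sum\limits_{\vx}
          \prod\limits_{f,m} f\bigl( \vx_{\setpf,f,m} \bigr)
          \cdot
          \prod\limits_{e} P_{e,\sigma_e}\bigl( \vx_{\efi,[M]}, \vx_{\efj,[M]} \bigr)
        \Biggr].
\end{align*}
By Definition~\ref{sec:SST:def:1}, the bracketed expression is exactly $Z\bigl( \hgraphPsig \bigr)$, and by~\eqref{sec:SST:eqn:33} we have $Z\bigl( \hgraphPsig \bigr) = Z\bigl( \hgraphN_{M,\vsigma} \bigr)$. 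Using also $\prod_{e} p_{\Sigmae}(\sigma_e) = p_{\vSigma}(\vsigma)$ from~\eqref{sec:SST:eqn:31}, we arrive at the first equality of the lemma.

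The remaining two equalities require essentially no work. Because $\vSigma$ is uniformly distributed over $\set{S}_{[M]}^{|\setEfull|}$, we have $p_{\vSigma}(\vsigma) = 1/|\hat{\set{N}}_{M}|$ (viewing the sampling of $\vsigma$ as equivalent to the uniform sampling of an element of $\hat{\set{N}}_{M}$, which is consistent with the labelled-cover convention recalled after Definition~\ref{sec:GraCov:def:1}), so the sum over $\vsigma$ is the arithmetic mean over $\hat{\set{N}}_{M}$. The final equality is then Definition~\ref{sec:GraCov:def:2}. There is no real obstacle: the whole lemma is essentially a bookkeeping exercise, whose substance lies entirely in the earlier identity~\eqref{sec:SST:eqn:33} (which expresses that replacing a permutation-wired pair of sockets by the indicator node $P_{e,\sigma_e}$ does not change the partition function).
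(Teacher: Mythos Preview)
Your proof is correct and follows the same approach as the paper, which simply states that the first equality follows from the definition of $\hgraphNavg$ and the last two from~\eqref{sec:SST:eqn:15}. You have just unpacked in more detail the bookkeeping that the paper leaves implicit.
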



\begin{proof}
  The first equality  follows directly from the definition of $\hgraphNavg$ in Definition~\ref{sec:SST:def:2}, and the last two equalities follow from the equalities in~\eqref{sec:SST:eqn:15}.
\end{proof}


In the subsequent discussion, we fix an arbitrary edge $e = (f_{i}, f_{j}) \in \setEfull$.


\begin{definition}\label{sec:SST:def:4}
  We define
  \begin{align*}
    P_e\bigl( \vx_{\efi,[M]}, \vx_{\efj,[M]} \bigr)
      &\defeq
         \sum\limits_{\sigma_e \in \set{S}_{[M]}}
           p_{\Sigmae}(\sigma_e)
           \cdot
             P_{e,\sigma_e}\bigl( \vx_{\efi,[M]}, \vx_{\efj,[M]} \bigr) \\
      &= \frac{1}{M!}
           \cdot
           \sum\limits_{\sigma_e \in \set{S}_{[M]}}
             P_{e,\sigma_e}\bigl( \vx_{\efi,[M]}, \vx_{\efj,[M]} \bigr),
  \end{align*}
  where the last equality follows from $p_{\Sigmae}(\sigma_e) = 1/M!$ for all
  $\sigma_e \in \set{S}_{[M]}$. Also, we define the following vectors:
  \begin{align*}
    \vx_{[M]} &\defeq ( x_{e,f,m} )_{e \in \setpf, f \in \setF, m \in [M]}
    \in \prod\limits_{e \in \setpf, f \in \setF} \setx{e}^{M},
    \nonumber\\
    \vx_{\setpff,m} &\defeq ( x_{e,f,m} )_{e \in \setpf}
    \in \setx{\setpf},
    \quad f \in \setF, \ m \in [M]. 
  \end{align*}
  \edefinition
\end{definition}

In the case of Fig.~\ref{sec:SST:fig:9}, the function $P_e$ corresponds to the exterior function of the dashed box in Fig.~\ref{sec:SST:fig:8}. This function is obtained through the closing-the-box operation~\cite{Loeliger2004}. For the specific example in Fig.~\ref{sec:SST:fig:8}, the
matrix representation of $ P_e $ is given by (the same row and column indexing as
$\matr{P}_{e,\sigma_e}$)
\begin{align}
  \matr{P}_e
    &\defeq
       \sum\limits_{\sigma_{e} \in \set{S}_2}
         \frac{1}{2!}
         \cdot
         \matr{P}_{e,\sigma_e}
     = \begin{pmatrix}
         1 & 0   & 0   & 0 \\
         0 & 1/2 & 1/2 & 0 \\
         0 & 1/2 & 1/2 & 0 \\
         0 & 0   & 0   & 1
       \end{pmatrix}
         \label{sec:SST:eqn:1}.
\end{align}


\begin{lemma}\label{sec:SST:lem:6}
  \label{SEC:SST:LEM:6}
  The partition function $ Z\bigl( \hgraphNavg \bigr) $ can be expressed as
  \begin{align*}
    Z\bigl( \hgraphNavg \bigr) = \sum\limits_{\vx_{[M]}}
    \left( \prod\limits_{m \in [M]} \prod_f f(\vx_{\setpff,m}) \right)
    \cdot \prod\limits_{e} P_e\bigl( \vx_{\efi,[M]}, \vx_{\efj,[M]} \bigr).
  \end{align*}
\end{lemma}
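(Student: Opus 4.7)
The plan is to expand $Z\bigl(\hgraphNavg\bigr)$ directly from its definition as an NFG partition function, and then to carry out the sum over the permutation variables $\vsigma$ inside the product over edges, exploiting the fact that each $\sigma_e$ appears in exactly two local functions of $\hgraphNavg$.

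First I would enumerate the local functions of $\hgraphNavg$ as given by Definitions~\ref{sec:SST:def:1} and~\ref{sec:SST:def:2}. These are: the $M$ copies of each $f \in \setF$, contributing the factor $\prod_{m \in [M]} \prod_f f(\vx_{\setpff,m})$; the edge function nodes $P_{e,\sigma_e}\bigl(\vx_{\efi,[M]}, \vx_{\efj,[M]}\bigr)$ for each $e \in \setEfull$; and the distribution nodes $p_{\Sigmae}(\sigma_e)$ for each $e \in \setEfull$. The free variables are $\vx_{[M]}$ together with $\vsigma = (\sigma_e)_{e \in \setEfull}$. Writing out the partition function and swapping the order of summation, I get
\begin{align*}
  Z\bigl(\hgraphNavg\bigr)
    &= \sum\limits_{\vx_{[M]}} \sum\limits_{\vsigma}
         \left( \prod\limits_{m \in [M]} \prod_f f(\vx_{\setpff,m}) \right)
         \cdot
         \prod\limits_{e}
           \Bigl( P_{e,\sigma_e}\bigl(\vx_{\efi,[M]}, \vx_{\efj,[M]}\bigr) \cdot p_{\Sigmae}(\sigma_e) \Bigr).
\end{align*}

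Next I would use the fact that each $\sigma_e$ appears only in the pair $P_{e,\sigma_e}$ and $p_{\Sigmae}$ associated with the same edge $e$, together with the product form of $p_{\vSigma}$ in~\eqref{sec:SST:eqn:31}, to factor the inner sum over $\vsigma$ as a product over edges:
\begin{align*}
  \sum\limits_{\vsigma}
    \prod\limits_{e}
      \Bigl( P_{e,\sigma_e}\bigl(\vx_{\efi,[M]}, \vx_{\efj,[M]}\bigr) \cdot p_{\Sigmae}(\sigma_e) \Bigr)
    &= \prod\limits_{e}
         \sum\limits_{\sigma_e \in \set{S}_{[M]}}
           p_{\Sigmae}(\sigma_e)
           \cdot
           P_{e,\sigma_e}\bigl(\vx_{\efi,[M]}, \vx_{\efj,[M]}\bigr).
\end{align*}
By Definition~\ref{sec:SST:def:4}, the inner sum on the right equals $P_e\bigl(\vx_{\efi,[M]}, \vx_{\efj,[M]}\bigr)$, and substituting back yields the claimed identity.

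The whole argument is essentially algebraic bookkeeping, so I do not anticipate a serious obstacle; the only thing to be careful about is keeping the indexing consistent, in particular that the $M$ copies of the endpoint function nodes $f_i$ and $f_j$ of an edge $e$ are coupled in $\hgraphNavg$ through the single function node $P_{e,\sigma_e}$ (and its associated $p_{\Sigmae}$), not through $M$ separate edges. This coupling is already encoded in the definition of $P_{e,\sigma_e}$, so the factorization over edges is legitimate and the computation goes through as above.
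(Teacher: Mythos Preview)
Your argument is correct and is essentially the same as the paper's approach: the paper defers the proof to an appendix, but from the surrounding text (in particular the remark after Definition~\ref{sec:SST:def:4} that $P_e$ is the exterior function obtained by the closing-the-box operation on the dashed box containing $P_{e,\sigma_e}$ and $p_{\Sigmae}$) it is clear the paper is doing exactly the algebraic manipulation you describe, just phrased in NFG closing-the-box language. Your explicit factorization of the $\vsigma$-sum across edges is the same step written out in formulas.
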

\begin{proof}
  See Appendix~\ref{apx:alternative expression of ZBM by Pe}.
\end{proof}

In the subsequent analysis, we use the language of the method of types\footnote{For a detailed exposition of the method of types, we refer to the book by Cover and Thomas~\cite{T.M.Cover2006}, although we use different notation in our discussion.} to characterize the function $P_e$. 



\begin{definition}\label{sec:SST:def:6}
  We introduce the following objects:\footnote{Note that, for simplicity, we use $x$ instead of the more precise notation $\xe$.}
  \begin{itemize}

  \item The type
    $\vt_e(\vv_e) \defeq \bigl( t_{e,x}(\vv_e) \bigr)_{x \in \set{X}_e} \in \Pi_{\setxe}$ of a
    vector $\vv_e \defeq (v_{e,m})_{m \in [M]} \in \set{X}_e^M$ is defined to be
    \begin{align*}
      t_{e,x}(\vv_e)
        &\defeq
           \frac{1}{M}
           \cdot
           \bigl| \hskip0.5mm 
             \left\{
               m \in [M]
             \ \middle| \ 
               v_{e,m} = x
             \right\} \hskip0.5mm 
           \bigr|,
             \qquad x \in \set{X}_e.
    \end{align*}

  \item Let $\set{B}_{\set{X}_e^M}$ be the set of possible types of vectors of
    length $M$ over $\set{X}_e$, \ie,
    \begin{align*}
      \set{B}_{\set{X}_e^M}
        &\defeq
           \bigl\{
             \vt_e \in \Pi_{\setxe}
           \bigm|
             \text{there exists $\vv_e \in \set{X}_e^M$ 
                   such that $\vt(\vv_e) = \vt_e$}
           \bigr\}.
    \end{align*}
  Let $ \set{B}_{\set{X}^M} $ be the Cartesian product $ \prod\limits_{e} \set{B}_{\set{X}_e^M} $.
    
  \item Let $\vt_e \in \set{B}_{\set{X}_e^M}$. Then the type class of $\vt_e$ is defined to be the set
    \begin{align*}
      \set{T}_{e,\vt_e}
      &\defeq
         \bigl\{
           \vv_e \in \set{X}_e^M
         \bigm|
           \vt_e(\vv_e) = \vt_e
         \bigr\}.
    \end{align*}
  
  \end{itemize}
  \edefinition
\end{definition}


\begin{lemma}
  \label{sec:SST:lem:2}

  It holds that
  \begin{align*}
    \bigl| \set{B}_{\set{X}_e^M} \bigr|
      &= \binom{|\set{X}_e| + M - 1}{M}, \\
    |\set{T}_{e,\vt_e}|
      &= \frac{M!}{
        \prod\limits_{x \in \set{X}_e} 
        \bigl( (M \cdot t_{e,x})! \bigr)
      },
      \quad \vt_e \in \set{B}_{\set{X}_e^M}.
  \end{align*}
\end{lemma}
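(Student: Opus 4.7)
The plan is to recognize both identities as standard counting results from the method of types (see, \eg,~\cite{T.M.Cover2006}) and to verify them by direct combinatorial arguments. The two statements are essentially independent, so I would treat them in turn.

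\emph{Counting $|\set{B}_{\set{X}_e^M}|$.} By Definition~\ref{sec:SST:def:6}, a type $\vt_e \in \set{B}_{\set{X}_e^M}$ is a vector $(t_{e,x})_{x \in \set{X}_e}$ of non-negative rationals such that $M \cdot t_{e,x} \in \sZp$ for every $x \in \set{X}_e$ and $\sum_{x \in \set{X}_e} M \cdot t_{e,x} = M$. Writing $k_x \defeq M \cdot t_{e,x}$, I would set up a bijection between $\set{B}_{\set{X}_e^M}$ and the set of tuples $(k_x)_{x \in \set{X}_e} \in \sZp^{|\set{X}_e|}$ with $\sum_{x \in \set{X}_e} k_x = M$. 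A standard stars-and-bars argument then yields $|\set{B}_{\set{X}_e^M}| = \binom{|\set{X}_e| + M - 1}{M}$. (The condition that some $\vv_e \in \set{X}_e^M$ realizes each such tuple is trivial: take $\vv_e$ to list each $x$ exactly $k_x$ times.)

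\emph{Counting $|\set{T}_{e,\vt_e}|$.} Fix $\vt_e \in \set{B}_{\set{X}_e^M}$ and again write $k_x \defeq M \cdot t_{e,x}$. The set $\set{T}_{e,\vt_e}$ consists of all sequences $\vv_e = (v_{e,m})_{m \in [M]} \in \set{X}_e^M$ in which each symbol $x \in \set{X}_e$ appears exactly $k_x$ times. I would count these sequences by choosing which of the $M$ positions hold the value $x$ for each $x$ in turn, giving the multinomial coefficient
\[
  |\set{T}_{e,\vt_e}|
    = \binom{M}{k_{x_1}, k_{x_2}, \ldots}
    = \frac{M!}{\prod\limits_{x \in \set{X}_e} k_x!}
    = \frac{M!}{\prod\limits_{x \in \set{X}_e} \bigl( (M \cdot t_{e,x})! \bigr)} ,
\]
as claimed.

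No real obstacle is anticipated: both parts follow from textbook combinatorial identities (stars-and-bars and the multinomial coefficient). The only point requiring a brief word is the verification that the conditions in Definition~\ref{sec:SST:def:6} do characterize $\set{B}_{\set{X}_e^M}$ exactly as the set of $\sZp$-valued tuples $(k_x)_{x \in \set{X}_e}$ summing to $M$, which is immediate from the existence of an explicit realizing sequence $\vv_e$.
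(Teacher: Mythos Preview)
Your proposal is correct and matches the paper's approach: the paper simply states that these expressions follow from standard combinatorial results, and you have supplied precisely those standard arguments (stars-and-bars for $|\set{B}_{\set{X}_e^M}|$ and the multinomial coefficient for $|\set{T}_{e,\vt_e}|$).
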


\begin{proof}
  These expressions can be derived from standard combinatorial results.
\end{proof}


For the illustrated example, the number of possible types is given by the equation
$\bigl| \set{B}_{\set{X}_e^M} \bigr| = \binom{2 + 2 - 1}{2} = 3$ and the set 
$\set{B}_{\set{X}_e^M} $ is given by $ \{ (1,0), \, (1/2,1/2), \, (0,1) \} $. The corresponding type classes have the following
sizes: 
\begin{align*}
  |\set{T}_{e,(1,0)}| = 1,\qquad |\set{T}_{e,(1/2,1/2)}| = 2,\qquad
  |\set{T}_{e,(0,1)}| = 1.
\end{align*}


\begin{lemma}\label{sec:SST:lem:4}
  The function $P_e$ for $ e = (f_{i}, f_{j}) $ satisfies
  \begin{align*}
    P_e\bigl( \vx_{\efi,[M]}, \vx_{\efj,[M]} \bigr)
      &= \begin{cases}
           |\set{T}_{e,\vt_e}|^{-1}
             &  
             \vt_e = \vt_e\bigl(\vx_{\efi,[M]}\bigr) 
             = \vt_e\bigl( \vx_{\efj,[M]} \bigr) \\
           0
             & \text{otherwise}
         \end{cases}.
  \end{align*}
\end{lemma}


\begin{proof}
  This result follows from the definition of the function $P_{e,\sigma_e}$ in
  Definition~\ref{sec:SST:def:1}, along with the properties of the uniform distribution $p_{\Sigmae}$ and the symmetry of $ P_e $, as shown in Definition~\ref{sec:SST:def:4}.
\end{proof}


For the illustrated example, the expression in the above lemma is
corroborated by the matrix in~\eqref{sec:SST:eqn:1}. Here, the matrix is reproduced with horizontal and vertical lines to highlight the three type classes of the row and column indices, respectively:
\begin{align*}
  \matr{P}_e
    &= \left(
         \begin{array}{c|cc|c}
           1 & 0   & 0   & 0 \\
         \hline
           0 & 1/2 & 1/2 & 0 \\
           0 & 1/2 & 1/2 & 0 \\
         \hline
           0 & 0   & 0   & 1
         \end{array}
       \right).
\end{align*}


The matrix $\matr{P}_e$ associated with $P_e$ is known as the
symmetric-subspace projection operator (see, \eg,~\cite{Harrow2013}).

Now we can express the $M$-th power of the degree-$M$ Bethe partition function as the partition function of the average degree-$M$ cover.
\begin{theorem} 
  \label{thm: expression of Z for degree M Bethe partition function}
  The degree-$M$ Bethe partition function satisfies
  \begin{align*}
    \bigl( \ZBM(\graphN) \bigr)^M
    = 
    \sum\limits_{\vx_{[M]}}
    \left( \prod\limits_{m \in [M]} \prod_f f(\vx_{\setpff,m}) \right)
    \cdot \prod\limits_{e} P_e\bigl( \vx_{\efi,[M]}, \vx_{\efj,[M]} \bigr).
  \end{align*}
\end{theorem}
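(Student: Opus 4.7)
The statement is essentially an immediate composition of two results already established in the excerpt, namely Lemma~\ref{sec:SST:lem:1} and Lemma~\ref{sec:SST:lem:6}. My plan is therefore short: first invoke Lemma~\ref{sec:SST:lem:1} to rewrite $\bigl(\ZBM(\graphN)\bigr)^M$ as the partition function of the ``average cover'' NFG $\hgraphNavg$, and then invoke Lemma~\ref{sec:SST:lem:6} to expand $Z(\hgraphNavg)$ as the claimed sum over $\vx_{[M]}$.

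In more detail, I would begin by recalling that, by definition of $\ZBM(\graphN)$ in Definition~\ref{sec:GraCov:def:2} and the uniform distribution $p_{\vSigma}$ on $\set{S}_{[M]}^{|\setEfull|}$ introduced in~\eqref{sec:SST:eqn:31}, one has
\begin{align*}
  \bigl(\ZBM(\graphN)\bigr)^{M}
    &= \Bigl\langle Z\bigl(\hgraphN\bigr)\Bigr\rangle_{\hgraphN \in \hat{\set{N}}_{M}}
     = \sum\limits_{\vsigma} p_{\vSigma}(\vsigma)\cdot Z\bigl(\hgraphN_{M,\vsigma}\bigr).
\end{align*}
Lemma~\ref{sec:SST:lem:1} states exactly that the right-hand side equals $Z(\hgraphNavg)$, the partition function of the NFG built in Definition~\ref{sec:SST:def:2}, which attaches a function node $p_{\Sigmae}$ to every permutation variable $\sigma_{e}$ and uses the indicator function nodes $P_{e,\sigma_e}$ to wire the $M$ copies of each function node $f$.

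Next, I would apply Lemma~\ref{sec:SST:lem:6} to express $Z(\hgraphNavg)$ in closed form. Concretely, after summing out the permutation variables $\sigma_e$ against the uniform priors $p_{\Sigmae}$ at each edge (which is exactly the closing-the-box operation illustrated by the transition from Fig.~\ref{sec:SST:fig:8} to Fig.~\ref{sec:SST:fig:9}), each edge $e=(f_i,f_j)$ contributes the symmetric-subspace projector $P_e\bigl(\vx_{\efi,[M]},\vx_{\efj,[M]}\bigr)$ defined in Definition~\ref{sec:SST:def:4}. The remaining factors are the $M$ independent copies of each local function $f(\vx_{\setpff,m})$ arising from the $M$ duplicated function nodes in Step~\ref{step: for f covers} of Definition~\ref{sec:SST:def:1}. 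This yields
\begin{align*}
  Z(\hgraphNavg)
    &= \sum\limits_{\vx_{[M]}}
         \Biggl(\prod\limits_{m\in[M]} \prod\limits_{f} f(\vx_{\setpff,m})\Biggr)
         \cdot \prod\limits_{e} P_e\bigl(\vx_{\efi,[M]},\vx_{\efj,[M]}\bigr).
\end{align*}
Chaining the two identities gives the theorem.

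Since both Lemmas~\ref{sec:SST:lem:1} and~\ref{sec:SST:lem:6} have already been proved (in the body and in Appendix~\ref{apx:alternative expression of ZBM by Pe}, respectively), the only ``obstacle'' is really bookkeeping: making sure the notational conventions for $\vx_{[M]}$, $\vx_{\setpff,m}$ and $\vx_{\ef,[M]}$ introduced in Definition~\ref{sec:SST:def:4} are used consistently, and that the ordering of the sums and products matches the construction of $\hgraphNavg$. No new combinatorial or analytic ideas are required; the theorem is best viewed as a clean restatement that will be used as the starting point for the subsequent integral reformulation in the limit $M\to\infty$.
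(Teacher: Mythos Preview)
Your proposal is correct and follows essentially the same route as the paper: the paper's proof is a one-line citation of Lemmas~\ref{sec:SST:lem:1}, \ref{sec:SST:lem:6}, and~\ref{sec:SST:lem:4}, and your argument spells out exactly how the first two of these are chained together. The additional reference to Lemma~\ref{sec:SST:lem:4} in the paper is not actually needed for the displayed identity (since $P_e$ is already defined in Definition~\ref{sec:SST:def:4}); it is presumably cited because the type-class characterization of $P_e$ is what makes the formula useful downstream.
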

\begin{proof}
  This follows from Lemmas~\ref{sec:SST:lem:1},~\ref{sec:SST:lem:6},
  and~\ref{sec:SST:lem:4}.
\end{proof}

\section[Reformulation of the Function \texorpdfstring{$P_e$}{}]{Reformulation of the Function \texorpdfstring{$P_e$}{} as an Integral}
\label{sec:SST:Pe:reformulation:1}



In this section, we explore an approach to express the function $P_e$ as an  integral. This integral representation can be utilized to modify the NFG in
Fig.~\ref{sec:SST:fig:9}, resulting in the NFG shown in
Fig.~\ref{sec:SST:fig:10}.



\begin{definition}\label{sec:SST:def:3}
  For the following considerations, we consider an arbitrary edge
  $e = (f_{i}, f_{j}) \in \setEfull$ and a positive integer $M$. We introduce
  the following notation:
  \begin{itemize}

    \item The $2$-norm of a function $\psi_e: \set{X}_e \to \sC$ is defined to be
      $\lVert \psi_e \rVert_{2} \defeq \sqrt{\sum\limits_{\xe} |\psi_e(\xe)|^2}$.

    \item The 2-norm of a vector
      $\vpsi_e = \bigl( \psi_e(\xe) \bigr)_{\! \xe \in \set{X}_e} \in
      \sC^{|\set{X}_e|}$ is defined to be\\
      $\lVert \vpsi_e \rVert_{2} \defeq \sqrt{\sum\limits_{\xe} |\psi_e(\xe)|^2} =
      \sqrt{\vpsi_e^\Herm \cdot \vpsi_e}$.


  \end{itemize}

  Now, let $\cpsi_e: \set{X}_e \to \sC$ be a function with $2$-norm equals one, \ie,
  $\lVert \cpsi_e \rVert_{2} = 1$. Based on $\cpsi_e$, we define the function
  $\funcFS_{e,\cpsi_e}: \set{X}_e^{M} \times \set{X}_e^{M} \to \sC$ to be
  \begin{align*}
    \funcFS_{e,\cpsi_e}\bigl( \vx_{\efi,[M]}, \vx_{\efj,[M]} \bigr)
    &\defeq
      \bigl| \set{B}_{\set{X}_e^M} \bigr|
      \cdot
      \Biggl(
         \prod\limits_{m \in [M]}
           \cpsi_e(x_{\efi,m})
      \Biggr)
      \cdot
      \Biggl(
         \prod\limits_{m \in [M]}
           \overline{ \cpsi_e(x_{\efj,m}) }
      \Biggr),
  \end{align*}
  where $ \vx_{\efi,[M]}, \vx_{\efj,[M]} \in \set{X}_e^M $.
  \edefinition
\end{definition}

If we associate the function $\cpsi_e$ with the column vector
\begin{align}
  \cvpsi_e \defeq \bigl( \cpsi_e(\xe) \bigr)_{\! \xe \in \set{X}_e}
  \in \sC^{|\setxe|},
  \label{sec:SST:eqn:39}
\end{align}
and the
function $\funcFS_{e,\cpsi_e}$ with the matrix $\matrFS_{e,\cpsi_e}$ (with rows
indexed by $\vx_{\efi,[M]}$ and columns indexed by $\vx_{\efj,[M]}$), then we have
\begin{align*}
  \matrFS_{e,\cpsi_e}
    &= \bigl| \set{B}_{\set{X}_e^M} \bigr|
    \cdot
    \bigl( \cvpsi_e^{\otimes M} \bigr)
    \cdot
    \bigl( \cvpsi_e^{\otimes M} \bigr)^{\! \Herm}
    \in \sC^{|\setxe|^{M} \times |\setxe|^{M}},
\end{align*}
where $\cvpsi_e^{\otimes M}$ represents the $M$-fold Kronecker product of
$\cvpsi_e$ with itself.



In the following, for every $ e \in \setEfull $, we introduce the so-called Fubini-Study measure in dimension $|\set{X}_e|$ which can be viewed as a, 
in a suitable sense, uniform measure $\muFSsimple$ on functions
$\cpsi_e: \set{X}_e \to \sC_{\geq 0}$ with a $ 2 $-norm of one. This measure is also a Haar measure.
For the purposes of this thesis, it is sufficient
to know that a sample $\cvpsi_{e} \in \sC^{|\set{X}_e|}$ from $\muFSsimple$ can be
generated as follows:
\begin{itemize}

\item Let
  \begin{align*}
    \vW_e 
      &\defeq
         \bigl(
           (W_{e,\xe,0}, W_{e,\xe,1})
         \bigr)_{\! \xe \in \set{X}_e}
           \in \sR^{|\set{X}_e|} \times \sR^{|\set{X}_e|}
  \end{align*}
  be a length-$2 |\set{X}_e|$ random vector with i.i.d. entries distributed
  according to a normal distribution with mean zero and variance one.

  \item Let $\vw_e$ be a realization of $\vW_e$. Let\footnote{Recall that
    ``$\imagunit$'' denotes the imaginary unit.}
  \begin{align*}
    \cvw_e
      &\defeq
         \vw_e / \lVert \vw_e \rVert_{2}, \nonumber \\
    \cpsi_e(\xe)
      &=
         \cw_{e,\xe,0} + \imagunit \cw_{e,\xe,1}. 
   \nonumber
  \end{align*}
   Because of the bijection between $\cvw$ and $\cvpsi$, we will use, with some slight abuse of notation, the notation $\muFSsimple$ not only for the distribution of $\cvpsi$, but also for the distribution of $\cvw$. In this case, the 2-norm of the function $ \cpsi_e $ is still one, \ie,
   \begin{align*}
    \sum\limits_{\xe} \bigl| \cpsi_e(\xe) \bigr|^{2} = 1. 
   \end{align*}
\end{itemize}


\begin{lemma}\label{sec:SST:prop:2}
  \label{SEC:SST:PROP:2}
  For all $ \vx_{\efi,[M]}, \vx_{\efj,[M]} \in \set{X}_e^M $, it holds that 
  \begin{align*}
    \int
      \funcFS_{e,\cpsi_e}\bigl( \vx_{\efi,[M]}, \vx_{\efj,[M]} \bigr)
    \dd{\muFSsimple\bigl( \cvpsi_e \bigr)} 
    &= P_e\bigl( \vx_{\efi,[M]}, \vx_{\efj,[M]} \bigr).
  \end{align*}
\end{lemma}
\begin{proof}
  See Appendix~\ref{apx:SST}.
\end{proof}

\section[Another NFG Representing the Average \texorpdfstring{$M$}{}-Cover]{An Alternative NFG Representing 
the Average \texorpdfstring{$M$}{}-Cover of an S-NFG\label{sec:SST:sunbsec:2}}


In Section~\ref{sec:SST:NFG:average:cover:1}, we introduced the NFG
$\hgraphNavg$, whose partition function equals
$\bigl( \ZBM(\graphN) \bigr)^M$. Now, based on the
results from Section~\ref{sec:SST:Pe:reformulation:1}, we formulate an
alternative NFG, denoted by $\hgraphNavgalt$. The partition function of $\hgraphNavgalt$ also equals
$\bigl( \ZBM(\graphN) \bigr)^M$.


Recall that $P_e\bigl( \vx_{\efi,[M]}, \vx_{\efj,[M]} \bigr)$ represents the exterior
function of the dashed box in Fig.~\ref{sec:SST:fig:8}; in other
words, if we close the dashed box in Fig.~\ref{sec:SST:fig:8}, then we
obtain a single function node representing the function
$P_e\bigl( \vx_{\efi,[M]}, \vx_{\efj,[M]} \bigr)$, as shown in Fig.~\ref{sec:SST:fig:9}.


The NFG $\hgraphNavgalt$ in Fig.~\ref{sec:SST:fig:10} is defined
such that it is the same as the NFG $\hgraphNavg$ in
Fig.~\ref{sec:SST:fig:9}, except for the content of the function $ P_{e} $
for every $e \in \setEfull$. However, the content of the dashed box in
Fig.~\ref{sec:SST:fig:10} is set up in such a way that its exterior function
equals the function $ P_{e} $ in
Fig.~\ref{sec:SST:fig:9} as well as the exterior function of the dashed box in Fig.~\ref{sec:SST:fig:8}. 

By closing the dashed box in Fig.~\ref{sec:SST:fig:10}, a single function node representing the function $P_e$ is formed, as shown in Fig.~\ref{sec:SST:fig:9}. This ensures that the partition function of the NFG in Fig.~\ref{sec:SST:fig:10} equals the partition functions of the NFGs in Figs.~\ref{sec:SST:fig:8} and~\ref{sec:SST:fig:9}. In other words, it can be stated that\footnote{In terms of NFG language, the
  NFG in Fig.~\ref{sec:SST:fig:10} can be obtained
  by, first applying a closing-the-box operation and
  then an opening-the-box operation~\cite{Loeliger2004} for every $e \in \setEfull$. Note that
  closing-the-box and opening-the-box operations maintain the partition
  function of the NFG.}
\begin{align*}
  Z\bigl( \hgraphNavgalt \bigr) = Z\bigl( \hgraphNavg \bigr) = \bigl( \ZBM(\graphN) \bigr)^M.
\end{align*}




In the NFG shown in Fig.~\ref{sec:SST:fig:10}, compared to the NFG in Fig.~\ref{sec:SST:fig:8}, the following new elements are introduced for every $ e \in \setEfull $:
\begin{itemize}

\item An edge is added with an associated variable (vector)
  $\cvpsi_e \in \sC^{|\set{X}_e|}$.

\item A function node is added, representing the function
  \begin{align}
    p_e\bigl( \cvpsi_e \bigr)
      &\defeq
         \bigl| \set{B}_{\set{X}_e^M} \bigr|
         \cdot
         \muFSsimple\bigl( \cvpsi_e \bigr).
     \label{eqn: def of pe}
  \end{align}

\item $M$ function nodes representing the function $\cpsi_e$ are added, and $M$ function nodes representing the function $\overline{\cpsi_e}$ are added.  The function $\cpsi_e$ is specified by $\cvpsi_e$ in a straightforward manner.

\end{itemize}


\begin{definition}
  \label{sec:SST:def:5}
  \index{SST!for S-NFG}
  We make the following definitions for $ \avgalt{\graphN} $.
  \begin{enumerate}

    \item We define a set of variables
    \begin{align*}
      \vxavgalt \defeq ( x_{\ef,m} )_{e \in \setpf, \, f \in \setF, \, m \in [M]} 
      \in \setx{}^{2M}.
    \end{align*}

    \item We define a set of vectors 
    \begin{align*}
      \cvpsiavgalt \defeq ( \cvpsi_e )_{e \in \setEfull},
    \end{align*}
    where $ \cvpsi_{e} $ is defined in~\eqref{sec:SST:eqn:39} for all $ e \in \setEfull $.

    \item We define the measure $ \muFSsimple\bigl( \cvpsiavgalt \bigr) $ to be the product of Fubini-Study measures: 
    \begin{align*}
      \muFSsimple\bigl( \cvpsiavgalt \bigr) 
      \defeq \prod\limits_{e} \muFSsimple\bigl( \cvpsi_e \bigr).
    \end{align*}

    \item For each $ e = (f_{i}, f_{j}) \in \setEfull $ such that $ i<j $, we define 
    \begin{align*}
      \cpsi_{\efi}(\xe)
      \defeq \cpsi_{e}(\xe), \qquad
      \cpsi_{\efj}(\xe)
      \defeq \overline{\cpsi_{e}(\xe)}, \qquad
      \xe \in \setxe.
    \end{align*}
  \end{enumerate}

  The construction of $\avgalt{\graphN}$ is given as follows.
  \begin{enumerate}
    \item For every $f \in \set{F}$, we follow the same step as in Step~\ref{step: for f covers} in Definition~\ref{sec:SST:def:1}.

    \item For every $ e = (f_{i}, f_{j}) \in \setEfull $, we add the following elements:
    \begin{itemize}

    \item an edge associated with variable (vector) 
    $\cvpsi_e \in \sC^{|\set{X}_e|}$;

    \item a function node representing the function $ p_e\bigl( \cvpsi_e \bigr) $ as defined in~\eqref{eqn: def of pe};

    \item $M$ function nodes representing the function $\cpsi_e$;

    \item $M$ function nodes representing the function $\overline{\cpsi_e}$. 

    \end{itemize}
    Then we connect each of the $M$ sockets $ (x_{e,f_{i},m})_{m \in [M]} $ to distinct function nodes representing the function $ \cpsi_e $. Similarly, we connect each of the $M$ sockets $ (x_{e,f_{j},m})_{m \in [M]} $ to distinct function nodes representing the function $ \overline{\cpsi_e} $. 

  \end{enumerate}

  \edefinition
\end{definition}

With this, the global function of $\hgraphNavgalt$ is
\begin{align}
  g\bigl( \vxavgalt, \cvpsiavgalt \bigr)
    &= \left(
       \prod_f
         \prod\limits_{m \in [M]} f(\vx_{\setpff,m})
     \right)
     \nonumber\\
     &\quad \cdot
     \left(
       \prod\limits_{e=(f_{i},f_{j})}
          \Biggl( 
           \bigl| \set{B}_{\set{X}_e^M} \bigr|
           \cdot
           \prod\limits_{m \in [M]}
            \bigl( 
              \cpsi_{\efi}(x_{\efi,m})
              \cdot
              \cpsi_{\efj}(x_{\efj,m})
            \bigr)  
          \Biggr)
     \right) \label{sec:SST:eqn:26} \\
     &= \Biggl( \prod_f \prod\limits_{m \in [M]} f(\vx_{\setpff,m}) \Biggr)
     \cdot \prod\limits_{e} 
     \funcFS_{e,\cpsi_e}\bigl( \vx_{\efi,[M]}, \vx_{\efj,[M]} \bigr), 
     \label{sec:SST:eqn:10}
\end{align}
for all $ \vx_{\efi,[M]}, \vx_{\efj,[M]} \in \set{X}_e^M, $
where the last equality follows from the definition of $ \funcFS $ in Definition~\ref{sec:SST:def:3}. We define $ \ZSSTM $ to be the partition function of $\hgraphNavgalt$ for fixed $ \cvpsiavgalt $:
\begin{align}
  \ZSSTM\bigl( \avgalt{\graphN}, \cvpsiavgalt \bigr) \defeq 
  \sum\limits_{\vxavgalt} g\bigl( \vxavgalt, \cvpsiavgalt \bigr), 
  \label{sec:SST:eqn:43}
\end{align}
where $ \sum\limits_{\vxavgalt} $ represents $ \sum\limits_{ \vxavgalt \in \setx{}^{2M} } $.

\begin{proposition}
  \label{sec:SST:prop:1}
  \label{SEC:SST:PROP:1}

  The partition function of $\hgraphNavgalt$ satisfies
  \begin{align*}
    Z\bigl( \hgraphNavgalt \bigr)
    &= \int \ZSSTM\bigl( \avgalt{\graphN}, \cvpsiavgalt \bigr)
    \dd{\muFSsimple\bigl( \cvpsiavgalt \bigr)}
    \nonumber\\
    &= 
    \left( \prod\limits_{e}\bigl| \set{B}_{\set{X}_e^M} \bigr| \right)
      \cdot  
      \int \prod_f \Bigl( \ZSSTf\bigl( \cvpsi_{\setpff} \bigr) \Bigr)^{ \! M}
      \dd{\muFSsimple\bigl( \cvpsiavgalt \bigr)} 
      \\
    &= Z\bigl( \hgraphNavg \bigr),
  \end{align*}
  where for each function node $ f \in \setF $, we define
  \begin{align}
    \ZSSTf\bigl( \cvpsi_{\setpff} \bigr) 
    \defeq \sum\limits_{\vx_{\setpf}}
      f(\vx_{\setpf}) 
      \cdot
      \prod\limits_{e \in \setpf} 
      \cpsi_{e,f}(x_{e}), \label{sec:SST:eqn:42}
  \end{align}
  and $ \cvpsi_{\setpff} \defeq ( \cvpsi_{\ef} )_{e \in \setpf} $.
\end{proposition}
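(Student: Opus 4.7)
The plan is to prove the three claimed equalities in the proposition sequentially, chaining together the definitions and the key integral identity from Lemma~\ref{sec:SST:prop:2}. The first and second equalities are essentially rearrangements of sums, products, and integrals in the global function of $\hgraphNavgalt$, while the third equality is the substantive step that uses Lemma~\ref{sec:SST:prop:2} together with Theorem~\ref{thm: expression of Z for degree M Bethe partition function} and Lemma~\ref{sec:SST:lem:1}.

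For the first equality, I would appeal directly to the definition of the partition function of $\hgraphNavgalt$: since the NFG has both discrete variables $\vxavgalt$ and continuous edge variables $\cvpsiavgalt$ (with the function nodes $p_e$ encoding the Fubini-Study measure $\muFSsimple$), the partition function is the iterated sum/integral of the global function $g\bigl( \vxavgalt, \cvpsiavgalt \bigr)$ in~\eqref{sec:SST:eqn:10}. Pulling the summation over $\vxavgalt$ inside the integral and comparing with~\eqref{sec:SST:eqn:43} yields $Z\bigl( \hgraphNavgalt \bigr) = \int \ZSSTM\bigl( \avgalt{\graphN}, \cvpsiavgalt \bigr) \dd{\muFSsimple\bigl( \cvpsiavgalt \bigr)}$.

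For the second equality, I would start from the factorized form~\eqref{sec:SST:eqn:26} of $g\bigl( \vxavgalt, \cvpsiavgalt \bigr)$ and regroup the product. The key observation is that the factors $\cpsi_{\efi}(x_{\efi,m})$ and $\cpsi_{\efj}(x_{\efj,m})$ attached to each edge $e$ can be redistributed: for each function node $f$ and each cover index $m \in [M]$, the factor $f(\vx_{\setpff,m}) \cdot \prod\limits_{e \in \setpf} \cpsi_{\ef}(x_{\ef,m})$ depends only on $\vx_{\setpff,m}$ and $\cvpsi_{\setpff}$. Summing over $\vx_{\setpff,m} \in \setx{\setpf}$ produces exactly $\ZSSTf\bigl( \cvpsi_{\setpff} \bigr)$ as defined in~\eqref{sec:SST:eqn:42}. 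Taking the product over $m \in [M]$ gives $\bigl( \ZSSTf\bigl( \cvpsi_{\setpff} \bigr) \bigr)^{\! M}$, and the combinatorial prefactors $\bigl| \set{B}_{\set{X}_e^M} \bigr|$ can be pulled outside the integral because they are independent of $\cvpsiavgalt$. This delivers the middle expression in the proposition.

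For the third equality, I would return to the representation of the global function in terms of $\funcFS_{e,\cpsi_e}$ given in~\eqref{sec:SST:eqn:10} and use Fubini to interchange the $\vxavgalt$-sum with the $\cvpsiavgalt$-integral, and then push the integral past the $f$-factors (which do not depend on $\cvpsiavgalt$). Because $\muFSsimple\bigl( \cvpsiavgalt \bigr) = \prod\limits_{e} \muFSsimple\bigl( \cvpsi_e \bigr)$, the $\cvpsiavgalt$-integral factors over edges, and for each edge $e$ Lemma~\ref{sec:SST:prop:2} gives
\[
  \int \funcFS_{e,\cpsi_e}\bigl( \vx_{\efi,[M]}, \vx_{\efj,[M]} \bigr) \dd{\muFSsimple\bigl( \cvpsi_e \bigr)} = P_e\bigl( \vx_{\efi,[M]}, \vx_{\efj,[M]} \bigr).
\]
Substituting this into the expression and then invoking Theorem~\ref{thm: expression of Z for degree M Bethe partition function} recovers $\bigl( \ZBM(\graphN) \bigr)^M$, which by Lemma~\ref{sec:SST:lem:1} equals $Z\bigl( \hgraphNavg \bigr)$, closing the chain.

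I expect the routine part to be the bookkeeping for the second equality, namely verifying that each $\cpsi_{\ef}(x_{\ef,m})$-factor lands in the correct $\ZSSTf$-term under the regrouping; conventions about the orientation $i < j$ of an edge $e = (f_i, f_j)$ and the corresponding conjugation $\cpsi_{\efj} = \overline{\cpsi_e}$ (via Definition~\ref{sec:SST:def:5}) need to be tracked carefully so that $\ZSSTf\bigl( \cvpsi_{\setpff} \bigr)$ in~\eqref{sec:SST:eqn:42} is the correct quantity. The substantive step is the third equality, but all the work there has been isolated into Lemma~\ref{sec:SST:prop:2}; once that identity is available, the Fubini-based rearrangement and the application of Theorem~\ref{thm: expression of Z for degree M Bethe partition function} are essentially mechanical.
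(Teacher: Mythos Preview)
Your proposal is correct and follows essentially the same approach as the paper. The paper's proof (deferred to an appendix) proceeds exactly as you outline: the first equality is by definition, the second by regrouping the global function in~\eqref{sec:SST:eqn:26} so that, conditioned on $\cvpsiavgalt$, the NFG decomposes into $|\setF|\cdot M$ independent tree-structured components, and the third by applying Lemma~\ref{sec:SST:prop:2} edgewise to recover $P_e$ and hence $Z\bigl(\hgraphNavg\bigr)$. The only cosmetic difference is that for the last step you route through Theorem~\ref{thm: expression of Z for degree M Bethe partition function} and Lemma~\ref{sec:SST:lem:1}, whereas the paper more directly invokes Lemma~\ref{sec:SST:lem:6}; these are equivalent.
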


\begin{proof}
  See Appendix~\ref{apx:property of SST}.
\end{proof}

Consider conditioning on $\cvpsiavgalt$. The simplification $ \prod_f \Bigl( \ZSSTf\bigl( \cvpsi_{\setpff} \bigr) \Bigr)^{ \! M} $ obtained in Proposition~\ref{sec:SST:prop:1} is made possible by decomposing the NFG into $|\setF| \cdot M$ tree-structured components. This decomposition allows us to factorize and separate the partition function into individual components, each corresponding to a tree-structured factor graph consisting of a function node. 



\section{Combining the Results of this Section}
\label{sec:SST:combining:results:1}

Combining the main results of this section, \ie,
\begin{alignat*}{2}
  \text{(Introduction of Chapter~\ref{chapt:SST})} \quad &&
  \ZBM(\graphN)
    &=
       \sqrt[M]{
         \Bigl\langle 
           Z\bigl( \hgraphN \bigr)
         \Bigr\rangle_{\hgraphN \in \hat{\set{N}}_{M}} }, 
  \nonumber\\
  \text{(Lemma~\ref{sec:SST:lem:1})} \quad &&
  Z\bigl( \hgraphNavg \bigr)
    &= \bigl( \ZBM(\graphN) \bigr)^{\! M}, 
  \nonumber\\
  \text{(Proposition~\ref{sec:SST:prop:1})} \quad &&
  Z\bigl( \hgraphNavgalt \bigr)
    &= Z\bigl( \hgraphNavg \bigr), 
  \nonumber\\
  \text{(Proposition~\ref{sec:SST:prop:1})} \quad &&
  Z\bigl( \hgraphNavgalt \bigr)
    &=\left( \prod\limits_{e}\bigl| \set{B}_{\set{X}_e^M} \bigr| \right)
    \nonumber \\
    &&&\quad \cdot  
    \int \prod_f \Bigl( \ZSSTf\bigl( \cvpsi_{\setpff} \bigr) \Bigr)^{ \! M}
    \dd{\muFSsimple\bigl( \cvpsiavgalt \bigr)},
  \nonumber
\end{alignat*}
we can obtain the expression for $\ZBM(\graphN)$ as follows:
\begin{align*}
  \ZBM(\graphN)
  &=  
  \Biggl( \prod\limits_{e} \bigl| \set{B}_{\set{X}_e^M} \bigr| \Biggr)^{\!\!\! 1/M}
  \cdot \Biggl(
  \int \prod_f \Bigl( \ZSSTf\bigl( \cvpsi_{\setpff} \bigr) \Bigr)^{ \! M}
  \dd{\muFSsimple\bigl( \cvpsiavgalt \bigr)}
  \Biggr)^{\!\!\! 1/M}. 
\end{align*}

\section{The SST for DE-NFGs}
\label{sec:SST:DENFG:1}
\index{SST!for DE-NFG}
So far, this section was about the SST for S-NFG. The extension to the SST
for DE-NFG is, by and large, straightforward. Comparing with
Fig.~\ref{sec:SST:fig:1} for S-NFGs, we note that the only major change is that for
every edge $e \in \setEfull$, the role of the alphabet $\set{X}_e$ is taken over by
the alphabet $\tset{X}_e = \set{X}_e \times \set{X}_e$.

With this, 
the value of
$\ZBM(\graphN)$ for a DE-NFG $\graphN$ can be written as (compare
with Theorem~\ref{thm: expression of Z for degree M Bethe partition function})
\begin{align}
    \bigl( \ZBM(\graphN) \bigr)^M
    = 
    \sum\limits_{\vx_{[M]}}
    \Biggl( \prod\limits_{m \in [M]} \prod_f f(\tvx_{\setpff,m}) \Biggr)
    \cdot \prod\limits_{e} P_e\bigl( \tvx_{\efi,[M]}, \tvx_{\efj,[M]} \bigr),
  \label{eqn: Z for degree M average cover of DENFG}
\end{align}
where
\begin{align*}
  P_e\bigl( \tvx_{\efi,[M]}, \tvx_{\efj,[M]} \bigr)
  = \begin{cases}
    |\set{T}_{e,\vt_e}|^{-1}
     &  
     \vt_e = \vt_e\bigl(\tvx_{\efi,[M]}\bigr) 
     = \vt_e\bigl( \tvx_{\efj,[M]} \bigr) \\
   0
     & \text{otherwise}
  \end{cases}.
\end{align*}
Following a similar idea as in Section~\ref{sec:SST:combining:results:1}, we also rewrite $\ZBM(\graphN)$ as 
\begin{align*}
    \bigl( \ZBM(\graphN) \bigr)^{\!M}
    &= \biggl( \prod\limits_{e} |\set{B}_{\tset{X}_e^M}| \biggr)^{\!\!\! 1/M}
    \!\!\!\!\cdot\! \left( \int \prod_f 
      \Biggl( \sum\limits_{\tvx_f}
      f(\tvx_f) \prod\limits_{e \in \setpf} 
      \cpsi_{e,f}(\tx_{e,f})
      \Biggr)^{\!\!\!M}
      \!\!\!
      \dd{\muFSsimple(\cvpsiavgalt)}
    \right)^{\!\!\!\! 1/M}
    \!\!\!\!,
\end{align*}
where $\cvpsiavgalt \defeq ( \cvpsi_e )_{e \in \setEfull}$, 
$ \cvpsi_e \defeq \vpsi_e /\| \vpsi_e \|_{2} $ and 
$ \vpsi_e \defeq \bigl( \psi_e(\txe) \bigr)_{\! \txe} $
for all $  e \in \setEfull $,
and where for every
$e \in \setEfull$, the norm-one vector $\cvpsi_e$ is distributed according to the
Fubini-Study measure in dimension $|\tset{X}_e|$.

\chapter{The Graph-Cover Theorem for Special DE-NFG}
\label{chapt: graph cover thm for DENFG}
\label{CHAPT: GCT DENFG}
\label{sec:CheckCon}

In this section, we present a proof for Conjecture~\ref{sec:GraCov:conj:1}, \ie, the graph-cover theorem, for a class of DE-NFGs satisfying an easily checkable condition. In Conjecture~\ref{sec:GraCov:conj:1}, we conjecture that for some DE-NFG $ \graphN $, the limit in~\eqref{sec:CheckCon:eqn:25} holds
    
This section is organized as follows.

\begin{enumerate}

  \item \textbf{Application of the LCT}: We begin by applying the LCT to a DE-NFG $ \graphN $, based on an SPA fixed point that achieves $ \ZBSPA^{*}(\graphN) $. The resulting factor graph is denoted by $ \LCT{\graphN} $. Recall in Proposition~\ref{prop:DENFG:LCT:1}, we establish that $ \ZBM(\graphN) = \ZBM\bigl( \LCT{\graphN} \bigr) $ for any $ M \in \sZpp $.


  \item \textbf{Limit Evaluation}: In order to establish the limit stated in~\eqref{sec:CheckCon:eqn:25}, we need to evaluate $ \ZBM(\graphN) $ as $ M \to \infty $. We propose a checkable sufficient condition based on the local functions in the LCT-transformed DE-NFG  $ \LCT{\graphN} $, which ensures that $ \lim\limits_{M \to \infty}\ZBM\bigl( \LCT{\graphN} ) = \ZBSPA^{*}( \graphN ) $.
\end{enumerate}

The considered setup is given as follows.
\begin{itemize}
    \item Consider an arbitrary DE-NFG $\graphN$.

    \item Assume that among all fixed-point SPA message vectors for $\graphN$, we obtain a vector $\vmu$ that achieves the maximum value of $\ZBSPA(\graphN,\vmu)$, \ie,
    \begin{align*}
         \ZBSPA(\graphN,\vmu) = \ZBSPA^{*}(\graphN) \in \sR_{\geq 0}, \qquad 
         Z_{e}(\vmu) >0, \qquad e \in \setEfull.
    \end{align*}
    (See also Definition~\ref{sec:DENFG:def:3} and Assumption~\ref{sec:DENFG:remk:1}.) 

    \item We apply the LCT w.r.t. $\vmu$ on $\graphN$, resulting in a new DE-NFG denoted by $ \LCT{\graphN} $. We recall the properties of $ \LCT{\graphN} $, which are summarized in Proposition~\ref{prop:DENFG:LCT:1}, 
    as follows.
    \begin{itemize}

       \item It holds that $ Z\bigl( \LCT{\graphN} \bigr) = Z(\graphN) $.

       \item The fixed-point SPA message vector $\vmu$ induces an SPA fixed-point message vector
        $\LCTv{\mu} = \bigl( \LCTv{\mu}_{\ef} \bigr)_{ e \in \setpf,\, f \in \setF}$ for
        $\LCT{\graphN}$, which has a simple structure:
        \begin{align*}
            \LCTv{\mu}_{\ef}
            = \begin{pmatrix}
                1, & 0, & \cdots & 0
            \end{pmatrix}^{\!\! \tran}
            \in \sR^{|\LCTtset{X}_{e}|},
            \qquad
            e \in \setpf,\, f \in \setF.  \nonumber
        \end{align*}
       This simple structure implies
        \begin{align*}
            \ZBSPA^{*}(\graphN)
            = \ZBSPA(\graphN,\vmu)
            = \ZBSPA(\LCT{\graphN}, \LCTv{\mu})
            =
            \prod\limits_{f} \LCT{f}(\bm{0}) \in \sR_{\geq 0},
        \end{align*}
        where the property $ \ZBSPA^{*}(\graphN) = \ZBSPA(\graphN,\vmu) \in \sR_{\geq 0} $ follows from Assumption~\ref{sec:DENFG:remk:1}. 
   \end{itemize}
    
    \item For each $ M \in \sZpp $, we consider a degree-$M$ finite graph cover, 
    denoted by $ \hLCTgraphN $. This graph cover is specified by a collection of permutations 
    $ \vsigma = (\sigma_e)_{e \in \setEfull} 
    \in \set{S}^{|\setEfull|}_{M} $.

    The collection of vectors given by
    \begin{align*}
        \hat{\LCTv{\mu}}_{(e,m),(f_{i},m)}=
        \hat{\LCTv{\mu}}_{(e,m),(f_{j},\sigma_{e}(m))}
        = \begin{pmatrix}
            1, & 0, & \cdots & 0
        \end{pmatrix}^{\!\!\! \tran},
        \,\, m \in [M],\, e = (f_{i},f_{j}) \in \setEfull
    \end{align*}
    represents a collection of fixed-point SPA message vectors for $ \hLCTgraphN $.
    Here, $ \hat{\LCTv{\mu}}_{(e,m),(f_{i},m')} $ denotes the SPA message vector from the edge $ (e,m) $ to local function $ (f_{i},m') $ for $ m, m' \in [M] $.
    (Because an $M$-cover of $\hLCTgraphN$ looks locally the same as the base graph $\LCT{\graphN}$, the SPA fixed-point vector $ \LCTv{\mu}$ for $\LCT{\graphN}$ yields an SPA fixed point vector $\hat{\LCTv{\mu}}$ for $\hLCTgraphN$ simply by repeating every message $M$ times.\footnote{This observation was the original motivation for the graph cover analysis of message-passing iterative decoders~\cite{Koetter2003,Koetter2007}.})

    \item The degree-$M$ Bethe partition functions of $ \graphN $ and 
        $ \LCT{\graphN} $ are equal, \ie,
    \begin{align*}
        \ZBM(\graphN) 
        &\overset{(a)}{=} \sqrt[M]{
          \bigl\langle
             Z\bigl( \hgraphN \bigr)
           \bigr\rangle_{ \hgraphN \in \hat{\set{N}}_{M}}
        }
        \overset{(b)}{=} \sqrt[M]{
            \Bigl\langle
                Z\bigl( \hLCTgraphN \bigr)
            \Bigr\rangle_{ \hLCTgraphN \in \hLCT{\set{N}}_{M} }
        }
        \overset{(c)}{=} \ZBM\bigl( \LCT{\graphN} \bigr) ,\quad M \in \sZpp,  \nonumber
    \end{align*}
    where step $(a)$ follows from the definition of the function $ \ZBM $ in Definition~\ref{sec:GraCov:def:2}, where step $(b)$ follows from Proposition~\ref{prop:DENFG:LCT:1},
    and where step $(c)$ again follows from~Definition~\ref{sec:GraCov:def:2}.

    \item In the following, we evaluate $ \ZBM\bigl( \LCT{\graphN} \bigr) $ as $ M $ grows to infinity.


    
    (Note that the order of operations, namely constructing a graph cover and applying the LCT, does not affect the resulting degree-$M$ Bethe partition function. In other words, changing the order of these two operations does not alter the outcome.)
\end{itemize}

\begin{theorem}\label{sec:CheckCon:thm:1}
  \label{SEC:CHECKCON:THM:1}
  \index{Graph-cover theorem!for special DE-NFG}
  Consider a DE-NFG $ \graphN $, along with its associated LCT-transformed DE-NFG $ \LCT{\graphN} $. If 
  the following checkable inequality is satisfied:
  \begin{align}
    \frac{3}{2}
    \cdot \ZBSPA^{*}(\graphN)
    &>
    \prod\limits_{f}
     \left( 
        \sum\limits_{\LCTtv{x}_{\setpf}}
        \left| \LCT{f}( \LCTtv{x}_{\setpf} ) \right|
    \right),
    \label{sec:CheckCon:eqn:70}
  \end{align}
  then we have
  \begin{align*}
    \lim\limits_{M \to \infty}
    \ZBM(\graphN)
    = \lim\limits_{M \to \infty}
    \ZBM(\LCT{\graphN})
    = \ZBSPA^{*}(\graphN). 
  \end{align*}
\end{theorem}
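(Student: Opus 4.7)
The strategy is to combine the LCT with the SST and then carry out a Laplace-type analysis of the resulting Fubini--Study integral. First, I apply the LCT to $\graphN$ based on an SPA fixed-point message vector $\vmu$ achieving $\ZBSPA^{*}(\graphN)$; by Proposition~\ref{prop:DENFG:LCT:1} this simultaneously provides $\ZBM(\graphN) = \ZBM(\LCT\graphN)$ for every $M \in \sZpp$, the identity $\ZBSPA^{*}(\graphN) = \prod_{f} \LCT{f}(\tv{0})$, and the crucial sparsity $\LCT{f}(\LCTtv{x}_{\setpf}) = 0$ whenever $\LCTv{x}_{\setpf}$ has Hamming weight one. The problem therefore reduces to analyzing $\ZBM(\LCT\graphN)$ in the limit $M \to \infty$.

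Next, I apply the DE-NFG version of the SST from Section~\ref{sec:SST:DENFG:1} to $\LCT\graphN$ to obtain
\begin{align*}
\bigl( \ZBM(\LCT\graphN) \bigr)^{M}
&= \Bigl( \prod_{e} |\set{B}_{\tset{X}_{e}^{M}}| \Bigr) \cdot \int \prod_{f} \bigl( \ZSSTf(\cvpsi_{\setpff}) \bigr)^{M} \, d\muFSsimple(\cvpsiavgalt),
\end{align*}
where $\ZSSTf(\cvpsi_{\setpff}) = \sum_{\tvx_{\setpf}} \LCT{f}(\tvx_{\setpf}) \prod_{e \in \setpf} \cpsi_{e,f}(\tx_{e})$. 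At the distinguished unit vector $\cvpsi^{\star}$ supported on $\tzero$, one has $\ZSSTf(\cvpsi^{\star}_{\setpff}) = \LCT{f}(\tv{0})$, so the integrand evaluated at $\cvpsi^{\star}$ equals $(\ZBSPA^{*}(\graphN))^{M}$.

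I then split the integration domain into a neighborhood $\set{U}_{M}$ of $\cvpsi^{\star}$ (whose radius is tuned in $M$) and its complement. On $\set{U}_{M}$ I would parametrize $|\cpsi_{e}(\tzero)|^{2} = 1 - \epsilon_{e}$ with $\sum_{\tx_{e} \neq \tzero} |\cpsi_{e}(\tx_{e})|^{2} = \epsilon_{e}$ small; the weight-one sparsity of $\LCT{f}$ then forces the leading-order deviation of $\ZSSTf(\cvpsi_{\setpff})$ from $\LCT{f}(\tv{0})$ to involve only products $\sqrt{\epsilon_{e_{1}} \epsilon_{e_{2}}}$ across pairs of distinct edges incident to $f$, which, combined with a Fubini--Study volume estimate, should give a neighborhood contribution of the form $(\ZBSPA^{*}(\graphN))^{M} \cdot (1 - o_{M}(1))$. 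On the complement, the uniform bound $|\ZSSTf(\cvpsi_{\setpff})| \leq \sum_{\tvx_{\setpf}} |\LCT{f}(\tvx_{\setpf})|$, which is a direct consequence of $|\cpsi_{e}(\tx_{e})| \leq 1$, yields a contribution of at most $\bigl( \prod_{f} \sum_{\tvx_{\setpf}} |\LCT{f}(\tvx_{\setpf})| \bigr)^{M}$ times the polynomial-in-$M$ prefactor $\prod_{e} |\set{B}_{\tset{X}_{e}^{M}}|$, which disappears after taking $M$-th roots.

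Combining the two pieces and taking $M \to \infty$ is intended to give both $\liminf_{M} \ZBM(\graphN) \geq \ZBSPA^{*}(\graphN)$ and $\limsup_{M} \ZBM(\graphN) \leq \ZBSPA^{*}(\graphN)$. The main obstacle is the complement estimate: because $\prod_{f} (\ZSSTf)^{M}$ is complex-valued and can oscillate strongly in phase, a pointwise absolute-value bound is wasteful, yet it is the only tractable option at this level of generality. The factor $3/2$ in the hypothesis~\eqref{sec:CheckCon:eqn:70} is precisely the slack that makes this wasteful bound still tight enough to close the argument: it allows one to choose the radius of $\set{U}_{M}$ to shrink with $M$ at a rate such that the neighborhood contribution strictly dominates the complement. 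Replacing the constant $3/2$ by the tight value $1$ would presumably require tracking cancellations in the oscillating integrand, which is exactly the harder general case left open in Conjecture~\ref{sec:GraCov:conj:1}.
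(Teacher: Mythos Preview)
Your LCT step is correct and you have identified the right properties: $\ZBM(\graphN)=\ZBM(\LCT{\graphN})$, $\ZBSPA^{*}(\graphN)=\prod_f\LCT{f}(\tv{0})$, and the weight-one vanishing. The difficulty is entirely in the asymptotic analysis of $\bigl(\ZBM(\LCT{\graphN})\bigr)^M$, and here your argument has a genuine gap.

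Write $A=\ZBSPA^{*}(\graphN)=\prod_f\LCT{f}(\tv{0})$ and $B=\prod_f\sum_{\LCTtv{x}_{\setpf}}|\LCT{f}(\LCTtv{x}_{\setpf})|$. Because $\matr{C}_{\LCT{f}}$ is Hermitian, each $\LCT{f}(\tv{0})$ is real, and hence $\sum_{\LCTtv{x}_{\setpf}}|\LCT{f}(\LCTtv{x}_{\setpf})|\geq |\LCT{f}(\tv{0})|$, so $B\geq |A|=A$. Thus the hypothesis~\eqref{sec:CheckCon:eqn:70} is the statement $A\leq B<\tfrac{3}{2}A$. Your complement bound, $|\ZSSTf|\leq\sum|\LCT{f}|$, yields a complement contribution of size at most $B^M$ times a polynomial prefactor. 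Since $B\geq A$ always (with equality only in the trivial case $\LCT{f}\equiv 0$ off $\tv{0}$), this bound can \emph{never} be sub-dominant to the neighborhood contribution $\sim A^M$ after taking $M$-th roots. Shrinking the radius of $\set{U}_M$ does not help: the pointwise bound on the complement is insensitive to where the boundary of $\set{U}_M$ sits, and the Fubini--Study measure of the complement is bounded above by $1$, not by anything that decays. So your proposed mechanism for the constant $3/2$ is not the actual one.

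What the condition $B<\tfrac{3}{2}A$ really controls is not a complement bound in the SST integral but rather a bound on $\bigl|\bigl(\ZBM(\LCT{\graphN})\bigr)^M-A^M\bigr|$ obtained directly from the average-cover expression $Z\bigl(\hLCT{\graphN}_{M,\mathrm{avg}}\bigr)=\sum_{\LCTtv{x}_{[M]}}\prod_{f,m}\LCT{f}(\LCTtv{x}_{\setpff,m})\cdot\prod_e P_e$ of Section~\ref{sec:SST:DENFG:1}. The all-$\tzero$ configuration contributes exactly $A^M$, and the remaining configurations, thanks to the weight-one vanishing and the structure of $P_e$ (which forces matching types on both endpoints of each edge and carries a factor $|\set{T}_{e,\vt_e}|^{-1}$), can be organized so that their total absolute contribution is bounded by a sum whose $M$-th root is governed by $B-A<\tfrac{1}{2}A$ rather than by $B$ itself. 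It is this decomposition, isolating the $\tzero$ term and controlling the rest combinatorially, that produces the $3/2$; a global Laplace split of the SST integral into a neighborhood and its complement does not.
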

\begin{proof}
  See Appendix~\ref{apx:check_graph_thm}.
\end{proof}

\chapter{Summary and Open Problems}

In this chapter, we summarize this thesis and present open problems that are left for future research.

\label{chapt: summary and outlook}

\section[Degree-\texorpdfstring{$M$}{}-Bethe-Permanent-based Bounds]{Degree-\texorpdfstring{$M$}{}-Bethe-permanent-based bounds for 
\texorpdfstring{\\}{} the permanent of a non-negative square matrix}

In Chapter~\ref{chapt: preliminaries}, we have introduced a class of S-NFGs such that the partition function of each S-NFG equals the permanent of a non-negative square matrix. We have also introduced the degree-$M$ Bethe and scaled Sinkhorn permanents, which are based on the degree-$M$ covers of the underlying S-NFG.

In Chapter~\ref{chapt: finite graph cover based bound for matrix permanent}, we have shown that it is possible to bound the permanent of a non-negative
matrix by its degree-$M$ Bethe and scaled Sinkhorn permanents, thereby, among
other statements, proving a conjecture in~\cite{Vontobel2013a}. 
A key result are the recursive
expressions for the coefficients appearing in the expressions for the $M$-th
power of the degree-$M$ Bethe permanent and the $M$-th power of the
degree-$M$ scaled Sinkhorn permanent.

While the results in Chapter~\ref{chapt: finite graph cover based bound for matrix permanent} do not yield any new numerical techniques for
bounding the permanent of a non-negative square matrix, they can, besides being of
inherent combinatorial interest, potentially be used to obtain new analytical
results about the Bethe and the Sinkhorn approximation of the permanent,
similar to the results in the paper~\cite{KitShing2022}. (See also the
discussion at the end of Section~\ref{sec:main:constributions:1}.)

Some open problems about the topic studied in Chapter~\ref{chapt: finite graph cover based bound for matrix permanent} include the following:
\begin{itemize}

\item Our proofs in Chapter~\ref{chapt: finite graph cover based bound for matrix permanent} used some rather strong results from~\cite{Schrijver1998,
    Gurvits2011, Anari2019, Egorychev1981, Falikman1981}. We leave it as an
  open problem to find ``more basic'' proofs for some of the inequalities that
  were established in this thesis.

\item One can verify that proving the first inequality
  in~\eqref{eq:ratio:permanent:bethe:permanent:1} for all~$\vtheta$ is
  equivalent to proving $ \HG'(\vgam) \geq \HBthe( \vgam ) $ for all
  $ \vgam \in \Gamma_{n} $. Note that proving the first inequality
  in~\eqref{eq:ratio:permanent:bethe:permanent:1} for all $ \vtheta $ is also
  equivalent to proving Schrijver's inequality~\cite{Schrijver1998}. We leave it as an open problem
  to prove $ \HG'(\vgam) \geq \HBthe( \vgam ) $ for all
  $ \vgam \in \Gamma_{n} $ via an approach different from Schrijver's
  inequality and its variations.

\item It was remarked that both the second inequality in~\eqref{SEC:1:EQN:147}
  and the first inequality in~\eqref{sec:1:eqn:200} are not tight. We leave it
  as an open problem to find tighter bounds, or even tight bounds.

\item Recently, Huang and Vontobel~\cite{Huang2024} extended Vontobel's
  results in~\cite{Vontobel2013a} by considering a class of bipartite S-NFGs
  where each local function is defined based on a (possibly different)
  multi-affine homogeneous real stable polynomial. The S-NFGs considered in~\cite{Huang2024} can be seen as an extension of the class of S-NFGs considered in Chapter~\ref{chapt: finite graph cover based bound for matrix permanent}. Generalizing the results in Section~\ref{sec:main:constributions:1} to the S-NFGs discussed
  in~\cite{Huang2024} remains an open problem for future research.

\end{itemize}

\section[The Graph-Cover Theorem for Special DE-FGs]{Characterizing the Bethe Partition Function of \texorpdfstring{\\}{} Double-Edges Factor Graphs via Graph Covers}

In Chapter~\ref{chapt: preliminaries}, we have also introduced the general definitions of both S-NFGs and DE-NFGs. The latter are highly relevant in the field of quantum information processing. We have defined the SPA and finite graph covers for both S-NFGs and DE-NFGs. A key difference between S-NFGs and DE-NFGs is that S-NFGs consist of non-negative local functions only, while DE-NFGs have complex-valued local functions that satisfy some positive semi-definiteness constraints. We have considered a special SPA for DE-NFGs, where the messages are initialized to satisfy certain positive semi-definiteness constraints.

In Chapters~\ref{chapt: LCT} and~\ref{chapt:SST}, we have defined the LCT and the SST for both S-NFGs and DE-NFGs. Our proposed LCT does not have the limitation on non-zero components in SPA fixed-point messages, which extends the LCT proposed by Mori~\cite{Mori2015}.  
The SST is a novel contribution to the graphical models literature and has broader applications beyond proving the main results of this thesis.

In Chapter~\ref{chapt: graph cover thm for DENFG}, we have proven the graph-cover theorem, \ie, a combinatorial characterization of the Bethe partition function in terms of finite graph covers, for a class of DE-NFGs that satisfy an easily checkable condition. The proof is based on analyzing the finite graphs cover of loop-calculus transformed DE-NFGs.  

The graph-cover theorem for DE-NFGs, beyond those satisfying the easily checkable condition stated in Chapter~\ref{chapt: graph cover thm for DENFG}, remains an open problem.
In Chapter~\ref{chapt: preliminaries}, we have supported the graph-cover conjecture for DE-NFGs with numerical results on small DE-NFGs. We also note that the SST provides a different perspective to understand finite graph covers, and we hope that the SST can be used for proving the graph-cover theorem for more general DE-NFGs.  



	
	\begin{appendices}



\chapter{Proofs in Chapter~\ref{CHAPT: PRELIMINARIES}}

\section{Properties of \texorpdfstring{$Z(\graphN)$}{} for DE-NFGs}
\label{apx:property of ZN}

We first consider the case where $\graphN$ is a strict-sense DE-NFG,
afterwards the case where $\graphN$ is a weak-sense DE-NFG.

Let $\graphN$ be a strict-sense DE-NFG. Fix some $f \in \setF$. Because
$\matr{C}_f$ is a PSD matrix, the eigenvalue decomposition of $ \matr{C}_f  $ in~\eqref{sec:DENFG:eqn:9} implies that the eigenvalue $ \lambda_{f}(\ell_f) $ associated with the right eigenvector $ \vect{u}_{f,\ell_f} = \bigl( u_{f,\ell_f}(\vx_{\setpf}) \bigr)_{ \! \vx_{\setpf} }  $ satisfies
\begin{align*}
   \lambda_{f}(\ell_f) \in \sR_{\geq 0}, \qquad f \in \setF,\,\ell_f \in \set{L}_f.
\end{align*}
By the decomposition of $ f $ in~\eqref{sec:DENFG:eqn:5}, we get
\begin{align*}
  Z(\graphN) 
    &= \sum_{\tvx} 
         \prod_{f \in \setF}
           f(\tvx_{f}) \\
    &= \sum_{\tvx}
         \prod_{f \in \setF}
           \left( 
            \sum_{\ell_f \in \set{L}_f} 
              \lambda_{f}(\ell_f) \cdot
             u_{f,\ell_f}(\vx_{\setpf}) 
             \cdot
             \overline{ u_{f,\ell_f}(\vx_{\setpf}') }
           \right) \\
    &=  \sum_{\ell_{f_1} \in \set{L}_{f_1}} 
        \cdots
        \sum_{\ell_{f_{|\setF|} \in \set{L}_{|\setF|}}}
        \Biggl( \prod_{f \in \setF} \lambda_{f}(\ell_f) \Biggr)
        \cdot 
        \Biggl( 
            \sum_{ \vx_{\setEfull} \in \setx{\setEfull} }
            \prod_{f \in \setF} u_{f,\ell_f}(\vx_{\setpf}) 
        \Biggr)
        \nonumber\\
        &\hspace{3.6 cm}\cdot 
        \Biggl( 
            \sum_{ \vx_{\setEfull}' \in \setx{\setEfull} }
            \prod_{f \in \setF} 
            \overline{u_{f,\ell_f}(\vx_{\setpf}')}
        \Biggr)
        \\
    &=  \sum_{\ell_{f_1} \in \set{L}_{f_1}} 
        \cdots
        \sum_{\ell_{f_{|\setF|} \in \set{L}_{|\setF|}}}
        \Biggl( 
        \underbrace{
        \prod_{f \in \setF} 
            \lambda_{f}(\ell_f)
        }_{\geq 0} \Biggr)
        \cdot 
        \underbrace{
        \Biggl| 
            \sum_{ \vx_{\setEfull} \in \setx{\setEfull} }
            \prod_{f \in \setF} u_{f,\ell_f}(\vx_{\setpf}) 
        \Biggr|^{2}
        }_{\geq 0}
        \nonumber\\
        &\geq 0,
\end{align*}
where $ \vx_{\setEfull} \defeq ( \xe )_{e \in \setEfull} $, 
$ \vx_{\setEfull}' \defeq  ( \xe' )_{e \in  \setEfull}   $,
and $ \setx{\setEfull} \defeq \prod_{e} \setx{e} $.

Let $\graphN$ be a weak-sense DE-NFG. Fix some $f \in \setF$. Because
$\matr{C}_f$ is a Hermitian matrix, the eigenvalue decomposition of $ \matr{C}_f  $ in~\eqref{sec:DENFG:eqn:9} implies that
\begin{align*}
   \lambda_{f}(\ell_f) \in \sR,  \qquad f \in \setF,\,\ell_f \in \set{L}_f.
\end{align*}
Similar calculations as above lead to
\begin{align*}
  Z(\graphN) 
    &= \sum_{\ell_{f_1} \in \set{L}_{f_1}} 
        \cdots
        \sum_{\ell_{f_{|\setF|} \in \set{L}_{|\setF|}}}
        \Biggl( 
        \underbrace{
        \prod_{f \in \setF} 
            \lambda_{f}(\ell_f)
        }_{\in \sR}
        \Biggr)
        \cdot 
        \underbrace{
        \Biggl| 
            \sum_{ \vx_{\setEfull} \in \setx{\setEfull} }
            \prod_{f \in \setF} u_{f,\ell_f}(\vx_{\setpf}) 
        \Biggr|^{2}
        }_{\geq 0}
        \in \sR.
\end{align*}


\ifx\sectionheaderonnewpage\x
\clearpage
\fi

\ifx\sectionheaderonnewpage\x
\clearpage
\fi


\section{Proof of Proposition~\ref{PROP: SUFFICIENT CONDITION FOR PD MESSAGES}}
\label{apx: sufficient condition for pd messages}
The proof can be viewed as a generalization of the main idea in the proof~\cite[Proposition 4]{Yedidia2005}.

For each $ e \in \setpf $ and $ f \in \setF $,
the fixed-point message vector $ \vmu_{\etof} $ 
satisfies the following properties.
\begin{itemize}
  \item By Lemma~\ref{sec:DENFG:lem:1}, we know that 
  $ \matr{C}_{\mu_{\etof}} \in \setPSD{\setxe} $ and it has an eigenvalue decomposition:
  \begin{align}
    \matr{C}_{\mu_{\etof}}
    = \sum_{ \ell_{e} \in \set{L}_{e} } 
    \matr{u}_{ \mu_{\etof}, \ell_{e} } 
    \cdot \lambda_{ \mu_{\etof} }( \ell_{e} )
    \cdot \bigl( \matr{u}_{ \mu_{\etof}, \ell_{e} } \bigr)^{\Herm},
    \label{eqn: decomposition of the Choi matrix for SPA message}
  \end{align}
  where $ \set{L}_{e} $ is a finite set with $ |\set{L}_{e}| = |\setxe| $,
  where for each $ \ell_{e} \in \set{L}_{e} $, the column vector
  $ \matr{u}_{ \mu_{\etof}, \ell_{e} }
  = \bigl( u_{ \mu_{\etof}, \ell_{e} }( \xe ) 
  \bigr)_{\! \xe \in \setxe } \in \sC^{|\setxe|} $ 
  is the right-eigenvector associated with the eigenvalue  
  $ \lambda_{ \mu_{\etof} }( \ell_{e} ) \in \sR_{\geq 0} $.
  In particular, the right-eigenvectors of $ \matr{C}_{\mu_{\etof}} $ satisfy
  \begin{align*}
    \bigl( \matr{u}_{ \mu_{\etof}, \ell_{e}' } \bigr)^{\!\Herm}
    \cdot 
    \matr{u}_{ \mu_{\etof}, \ell_{e} }
    = \bigl[ \ell_{e}' \! = \! \ell_{e} \bigr], \qquad 
    \ell_{e},\ell_{e}' \in \set{L}_{e}.
  \end{align*}

  \item By the SPA update rules in Definition~\ref{sec:DENFG:def:2},
  we know that 
  \begin{align}
    1 = \sum_{ \txe } \mu_{\etof}(\txe) 
    = \bm{1}^{\tran} \cdot \matr{C}_{\etof} \cdot \bm{1},
    \label{eqn: property of choi matrix wrt normalization}
  \end{align}
  where $ \bm{1} $ is the all-one column vector of length $ |\setxe| $.
  These equalities imply that $ \matr{C}_{\etof} $ has at least one positive-valued eigenvalue, 
  \ie, there exists an argument $ \ell_{e,\mathrm{p}} \in \set{L}_{e} $ 
  such that  
  \begin{align}
    \lambda_{ \mu_{\etof} }( \ell_{e,\mathrm{p}} )
    \in \sR_{>0}. \label{eqn: property of ell e p}
  \end{align}
  If such $ \ell_{e,\mathrm{p}} $ does not exists, we have $ \matr{C}_{\etof} = \bm{0} $, which contradicts to the equalities in~\eqref{eqn: property of choi matrix wrt normalization}.

  \item Suppose that $ e = (f,f') $. 
  Because of $ \matr{C}_{\mu_{\etof}} \in \setPSD{\setxe} $, 
  we know that the following normalization constant
  \begin{align*}
    \kappa_{\ef}
    = 
    \sum_{\tvx_{\setpf'}}
    f'\bigl( \tvx_{\setpf'} \bigr)
    \cdot \prod_{e' \in \setpf' \setminus e}
    \mu_{e', f'}(\tx_{e'}),
  \end{align*}
  as defined in Definition~\ref{sec:DENFG:def:2}, is nonnegative-valued.
  Based on the definition of the SPA fixed-point message in Definition~\ref{sec:DENFG:def:2},
  we know that $ \kappa_{\ef} \neq 0 $, which implies 
  \begin{align}
     \kappa_{\ef} \in \sR_{>0}.\label{eqn: positive of normalization constant}
  \end{align}
\end{itemize}
Without loss of generality, we consider the following setup.
\begin{itemize}
  \item The edge $ 1 $ connects function nodes $ f_{1} $ and $ f_{2} $.

  \item $ \setpf_{1} = \{1,2,\ldots,|\setpf_{1}|\} $;

  \item $ \vx_{\setpf_{1}} = ( x_{1}, x_{2},\ldots,x_{|\setpf_{1}|} ) $;
\end{itemize}
We first prove that $ \matr{C}_{\mu_{\onetoftwo}} $ is a positive definite matrix by contradiction.
Suppose that for the matrix $ \matr{C}_{\mu_{\onetoftwo}} $, 
there exists an argument $ \ell_{1,0} \in \set{L}_{1} $
such that 
\begin{align*}
  \lambda_{ \mu_{\onetoftwo} }( \ell_{1,0} ) = 0.
\end{align*}
Then we obtain a contradiction:
\begin{align}
  0
  &=
  \lambda_{ \mu_{\onetoftwo} }( \ell_{1,0} )
  \nonumber\\
  &\overset{(a)}{=}
  \bigl( \matr{u}_{ \mu_{\onetoftwo},\ell_{1,0} } \bigr)^{\Herm} 
  \cdot \matr{C}_{\mu_{\onetoftwo}}
  \cdot 
  \matr{u}_{ \mu_{\onetoftwo},\ell_{1,0} }
  \nonumber\\
  &\overset{(b)}{=}
  \kappa_{\onetoftwo}
  \cdot
  \sum_{ 
    \ell_{2},\ldots,\ell_{|\setpf_{1}|}
  }
  \Biggl( 
    \prod_{ e \setminus 1 }
    \underbrace{ \lambda_{\mu_{\etofone}}(\ell_{e}) }_{\overset{(c)}{\geq 0} }
  \Biggr)
  \label{eqn: expanding SPA update rule:2}\\
  &\hspace{3 cm} \cdot 
  \underbrace{
    \Bigr(
      \bigl( \matr{u}_{ \mu_{\onetoftwo},\ell_{1,0} } \bigr)^{\Herm} 
      \mathop{\otimes}_{ e \setminus 1 }
      \bigl( \matr{u}_{\etofone,\ell_{e}} \bigr)^{\tran} 
    \Bigl)
    \cdot
    \matr{C}_{f_{1}}
    \cdot
    \Bigr(
      \matr{u}_{ \mu_{\onetoftwo},\ell_{1,0} } 
      \mathop{\otimes}_{ e \setminus 1 }
      \overline{ \matr{u}_{\etofone,\ell_{e}} }
    \Bigl)
  }_{\overset{(d)}{>0}}
  \label{eqn: expanding SPA update rule:1}\\
  &\overset{(e)}{\geq}
  \underbrace{ \kappa_{\onetoftwo} }_{\overset{(f)}{>}0}
  \cdot
  \Biggl( 
    \prod_{ e \setminus 1 }
    \underbrace{ \lambda_{\mu_{\etofone}}( \ell_{e,p} ) }_{\overset{(g)}{>0}}
  \Biggr)
  \cdot 
  \underbrace{
    \Bigr(
      \bigl( \matr{u}_{ \mu_{\onetoftwo},\ell_{1,0} } \bigr)^{\Herm} 
      \mathop{\otimes}_{ e \setminus 1 }
      \bigl( \matr{u}_{\etofone,\ell_{e,p}} \bigr)^{\tran} 
    \Bigl)
    \cdot
    \matr{C}_{f_{1}}
    \cdot
    \Bigr(
      \matr{u}_{ \mu_{\onetoftwo},\ell_{1,0} } 
      \mathop{\otimes}_{ e \setminus 1 }
      \overline{ \matr{u}_{\etofone,\ell_{e,p}} }
    \Bigl)
  }_{>0}
  \nonumber\\
  &>0, \nonumber
\end{align}
\begin{itemize}
  \item where for simplicity, in the above derivations, 
  we use 
  $ \sum_{ 
    \ell_{2},\ldots,\ell_{|\setpf_{1}|}
  } $,
  $ \prod_{e \setminus 1} $, and $ \mathop{\otimes}_{e\setminus 1}$ 
  instead of
  $ \sum_{ 
    \ell_{2} \in \set{L}_{2},
    \ldots,\ell_{|\setpf_{1}|} \in \set{L}_{|\setpf_{1}|}
  } $,
  $ \prod_{e \in \{2,\ldots,|\setpf_{1}|\}} $,
  and
  $ \otimes_{ e \in \{2,\ldots,|\setpf_{1}|\} } $,

  \item where $ \mathop{\otimes}_{ e \setminus 1 } \matr{u}_{\etofone,\ell_{e,p}} $
  is the Kronecker product of the following columns vectors:
  \begin{align*}
    \matr{u}_{2, f_{1},\ell_{2,p}},\ldots, 
    \matr{u}_{|\setpf_{1}|, f_{1}, \ell_{|\setpf_{1}|,p}} 
  \end{align*}
  \ie,
  \begin{align*}
    \mathop{\otimes}_{ e \setminus 1 } \matr{u}_{\etofone,\ell_{e,p}}
    = \matr{u}_{2, f_{1},\ell_{2,p}}\otimes
    \matr{u}_{3, f_{1},\ell_{3,p}} \otimes \cdots \otimes
    \matr{u}_{|\setpf_{1}|, f_{1}, \ell_{|\setpf_{1}|,p} },
  \end{align*}

  \item where step $(a)$ follows from the fact that $ \lambda_{ \mu_{\onetoftwo} }( \ell_{1,0} ) = 0 $ is the eigenvalue of $ \matr{C}_{\mu_{\onetoftwo}} $ associated with the right-eigenvector $ \matr{u}_{ \mu_{\onetoftwo},\ell_{1,0} } $,

  \item where step $(b)$ follows from combining the definition of the Choi matrix 
  $ \matr{C}_{f} $ in~\eqref{eqn: def of Choi matrix representation} and
  the following derivations for the entries in the matrix 
  $ \matr{C}_{\mu_{\onetoftwo}} = 
  \bigl( \mu_{\onetoftwo}( z_{1}, z_{1}' ) \bigr)_{ \tz_{1} = ( z_{1}, z_{1}') \in \set{X}_{1} } $
  as follows:
  \begin{align*}
      \hspace{-0.5cm}&\hspace{-0.5cm}
      \mu_{\onetoftwo}( z_{1}, z_{1}' )
      \nonumber\\
      &=\kappa_{\onetoftwo}
      \cdot
      \sum_{\tvx_{\setpf_{1}}: \, \tx_{1} = \tz_{1}}
      f_{1}\bigl( \tvx_{\setpf_{1}} \bigr)
      \cdot \prod_{e \in \setpf_{1} \setminus 1}
      \mu_{e, f_{1}}(x_{e}, x_{e}')
      \nonumber\\
      &= 
      \kappa_{\onetoftwo}
      \cdot
      \sum_{\tvx_{\setpf_{1}}: \, \tx_{1} = \tz_{1}}
      f_{1}\bigl( \tvx_{\setpf_{1}} \bigr)
      \cdot \prod_{e \in \setpf_{1} \setminus 1}
        \sum_{ \ell_{e} \in \set{L}_{e} } 
        u_{ \mu_{e, f_{1}}, \ell_{e} }( x_{e} )
        \cdot \lambda_{ \mu_{e,f_{1}} }( \ell_{e} )
        \cdot \overline{ u_{ \mu_{e,f_{1}}, \ell_{e} }(x_{e}') }
      \nonumber\\
      &=
      \kappa_{\onetoftwo}
      \cdot
      \sum_{ 
        \ell_{2},\ldots,\ell_{|\setpf_{1}|}
      }
      \Biggl( 
        \prod_{e \in \setpf_{1} \setminus 1} 
        \lambda_{ \mu_{e,f_{1}} }( \ell_{e} )
      \Biggr)
      \nonumber\\
      &\hspace{3 cm} \cdot 
      \sum_{\tvx_{\setpf_{1}}: \, \tx_{1} = \tz_{1}}
      f_{1}\bigl( \tvx_{\setpf_{1}} \bigr)
      \cdot \prod_{e \in \setpf_{1} \setminus 1}
      u_{ \mu_{e, f_{1}}, \ell_{e} }( x_{e} )
      \cdot \overline{ u_{ \mu_{e,f_{1}}, \ell_{e} }(x_{e}') },
  \end{align*}
  where $ \kappa_{\onetoftwo} $ is the normalization constant ensuring that 
  \begin{align*}
    \sum_{\tz_{1} = ( z_{1}, z_{1}') \in \set{X}_{1}} 
    \mu_{\onetoftwo}( z_{1}, z_{1}' )
    = 1,
  \end{align*}
  where the first the equality follows from the definition of the SPA fixed-point message vector in~\eqref{sec:DENFG:eqn:15},
  where the second equality follows from the decomposition in~\eqref{eqn: decomposition of the Choi matrix for SPA message},
  where the third equality follows from rearranging the terms on the RHS of the second equality,

  \item where step $(c)$ follows from the fact $ \matr{C}_{\mu_{\etof}} \in \setPSD{\setxe} $ and the eigenvalue decomposition of $ \matr{C}_{\mu_{\etof}} $ in~\eqref{eqn: decomposition of the Choi matrix for SPA message}, which implies
  $ \lambda_{\mu_{\etofone}}(\ell_{e}) \in \sR_{\geq 0} $ for all 
  $ \ell_{e} \in \set{L}_{e} $ and $ e \in \{2,\ldots,|\setpf_{1}|\} $,

  \item where step $(d)$ follows from the condition that the Choi matrix 
  $ \matr{C}_{f_{1}} $ is a positive definite matrix, as stated in the proposition statement,

  \item where step $(e)$ follows from the fact that the terms on the RHS of the summation operator 
  $ \sum_{ 
        \ell_{2},\ldots,\ell_{|\setpf_{1}|}
  } $, as shown in~\eqref{eqn: expanding SPA update rule:2} 
  and~\eqref{eqn: expanding SPA update rule:1}, are all non-negative real-valued,

  \item where step $(f)$ follows from~\eqref{eqn: positive of normalization constant},

  \item where step $(g)$ follows from the property of 
  $ \ell_{e, \mathrm{p}} $ in~\eqref{eqn: property of ell e p}.
\end{itemize}

The proof for the property that the Choi-matrix representations of other SPA fixed-point messages are all positive definite matrices, is similar and thus it is omitted here.

\ifx\sectionheaderonnewpage\x
\clearpage
\fi

\chapter{Proofs in Chapter~\ref{CHAPT: FGC BOUND PERMANENT}} 
\label{cha:proofs_in_chapter_chapt: fgc bound permanent}


\section{Proof of Proposition~\ref{SEC:1:PROP:14}}
\label{apx:24}


For a large integer $M$, we obtain
\begin{align*}
  \bigl( \permscsM{M}(\mtheta) \bigr)^{\! M} 
  &\overset{(a)}{=}
     \sum_{ \vgam \in \GamMnthe }
       \mtheta^{ M \cdot \vgam } \cdot \CscSgen{M}{n}( \vgam ) \\
  &\overset{(b)}{=}  \sum_{ \vgam \in \GamMnthe }
      \exp\bigl( - M \cdot \UscSthe(\vgam) \bigr)
      \cdot
      \CscSgen{M}{n}( \vgam ) \\
  &\overset{(c)}{=}  
    \exp( o(M) ) \cdot \!
    \max_{ \vgam \in \GamMnthe }
      \exp\bigl( - M \cdot \UscSthe(\vgam) \bigr)
      \cdot
      \CscSgen{M}{n}(\vgam)
    \nonumber \\
  &\overset{(d)}{=}  
    \exp( o(M) ) \cdot 
    \max_{ \vgam \in \GamMnthe } \exp\bigl( - M \cdot \FscSthe(\vgam) \bigr),
\end{align*}
where step~$(a)$ follows from~\eqref{sec:1:eqn:168}, where step~$(b)$ follows
from the expression of $ \UscSthe $ in Definition~\ref{sec:1:def:20}, where
step~$(c)$ follows from the fact that $ \bigl| \GamMnthe \bigr| $ grows
polynomially with respect to $ M $, and where step~$(d)$ follows
from~\eqref{sec:1:eqn:207} and the definition of $\FscSthe(\vgam)$ in
Definition~\ref{sec:1:def:20}. Then we have
\begin{align*}
  \limsup_{M \to \infty}
     \permscsM{M}(\mtheta)
  &= \limsup_{M \to \infty} 
       \exp\biggl( \frac{o(M)}{M} \biggr)
       \cdot
       \sqrt[M]{\max_{ \vgam \in \GamMnthe } 
                    \exp\bigl( - M \cdot \FscSthe(\vgam) \bigr)} \\
  &= \limsup_{M \to \infty} 
       \max_{ \vgam \in \GamMnthe } 
         \exp\bigl( - \FscSthe(\vgam) \bigr) \\
  &\overset{(a)}{=}
     \sup_{ \vgam \in \Gampnthe } 
       \exp\bigl( - \FscSthe(\vgam) \bigr) \\
  &= \exp
       \biggl(
         -
         \inf_{ \vgam \in \Gampnthe }
           \FscSthe(\vgam)
       \biggr) \\
  &\overset{(b)}{=} 
     \exp
       \biggl(
         -
         \min_{ \vgam \in \Gamnthe }
           \FscSthe(\vgam)
       \biggr) \\
  &\overset{(c)}{=} 
     \permscs(\vtheta),
\end{align*}
where step~$(a)$ is based on the definition
\begin{align*}
  \Gampnthe
    &\defeq
       \bigcup_{M \in \sZpp}
         \GamMnthe,
\end{align*}
where step~$(b)$ follows from the set $\Gampnthe$ being dense in the compact
set $\Gamnthe$, and where step~$(c)$ follows from the definition of
$\permscs(\vtheta)$ in~\eqref{sec:1:eqn:192:part:2}.


\ifx\withclearpagecommands\x
\clearpage
\fi

\section{Proof of Lemma~\ref{SEC:1:LEM:15:COPY:2}}
\label{app:sec:1:lem:15:copy:2}


In this appendix, we use the following, additional notation and conventions:
\begin{itemize}

\item We use the abbreviation $r$ for $r(\vgam)$.

\item We define the matrices
  \begin{align*}
    \tvgam
      &\defeq 
         M \cdot \vgam, \\ 
    \tvgam_{\sigma_1}
      &\defeq 
         (M-1) \cdot \vgam_{\sigma_1}.
  \end{align*}
  Note that $\tvgam_{\sigma_1} = \tvgam - \mP_{\sigma_1}$.

\item A product over an empty index set evaluates to $1$. For example, if
  $\setR$ is the empty set, then $\prod_{i \in \setR} (...) = 1$.

\end{itemize}

\begin{align}
  \hspace{1cm}&\hspace{-1cm}
  \frac{1}{\CBM{n}(\vgam)}
       \cdot
       \sum_{\sigma_1 \in \setS_{[n]}( \vgam )}
         \CBgen{M-1}{n}
           \bigl( 
             \vgam_{\sigma_1}
           \bigr)
    \nonumber \\
    &\overset{(a)}{=}
      \frac{1}{\bigl( M! \bigr)^{ \! 2n - n^2}}
      \cdot
      \Biggl( 
        \prod_{i,j}
          \frac{ \bigl(\tgam(i,j) \bigr)! }
               { \bigl( M - \tgam(i,j) \bigr)! }
      \Biggr)
      \cdot
      \sum_{\sigma_1 \in \setS_{[n]}( \vgam )}
        \bigl( (M-1)! \bigr)^{ \! 2n -  n^2} 
        \nonumber\\
        &\hspace{8 cm} \cdot
        \Biggl( 
          \prod_{i,j}
            \frac{ \bigl( M-1- \tgam_{\sigma_1}(i,j) \bigr)!}
                 { \bigl( \tgam_{\sigma_1}(i,j) \bigr)! } 
        \Biggr) \nonumber \\
    &\overset{(b)}{=}
       \frac{1}{M^{ 2n -  n^2}}
       \cdot
       \sum_{\sigma_1 \in \setS_{[n]}( \vgam )}
       \Biggl( 
         \prod_{i,j}
           \underbrace
           {
             \frac{  \bigl(\tgam(i,j) \bigr)! }
                  { \bigl( \tgam_{\sigma_1}(i,j) \bigr)! } 
             \cdot
             \frac{ \bigl( M-1- \tgam_{\sigma_1}(i,j) \bigr)! }
                  { \bigl( M - \tgam(i,j) \bigr)! }
           }_{\defeq \ (*)}
       \Biggr) \nonumber \\
    &\overset{(c)}{=}
       \frac{1}{M^{ 2n -  n^2}}
       \cdot
       \sum_{\sigma_1 \in \setS_{[n]}( \vgam )}
       \Biggl( 
         \prod_{\substack{i,j \\ j = \sigma_1(i)}}
           \tgam(i,j)
       \Biggr)
       \cdot
       \Biggl( 
         \prod_{\substack{i,j \\ j \neq \sigma_1(i)}}
           \frac{ 1 }
                { M - \tgam(i,j) }
       \Biggr) \nonumber \\
    &\overset{(d)}{=}
       \frac{1}{M^{ 2n -  n^2}}
       \cdot
       \sum_{\sigma_1 \in \setS_{[n]}( \vgam )}
       \Biggl( 
         \prod_{\substack{(i,j) \in \setR \times \setC \\ j = \sigma_1(i)}}
           \tgam(i,j)
       \Biggr)
       \cdot
       \underbrace
       {
         \Biggl( 
           \prod_{\substack{(i,j) \notin \setR \times \setC \\ j = \sigma_1(i)}}
             \tgam(i,j)
         \Biggr)
       }_{\overset{(e)}{=} \ M^{n-r}}
    \nonumber\\
      &\hspace*{3.6 cm} 
      \cdot
       \Biggl( 
         \prod_{\substack{(i,j) \in \setR \times \setC \\ j \neq \sigma_1(i)}}
           \frac{ 1 }
                { M - \tgam(i,j) }
       \Biggr)
      \cdot
       \underbrace{
         \Biggl( 
           \prod_{\substack{(i,j) \notin \setR \times \setC \\ j \neq \sigma_1(i)}}
             \frac{ 1 }
                  { M - \tgam(i,j) }
         \Biggr)
       }_{\overset{(f)}{=} \ \frac{1}{M^{n^2 - r^2 - (n - r)}}} \nonumber \\
    &= \frac{1}{M^{ 2r -  r^2}}
       \cdot
       \sum_{\sigma_1 \in \setS_{[n]}( \vgam )}
       \Biggl( 
         \prod_{\substack{(i,j) \in \setR \times \setC \\ j = \sigma_1(i)}}
           \tgam(i,j)
       \Biggr)
       \cdot
       \Biggl( 
         \prod_{\substack{(i,j) \in \setR \times \setC \\ j \neq \sigma_1(i)}}
           \frac{ 1 }
                { M - \tgam(i,j) }
       \Biggr) \nonumber \\
    &= \sum_{\sigma_1 \in \setS_{[n]}( \vgam )}
       \Biggl( 
         \prod_{\substack{(i,j) \in \setR \times \setC \\ j = \sigma_1(i)}}
           \gamma(i,j)
       \Biggr)
       \cdot
       \Biggl( 
         \prod_{\substack{(i,j) \in \setR \times \setC \\ j \neq \sigma_1(i)}}
           \frac{ 1 }
                { 1 - \gamma(i,j) }
       \Biggr) \nonumber \\
    &= \sum_{\sigma_1 \in \setS_{[n]}( \vgam )} 
         \frac{\prod_{i \in \setR}
                 \gamma(i,\sigma_1(i))
                 \, \cdot \,
                 \bigr( 1 - \gamma\isigmaoi \bigl)
              }
              {\prod\limits_{(i',j') \in \setR \times \setC}
                 \bigr( 1 - \gamma(i',j') \bigl)
              } \nonumber \\
    &= \sum_{\sigma_1 \in \setS_{[n]}( \vgam )} 
         \prod_{i \in \setR}
           \underbrace
           {
             \frac{
                   \gamma(i,\sigma_1(i))
                   \, \cdot \,
                   \bigr( 1 - \gamma\isigmaoi \bigl)
                  }
                  {
                   \Biggl(
                     \prod\limits_{(i',j') \in \setR \times \setC}
                     \bigr( 1 - \gamma(i',j') \bigl)
                   \Biggr)^{\!\!\! 1 / r}
                  }
           }_{\overset{(g)}{=} \ \hgamRCp\isigmaoi}
    \nonumber \\
    &= \sum_{\sigma_1 \in \setS_{[n]}( \vgam )} \ 
         \prod_{i \in \setR}
           \hgamRCp\isigmaoi
             \nonumber \\
    &\overset{(h)}{=}
       \perm(\hvgamRCp).
   \label{eqn: derivations of recursion of CBM}
\end{align}



The expression in~\eqref{eqn: derivations of recursion of CBM} at the top of
the next page is then obtained as follows:
\begin{itemize}

\item Step~$(a)$ follows from Definition~\ref{sec:1:lem:43}.

\item Step~$(b)$ follows from merging terms and using
  $\frac{(M-1)!}{M!} \! = \! \frac{1}{M}$.

\item Step~$(c)$ follows from the fact that the value of $(*)$ equals
  $\tgam(i,j)$ if $j = \sigma_1(i)$ and equals $\frac{1}{M - \tgam(i,j)}$ if
  $j \neq \sigma_1(i)$.\footnote{Note that, if $j \neq \sigma_1(i)$, then
    $\tgam(i,j) < M$, and with that $M - \tgam(i,j) > 0$.}
  The value of $(*)$ can be computed using the following relationships. \\
  If $j = \sigma_1(i)$ then
  \begin{align*}
    \tgam_{\sigma_1}(i,j)
      &= \tgam(i,j) - 1, \\
    (M-1) - \tgam_{\sigma_1}(i,j)
      &= M - \tgam(i,j).
  \end{align*}
  If $j \neq \sigma_1(i)$ then
  \begin{align*}
    \tgam_{\sigma_1}(i,j)
      &= \tgam(i,j), \\
    (M-1) - \tgam_{\sigma_1}(i,j)
      &= M - \tgam(i,j) - 1.
  \end{align*}

\item Step~$(d)$ follows from splitting the product over
  $(i,j) \in [n] \times [n]$ into two products: one over
  $(i,j) \in \setR \times \setC$ and one over
  $(i,j) \notin \setR \times \setC$.

\item Step~$(e)$ follows from the fact that this product has $n - r$ terms
  that each evaluate to $M$.

\item Step~$(f)$ follows from the fact that this product has $n^2 - r^2 -
  (n \! - \! r)$ terms that each evaluate to $\frac{1}{M}$.

\item Step~$(g)$ follows from the definition of the matrix~$\hvgamRCp$ in
  Definition~\ref{sec:1:def:10:part:2}.

\item Step~$(h)$ follows from the following considerations.

  If the sets $\setR$ and $\setC$ are empty, which happens exactly when
  $\vgam$ is a permutation matrix, we obtain
  \begin{align*}
    \sum_{\sigma_1 \in \setS_{[n]}( \vgam )} \ 
      \prod_{i \in \setR}
        \hgamRCp\isigmaoi 
      &= \!\!\!\!\!\! \sum_{\sigma_1 \in \setS_{[n]}( \vgam )} \!\!\!\!\!\!
           1
       = 1
       = \perm(\hvgamRCp),
  \end{align*}
  where the first equality follows from the convention that a product over an
  empty index set evaluates to $1$, where the second equality follows from
  $\setS_{[n]}( \vgam )$ containing only a single element, and where the third
  equality follows from the definition $\perm(\hvgamRCp) \defeq 1$ when the
  sets $\setR$ and $\setC$ are empty.

  If the sets $\setR$ and $\setC$ are non-empty, the result follows from
  \begin{align*}
    \hspace{0.5cm}&\hspace{-0.50cm}
    \sum_{\sigma_1 \in \setS_{[n]}( \vgam )} \ 
      \prod_{i \in \setR} 
        \hgamRCp\isigmaoi \\
      &= \sum_{\sigma_1 \in \setS_{\setR \to \setC}( \hvgamRCp )} \ 
           \prod_{i \in \setR} 
             \hgamRCp\isigmaoi
       = \perm(\hvgamRCp),
  \end{align*}
  where the first equality follows from the observation that there is a
  bijection between the set of permutations $\setS_{[n]}( \vgam )$ and the set
  of permutations $\setS_{\setR \to \setC}(\hvgamRCp)$, which is given by
  restricting the permutations in $\setS_{[n]}( \vgam )$ to take arguments
  only in $\setR$.


\end{itemize}
Finally, multiplying both sides of~\eqref{eqn: derivations of recursion of
  CBM} by $\frac{\CBM{n}(\vgam)}{\perm(\hvgamRCp)}$ yields the result promised
in the lemma statement.


\ifx\withclearpagecommands\x
\clearpage
\newpage
\fi

\section{Proof of Lemma~\ref{SEC:1:LEM:30}}
\label{apx:23}


\begin{align}
  \hspace{1cm}&\hspace{-1cm}\frac{1}{\CscSgen{M}{n}(\vgam)}
    \cdot
    \sum_{\sigma_1 \in \setS_{[n]}( \vgam )}
      \CscSgen{M-1}{n}(\vgam_{\sigma_1})
    \nonumber\\
    &\overset{(a)}{=} 
       M^{n \cdot M}
       \cdot 
       \frac{ 
             \prod_{i,j} 
               \tgam(i,j)!
            } 
            { \bigr( M! \bigr)^{\! 2n} }
       \cdot
       \sum_{\sigma_1 \in \setS_{[n]}( \vgam )} 
         (M-1)^{- n \cdot (M-1)}
         \cdot 
         \frac{ \bigr( (M-1)! \bigr)^{\! 2n} }
              { \prod_{i,j} 
                  \tgam_{\sigma_1}(i,j)!  
              } \nonumber \\
    &\overset{(b)}{=} 
       \underbrace
       {
         \biggl( \frac{M}{M-1} \biggr)^{\!\! n \cdot (M-1)}
       }_{= \ ( \chi(M) )^{n}}
       \cdot \ 
       \frac{1}{M^n}
       \cdot
       \sum_{\sigma_1 \in \setS_{[n]}( \vgam )}
         \prod_{i,j}
           \underbrace
           {
             \frac{ \tgam(i,j)! }
                  { \tgam_{\sigma_1}(i,j)! }
           }_{\defeq \ (*)}
             \nonumber \\
    &\overset{(c)}{=} 
       \bigl( \chi(M) \bigr)^{\! n}
       \cdot
       \frac{1}{M^n}
       \cdot
       \sum_{\sigma_1 \in \setS_{[n]}( \vgam )} 
         \prod_{\substack{i,j \\ j = \sigma_1(i)}}
           \tgam(i,j)
             \nonumber \\
    &= \bigl( \chi(M) \bigr)^{\! n}
       \cdot
       \frac{1}{M^n}
       \cdot
       \sum_{\sigma_1 \in \setS_{[n]}( \vgam )} 
         \prod_i
           \tgam\isigmaoi
             \nonumber \\
    &= \bigl( \chi(M) \bigr)^{\! n}
       \cdot
       \sum_{\sigma_1 \in \setS_{[n]}( \vgam )} 
         \prod_i
           \gamma\isigmaoi
             \nonumber \\
    &= \bigl( \chi(M) \bigr)^{\! n}
       \cdot 
       \perm( \vgam ).
         \label{eqn: derivations of recursion of CscSM}
\end{align}


We start by defining the matrices
\begin{align*}
  \tvgam
    &\defeq 
       M \cdot \vgam, \\ 
  \tvgam_{\sigma_1}
    &\defeq 
       (M-1) \cdot \vgam_{\sigma_1}.
\end{align*}
Note that $\tvgam_{\sigma_1} = \tvgam - \mP_{\sigma_1}$.


The expression in~\eqref{eqn: derivations of recursion of CscSM}
at the top of the next page is then obtained as
follows:
\begin{itemize}

\item Step~$(a)$ follows from Definition~\ref{sec:1:lem:43}.

\item Step~$(b)$ follows from merging terms and using
  $\frac{(M-1)!}{M!} \! = \! \frac{1}{M}$.

\item Step~$(c)$ follows from the fact that the value of $(*)$ equals
  $\tgam(i,j)$ if $j = \sigma_1(i)$ and equals $1$ if $j \neq \sigma_1(i)$.
  The value of $(*)$ can be computed using the following relationships. \\
  If $j = \sigma_1(i)$ then
    \begin{align*}
      \tgam_{\sigma_1}(i,j)
        &= \tgam(i,j) - 1.
    \end{align*}
    If $j \neq \sigma_1(i)$ then
    \begin{align*}
      \tgam_{\sigma_1}(i,j)
        &= \tgam(i,j).
    \end{align*}

\end{itemize}
Finally, multiplying both sides of~\eqref{eqn: derivations of recursion of
  CscSM} by
$\frac{\CscSgen{M}{n}(\vgam_{\sigma_1})}{\bigl( \chi(M) \bigr)^{\! n} \cdot
  \perm( \vgam )}$ yields the result promised in the lemma statement.


\section{Proof of Lemma~\ref{LEMMA:RATIO:PERM:PERMBM:1}}
\label{app:proof:lemma:ratio:perm:permbM:1}


We obtain
\begin{align*}
  \bigl( \permbM{M}(\mtheta) \bigr)^{\!M}
    &\overset{(a)}{=}
       \sum\limits_{ \vgam \in \GamMn } 
         \mtheta^{ M \cdot \vgam }
         \cdot
         \CBM{n}( \vgam ) \\
    &\overset{(b)}{=}
       \sum\limits_{ \vgam \in \GamMn } 
         \mtheta^{ M \cdot \vgam }
         \cdot
         \frac{\CBM{n}(\vgam)}
              {\CM{n}( \vgam )}
         \cdot
         \CM{n}( \vgam ) \\
    &\overset{(c)}{=}
       \sum\limits_{ \vgam \in \GamMn } 
         \mtheta^{ M \cdot \vgam }
         \cdot
         \frac{\CBM{n}(\vgam)}
              {\CM{n}( \vgam )}
         \cdot
         \sum_{ \vsigma_{[M]} \in \setS_{[n]}^{M} }
           \Bigl[ 
             \vgam = \bigl\langle \mP_{\sigma_{m}} \bigr\rangle_{m \in [M]}
           \Bigr] \\
    &= \sum_{ \vsigma_{[M]} \in \setS_{[n]}^{M} }
         \sum\limits_{ \vgam \in \GamMn } 
           \mtheta^{ M \cdot \vgam }
           \cdot
           \frac{\CBM{n}(\vgam)}
                {\CM{n}( \vgam )}
           \cdot
           \Bigl[ 
             \vgam = \bigl\langle \mP_{\sigma_{m}} \bigr\rangle_{m \in [M]}
           \Bigr] \\
    &= \sum_{ \vsigma_{[M]} \in \setS_{[n]}^{M} }
         \sum\limits_{ \vgam \in \GamMn } 
           \mtheta^{ M \cdot \langle \mP_{\sigma_{m}} \rangle_{m \in [M]}}
           \cdot
           \frac{\CBM{n}\Bigl(
                          \bigl\langle \mP_{\sigma_{m}} \bigr\rangle_{m \in [M]}
                        \Bigr)}
                {\CM{n}\Bigl(
                         \bigl\langle \mP_{\sigma_{m}} \bigr\rangle_{m \in [M]}
                       \Bigr)} \\
    &\hspace{5.5cm}
           \cdot
           \Bigl[ 
             \vgam = \bigl\langle \mP_{\sigma_{m}} \bigr\rangle_{m \in [M]}
           \Bigr] \\
    &= \sum_{ \vsigma_{[M]} \in \setS_{[n]}^{M} }
         \mtheta^{ M \cdot \langle \mP_{\sigma_{m}} \rangle_{m \in [M]}}
         \cdot
         \frac{\CBM{n}\Bigl(
                        \bigl\langle \mP_{\sigma_{m}} \bigr\rangle_{m \in [M]}
                      \Bigr)}
              {\CM{n}\Bigl(
                        \bigl\langle \mP_{\sigma_{m}} \bigr\rangle_{m \in [M]}
                      \Bigr)} \\
    &\hspace{4.5cm}
         \cdot
         \underbrace
         {
           \sum\limits_{ \vgam \in \GamMn } 
           \Bigl[ 
             \vgam = \bigl\langle \mP_{\sigma_{m}} \bigr\rangle_{m \in [M]}
           \Bigr]
         }_{= \ 1} \\
    &= \sum_{ \vsigma_{[M]} \in \setS_{[n]}^{M} }
         \mtheta^{ M \cdot \langle \mP_{\sigma_{m}} \rangle_{m \in [M]}}
         \cdot
         \frac{\CBM{n}\Bigl(
                        \bigl\langle \mP_{\sigma_{m}} \bigr\rangle_{m \in [M]}
                      \Bigr)}
              {\CM{n}\Bigl(
                        \bigl\langle \mP_{\sigma_{m}} \bigr\rangle_{m \in [M]}
                      \Bigr)} \\
    &= \sum_{ \vsigma_{[M]} \in \setS_{[n]}^{M} } \!
         \Biggl(
           \prod_{m \in [M]} \!
             \mtheta^{ \mP_{\sigma_{m}} } \!
         \Biggr)
         \cdot
         \frac{\CBM{n}\Bigl(
                        \bigl\langle \mP_{\sigma_{m}} \bigr\rangle_{m \in [M]}
                      \Bigr)}
              {\CM{n}\Bigl(
                       \bigl\langle \mP_{\sigma_{m}} \bigr\rangle_{m \in [M]}
                     \Bigr)},
\end{align*}
where step~$(a)$ follows from~\eqref{sec:1:eqn:190}, where step~$(b)$ is valid
because $\CM{n}(\vgam) \geq \CBM{n}(\vgam) > 0$ for
  $\vgam \in \GamMn$, as stated in the lemma statement, and where step~$(c)$
follows from~\eqref{sec:1:eqn:56}. Dividing the above expression for
$\bigl( \permbM{M}(\mtheta) \bigr)^{\!M}$ by $\bigl( \perm(\mtheta) \bigr)^{\!M}$, we
get
\begin{align*}
  \Biggl( 
    \frac{\permbM{M}(\mtheta)}
         {\perm(\mtheta)}
  \Biggr)^{\!\!\! M} 
    &= \sum_{ \vsigma_{[M]} \in \setS_{[n]}^{M} }
         \Biggl(
           \prod_{m \in [M]}
             \frac{\mtheta^{ \mP_{\sigma_{m}} }}
                  {\perm(\mtheta)}
         \Biggr)
         \cdot
         \frac{\CBM{n}\Bigl(
                        \bigl\langle \mP_{\sigma_{m}} \bigr\rangle_{m \in [M]}
                      \Bigr)}
              {\CM{n}\Bigl(
                        \bigl\langle \mP_{\sigma_{m}} \bigr\rangle_{m \in [M]}
                      \Bigr)} \\
    &= \sum_{ \vsigma_{[M]} \in \setS_{[n]}^{M} }
         \Biggl(
           \prod_{m \in [M]}
             \pmtheta(\sigma_m)
         \Biggr)
         \cdot
         \frac{\CBM{n}\Bigl(
                        \bigl\langle \mP_{\sigma_{m}} \bigr\rangle_{m \in [M]}
                      \Bigr)}
              {\CM{n}\Bigl(
                        \bigl\langle \mP_{\sigma_{m}} \bigr\rangle_{m \in [M]}
                      \Bigr)}.
\end{align*}



\ifx\withclearpagecommands\x
\clearpage
\fi
\section[An Alternative proof]{An Alternative proof of the inequalities in~\eqref{SEC:1:EQN:147} in Theorem~\ref{TH:MAIN:PERMANENT:INEQUALITIES:1}}
\label{apx: alternative proof of perm geq permb}
\begin{itemize}

\item We prove the first inequality in~\eqref{SEC:1:EQN:147} by observing that
  \begin{align*}
    \frac{\perm(\mtheta)}
         {\permbM{M}(\mtheta)}
      &\overset{(a)}{\geq}
         \left( \!
           \sum_{ \vsigma_{[M]} \in \setS_{[n]}^{M} }
             \Biggl(
               \prod_{m \in [M]}
                 \pmtheta(\sigma_m)
             \Biggr)
         \right)^{\!\!\! -1/M} \\
      &= \Biggl( 
           \prod_{m \in [M]}
             \sum_{\sigma_m \in \setS_{[n]}}
               \pmtheta(\sigma_m)
         \Biggr)^{\!\!\! -1/M} \\
      &\overset{(b)}{=}
         \Biggl( 
           \prod_{m \in [M]}
             1
         \Biggr)^{\!\!\! -1/M} \\
      &= 1,
  \end{align*}
  where step~$(a)$ is obtained from~\eqref{eq:ratio:perm:permbM:1} by upper
  bounding the ratio $\CBM{n}(\ldots) / \CM{n}(\ldots)$ therein with the help
  of the first inequality in~\eqref{sec:1:eqn:58}, and where step~$(b)$ used
  the fact that $\pmtheta$ is a probability mass function.

\item We prove the second inequality in~\eqref{SEC:1:EQN:147} by observing that
  \begin{align}
    \frac{\perm(\mtheta)}
         {\permbM{M}(\mtheta)}
      &\overset{(a)}{\leq}
         \left( \!
           \sum_{ \vsigma_{[M]} \in \setS_{[n]}^{M} }
             \Biggl(
               \prod_{m \in [M]}
                 \pmtheta(\sigma_m)
             \Biggr)
             \cdot
             \bigl( 2^{n/2} \bigr)^{\! -(M-1)}
         \right)^{\!\!\! -1/M} \nonumber \\
      &= \bigl( 2^{n/2} \bigr)^{\!\! \frac{M-1}{M}}
         \cdot
         \Biggl( 
           \prod_{m \in [M]}
             \sum_{\sigma_m \in \setS_{[n]}}
               \pmtheta(\sigma_m)
         \Biggr)^{\!\!\! -1/M} \nonumber \\
      &\overset{(b)}{=}
         \bigl( 2^{n/2} \bigr)^{\!\! \frac{M-1}{M}}
         \cdot
         \Biggl(
           \prod_{m \in [M]}
             1
         \Biggr)^{\!\!\! -1/M} \nonumber \\
      &= \bigl( 2^{n/2} \bigr)^{\!\! \frac{M-1}{M}},
           \label{eq:ratio:perm:permbM:proof:upper:bound:1}
  \end{align}
  where step~$(a)$ is obtained from~\eqref{eq:ratio:perm:permbM:1} by lower
  bounding the ratio \\ $\CBM{n}(\ldots) / \CM{n}(\ldots)$ therein with the help
  of the second inequality in~\eqref{sec:1:eqn:58}, and where step~$(b)$ used
  the fact that $\pmtheta$ is a probability mass function.

\end{itemize}

\ifx\withclearpagecommands\x
\clearpage
\fi

\section[Another Alternative proof]{Another Alternative proof of the first inequalities in~\eqref{SEC:1:EQN:147} in Theorem~\ref{TH:MAIN:PERMANENT:INEQUALITIES:1}}
\label{apx: alternative prooof of perm geq permb: 2}

We start this alternative proof by proving the following lemma.
\begin{lemma}\label{prop: a special case of Navin conjecture}
  For any integers $ M_{1} \in \sZpp $ and $ M_{2} \in \sZpp $, we have
  \begin{align}
      1 \leq 
      \frac{ 
        \bigl( \perm(\mtheta) \bigr)^{\! M_{1}} \cdot 
        \bigl( \permbM{M_{2}}(\mtheta) \bigr)^{\! M_{2}} 
      }{ \bigl( \permbM{M_{1}+M_{2}}(\mtheta) \bigr)^{\! M_{1}+M_{2}} }
      &\leq \bigl( 2^{n/2} \bigr)^{\! M_{1}}. 
      \label{eqn: a special case of Navin conjecture:2}
  \end{align}
\end{lemma}
\begin{proof}
  Consider an arbitrary integer $ M_{3} \in \sZ_{\geq 1} $. It holds that
  \begin{align}
    \perm(\mtheta) \cdot \bigl( \permbM{M_{3}}(\mtheta) \bigr)^{\! M_{3}}
    &\overset{(a)}{=}
    \Biggl(
          \sum_{\sigma_1 \in \setS_{[n]}( \mtheta )} 
          \mtheta^{ \mP_{\sigma_1} }
        \Biggr)
       \cdot 
       \sum_{\vgam \in \Gamma_{M_{3},n}(\mtheta)} 
       \mtheta^{ M_{3} \cdot \vgam }
       \cdot
      \CBgen{M_{3}}{n}(\vgam) 
    \nonumber\\
    &\overset{(b)}{=}
    \sum_{\vgam \in \Gamma_{M_{3}+1,n}(\mtheta)} 
           \mtheta^{ (M_{3}+1) \cdot \vgam }
           \cdot
           \sum_{\sigma_1 \in \setS_{[n]}( \vgam )}
           \CBgen{M_{3}}{n}
               \bigl( 
                 \vgam_{\sigma_1}
               \bigr)
    \nonumber\\
    &\overset{(c)}{=}
    \sum_{\vgam \in \Gamma_{M_{3}+1,n}(\mtheta)} 
           \mtheta^{ (M_{3}+1) \cdot \vgam }
           \cdot
           \perm(\hvgamRCp)
           \cdot
           \CBgen{M_{3}+1}{n}(\vgam)
   \nonumber\\
    &\overset{(d)}{\geq}
    \sum_{\vgam \in \Gamma_{M_{3}+1,n}(\mtheta)} 
           \mtheta^{ (M_{3}+1) \cdot \vgam }
           \cdot
           \CBgen{M_{3}+1}{n}(\vgam)
           \nonumber\\
    &\overset{(e)}{=}
    \bigl( \permbM{M_{3} +1}(\mtheta) \bigr)^{\! M_{3}+1}, 
    \label{eqn: a special case of Navin conjecture:3}
  \end{align} 
  where step $(a)$ follows from the definition of the permanent and the expression in~\eqref{sec:1:eqn:190} for $ M = M_{3} $,
  where step $(b)$ follows from the definition of $ \vgam_{\sigma_1} $ in Definition~\ref{sec:1:def:10:part:2:add} for $ M = M_{3} + 1 $,
  where step $(c)$ follows from Lemma~\ref{sec:1:lem:15:copy:2} for $ M = M_{3} + 1 $,
  where step $(d)$ follows from Lemma~\ref{sec:1:lem:17},
  and where step $(e)$ follows from the expression in~\eqref{sec:1:eqn:190} for $ M = M_{3} + 1 $.

  Because the inequality in~\eqref{eqn: a special case of Navin conjecture:3} holds for any $ M_{3} \in \sZ_{\geq 1} $, we have
  \begin{align*}
    \bigl( \permbM{M_{1}+M_{2}}(\mtheta) \bigr)^{\! M_{1}+M_{2}}
    &\leq \perm(\mtheta) \cdot 
    \bigl( \permbM{M_{1}+M_{2}-1}(\mtheta) \bigr)^{\! M_{1}+M_{2}-1}
    \nonumber\\
    &\leq \bigl( \perm(\mtheta) \bigr)^{\! 2} 
    \cdot \bigl( \permbM{M_{1}+M_{2}-2}(\mtheta) \bigr)^{\! M_{1}+M_{2}-2}
    \nonumber\\
    &\ \cdots
    \nonumber\\
    &\leq \bigl( \perm(\mtheta) \bigr)^{\! M_{1}} 
    \cdot \bigl( \permbM{M_{2}}(\mtheta) \bigr)^{\! M_{2}}.
  \end{align*}

  A similar line of reasoning can be used to prove the upper bound in~\eqref{eqn: a special case of Navin conjecture:2}.
  The details are omitted.
\end{proof}
The inequalities~\eqref{SEC:1:EQN:147} can now be proven by observing that they are special cases of Lemma~\ref{prop: a special case of Navin conjecture}. Namely, they follow from~\eqref{eqn: a special case of Navin conjecture:2} for $M_{1} = M-1$ and $M_{2} = 1$, along with using $ \perm(\mtheta) = \permbM{1}(\mtheta) $.

\ifx\withclearpagecommands\x
\clearpage
\fi

\section[The Alternative Proof of Proposition~\ref{PROP:RATIO:PERM:PERMBM:2:1}]{Details of the Alternative Proof of 
               Proposition~\ref{PROP:RATIO:PERM:PERMBM:2:1}}
\label{app:alt:proof:prop:ratio:perm:permbM:2:1}


\medskip

Toward establishing \eqref{eq:CBtwo:result:1} and~\eqref{eq:Ctwo:result:1}, we 
start by studying the matrix
\begin{align*}
  \frac{1}{2} 
  \bigl( 
  \mP_{\sigma_1} \! + \! \mP_{\sigma_2} 
  \bigr)
\end{align*}
for arbitrary $\sigma_1, \sigma_2 \in \setS_{[n]}$.


\begin{lemma}
  \label{lemma:vgam:graph:properties:1}

  Let $\sigma_1, \sigma_2 \in \setS_{[n]}$ be fixed. Define the matrix
  \begin{align*}
    \vgam
      &\defeq
         \frac{1}{2} 
         \bigl( 
           \mP_{\sigma_1} \! + \! \mP_{\sigma_2} 
          \bigr).
  \end{align*}
  Let the graph $\sfG(\sigma_1,\sigma_2)$ be a bipartite graph with two times
  $n$ vertices: $n$ vertices on the LHS and $n$ vertices on the RHS, each of
  them labeled by $[n]$. Let there be an edge $(i,j)$ connecting the $i$-th
  vertex on the LHS with the $j$-th vertex on the RHS if
  $\gamma(i,j) = \frac{1}{2}$. Let $c\bigl( \sfG(\sigma_1,\sigma_2) \bigr)$ be
  the number of cycles in $\sfG(\sigma_1,\sigma_2)$.

  The matrix $\vgam$ has the following properties:
  \begin{enumerate}
    
  \item $\vgam \in \Gamtwon$.

  \item For every $i \in [n]$, the $i$-th row $\vgam(i,:)$ of $\vgam$ has one
    of the following two possible compositions:
    \begin{itemize}
  
    \item One entry is equal to $1$; $n \! - \! 1$ entries are equal to~$0$.
  
    \item Two entries are equal to $\frac{1}{2}$; $n \! - \! 2$ entries are
      equal to~$0$.

    \end{itemize}

  \end{enumerate}
  The graph $\sfG(\sigma_1,\sigma_2)$ has the following properties:
  \begin{enumerate}
    
  \item The edge set of $\sfG(\sigma_1,\sigma_2)$ forms a collection of
    disjoint cycles of even length.

  \item Every cycle in $\sfG(\sigma_1,\sigma_2)$ of length $2L$ corresponds to
    a cycle of length $L$ in the cycle notation of
    $\sigma_1 \circ \sigma_2^{-1}$.

  \item $c\bigl( \sfG(\sigma_1,\sigma_2) \bigr) = c(\sigma_1,\sigma_2)$.

  \end{enumerate}

\end{lemma}


\begin{proof}
  The derivation of the properties of $\vgam$ is rather straightforward and is
  therefore omitted. The properities of $\sfG(\sigma_1,\sigma_2)$ follow from the following
  observations:
  \begin{itemize}
    
  \item Because $\sfG(\sigma_1,\sigma_2)$ is a bipartite graph, all cycles
    must have even length.

  \item Let $i \in [n]$. If $\sigma_1(i) = \sigma_2(i)$, then the $i$-th
    vertex on the LHS has no indicent edges.
    
  \item Let $i \in [n]$. If $\sigma_1(i) \neq \sigma_2(i)$, then the $i$-th
    vertex on the LHS has degree two and is part of a cycle.

  \item Let $j \in [n]$. If $\sigma_1^{-1}(j) = \sigma_2^{-1}(j)$, or,
    equivalently, if $\sigma_1\bigl( \sigma_2^{-1}(j) \bigr) = j$, then the
    $j$-th vertex on the RHS has no indicent edges.
    
  \item Let $j \in [n]$. If $\sigma_1^{-1}(j) \neq \sigma_2^{-1}(j)$, or,
    equivalently, if $\sigma_1\bigl( \sigma_2^{-1}(j) \bigr) \neq j$, then the
    $j$-th vertex on the RHS has degree two and is part of a cycle.

  \end{itemize}
\end{proof}


\begin{figure*}[t]
  \begin{centering}
    \subfloat[]{\begin{tikzpicture}[node distance=2.2cm, on grid,auto]
  \tikzstyle{state}=[shape=rectangle,fill=white,draw,minimum size=0.3cm]
  \pgfmathsetmacro{\ldis}{2.5} 
  \pgfmathsetmacro{\sdis}{0.5} 
  \begin{pgfonlayer}{glass}
    \foreach \i in {1,...,7}
    {
      \node[state] (f\i) at (0,-0.6*\i) [label=left: \scriptsize$\i$] {};
      \node[state] (g\i) at (\ldis,-0.6*\i) [label=right: \scriptsize$\i$] {};
    }
    \end{pgfonlayer}
     \begin{pgfonlayer}{background}
         \draw[thick,red]
            (f3) -- (g4)      
            (f4) -- (g3)
            (f5) -- (g6)
            (f6) -- (g7)
            (f7) -- (g5)
            ;
        \draw[thick,blue]
            (f3) -- (g3)      
            (f4) -- (g4)
            (f5) -- (g5)
            (f6) -- (g6)
            (f7) -- (g7)
            ;
    \end{pgfonlayer}
\end{tikzpicture}}
    \hspace{0.75cm}
    \subfloat[]{\begin{tikzpicture}[node distance=2.2cm, on grid,auto]
  \tikzstyle{state}=[shape=rectangle,fill=white,draw,minimum size=0.3cm]
  \pgfmathsetmacro{\ldis}{2.5} 
  \pgfmathsetmacro{\sdis}{0.5} 
  \begin{pgfonlayer}{glass}
    \foreach \i in {1,...,7}
    {
      \node[state] (f\i) at (0,-0.6*\i) [label=left: \scriptsize$\i$] {};
      \node[state] (g\i) at (\ldis,-0.6*\i) [label=right: \scriptsize$\i$] {};
    }
    \end{pgfonlayer}
    \begin{pgfonlayer}{background}
         \draw[thick,red]
            (f3) -- (g4)      
            (f4) -- (g3)
            (f5) -- (g5)
            (f6) -- (g6)
            (f7) -- (g7)
            ;
        \draw[thick,blue]
            (f3) -- (g3)      
            (f4) -- (g4)
            (f5) -- (g6)
            (f6) -- (g7)
            (f7) -- (g5)
            ;
  \end{pgfonlayer}
  \vspace{0.5cm}
\end{tikzpicture}}

    \subfloat[]{\begin{tikzpicture}[node distance=2.2cm, on grid,auto]
  \tikzstyle{state}=[shape=rectangle,fill=white,draw,minimum size=0.3cm]
  \pgfmathsetmacro{\ldis}{2.5} 
  \pgfmathsetmacro{\sdis}{0.5} 
  \begin{pgfonlayer}{glass}
    \foreach \i in {1,...,7}
    {
      \node[state] (f\i) at (0,-0.6*\i) [label=left: \scriptsize$\i$] {};
      \node[state] (g\i) at (\ldis,-0.6*\i) [label=right: \scriptsize$\i$] {};
    }
    \end{pgfonlayer}
    \begin{pgfonlayer}{background}
         \draw[thick,red]
            (f3) -- (g3)      
            (f4) -- (g4)
            (f5) -- (g6)
            (f6) -- (g7)
            (f7) -- (g5)
            ;
        \draw[thick,blue]
            (f3) -- (g4)      
            (f4) -- (g3)
            (f5) -- (g5)
            (f6) -- (g6)
            (f7) -- (g7)
            ;
  \end{pgfonlayer}
  \vspace{0.5cm}
\end{tikzpicture}}
    \hspace{0.75cm}
    \subfloat[]{\begin{tikzpicture}[node distance=2.2cm, on grid,auto]
  \tikzstyle{state}=[shape=rectangle,fill=white,draw,minimum size=0.3cm]
  \pgfmathsetmacro{\ldis}{2.5} 
  \pgfmathsetmacro{\sdis}{0.5} 
  \begin{pgfonlayer}{glass}
    \foreach \i in {1,...,7}
    {
      \node[state] (f\i) at (0,-0.6*\i) [label=left: \scriptsize$\i$] {};
      \node[state] (g\i) at (\ldis,-0.6*\i) [label=right: \scriptsize$\i$] {};
    }
    \end{pgfonlayer}
    \begin{pgfonlayer}{background}
         \draw[thick,red]
            (f3) -- (g3)      
            (f4) -- (g4)
            (f5) -- (g5)
            (f6) -- (g6)
            (f7) -- (g7)
            ;
        \draw[thick,blue]
            (f3) -- (g4)      
            (f4) -- (g3)
            (f5) -- (g6)
            (f6) -- (g7)
            (f7) -- (g5)
            ;
  \end{pgfonlayer}
\end{tikzpicture}}
    \medskip
    \caption[Graphs discussed in Examples~\ref{example:sfG:1}
      and~\ref{example:sfG:2}.]{Graphs discussed in Examples~\ref{example:sfG:1}
      and~\ref{example:sfG:2}. (The red color is used for edges based on
      $\sigma_1$; the blue color is used for edges based on $\sigma_2$.)}
    \label{fig:example:sfG:1}
  \end{centering}
\end{figure*}
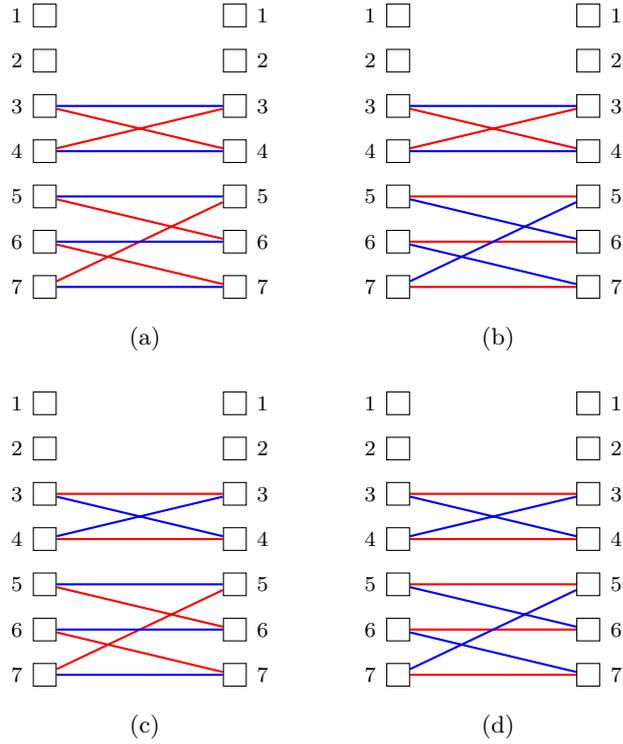


\begin{example}
  \label{example:sfG:1}

  Let $\sigma_1: [7] \to [7]$ defined by $\sigma_1(1) = 1$, $\sigma_1(2) = 2$,
  $\sigma_1(3) = 4$, $\sigma_1(4) = 3$, $\sigma_1(5) = 6$, $\sigma_1(6) = 7$,
  $\sigma_1(7) = 5$. Let $\sigma_2: [7] \to [7]$ defined by $\sigma_2(i) = i$
  for all $i \in [7]$. Note that $\sigma_1 \circ \sigma_2^{-1}$ equals the
  permuation $\sigma$ in Footnote~\ref{footnote:cycle:notation:example:1}. Let
  \begin{align}
    \vgam
      &\defeq
         \frac{1}{2} 
         \bigl( 
           \mP_{\sigma_1} \! + \! \mP_{\sigma_2} 
          \bigr)
       = \frac{1}{2}
           \begin{pmatrix}
             2 & 0     & 0   & 0   & 0   & 0   & 0   \\
             0      & 2 & 0   & 0   & 0   & 0   & 0   \\
             0      & 0     & 1 & 1 & 0   & 0   & 0   \\
             0      & 0     & 1 & 1 & 0   & 0   & 0   \\
             0      & 0     & 0   & 0   & 1 & 1 & 0   \\
             0      & 0     & 0   & 0   & 0   & 1 & 1 \\
             0      & 0     & 0   & 0   & 1 & 0   & 1
           \end{pmatrix}
           \label{eq:example:sfG:1:gamma:1}
  \end{align}
  and define the graph $\sfG(\sigma_1,\sigma_2)$ as in
  Lemma~\ref{lemma:vgam:graph:properties:1}. The resulting graph
  $\sfG(\sigma_1,\sigma_2)$ is shown in Fig.~\ref{fig:example:sfG:1}(a). Note
  that the graph $\sfG(\sigma_1,\sigma_2)$ has two cycles, \ie,
  $c\bigl( \sfG(\sigma_1,\sigma_2) \bigr) = 2$.
  \eexample
\end{example}


\begin{lemma}
  \label{lemma:sfGtwo:properties:1}

  Let $\vgam \in \Gamtwon$ be fixed. Define the graph
  $\sfG(\sigma_1,\sigma_2)$ as in
  Lemma~\ref{lemma:vgam:graph:properties:1}. The number of pairs
  $(\sigma_1, \sigma_2) \in (\setS_{[n]})^2$ such that\footnote{Note that the
    ordering in $(\sigma_1, \sigma_2)$ matters, \ie, $(\sigma_1, \sigma_2)$
    and $(\sigma_2, \sigma_1)$ are considered to be distinct pairs if
    $\sigma_1 \neq \sigma_2$.}
  \begin{align*}
    \vgam
      &= \frac{1}{2} 
         \bigl( 
           \mP_{\sigma_1} \! + \! \mP_{\sigma_2} 
          \bigr)
  \end{align*}
  equals $2^{c(\sfG(\sigma_1,\sigma_2))}$.
\end{lemma}


\begin{proof}
  For every cycle in $\sfG(\sigma_1,\sigma_2)$, there are two ways of choosing
  the corresponding function values of $\sigma_1$ and $\sigma_2$. Overall,
  because there are $c\bigl( \sfG(\sigma_1,\sigma_2) \bigr)$ cycles in
  $\sfG(\sigma_1,\sigma_2)$, there are $2^{c(\sfG(\sigma_1,\sigma_2))}$ ways
  of choosing $\sigma_1$ and $\sigma_2$.
\end{proof}


\begin{example}
  \label{example:sfG:2}

  Consider the matrix $\vgam$ in~\eqref{eq:example:sfG:1:gamma:1}. Because
  $c\bigl( \sfG(\sigma_1,\sigma_2) \bigr) = 2$, there are $2^2$ choices for
  $(\sigma_1, \sigma_2) \in (\setS_{[n]})^2$ such that
  $\vgam = \frac{1}{2} \bigl( \mP_{\sigma_1} \! + \! \mP_{\sigma_2}
  \bigr)$. They are shown in Figs.~\ref{fig:example:sfG:1}(a)--(d). \\
  \mbox{} \eexample
\end{example}


\begin{lemma}
  \label{lemma:CMtwo:properties:1}

  It holds that
  \begin{align*}
    \CM{n}
      \Bigl(
        \frac{1}{2} 
        \bigl( 
          \mP_{\sigma_1} \! + \! \mP_{\sigma_2} 
        \bigr)
      \Bigr)
      &= 2^{c(\sigma_1,\sigma_2)}, 
           \qquad \sigma_1, \sigma_2 \in \setS_{[n]}.
  \end{align*}
\end{lemma}


\begin{proof}
  Let
  $\vgam \defeq \frac{1}{2} \bigl( \mP_{\sigma_1} \! + \! \mP_{\sigma_2}
  \bigr) \in \Gamtwon$ and define the graph $\sfG(\sigma_1,\sigma_2)$ as in
  Lemma~\ref{lemma:vgam:graph:properties:1}. We obtain
  \begin{align*}
    \CM{n}(\vgam)
      &\overset{(a)}{=}
         \sum_{ (\sigma'_1, \sigma'_2) \in (\setS_{[n]})^2 }
           \biggl[ 
             \vgam = \frac{1}{2} \bigl( 
                                   \mP_{\sigma'_1} \! + \! \mP_{\sigma'_2} 
                                 \bigr)
           \biggr] \\
      &\overset{(b)}{=}
         2^{c(\sfG(\sigma_1,\sigma_2))} \\
      &\overset{(c)}{=}
         2^{c(\sigma_1,\sigma_2)},
  \end{align*}
  where step~$(a)$ follows from~\eqref{sec:1:eqn:56}, where step~$(b)$ follows
  from Lemma~\ref{lemma:sfGtwo:properties:1} and where step~$(c)$ follows from
  Lemma~\ref{lemma:vgam:graph:properties:1}.
\end{proof}


\begin{lemma}
  \label{lemma:CBMtwo:properties:1}

  It holds that
  \begin{align*}
    \CBM{n}
      \biggl(
        \frac{1}{2} 
        \bigl( 
          \mP_{\sigma_1} \! + \! \mP_{\sigma_2} 
        \bigr)
      \biggr)
      &= 1, 
           \qquad \sigma_1, \sigma_2 \in \setS_{[n]}.
  \end{align*}
\end{lemma}


\begin{proof}
  Let
  $\vgam \defeq \frac{1}{2} \bigl( \mP_{\sigma_1} \! + \! \mP_{\sigma_2}
  \bigr) \in \Gamtwon$. We have
  \begin{align*}
    \CBtwo{n}(\vgam) 
      &\overset{(a)}{=}
         (2!)^{ 2n -  n^2} 
           \cdot
           \prod_{i,j}
             \frac{ \bigl(2 - 2 \gamma(i,j) \bigr)! }
                  { \bigl(2 \gamma(i,j) \bigr)! } \\
      &= \prod_i
           \underbrace
           {
             \Biggl(
               (2!)^{ 2 -  n} 
               \cdot
               \prod_j
                 \frac{ \bigl(2 - 2 \gamma(i,j) \bigr)! }
                      { \bigl(2 \gamma(i,j) \bigr)! }
             \Biggr)
           }_{\overset{(b)}{=} \ 1} \\
      &= 1,
  \end{align*}
  where step~$(a)$ follows from Definition~\ref{sec:1:lem:43} and where
  step~$(b)$ follows from observing that for every $i \in [n]$, the $i$-th row
  $\vgam(i,:)$ of $\vgam$ has one of the following two possible compositions:
  \begin{itemize}
  
  \item one entry is equal to $1$ and $n - 1$ entries are equal to $0$;
  
  \item two entries are equal to $\frac{1}{2}$ and $n - 2$ entries are equal
    to $0$.

  \end{itemize}
  (See also Lemma~\ref{lemma:vgam:graph:properties:1}.) In both cases, one can
  verify that the expression in big parentheses evaluates to $1$.
\end{proof}

  \ifx\sectionheaderonnewpage\x
  \clearpage
  \fi

  \chapter{Proofs in Chapter~\ref{CHAPT: LCT}}

  \section{Details of the LCT for S-NFGs}
  \label{apx:LCT SNFGs} 
  Given $ |\set{X}_e| \in \sZpp $, without loss of generality, we suppose that
$\set{X}_e = \LCTset{X}_e = \{ 0, 1, \ldots, |\set{X}_e| \! - \! 1 \}$.  We have
to verify that the the functions $M_{\efi}$ and $M_{\efj}$
satisfy~\eqref{sec:LCT:exp:1}, \textit{i.e.},
\begin{align*}
  \sum_{\LCT{x}_e}
    M_{\efi}(x_{\efi}, \LCT{x}_e)
    \cdot
    M_{\efj}(x_{\efj}, \LCT{x}_e) 
    &= [x_{\efi} \! = \!  x_{\efj}],
         \qquad x_{\efi}, x_{\efj} \in \set{X}_e.
\end{align*}
We consider the following four cases:
\begin{enumerate}

\item For $x_{\efi} = 0$, $x_{\efj} = 0$, we obtain the constraint
  \begin{align*}
    &\zeta_{\efi} 
      \cdot
      \mu_{\efi}(0)
    \cdot
    \zeta_{\efj}
      \cdot
      \mu_{\efj}(0)
    \nonumber\\
    &\hspace{2cm}+
    \sum_{\LCT{x}_e: \, \LCT{x}_e \neq 0}
      \zeta_{\efi}
        \cdot
        \chi_{\efi} 
        \cdot
        \bigl( - \mu_{\efj}(\LCT{x}_e) \bigr)
      \cdot
      \zeta_{\efj}
        \cdot
        \chi_{\efj}
        \cdot
        \bigl( - \mu_{\efi}(\LCT{x}_e) \bigr)
       = 1.
  \end{align*}
  Using~\eqref{sec:LCT:exp:5}
  and~\eqref{sec:LCT:exp:6}, and multiplying both size by
  $Z_e$, we obtain the equivalent constraint
  \begin{align*}
    \mu_{\efi}(0)
    \cdot
    \mu_{\efj}(0)
    +
    \sum_{\LCT{x}_e: \, \LCT{x}_e \neq 0}
      \mu_{\efj}(\LCT{x}_e)
      \cdot
      \mu_{\efi}(\LCT{x}_e)
       &= Z_e.
  \end{align*}
  Clearly, this constraint is satisfied since the left-hand side is just the
  definition of $Z_e$.
  
\item For $x_{\efi} = 0$, $x_{\efj} \in \set{X}_e \setminus \{ 0 \}$, we obtain
  the constraint
  \begin{align*}
    \zeta_{\efi} &
      \cdot
      \mu_{\efi}(0)
    \cdot
    \zeta_{\efj}
      \cdot
      \mu_{\efj}(x_{\efj}) 
    +
    \sum_{\LCT{x}_e: \, \LCT{x}_e \neq 0}
      \zeta_{\efi}
        \cdot
        \chi_{\efi} 
        \cdot
        \bigl( - \mu_{\efj}(\LCT{x}_e) \bigr)
    \nonumber\\
    & \cdot
      \Bigl(
        \zeta_{\efj}
          \cdot
          \chi_{\efj}
          \cdot
          \bigl(
            \delta_{\efj} 
                      \cdot
                      [x_{\efj} \!=\! \LCT{x}_e] 
                    +
                    \epsilon_{\efj}
                      \cdot
                      \mu_{\efj}(x_{\efj}) 
                      \cdot 
                      \mu_{\efi}(\LCT{x}_e)
          \bigr)     
      \Bigr)
        = 0.
  \end{align*}
  Using~\eqref{sec:LCT:exp:5}
  and~\eqref{sec:LCT:exp:6}, we obtain the equivalent
  constraint
  \begin{align*}
    &\hspace{-0.25cm}\mu_{\efi}(0)
    \cdot
    \mu_{\efj}(x_{\efj})
    \nonumber\\
    &-
    \sum_{\LCT{x}_e: \, \LCT{x}_e \neq 0}
      \mu_{\efj}(\LCT{x}_e)
      \cdot
      \Bigl(
        \delta_{\efj} 
          \cdot
          [x_{\efj} \!=\! \LCT{x}_e] 
        +
        \epsilon_{\efj}
          \cdot
          \mu_{\efj}(x_{\efj}) 
          \cdot 
          \mu_{\efi}(\LCT{x}_e)
      \Bigr)
      = 0.
  \end{align*}
  Simplifying the sum in the above expression, this constraint is equivalent
  to the constraint
  \begin{align*}
    \mu_{\efi}(0)
    \cdot
    \mu_{\efj}(x_{\efj})
    &-
    \mu_{\efj}(x_{\efj})
      \cdot
      \delta_{\efj}
    \nonumber\\
    &-
    \epsilon_{\efj}
      \cdot
      \mu_{\efj}(x_{\efj}) 
      \cdot
      \sum_{\LCT{x}_e: \, \LCT{x}_e \neq 0}
        \mu_{\efj}(\LCT{x}_e)
          \cdot
          \mu_{\efi}(\LCT{x}_e)      
        = 0.
  \end{align*}
  The sum appearing in the above expression equals
  $Z_e \cdot \bigl( 1 - \beli_e(0) \bigr)$. Therefore, this constraint can be
  rewritten as
  \begin{align*}
    \mu_{\efj}(x_{\efj})
      \cdot
      \Bigl(
        \mu_{\efi}(0)
        -
        \delta_{\efj}
        -
        \epsilon_{\efj}
          \cdot
          Z_e
          \cdot
          \bigl( 1 - \beli_e(0) \bigr)
      \Bigr)
      &= 0.
  \end{align*}
  The expression in the parentheses equals zero because
  of~\eqref{sec:LCT:exp:9}, and so this constraint is
  satisfied.

\item For $x_{\efi} \in \set{X}_e \setminus \{ 0 \}$, $x_{\efj} = 0$, we obtain
  similar expressions as for the second case, simply with the roles of $f_{i}$ and
  $f_{j}$ exchanged.

\item For $x_{\efi}, x_{\efj} \in \set{X}_e \setminus \{ 0 \}$, we obtain the
  constraint (we have already used~\eqref{sec:LCT:exp:5}
  and~\eqref{sec:LCT:exp:6} to simplify the expression)
  \begin{align*}
    &\hspace{-0.2 cm}\mu_{\efi}(x_{\efi})
      \cdot
      \mu_{\efj}(x_{\efj})
    +
    \sum_{\LCT{x}_e: \, \LCT{x}_e \neq 0}
      \Bigl(
        \delta_{\efi} 
          \cdot
          [x_{\efi} \!=\! \LCT{x}_e] 
        +
        \epsilon_{\efi}
          \cdot
          \mu_{\efi}(x_{\efi}) 
          \cdot 
          \mu_{\efj}(\LCT{x}_e)
      \Bigr) \\
      &
      \cdot
      \Bigl(
        \delta_{\efj} 
          \cdot
          [x_{\efj} \!=\! \LCT{x}_e] 
        +
        \epsilon_{\efj}
          \cdot
          \mu_{\efj}(x_{\efj}) 
          \cdot 
          \mu_{\efi}(\LCT{x}_e)
      \Bigr)
        = Z_e 
            \cdot
            [x_{\efi} \!=\! x_{\efj}].
  \end{align*}
  This constraint is equivalent to the constraint
  \begin{align*}
    &\hspace{-0.5cm}
    \mu_{\efi}(x_{\efi})
      \cdot
      \mu_{\efj}(x_{\efj})
    +
    \delta_{\efi}
      \cdot
      \delta_{\efj} 
      \cdot
      [x_{\efi} \!=\! x_{\efj}] \\
    &
    +
    \delta_{\efi} 
      \cdot
      \epsilon_{\efj}
      \cdot
      \mu_{\efj}(x_{\efj}) 
      \cdot
      \mu_{\efi}(x_{\efi})
    +
    \delta_{\efj} 
      \cdot
      \epsilon_{\efi}
      \cdot
      \mu_{\efi}(x_{\efi}) 
      \cdot 
      \mu_{\efj}(x_{\efj}) \\
    &
    +
    \epsilon_{\efi}
      \!\cdot\!
      \epsilon_{\efj}
      \!\cdot\!
      \mu_{\efi}(x_{\efi})
      \!\cdot\!
      \mu_{\efj}(x_{\efj})
      \!\cdot
      \!\!\!\!
      \sum_{\LCT{x}_e: \, \LCT{x}_e \neq 0}
      \!\!\!\!
        \mu_{\efj}(\LCT{x}_e)
        \!\cdot\!
        \mu_{\efi}(\LCT{x}_e)
      = Z_e
        \!
          \cdot
        \!
          [x_{\efi} \!=\! x_{\efj}].
  \end{align*}
  The sum appearing in the above expression equals
  $Z_e \cdot \bigl( 1 - \beli_e(0) \bigr)$. Therefore, after rewriting this
  constraint, we get the constraint
  \begin{align*}
    &[x_{\efi} \!=\! x_{\efj}]
     \cdot
      \bigl(
        \delta_{\efi}
          \cdot
          \delta_{\efj} 
        -
        Z_e
      \bigr) 
    +
    \mu_{\efi}(x_{\efi})
      \cdot
      \mu_{\efj}(x_{\efj})
    \\
    &\hspace{2cm}
      \cdot
      \Bigl(
        1
        +
        \delta_{\efi} 
          \cdot
          \epsilon_{\efj}
        +
        \delta_{\efj} 
          \cdot
          \epsilon_{\efi}
        +
        \epsilon_{\efi}
          \cdot
          \epsilon_{\efj}
          \cdot
          Z_e
          \cdot 
          \bigl( 1 - \beli_e(0) \bigr)
      \Bigr)
      = 0.
  \end{align*}
  We claim that this constraint is satisfied. 
  This follows from the above two
   expressions being equal to zero. For the first parenthetical
  expression, this observation follows
  from~\eqref{sec:LCT:exp:7}. For the second parenthetical
  expression, this observation follows from the following
  considerations. 
  If $ \beli_e(0) = 1$, the second parenthetical
  expression equals
  \begin{align*}
    1
    +
    \delta_{\efi} 
      \cdot
      \epsilon_{\efj}
    +
    \delta_{\efj} 
      \cdot
      \epsilon_{\efi}
    = 0,
  \end{align*}
  where the last equality follows from~\eqref{extra constraint on epsilon for beli = [x = 0]}.
  Now we consider $ \beli_e(0) \neq 1$.
  Because the product of the left-hand sides
  of~\eqref{sec:LCT:exp:8}
  and~\eqref{sec:LCT:exp:9} must equal the product of the
  right-hand sides of~\eqref{sec:LCT:exp:8}
  and~\eqref{sec:LCT:exp:9}, we get
  \begin{align*}
    &\delta_{\efi}
      \cdot
      \delta_{\efj}
    +
    \delta_{\efi}
      \cdot
      Z_e
      \cdot 
      \bigl( 1 - \beli_e(0) \bigr)
      \cdot
      \epsilon_{\efj}
    +
    \delta_{\efj} 
      \cdot
      Z_e
        \cdot 
        \bigl( 1 - \beli_e(0) \bigr)
        \cdot
        \epsilon_{\efi}
      \\  & \hspace{4cm}
         +
    (Z_e)^2
      \cdot 
      \bigl( 1 - \beli_e(0) \bigr)^2
      \cdot
      \epsilon_{\efi}
      \cdot
      \epsilon_{\efj}  
      = \mu_{\efi}(0)
             \cdot
             \mu_{\efj}(0).
  \end{align*}
  Using~\eqref{sec:LCT:exp:7} and
  $\mu_{\efi}(0) \cdot \mu_{\efj}(0) = Z_e \cdot \beli_e(0)$, we obtain
  \begin{align*}
    Z_e
    +
    \delta_{\efi}
      \cdot
      Z_e
      \cdot 
      \bigl( 1 - \beli_e(0) \bigr)
      \cdot&
      \epsilon_{\efj}
    +
    \delta_{\efj} 
      \cdot
      Z_e
      \cdot 
      \bigl( 1 - \beli_e(0) \bigr)
      \cdot
      \epsilon_{\efi}
    \nonumber \\
      &+
    (Z_e)^2
      \cdot 
      \bigl( 1 - \beli_e(0) \bigr)^2
      \cdot
      \epsilon_{\efi}
      \cdot
      \epsilon_{\efj}= Z_e \cdot \beli_e(0).
  \end{align*}
  Subtracting $Z_e \cdot \beli_e(0)$ and then dividing $ Z_e \cdot \bigl( 1 - \beli_e(0) \bigr) $ from both sides, results in
  \begin{align*}
    1
    +
    \delta_{\efi}
      \cdot
      \epsilon_{\efj}
    +
    \delta_{\efj} 
      \cdot
      \epsilon_{\efi}
    +
    Z_e
      \cdot 
      \bigl( 1 - \beli_e(0) \bigr)
      \cdot
      \epsilon_{\efi}
      \cdot
      \epsilon_{\efj}
    &= 0.
  \end{align*}
  This is the promised result.

\end{enumerate}


\ifx\sectionheaderonnewpage\x
\clearpage
\fi


  \section{Proof of Proposition~\ref{SEC:LCT:PROP:1}}
  \label{apx:property of SNFGs}

\begin{enumerate}

\item This follows from the fact that in every step of the LCT the partition
  function is unchanged.

\item This follows from
  \begin{align*}
    \LCT{g}(\vect{0})
      &= \prod_f
           \LCT{f}(\vect{0}) \\
      &\overset{(a)}{=} \prod_f
           \sum_{\vx_{\setpf}}
              f(\vx_{\setpf})
              \cdot
              \prod_{e \in \setpf}
                M_{\etof}(x_{e}, 0)
        \\
      &\overset{(b)}{=} \prod_f
           \sum_{\vx_{\setpf}}
             f(\vx_{\setpf})
               \cdot
               \prod_{e \in \setpf}
                 \bigl(
                   \zeta_{\etof}
                     \cdot
                     \mu_{\etof}(x_{e})
                 \bigr) 
      \\
      &= \left(
           \prod_f
             \sum_{\vx_{\setpf}}
               f(\vx_{\setpf})
                 \cdot
                 \prod_{e \in \setpf}
                   \mu_{\etof}(x_{e})
         \right)
         \cdot
         \left(
           \prod_{e = (f_{i},f_{j})}
             \bigl(
               \zeta_{\efj}
                 \cdot
                 \zeta_{\efi}
             \bigr)
         \right) \\
      &\overset{(c)}{=} 
      \Biggl(
           \prod_f
             Z_f( \vmu)
         \Biggr)
         \cdot
         \left(
           \prod_{e}
             \bigl( Z_e( \vmu) \bigr)^{\!\!-1}
         \right) \\
      &\overset{(d)}{=} \ZBSPA(\vmu),
  \end{align*}
  where step $(a)$ follows from~\eqref{sec:LCT:exp:10}, where
  step $(b)$ follows from~\eqref{sec:LCT:exp:3}
  and~\eqref{sec:LCT:exp:4}, where step $(c)$ follows
  from~\eqref{sec:LCT:exp:5}, and where step $(d)$
  follows from~\eqref{sec:SNFG:eqn:5}.

\item Consider $e = (f_{i}, f_{j}) \in \setEfull $ and 
  $\LCTv{x}_{\setpfi} \in \LCTset{X}_{\setpfi}$ with exactly one nonzero component
  such that $\LCT{x}_e \neq 0$ and $\LCT{x}_{e'} = 0$ for all
  $e' \in \setpfi \setminus \{ e \}$. For this setup, we get
  \begin{align*}
    \LCT{f_{i}}(\LCTvxfi)
      &\overset{(a)}{=} \sum_{x_{e}}
           M_{\efi}(x_{e}, \LCT{x}_{e})
           \cdot
             \sum_{\vx_{\setpfi} \setminus x_{e}}
               f_{i}(\vx_{\setpfi})
                 \cdot
                 \prod_{e' \in \setpfi \setminus \{ e \}}
                   M_{\epfi}(x_{e'}, \LCT{x}_{e'}) \\
      &\overset{(b)}{=} \sum_{x_{e}}
           M_{\efi}(x_{e}, \LCT{x}_e)
           \cdot
             \sum_{\vx_{\setpfi} \setminus x_{e}}
               f_{i}(\vx_{\setpfi})
                 \cdot \prod_{e' \in \setpfi \setminus \{ e \}}
                   \bigl(
                     \zeta_{\epfi} 
                       \cdot 
                       \mu_{\epfi}(x_{e'})
                   \bigr) \\
      &\overset{(c)}{=} \left(
           \prod_{e' \in \setpfi \setminus \{ e \}}
             \zeta_{\epfi} 
         \right)
         \cdot
         \sum_{x_{e}}
           M_{\efi}(x_{e}, \LCT{x}_e)
           \cdot
             \sum_{\vx_{\setpfi} \setminus x_{e}}
               f_{i}(\vx_{\setpfi})
                 \nonumber \\
                 &\hspace{8 cm}
                 \cdot \prod_{e' \in \setpfi \setminus \{ e \}}
                   \mu_{\epfi}(x_{e'}) \\
      &\overset{(d)}{=} \left(
           \prod_{e' \in \setpfi \setminus \{ e \}}
             \zeta_{\epfi} 
         \right)
         \cdot
         \sum_{x_{e}}
           M_{\efi}(x_{e}, \LCT{x}_e)
           \cdot
             \kappa_{\efj}
             \cdot
             \mu_{\efj}(x_{e}) \\
      &\overset{(e)}{=} \left(
           \prod_{e' \in \setpf \setminus \{ e \}}
             \zeta_{\epfi} 
         \right)
         \cdot
         \sum_{x_{e}}
           M_{\efi}(x_{e}, \LCT{x}_e)
           \cdot
             \kappa_{\efj}
             \cdot
             \zeta_{\efj}^{-1}
             \cdot
             M_{\efj}(x_{e},0) \\
      &\overset{(f)}{=} \left(
           \prod_{e' \in \setpf \setminus \{ e \}}
             \zeta_{\epfi} 
         \right)
         \cdot
         \kappa_{\efj}
         \cdot
         \zeta_{\efj}^{-1}
         \cdot
         [\LCT{x}_e \! = \! 0] \\
      &\overset{(g)}{=} 0,
  \end{align*}
  where $ \sum_{\vx_{\setpfi} \setminus x_{e}} $ represents the operation that takes the summation of the collection of the variables 
  $ ( x_{e'} )_{ e' \in \setpfi \setminus \{ e \} } $
  over the alphabet $ \prod_{ e' \in \setpfi \setminus \{ e \} } \setx{e'} $,
  where step $(a)$ follows from~\eqref{sec:LCT:exp:10}, where
  step $(b)$ follows from $\LCT{x}_{e'} = 0$ for
  $e' \in \setpf \setminus \{ e \}$ and
  equations~\eqref{sec:LCT:exp:3} and~\eqref{sec:LCT:exp:4},
  where step $(c)$ follows from reordering terms, where step $(d)$ follows from the SPA message update
  rules in~\eqref{sec:SNFG:eqn:8}, where step $(e)$
  follows
  from~\eqref{sec:LCT:exp:4},
  and where step $(f)$ follows
  from~\eqref{sec:LCT:exp:2}, and where step $(g)$
  follows from $\LCT{x}_e \neq 0$.

\item Assume that $ \bigl( \setF,\setE'(\LCTv{x}) \bigr) $ is not a generalized loop for some
  $ \LCTv{x} \in \prod_{e} \LCTset{X}_{e} $. Based on the third
  property and the definition of a generalized loop, we can conclude that 
  for this $ \LCTv{x} $,
  there is at least one $f \in \setF$ such that $ \wh(\LCTvxf) = 1 $ and $\LCT{f}(\LCTvxf) = 0$, which
  implies $\LCT{g}(\LCTv{x}) = 0$.

\item Follows immediately from the fourth property.

\item From the end of the first step in the construction of the
  LCT $\LCT{\graphN}$ of $\graphN$, it follows that the SPA fixed-point message
  vector $\vmu$ for $\graphN$ induces an SPA fixed-point message vector
  $\LCTv{\mu} $ for
  $\LCT{\graphN}$ via
  \begin{align*}
    \mu_{\etof}(x_e)
      &= ( \zeta_{\etof} )^{-1}
         \cdot
         \sum_{\LCT{x}_e}
           M_{\etof}(x_e, \LCT{x}_e)
           \cdot
           \LCT{\mu}_{\etof}(\LCT{x}_e),
  \end{align*}
  where $\kappa_{\etof}$ is some nonzero constant. We claim that
  $\LCT{\mu}_{\etof}(\LCT{x}_e) = [\LCT{x}_e \! = \! 0]$ for all
  $\LCT{x}_e \in \LCTset{X}_e$, is the unique (up to rescaling) solution to
  this expression. The claim follows from
  $\LCT{\mu}_{\etof}(\LCT{x}_e) = [\LCT{x}_e \! = \! 0]$ for all
  $\LCT{x}_e \in \LCTset{X}_e$, satisfying the above expression, along with
  the matrix $ \matr{M}_{\etof} $ being
  invertible thanks to~\eqref{sec:LCT:exp:1}.

\item
  \begin{enumerate}

  \item This follows immediately from the expressions in
    Definition~\ref{sec:LCT:def:1}.

  \item From the assumptions it follows that
    $\epsilon_{\efi} = \epsilon_{\efj}$. Because of the symmetry in
    Definition~\ref{sec:LCT:def:1} and because the parameters in every
    parameter pair are the same, the two matrices
    $\matr{M}_{\efi}$ and
    $\matr{M}_{\efj}$ must be
    equal. Moreover, because of~\eqref{sec:LCT:exp:1},
    or~\eqref{sec:LCT:exp:2}, these two matrices must be
    orthogonal.

  \end{enumerate}

\end{enumerate}


\ifx\sectionheaderonnewpage\x
\clearpage
\fi


  \section{Proof of Proposition~\ref{PROP:DENFG:LCT:1}}
  \label{apx:LCT DENFGs}
  The proofs of Properties 1--7 are straightforward generalizations of the
corresponding proofs of Properties 1--7 in Proposition~\ref{sec:LCT:prop:1}. 
\begin{enumerate}

\item[8)] For each $ e \in \setpf $ and $ f \in \setF $, we get the functions 
$ M_{\ef} $ by applying the substitutions introduced in Definition~\ref{def:DENFG:LCT:1} to the definition of the LCT for S-NFG in Definition~\ref{sec:LCT:def:1}.
Because the SPA fixed-point message vector $ \vmu $ satisfies 
$ \matr{C}_{ \mu_{\ef} } \in \setPSD{\set{X}_{e}} $, as proven in Lemma~\ref{sec:DENFG:lem:1}, and the fact that the constants $\zeta_{\efj}$, $\chi_{\efj}$, $\delta_{\ef}$, and $\epsilon_{\ef}$ are real-valued, as stated in Definition~\ref{def:DENFG:LCT:1}, one can verify that
\begin{align*}
  M_{\ef}\bigl( (\xe, \xe') ,(\LCT{x}_{e}, \LCT{x}_{e}') \bigr) 
  = \overline{ M_{\ef}\bigl( (\xe', \xe) ,(\LCT{x}_{e}', \LCT{x}_{e}) \bigr) }
\end{align*}
for all $ 
  \txe = (\xe, \xe') \in \tset{X}_{e}$
  and $\LCTt{x}_{e} = ( \LCT{x}_{e}, \LCT{x}_{e}' ) \in \LCTtset{X}_{e} $.
Therefore, we know that the matrices $ \matr{C}_{ M_{\efi} }$ and 
$\matr{C}_{ M_{\efj} }$ are Hermitian matrices.

\item[9)] For each $ f \in \setF $, we have 
\begin{align*}
  \LCT{f}(\LCTv{x}_{\setpf}', \LCTv{x}_{\setpf})
  &\overset{(a)}{=}
  \sum_{ \tvx_{\setpf} }
  f( \vx_{\setpf}, \vx_{\setpf}' )
  \cdot
  \prod_{e \in \setpf}
  M_{\ef}\bigl( (x_{e}, x_{e}'), (\LCT{x}_e', \LCT{x}_e) \bigr)
  \nonumber\\
  &\overset{(b)}{=}
  \sum_{ \tvx_{\setpf} }
  f( \vx_{\setpf}', \vx_{\setpf} )
  \cdot
  \prod_{e \in \setpf}
  M_{\ef}\bigl( (x_{e}', x_{e}), (\LCT{x}_e', \LCT{x}_e) \bigr)
  \nonumber\\
  &\overset{(c)}{=}
  \sum_{ \tvx_{\setpf} }
  \overline{f( \vx_{\setpf}, \vx_{\setpf}' )}
  \cdot
  \prod_{e \in \setpf}
  \overline{ 
    M_{\ef}\bigl( (x_{e}, x_{e}'), (\LCT{x}_e, \LCT{x}_e') \bigr)
  }
  \nonumber\\
  &\overset{(d)}{=} \overline{ \LCT{f}(\LCTv{x}_{\setpf}, \LCTv{x}_{\setpf}') },
  \qquad \LCTtv{x}_{\setpf} = (\LCTv{x}_{\setpf}', \LCTv{x}_{\setpf}) 
  \in \LCTtset{X}_{\setpf},
\end{align*}
where step $(a)$ follows from the definition of $ \LCT{f} $ in~\eqref{sec:LCT:eqn:1},
where step $(b)$ follows from rearranging the arguments in the associated functions,
where step $(c)$ follows from the fact that the Choi-matrix representation 
$ \matr{C}_{f} $ in~\eqref{sec:DENFG:eqn:6} is a PSD matrix,
and Property~\ref{prop:DENFG:LCT:1:item:1}: the Choi-matrix representations
$ \matr{C}_{ M_{\efi} }$ and $\matr{C}_{ M_{\efj} }$ are Hermitian matrices,
and where step $(d)$ again follows from the definition of $ \LCT{f} $ in~\eqref{sec:LCT:eqn:1}.

\item[10)] This follows from 
Property~\ref{prop:DENFG:LCT:1:item:2} and the definition of 
the weak-sense DE-NFG in Definition~\ref{sec:DENFG:def:4}.

\end{enumerate}


\ifx\sectionheaderonnewpage\x
\clearpage
\fi

  \chapter{Proofs in Chapter~\ref{CHAPT:SST}}

  \section{Proof of Lemma~\ref{SEC:SST:LEM:6}}
  \label{apx:alternative expression of ZBM by Pe}

  The partition function of $ \hgraphNavg $, as defined in Definition~\ref{sec:SST:def:2}, satisfies
  \begin{align*}
    Z(\hgraphNavg)
    &\overset{(a)}{=} \sum_{\vsigma}
    p_{\vSigma}(\vsigma) \cdot Z(\hgraphN_{M,\vsigma})
    \nonumber\\
    &\overset{(b)}{=}  \sum_{\vsigma}
    p_{\vSigma}(\vsigma) \cdot Z(\hgraphPsig)
    \nonumber\\
    &\overset{(c)}{=} 
    \sum_{\vsigma}
    p_{\vSigma}(\vsigma) \cdot
    \sum_{\vx_{[M]}}
    \Biggl( \prod_{m \in [M]} \prod_f f(\vx_{\setpff,m}) \Biggr)
    \cdot \prod_{e} P_{e,\sigma_e}(\vx_{\efi,[M]}, \vx_{\efj,[M]})
    \nonumber\\
    &\overset{(d)}{=} 
    \sum_{\vx_{[M]}}
    \Biggl( \prod_{m \in [M]} \prod_f f(\vx_{\setpff,m}) \Biggr)
    \cdot \prod_{e} \Biggl( 
      \sum_{\sigma_e} p_{\Sigmae}(\sigma_e) 
      \cdot P_{e,\sigma_e}(\vx_{\efi,[M]}, \vx_{\efj,[M]})
    \Biggr)
    \nonumber\\
    &\overset{(e)}{=} 
    \sum_{\vx_{[M]}}
    \Biggl( \prod_{m \in [M]} \prod_f f(\vx_{\setpff,m}) \Biggr)
    \cdot \prod_{e} P_{e}(\vx_{\efi,[M]}, \vx_{\efj,[M]}),
  \end{align*}
  where step $(a)$ follows from Lemma~\ref{sec:SST:lem:1},
  where step $(b)$ follows from~\eqref{sec:SST:eqn:33}, where step $(c)$ follows from
  the definition of $ \hgraphPsig $ in Definition~\ref{sec:SST:def:1}, \ie, 
  the partition function of $ \hgraphPsig $ is given by
  \begin{align*}
    Z(\hgraphPsig) =  \sum_{\vx_{[M]}}
    \Biggl( \prod_{m \in [M]} \prod_f f(\vx_{\setpff,m}) \Biggr)
    \cdot \prod_{e} P_{e,\sigma_e}(\vx_{\efi,[M]}, \vx_{\efj,[M]}),
  \end{align*}
  where step $(d)$ follows from the decomposition of $ p_{\vSigma}(\vsigma) = \prod_{e}p_{\Sigmae}(\sigma_e) $, as shown in~\eqref{sec:SST:eqn:31}, and $ \vsigma = (\sigma_e)_{e \in \setEfull} $ as defined in Definition~\ref{def:graph:cover:construction:1}, and where step $(e)$ follows from the definition of $ P_{e} $ in Definition~\ref{sec:SST:def:4}.

  \ifx\sectionheaderonnewpage\x
  \clearpage
  \fi

  \ifx\sectionheaderonnewpage\x
  \clearpage
  \fi

  \section{Proof of Lemma~\ref{SEC:SST:PROP:2}}
  \label{apx:SST}  
  In this chapter, we introduce the symmetric-subspace transform (SST) for both S-NFG and DE-NFG. The SST provides a different perspective to understand the $M$-covers of S-NFGs or DE-NFGs. While this transformation has been applied in quantum physics~\cite{Wood2015,Harrow2013}, to the best of our knowledge, it is the first time to introduce the SST in the factor-graph literature.

The developments in this chapter were motivated by the work of Wood \etal~\cite{Wood2015}, where, in terms of the language of the present paper, they transformed a certain integral into the average partition function of some NFGs, with the average taken over double covers. (Although~\cite{Wood2015} focus on double covers, it is clear that their results can be extended to general $M$-covers.) However, in contrast to~\cite{Wood2015}, our approach takes the opposite direction. We express the average partition function of a given NFG in terms of some integral, where the average is over all $M$-covers of the considered NFG.


The developments in this chapter were partially also motivated by the results presented in~\cite{Vontobel2016}.

Let $\graphN$ be some S-NFG or DE-NFG. In
Definition~\ref{sec:GraCov:def:2}, the degree-$M$ Bethe
partition function of $\graphN$ is 
\begin{align*}
  \ZBM(\graphN) 
    = \sqrt[M]{
         \Bigl\langle
           Z\bigl( \hgraphN \bigr)
         \Bigr\rangle_{ \hgraphN \in \hat{\set{N}}_{M}}
       }.
\end{align*}
(See also
Definition~\ref{sec:GraCov:def:1}.) Our
particular interest lies in the limit superior of this quantity as
$M \to \infty$, as stated in Conjecture~\ref{sec:GraCov:conj:1}. Toward proving
Conjecture~\ref{sec:GraCov:conj:1}, we aim to reformulate
$\bigl( \ZBM(\graphN) \bigr)^M$ (or equivalently
$\bigl\langle Z\bigl( \hgraphN \bigr) \bigr\rangle_{ \hgraphN \in \hat{\set{N}}_{M}}$) as
an integral that can be analyzed in 
the limit $M \to \infty$. This approach allows us to study the behavior of the Bethe partition function for large values of $M$.



In order to introduce the approach, we consider a specific example of an S-NFG. In this example, the mathematical formalism that we will present is general, but we will illustrate
it under a particularly simple setup. Specifically, 
we will focus a part of the S-NFG looks like
Fig.~\ref{sec:SST:fig:3} and set $\set{X}_e = \{ 0, 1 \}$ and $M = 2$.


\section{Specifying an \texorpdfstring{$M$}{}-Cover of an S-NFG}

The following construction process defines an $M$-cover of the original S-NFG $\graphN$.

\begin{definition}
  \label{def:graph:cover:construction:1}

  The construction process for an $M$-cover of $\graphN$ is outlined as follows (see also
  Definition~\ref{sec:GraCov:def:1}):
  \begin{enumerate}
  
  \item For each edge $e \in \setEfull$, we specify a permutation
    $\sigma_e \in \set{S}_{[M]}$. We collect all these permutations into a vector $\vsigma \defeq (\sigma_e)_{e \in \setEfull} \in \set{S}_{[M]}^{|\setEfull|}$.
  
  \item\label{def:graph:cover:construction:1:item:1} For each function node $f \in \set{F}$, we draw $M$ copies of $f$. Each copy
    of function node $f$ is associated with a collection of sockets and variables. Specifically,
    if $e \in \setpf$, then the $m$-th copy of $f$ has a socket with
    associated variable denoted by $x_{e,f,m}$.
  
  \item For every edge $e = (f_i, f_{j}) \in \setEfull$, we specify how the
    sockets corresponding to $(x_{\efi,m})_{m \in [M]}$, are connected to the
    sockets corresponding to $(x_{\efj,m})_{m \in [M]}$. 
    In this thesis, the connection is determined based on the chosen permutation $\sigma_e \in \set{S}_{[M]}$. Specifically, the socket corresponding to $x_{\efi,m}$ is connected to the socket corresponding to $x_{\efj,\sigma_{e}(m)}$ for $m \in [M]$. The possible cases for this connection are illustrated in Figs.~\ref{sec:SST:fig:4} and~\ref{sec:SST:fig:5}, considering that $|\set{S}_{[M]}| = 2! = 2$ for $M = 2$.


  
  \item The resulting $M$-cover is denoted by $\hgraphN_{M,\vsigma}$.

  \end{enumerate}
  \edefinition
\end{definition}


In order to simplify the notation, let $\vSigma_{M} \defeq \vSigma \defeq \bigl( \Sigmae \bigr)_{\! e}$ be the random vector consisting of
independent and uniformly distributed (i.u.d.) random permutations from
$\set{S}_{[M]}^{|\setEfull|}$.\footnote{We use $\Sigmae$ instead of $\Sigma_e$ for these random
  variables in order to distinguish them from the symbol used to denote a sum
  over all edges.} The probability distributions of $ \Sigmae $ and $ \vSigma $ are given by
\begin{align}
  p_{\Sigmae}(\sigma_e) 
  &= \frac{1}{|\set{S}_{[M]}|} = \frac{1}{M!}, \qquad e \in \setEfull,\,
  \sigma_e \in \set{S}_{[M]},
  \label{sec:SST:eqn:pdf of SST for each edge}\\
  p_{\vSigma}(\vsigma) &= \prod_{e}p_{\Sigmae}(\sigma_e)
  = \frac{1}{(M!)^{|\setEfull|}}, \qquad 
  \vsigma \in \set{S}_{[M]}^{|\setEfull|} 
  .\label{sec:SST:eqn:31}
\end{align}

With this, $\hgraphN_{M,\vsigma}$ represents a uniformly
sampled $M$-cover of $\graphN$. We can rewrite the quantity of interest, $\bigl( \ZBM(\graphN) \bigr)^M$, as follows:
\begin{align}
  \bigl( \ZBM(\graphN) \bigr)^M
    &= \Bigl\langle
         Z\bigl( \hgraphN \bigr)
       \Bigr\rangle_{ \hgraphN \in \hat{\set{N}}_{M}}
     = \sum_{\vsigma}
         p_{\vSigma}(\vsigma)
         \cdot
         Z\bigl( \hgraphN_{M,\vsigma} \bigr), \label{sec:SST:eqn:15}
\end{align}
where $ \sum_{\vsigma} $ denotes $ \sum_{\vsigma \in \set{S}_{[M]}^{|\setEfull|}} $.

\section{Replacement of \texorpdfstring{$M$}{}-Cover S-NFG 
                       by another S-NFG}


In order to proceed, it is convenient to replace each $M$-cover
$\hgraphN_{M,\vsigma}$ with an S-NFG denoted as $\hgraphPsig$,
which satisfies the property 
\begin{align}
  Z\bigl( \hgraphN_{M,\vsigma} \bigr)
  = Z\bigl( \hgraphPsig \bigr). \label{sec:SST:eqn:33}
\end{align}
Before providing the general definition of $\hgraphPsig$, we briefly
discuss this replacement in terms of the illustrated example. Instead
of working with the S-NFG in Fig.~\ref{sec:SST:fig:4}, we will
work with the S-NFG in Fig.~\ref{sec:SST:fig:6}. Similarly,
instead of working with the S-NFG in
Fig.~\ref{sec:SST:fig:5}, we will work with the S-NFG in
Fig.~\ref{sec:SST:fig:7}.


\begin{definition}
  \label{sec:SST:def:1}

  For a given $\vsigma \in \set{S}_{[M]}^{\setEfull}$, the S-NFG
  $\hgraphPsig$ is defined as follows:
  \begin{enumerate}
  
  \item \label{step: for f covers} For every $f \in \set{F}$, we draw $M$ copies of $f$. 
   Each copy of $f$ is associated with a collection of sockets and variables. Specifically, if $e \in \setpf$, the $m$-th copy of $f$ has a socket with an associated variable $x_{\ef,m}$. (This step is the same as Step~\ref{def:graph:cover:construction:1:item:1} in Definition~\ref{def:graph:cover:construction:1}.)
  
  \item For every edge $e = (f_{i}, f_{j}) \in \setEfull$, we draw a function node
    $P_{e,\sigma_e}$. This node is connected to the sockets associated with the collection of
    variables $(x_{\efi,m})_{m \in [M]}$ and to the sockets associated with the collection of variables
    $(x_{\efj,m})_{m \in [M]}$. The local function $P_{e,\sigma_e}: \set{X}_e^{M} \times \set{X}_e^{M}  \to  \{0,1\} $ is
    defined to be
    \begin{align*}
      P_{e,\sigma_e}\bigl( \vx_{\efi,[M]}, \vx_{\efj,[M]} \bigr)
        &\defeq
           \prod_{m \in [M]}
     \bigl[
       x_{\efi,m} \! = \! x_{\efj,\sigma_e(m)}
     \bigr],
       \quad \vx_{\efi,[M]}, \vx_{\efj,[M]} \in \set{X}_e^M,
    \end{align*}
    where 
    \begin{align*}
      \vx_{\ef,[M]}
      &\defeq
       (x_{\ef,1}, \ldots, x_{\ef,M}) \in  \set{X}_e^M, \qquad 
       f \in \{f_{i}, f_{j}\}.
    \end{align*}
    (Note that $P_{e,\sigma_e}$ is an indicator
    function, as it takes only the values zero and one.)
  
  \end{enumerate}
  \edefinition
\end{definition}


In the case of Fig.~\ref{sec:SST:fig:4}, the permutation
$\sigma_e$ satisfies $\sigma_e(1) = 1$ and $\sigma_e(2) = 2$. Therefore, the corresponding local function $P_{e,\sigma_e}$ with $M = 2$ is given by:
\begin{align*}
  P_{e,\sigma_e}\bigl( \vx_{\efi,[M]}, \vx_{\efj,[M]} \bigr)
  &=
     [x_{\efi,1} \! = \! x_{\efj,1}]
     \cdot
     [x_{\efi,2} \! = \! x_{\efj,2}].
\end{align*}
Representing this function in matrix form, with rows indexed by
$(x_{\efi,1}, x_{\efi,2}) $ and columns
indexed by
$(x_{\efj,1}, x_{\efj,2})$, which take the values following the order $ (0,0), \, (0,1), \, (1,0), \, (1,1) $, we have
\begin{align*}
  \matr{P}_{e,\sigma_e}
    &= \begin{pmatrix}
      1 & 0 & 0 & 0 \\
      0 & 1 & 0 & 0 \\
      0 & 0 & 1 & 0 \\
      0 & 0 & 0 & 1  
     \end{pmatrix}.
\end{align*}
The resulting S-NFG is shown in Fig.~\ref{sec:SST:fig:6}.


On the other hand, in the case of Fig.~\ref{sec:SST:fig:5}, the
permutation $\sigma_e$ satisfies $\sigma_e(1) = 2$ and $\sigma_e(2) = 1$.  Thus, the corresponding local function $P_{e,\sigma_e}$ is given by:
\begin{align*}
  P_{e,\sigma_e}\bigl( \vx_{\efi,[M]}, \vx_{\efj,[M]} \bigr)
  &=
     [x_{\efi,1} \! = \! x_{\efj,2}]
     \cdot
     [x_{\efi,2} \! = \! x_{\efj,1}].
\end{align*}
Representing this function in matrix form (with the same row and column indexing as
above), we get
\begin{align*}
  \matr{P}_{e,\sigma_e}
  &= \begin{pmatrix}
    1 & 0 & 0 & 0 \\
    0 & 0 & 1 & 0 \\
    0 & 1 & 0 & 0 \\
    0 & 0 & 0 & 1  
   \end{pmatrix}.
\end{align*}
The resulting S-NFG is shown in Fig.~\ref{sec:SST:fig:7}.


\section[An NFG Representing the Average \texorpdfstring{$M$}{}-Cover]{An NFG Representing the Average Degree \texorpdfstring{$M$}{}-Cover 
                       of an S-NFG}
\label{sec:SST:NFG:average:cover:1}


The next step is to construct an NFG $\hgraphNavg$, whose partition sum equals
$\sum_{\vsigma} p_{\vSigma}(\vsigma) \cdot Z\bigl( \hgraphN_{M,\vsigma} \bigr)$. We do this
as follows. (See Fig.~\ref{sec:SST:fig:8} for an illustration.)


\begin{definition}
  \label{sec:SST:def:2}


  The construction of $\hgraphNavg$ is essentially the same as the
  construction of $\hgraphPsig$ for any $\vsigma$ in Definition~\ref{sec:SST:def:2}, but with the
  following modifications and additions. 
  Namely, for every $e \in \setEfull$, the
  permutation $\sigma_e$ is not fixed, but a variable. With this, the function
  $P_{e,\sigma_e}\bigl( \vx_{\efi,[M]}, \vx_{\efj,[M]} \bigr)$ is now considered to be a function
  not only of $\vx_{\efi,[M]}$ and $\vx_{\efj,[M]}$, but also of
  $\sigma_e$. Graphically, this is done as follows:
  \begin{itemize}

  \item For every $e \in \setEfull$, we draw a function node $p_{\Sigmae}$.

  \item For every $e \in \setEfull$, we draw an edge that connects
    $P_{e,\sigma_e}$ and $p_{\Sigmae}$ and that represents the variable
    $\sigma_e$. Recall $ p_{\Sigmae}( \sigma_{e} ) = (M!)^{-1} $ for all $ e \in \setEfull $ and $ \sigma_{e} \in \set{S}_{[M]} $, as defined in~\eqref{sec:SST:eqn:pdf of SST for each edge}.

  \end{itemize}
  \edefinition
\end{definition}


\begin{lemma}
  \label{sec:SST:lem:1}

  The partition function of $\hgraphNavg$ satisfies
  \begin{align*}
    Z\bigl( \hgraphNavg \bigr)
      &= \sum_{\vsigma}
           p_{\vSigma}(\vsigma) 
           \cdot Z\bigl( \hgraphN_{M,\vsigma} \bigr)
      = \Bigl\langle
           Z\bigl( \hgraphN \bigr)
         \Bigr\rangle_{ \hgraphN \in \hat{\set{N}}_{M}}
      = \bigl( \ZBM(\graphN) \bigr)^M.
  \end{align*}
\end{lemma}


\begin{proof}
  The first equality  follows directly from the definition of $\hgraphNavg$ in Definition~\ref{sec:SST:def:2}, and the last two equalities follow from the equalities in~\eqref{sec:SST:eqn:15}.
\end{proof}


In the subsequent discussion, we fix an arbitrary edge $e = (f_{i}, f_{j}) \in \setEfull$.


\begin{definition}\label{sec:SST:def:4}
  We define
  \begin{align*}
    P_e\bigl( \vx_{\efi,[M]}, \vx_{\efj,[M]} \bigr)
      &\defeq
         \sum_{\sigma_e \in \set{S}_{[M]}}
           p_{\Sigmae}(\sigma_e)
           \cdot
             P_{e,\sigma_e}\bigl( \vx_{\efi,[M]}, \vx_{\efj,[M]} \bigr) \\
      &= \frac{1}{M!}
           \cdot
           \sum_{\sigma_e \in \set{S}_{[M]}}
             P_{e,\sigma_e}\bigl( \vx_{\efi,[M]}, \vx_{\efj,[M]} \bigr),
  \end{align*}
  where the last equality follows from $p_{\Sigmae}(\sigma_e) = 1/M!$ for all
  $\sigma_e \in \set{S}_{[M]}$. Also, we define the following vectors:
  \begin{align*}
    \vx_{[M]} &\defeq ( x_{e,f,m} )_{e \in \setpf, f \in \setF, m \in [M]}
    \in \prod_{e \in \setpf, f \in \setF} \setx{e}^{M},
    \nonumber\\
    \vx_{\setpff,m} &\defeq ( x_{e,f,m} )_{e \in \setpf}
    \in \setx{\setpf},
    \quad f \in \setF, \ m \in [M]. 
  \end{align*}
  \edefinition
\end{definition}

In the case of Fig.~\ref{sec:SST:fig:9}, the function $P_e$ corresponds to the exterior function of the dashed box in Fig.~\ref{sec:SST:fig:8}. This function is obtained through the closing-the-box operation~\cite{Loeliger2004}. For the specific example in Fig.~\ref{sec:SST:fig:8}, the
matrix representation of $ P_e $ is given by (the same row and column indexing as
$\matr{P}_{e,\sigma_e}$)
\begin{align}
  \matr{P}_e
    &\defeq
       \sum_{\sigma_{e} \in \set{S}_2}
         \frac{1}{2!}
         \cdot
         \matr{P}_{e,\sigma_e}
     = \begin{pmatrix}
         1 & 0   & 0   & 0 \\
         0 & 1/2 & 1/2 & 0 \\
         0 & 1/2 & 1/2 & 0 \\
         0 & 0   & 0   & 1
       \end{pmatrix}
         \label{sec:SST:eqn:1}.
\end{align}


\begin{lemma}\label{sec:SST:lem:6}
  \label{SEC:SST:LEM:6}
  The partition function $ Z\bigl( \hgraphNavg \bigr) $ can be expressed as
  \begin{align*}
    Z\bigl( \hgraphNavg \bigr) = \sum_{\vx_{[M]}}
    \Biggl( \prod_{m \in [M]} \prod_f f(\vx_{\setpff,m}) \Biggr)
    \cdot \prod_{e} P_e\bigl( \vx_{\efi,[M]}, \vx_{\efj,[M]} \bigr).
  \end{align*}
\end{lemma}
\begin{proof}
  See Appendix~\ref{apx:alternative expression of ZBM by Pe}.
\end{proof}

In the subsequent analysis, we use the language of the method of types\footnote{For a detailed exposition of the method of types, we refer to the book by Cover and Thomas~\cite{T.M.Cover2006}, although we use different notation in our discussion.} to characterize the function $P_e$. 



\begin{definition}\label{sec:SST:def:6}
  We introduce the following objects:\footnote{Note that, for simplicity, we use $x$ instead of the more precise notation $\xe$.}
  \begin{itemize}

  \item The type
    $\vt_e(\vv_e) \defeq \bigl( t_{e,x}(\vv_e) \bigr)_{x \in \set{X}_e} \in \Pi_{\setxe}$ of a
    vector $\vv_e \defeq (v_{e,m})_{m \in [M]} \in \set{X}_e^M$ is defined to be
    \begin{align*}
      t_{e,x}(\vv_e)
        &\defeq
           \frac{1}{M}
           \cdot
           \bigl| \hskip0.5mm 
             \left\{
               m \in [M]
             \ \middle| \ 
               v_{e,m} = x
             \right\} \hskip0.5mm 
           \bigr|,
             \qquad x \in \set{X}_e.
    \end{align*}

  \item Let $\set{B}_{\set{X}_e^M}$ be the set of possible types of vectors of
    length $M$ over $\set{X}_e$, \ie,
    \begin{align*}
      \set{B}_{\set{X}_e^M}
        &\defeq
           \bigl\{
             \vt_e \in \Pi_{\setxe}
           \bigm|
             \text{there exists $\vv_e \in \set{X}_e^M$ 
                   such that $\vt(\vv_e) = \vt_e$}
           \bigr\}.
    \end{align*}
  Let $ \set{B}_{\set{X}^M} $ be the Cartesian product $ \prod_{e} \set{B}_{\set{X}_e^M} $.
    
  \item Let $\vt_e \in \set{B}_{\set{X}_e^M}$. Then the type class of $\vt_e$ is defined to be the set
    \begin{align*}
      \set{T}_{e,\vt_e}
      &\defeq
         \bigl\{
           \vv_e \in \set{X}_e^M
         \bigm|
           \vt_e(\vv_e) = \vt_e
         \bigr\}.
    \end{align*}
  
  \end{itemize}
  \edefinition
\end{definition}


\begin{lemma}
  \label{sec:SST:lem:2}

  It holds that
  \begin{align*}
    \bigl| \set{B}_{\set{X}_e^M} \bigr|
      &= \binom{|\set{X}_e| + M - 1}{M}, \\
    |\set{T}_{e,\vt_e}|
      &= \frac{M!}{
        \prod_{x \in \set{X}_e} 
        \bigl( (M \cdot t_{e,x})! \bigr)
      },
      \quad \vt_e \in \set{B}_{\set{X}_e^M}.
  \end{align*}
\end{lemma}

\begin{proof}
  These expressions can be derived from standard combinatorial results.
\end{proof}


For the illustrated example, the number of possible types is
$\bigl| \set{B}_{\set{X}_e^M} \bigr| = \binom{2 + 2 - 1}{2} = 3$ and the set 
$\set{B}_{\set{X}_e^M} $ is given by $ \{ (1,0), \, (1/2,1/2), \, (0,1) \} $. The corresponding type classes have the following
sizes: $|\set{T}_{e,(1,0)}| = 1$, $|\set{T}_{e,(1/2,1/2)}| = 2$, and
$|\set{T}_{e,(0,1)}| = 1$.


\begin{lemma}\label{sec:SST:lem:4}
  The function $P_e$ for $ e = (f_{i}, f_{j}) $ satisfies
  \begin{align*}
    P_e\bigl( \vx_{\efi,[M]}, \vx_{\efj,[M]} \bigr)
      &= \begin{cases}
           |\set{T}_{e,\vt_e}|^{-1}
             &  
             \vt_e = \vt_e\bigl(\vx_{\efi,[M]}\bigr) 
             = \vt_e\bigl( \vx_{\efj,[M]} \bigr) \\
           0
             & \text{otherwise}
         \end{cases}.
  \end{align*}
\end{lemma}


\begin{proof}
  This result follows from the definition of the function $P_{e,\sigma_e}$ in
  Definition~\ref{sec:SST:def:1}, along with the properties of the uniform distribution $p_{\Sigmae}$ and the symmetry of $ P_e $, as shown in Definition~\ref{sec:SST:def:4}.
\end{proof}


For the illustrated example, the expression in the above lemma is
corroborated by the matrix in~\eqref{sec:SST:eqn:1}. Here, the matrix is reproduced with horizontal and vertical lines to highlight the three type classes of the row and column indices, respectively:
\begin{align*}
  \matr{P}_e
    &= \left(
         \begin{array}{c|cc|c}
           1 & 0   & 0   & 0 \\
         \hline
           0 & 1/2 & 1/2 & 0 \\
           0 & 1/2 & 1/2 & 0 \\
         \hline
           0 & 0   & 0   & 1
         \end{array}
       \right).
\end{align*}


The matrix $\matr{P}_e$ associated with $P_e$ is known as the
symmetric-subspace projection operator (see, \eg,~\cite{Harrow2013}).

Now we can express the $M$-th power of the degree-$M$ Bethe partition function as the partition function of the average degree-$M$ cover.
\begin{theorem} 
  \label{thm: expression of Z for degree M Bethe partition function}
  The degree-$M$ Bethe partition function satisfies
  \begin{align*}
    \bigl( \ZBM(\graphN) \bigr)^M
    = 
    \sum_{\vx_{[M]}}
    \Biggl( \prod_{m \in [M]} \prod_f f(\vx_{\setpff,m}) \Biggr)
    \cdot \prod_{e} P_e\bigl( \vx_{\efi,[M]}, \vx_{\efj,[M]} \bigr).
  \end{align*}
  where
  \begin{align*}
    \prod_{e} P_e\bigl( \vx_{\efi,[M]}, \vx_{\efj,[M]} \bigr)
    = \begin{cases}
      \prod\limits_{e} \frac{1}{|\set{T}_{e,\vt_e}|}
       &  
       \vt_e = \vt_e\bigl(\vx_{\efi,[M]}\bigr) 
       = \vt_e\bigl( \vx_{\efj,[M]} \bigr),\, 
       \forall e \in \setEfull \\
      0
       & \text{otherwise}
    \end{cases}.
  \end{align*}
\end{theorem}
\begin{proof}
  This follows from Lemmas~\ref{sec:SST:lem:1},~\ref{sec:SST:lem:6},
  and~\ref{sec:SST:lem:4}.
\end{proof}

\section[Reformulation of the function \texorpdfstring{$P_e$}{}]{Reformulation of the function \texorpdfstring{$P_e$}{} as an Integral}
\label{sec:SST:Pe:reformulation:1}



In this section, we explore an approach to express the function $P_e$ as an  integral. This integral representation can be utilized to modify the NFG in
Fig.~\ref{sec:SST:fig:9}, resulting in the NFG shown in
Fig.~\ref{sec:SST:fig:10}.



\begin{definition}\label{sec:SST:def:3}
  For the following considerations, we consider an arbitrary edge
  $e = (f_{i}, f_{j}) \in \setEfull$ and a positive integer $M$. We introduce
  the following notation:
  \begin{itemize}

    \item The $2$-norm of a function $\psi_e: \set{X}_e \to \sC$ is defined to be
      $\lVert \psi_e \rVert_{2} \defeq \sqrt{\sum_{\xe} |\psi_e(\xe)|^2}$.

    \item The 2-norm of a vector
      $\vpsi_e = \bigl( \psi_e(\xe) \bigr)_{\xe \in \set{X}_e} \in
      \sC^{|\set{X}_e|}$ is defined to be
      $\lVert \vpsi_e \rVert_{2} \defeq \sqrt{\sum_{\xe} |\psi_e(\xe)|^2} =
      \sqrt{\vpsi_e^\Herm \cdot \vpsi_e}$.


  \end{itemize}

  Now, let $\cpsi_e: \set{X}_e \to \sC$ be a function with $2$-norm equals one, \ie,
  $\lVert \cpsi_e \rVert_{2} = 1$. Based on $\cpsi_e$, we define the function
  $\funcFS_{e,\cpsi_e}: \set{X}_e^{M} \times \set{X}_e^{M} \to \sC$ as follows:
  \begin{align*}
    \funcFS_{e,\cpsi_e}\bigl( \vx_{\efi,[M]}, \vx_{\efj,[M]} \bigr)
    &\defeq
      \bigl| \set{B}_{\set{X}_e^M} \bigr|
      \cdot
      \Biggl(
         \prod_{m \in [M]}
           \cpsi_e(x_{\efi,m})
      \Biggr)
      \cdot
      \Biggl(
         \prod_{m \in [M]}
           \overline{ \cpsi_e(x_{\efj,m}) }
      \Biggr),
  \end{align*}
  for all $ \vx_{\efi,[M]}, \vx_{\efj,[M]} \in \set{X}_e^M $.
  \edefinition
\end{definition}

If we associate the function $\cpsi_e$ with the column vector
\begin{align}
  \cvpsi_e \defeq \bigl( \cpsi_e(\xe) \bigr)_{\xe \in \set{X}_e}
  \in \sC^{|\setxe|},
  \label{sec:SST:eqn:39}
\end{align}
and the
function $\funcFS_{e,\cpsi_e}$ with the matrix $\matrFS_{e,\cpsi_e}$ (with rows
indexed by $\vx_{\efi,[M]}$ and columns indexed by $\vx_{\efj,[M]}$), then we have
\begin{align*}
  \matrFS_{e,\cpsi_e}
    &= \bigl| \set{B}_{\set{X}_e^M} \bigr|
    \cdot
    \bigl( \cvpsi_e^{\otimes M} \bigr)
    \cdot
    \bigl( \cvpsi_e^{\otimes M} \bigr)^{\! \Herm}
    \in \sC^{|\setxe|^{M} \times |\setxe|^{M}},
\end{align*}
where $\cvpsi_e^{\otimes M}$ represents the $M$-fold Kronecker product of
$\cvpsi_e$ with itself.



In the following, for every $ e \in \setEfull $, we introduce the so-called Fubini-Study measure in dimension $|\set{X}_e|$ which can be viewed as a, 
in a suitable sense, uniform measure $\muFSsimple$ on functions
$\cpsi_e: \set{X}_e \to \sC_{\geq 0}$ with a $ 2 $-norm of one. This measure is also a Haar measure.
For the purposes of this thesis, it is sufficient
to know that a sample $\cvpsi_{e} \in \sC^{|\set{X}_e|}$ from $\muFSsimple$ can be
generated as follows:
\begin{itemize}

\item Let
  \begin{align*}
    \vW_e 
      &\defeq
         \bigl(
           (W_{e,\xe,0}, W_{e,\xe,1})
         \bigr)_{\xe \in \set{X}_e}
           \in \sR^{|\set{X}_e|} \times \sR^{|\set{X}_e|}
  \end{align*}
  be a length-$2 |\set{X}_e|$ random vector with i.i.d. entries distributed
  according to a normal distribution with mean zero and variance one.

  \item Let $\vw_e$ be a realization of $\vW_e$. Let\footnote{Recall that
    ``$\imagunit$'' denotes the imaginary unit.}
  \begin{align*}
    \cvw_e
      &\defeq
         \vw_e / \lVert \vw_e \rVert_{2}, \nonumber \\
    \cpsi_e(\xe)
      &=
         \cw_{e,\xe,0} + \imagunit \cw_{e,\xe,1}. 
   \nonumber
  \end{align*}
   Because of the bijection between $\cvw$ and $\cvpsi$, we will use, with some slight abuse of notation, the notation $\muFSsimple$ not only for the distribution of $\cvpsi$, but also for the distribution of $\cvw$. In this case, the 2-norm of the function $ \cpsi_e $ is still one, \ie,
   \begin{align*}
    \sum_{\xe} \bigl| \cpsi_e(\xe) \bigr|^{2} = 1. 
   \end{align*}
\end{itemize}


\begin{lemma}\label{sec:SST:prop:2}
  \label{SEC:SST:PROP:2}
  For all $ \vx_{\efi,[M]}, \vx_{\efj,[M]} \in \set{X}_e^M $, it holds that 
  \begin{align*}
    \int
      \funcFS_{e,\cpsi_e}\bigl( \vx_{\efi,[M]}, \vx_{\efj,[M]} \bigr)
    \dd{\muFSsimple\bigl( \cvpsi_e \bigr)} 
    &= P_e\bigl( \vx_{\efi,[M]}, \vx_{\efj,[M]} \bigr).
  \end{align*}
\end{lemma}
\begin{proof}
  See Appendix~\ref{apx:SST}.
\end{proof}

\section[Another NFG Representing the Average \texorpdfstring{$M$}{}-Cover]{An Alternative NFG Representing 
the Average \texorpdfstring{$M$}{}-Cover of an S-NFG\label{sec:SST:sunbsec:2}}


In Section~\ref{sec:SST:NFG:average:cover:1}, we introduced the NFG
$\hgraphNavg$, whose partition function equals
$\bigl( \ZBM(\graphN) \bigr)^M$. Now, based on the
results from Section~\ref{sec:SST:Pe:reformulation:1}, we formulate an
alternative NFG, denoted by $\hgraphNavgalt$. The partition function of $\hgraphNavgalt$ also equals
$\bigl( \ZBM(\graphN) \bigr)^M$.


Recall that $P_e\bigl( \vx_{\efi,[M]}, \vx_{\efj,[M]} \bigr)$ represents the exterior
function of the dashed box in Fig.~\ref{sec:SST:fig:8}; in other
words, if we close the dashed box in Fig.~\ref{sec:SST:fig:8}, then we
obtain a single function node representing the function
$P_e\bigl( \vx_{\efi,[M]}, \vx_{\efj,[M]} \bigr)$, as shown in Fig.~\ref{sec:SST:fig:9}.


The NFG $\hgraphNavgalt$ in Fig.~\ref{sec:SST:fig:10} is defined
such that it is the same as the NFG $\hgraphNavg$ in
Fig.~\ref{sec:SST:fig:9}, except for the content of the function $ P_{e} $
for every $e \in \setEfull$. However, the content of the dashed box in
Fig.~\ref{sec:SST:fig:10} is set up in such a way that its exterior function
equals the function $ P_{e} $ in
Fig.~\ref{sec:SST:fig:9} as well as the exterior function of the dashed box in Fig.~\ref{sec:SST:fig:8}. 

By closing the dashed box in Fig.~\ref{sec:SST:fig:10}, a single function node representing the function $P_e$ is formed, as shown in Fig.~\ref{sec:SST:fig:9}. This ensures that the partition function of the NFG in Fig.~\ref{sec:SST:fig:10} equals the partition functions of the NFGs in Figs.~\ref{sec:SST:fig:8} and~\ref{sec:SST:fig:9}. In other words, it can be stated that $Z\bigl( \hgraphNavgalt \bigr) = Z\bigl( \hgraphNavg \bigr) = \bigl( \ZBM(\graphN) \bigr)^M$.\footnote{In terms of NFG language, the
  NFG in Fig.~\ref{sec:SST:fig:10} can be obtained
  by, first applying a closing-the-box operation and
  then an opening-the-box operation~\cite{Loeliger2004} for every $e \in \setEfull$. Note that
  closing-the-box and opening-the-box operations maintain the partition
  function of the NFG.}




In the NFG shown in Fig.~\ref{sec:SST:fig:10}, compared to the NFG in Fig.~\ref{sec:SST:fig:8}, the following new elements are introduced for every $ e \in \setEfull $:
\begin{itemize}

\item An edge is added with an associated variable (vector)
  $\cvpsi_e \in \sC^{|\set{X}_e|}$.

\item A function node is added, representing the function
  \begin{align}
    p_e\bigl( \cvpsi_e \bigr)
      &\defeq
         \bigl| \set{B}_{\set{X}_e^M} \bigr|
         \cdot
         \muFSsimple\bigl( \cvpsi_e \bigr).
     \label{eqn: def of pe}
  \end{align}

\item $M$ function nodes representing the function $\cpsi_e$ are added, and $M$ function nodes representing the function $\overline{\cpsi_e}$ are added.  The function $\cpsi_e$ is specified by $\cvpsi_e$ in a straightforward manner.

\end{itemize}


\begin{definition}
  \label{sec:SST:def:5}
  \index{SST!for S-NFG}
  We make the following definitions for $ \avgalt{\graphN} $.
  \begin{enumerate}

    \item We define a set of variables
    \begin{align*}
      \vxavgalt \defeq ( x_{\ef,m} )_{e \in \setpf, \, f \in \setF, \, m \in [M]} 
      \in \setx{}^{2M}.
    \end{align*}

    \item We define a set of vectors 
    \begin{align*}
      \cvpsiavgalt \defeq ( \cvpsi_e )_{e \in \setEfull},
    \end{align*}
    where $ \cvpsi_{e} $ is defined in~\eqref{sec:SST:eqn:39} for all $ e \in \setEfull $.

    \item We define the measure $ \muFSsimple\bigl( \cvpsiavgalt \bigr) $ to be the product of Fubini-Study measures: 
    \begin{align*}
      \muFSsimple\bigl( \cvpsiavgalt \bigr) 
      \defeq \prod_{e} \muFSsimple\bigl( \cvpsi_e \bigr).
    \end{align*}

    \item For each $ e = (f_{i}, f_{j}) \in \setEfull $ such that $ i<j $, we define 
    \begin{align*}
      \cpsi_{\efi}(\xe)
      \defeq \cpsi_{e}(\xe), \qquad
      \cpsi_{\efj}(\xe)
      \defeq \overline{\cpsi_{e}(\xe)}, \qquad
      \xe \in \setxe.
    \end{align*}
  \end{enumerate}

  The construction of $\avgalt{\graphN}$ is given as follows.
  \begin{enumerate}
    \item For every $f \in \set{F}$, we follow the same step as in Step~\ref{step: for f covers} in Definition~\ref{sec:SST:def:1}.

    \item For every $ e = (f_{i}, f_{j}) \in \setEfull $, we add the following elements:
    \begin{itemize}

    \item an edge associated with variable (vector) 
    $\cvpsi_e \in \sC^{|\set{X}_e|}$;

    \item a function node representing the function $ p_e\bigl( \cvpsi_e \bigr) $ as defined in~\eqref{eqn: def of pe};

    \item $M$ function nodes representing the function $\cpsi_e$;

    \item $M$ function nodes representing the function $\overline{\cpsi_e}$. 

    \end{itemize}
    Then we connect each of the $M$ sockets $ (x_{e,f_{i},m})_{m \in [M]} $ to distinct function nodes representing the function $ \cpsi_e $. Similarly, we connect each of the $M$ sockets $ (x_{e,f_{j},m})_{m \in [M]} $ to distinct function nodes representing the function $ \overline{\cpsi_e} $. 

  \end{enumerate}

  \edefinition
\end{definition}

With this, the global function of $\hgraphNavgalt$ is
\begin{align}
  g\bigl( \vxavgalt, \cvpsiavgalt \bigr)
    &= \left(
       \prod_f
         \prod_{m \in [M]} f(\vx_{\setpff,m})
     \right)
     \nonumber\\
     &\quad \cdot
     \left(
       \prod_{e=(f_{i},f_{j})}
          \Biggl( 
           \bigl| \set{B}_{\set{X}_e^M} \bigr|
           \cdot
           \prod_{m \in [M]}
            \bigl( 
              \cpsi_{\efi}(x_{\efi,m})
              \cdot
              \cpsi_{\efj}(x_{\efj,m})
            \bigr)  
          \Biggr)
     \right) \label{sec:SST:eqn:26} \\
     &= \Biggl( \prod_f \prod_{m \in [M]} f(\vx_{\setpff,m}) \Biggr)
     \cdot \prod_{e} 
     \funcFS_{e,\cpsi_e}\bigl( \vx_{\efi,[M]}, \vx_{\efj,[M]} \bigr), 
     \label{sec:SST:eqn:10},
\end{align}
for all $ \vx_{\efi,[M]}, \vx_{\efj,[M]} \in \set{X}_e^M, $
where the last equality follows from the definition of $ \funcFS $ in Definition~\ref{sec:SST:def:3}. We define $ \ZSSTM $ to be the partition function of $\hgraphNavgalt$ for fixed $ \cvpsiavgalt $:
\begin{align}
  \ZSSTM\bigl( \avgalt{\graphN}, \cvpsiavgalt \bigr) \defeq 
  \sum_{\vxavgalt} g\bigl( \vxavgalt, \cvpsiavgalt \bigr), 
  \label{sec:SST:eqn:43}
\end{align}
where $ \sum_{\vxavgalt} $ represents $ \sum_{ \vxavgalt \in \setx{}^{2M} } $.

\begin{proposition}
  \label{sec:SST:prop:1}
  \label{SEC:SST:PROP:1}

  The partition function of $\hgraphNavgalt$ satisfies
  \begin{align*}
    Z\bigl( \hgraphNavgalt \bigr)
    &= \int \ZSSTM\bigl( \avgalt{\graphN}, \cvpsiavgalt \bigr)
    \dd{\muFSsimple\bigl( \cvpsiavgalt \bigr)}
    \nonumber\\
    &= 
    \left( \prod_e\bigl| \set{B}_{\set{X}_e^M} \bigr| \right)
      \cdot  
      \int \prod_f \Bigl( \ZSSTf\bigl( \cvpsi_{\setpff} \bigr) \Bigr)^{ \! M}
      \dd{\muFSsimple\bigl( \cvpsiavgalt \bigr)} 
      \\
    &= Z\bigl( \hgraphNavg \bigr),
  \end{align*}
  where for each function node $ f \in \setF $, we define
  \begin{align}
    \ZSSTf\bigl( \cvpsi_{\setpff} \bigr) 
    \defeq \sum_{\vx_{\setpf}}
      f(\vx_{\setpf}) 
      \cdot
      \prod_{e \in \setpf} 
      \cpsi_{e,f}(x_{e}), \label{sec:SST:eqn:42}
  \end{align}
  and $ \cvpsi_{\setpff} \defeq ( \cvpsi_{\ef} )_{e \in \setpf} $.
\end{proposition}

\begin{proof}
  See Appendix~\ref{apx:property of SST}.
\end{proof}

Consider conditioning on $\cvpsiavgalt$. The simplification $ \prod_f \Bigl( \ZSSTf\bigl( \cvpsi_{\setpff} \bigr) \Bigr)^{ \! M} $ obtained in Proposition~\ref{sec:SST:prop:1} is made possible by decomposing the NFG into $|\setF| \cdot M$ tree-structured components. This decomposition allows us to factorize and separate the partition function into individual components, each corresponding to a tree-structured factor graph consisting of a function node. 



\section{Combining the Results of this Section}
\label{sec:SST:combining:results:1}

Combining the main results of this section, \ie,
\begin{alignat}{2}
  \text{(Introduction of Chapter~\ref{chapt:SST})} \quad &&
  \ZBM(\graphN)
    &=
       \sqrt[M]{
         \Bigl\langle 
           Z\bigl( \hgraphN \bigr)
         \Bigr\rangle_{\hgraphN \in \hat{\set{N}}_{M}} }, 
  \label{sec:SST:eqn:expresssion of ZBM} \\
  \text{(Lemma~\ref{sec:SST:lem:1})} \quad &&
  Z\bigl( \hgraphNavg \bigr)
    &= \bigl( \ZBM(\graphN) \bigr)^{\! M}, 
  \label{sec:SST:eqn:23} \\
  \text{(Proposition~\ref{sec:SST:prop:1})} \quad &&
  Z\bigl( \hgraphNavgalt \bigr)
    &= Z\bigl( \hgraphNavg \bigr), 
  \label{sec:SST:eqn:24} \\
  \text{(Proposition~\ref{sec:SST:prop:1})} \quad &&
  Z\bigl( \hgraphNavgalt \bigr)
    &=\left( \prod_e\bigl| \set{B}_{\set{X}_e^M} \bigr| \right)
    \nonumber \\
    &&&\quad \cdot  
    \int \prod_f \Bigl( \ZSSTf\bigl( \cvpsi_{\setpff} \bigr) \Bigr)^{ \! M}
    \dd{\muFSsimple\bigl( \cvpsiavgalt \bigr)}
\end{alignat}
we can obtain the expression for $\ZBM(\graphN)$ as follows:
\begin{align}
  \ZBM(\graphN)
  &=  
  \Biggl( \prod_e \bigl| \set{B}_{\set{X}_e^M} \bigr| \Biggr)^{\!\!\! 1/M}
  \cdot \Biggl(
  \int \prod_f \Bigl( \ZSSTf\bigl( \cvpsi_{\setpff} \bigr) \Bigr)^{ \! M}
  \dd{\muFSsimple\bigl( \cvpsiavgalt \bigr)}
  \Biggr)^{\!\!\! 1/M}. \label{alternative expression of ZBM in snfg}
\end{align}

\section{SST for DE-NFGs}
\label{sec:SST:DENFG:1}
\index{SST!for DE-NFG}
So far, this section was about the SST for an S-NFGs. The extension to the SST
of a DE-NFGs is, by and large, straightforward. Comparing with
Fig.~\ref{sec:SST:fig:1} for S-NFGs, we note that the only major change is that for
every edge $e \in \setEfull$, the role of the alphabet $\set{X}_e$ is taken over by
the alphabet $\tset{X}_e = \set{X}_e \times \set{X}_e$.

With this, 
the value of
$\ZBM(\graphN)$ for a DE-NFG $\graphN$ can be written as (compare
with Theorem~\ref{thm: expression of Z for degree M Bethe partition function})
\begin{align}
    \bigl( \ZBM(\graphN) \bigr)^M
    = 
    \sum_{\vx_{[M]}}
    \Biggl( \prod_{m \in [M]} \prod_f f(\tvx_{\setpff,m}) \Biggr)
    \cdot \prod_{e} P_e\bigl( \tvx_{\efi,[M]}, \tvx_{\efj,[M]} \bigr),
  \label{eqn: Z for degree M average cover of DENFG}
\end{align}
where
\begin{align*}
  P_e\bigl( \tvx_{\efi,[M]}, \tvx_{\efj,[M]} \bigr)
  = \begin{cases}
    \frac{1}{|\set{T}_{e,\vt_e}|}
     &  
     \vt_e = \vt_e\bigl(\tvx_{\efi,[M]}\bigr) 
     = \vt_e\bigl( \tvx_{\efj,[M]} \bigr) \\
   0
     & \text{otherwise}
  \end{cases}.
\end{align*}
Following the similar idea in Section~\ref{sec:SST:combining:results:1}, we also rewrite $\ZBM(\graphN)$ as 
\begin{align}
    \bigl( \ZBM(\graphN) \bigr)^{\!M}
    &= \biggl( \prod_e |\set{B}_{\tset{X}_e^M}| \biggr)^{\!\!\! 1/M}
    \!\!\!\!\cdot\! \left( \int \prod_f 
      \Biggl( \sum_{\tvx_f}
      f(\tvx_f) \prod_{e \in \setpf} 
      \cpsi_{e,f}(\tx_{e,f})
      \Biggr)^{\!\!\!M}
      \!\!\!
      \dd{\muFSsimple(\cvpsiavgalt)}
    \right)^{\!\!\!\! 1/M}
    \!\!\!\!,
\end{align}
where $\cvpsiavgalt \defeq ( \cvpsi_e )_{e \in \setEfull}$, 
and where for every
$e \in \setEfull$, the norm-one vector $\cvpsi_e$ is distributed according to the
Fubini-Study measure in dimension $|\tset{X}_e|$. 

  \ifx\sectionheaderonnewpage\x
  \clearpage
  \fi

  \section{Proof of Proposition~\ref{SEC:SST:PROP:1}}
  \label{apx:property of SST}
  It holds that
\begin{align*}
    Z(\hgraphNavgalt)
    &= \int  
    \ZSSTM(\avgalt{\graphN},\cvpsiavgalt)
    \dd{\muFSsimple(\cvpsiavgalt)} \nonumber \\
    &\overset{(a)}{=}
    \int  
    \sum_{\vxavgalt} g\bigl( \vxavgalt, \cvpsiavgalt \bigr)
    \dd{\muFSsimple(\cvpsiavgalt)}
    \nonumber\\
    &\overset{(b)}{=}  
        \Biggl( \prod_{e}|\set{B}_{\set{X}_e^M}| \Biggr)
        \cdot
        \int
             \sum_{\vxavgalt}
                \prod_f
                \prod_{m \in [M]}
                f(\vx_{\setpf,m})
                \cdot 
                \prod_{e \in \setpf} 
                \cpsi_{e,f}(x_{e,m})
         \, \dd{\muFSsimple(\cvpsiavgalt)} \\
    &= 
        \Biggl( \prod_{e}|\set{B}_{\set{X}_e^M}| \Biggr)
        \cdot
        \int
            \prod_f
                \prod_{m \in [M]}
                    \Biggl(
                        \sum_{\vx_{\setpf,m}}
                        f(\vx_{\setpf,m})
                        \cdot
                            \prod_{e \in \setpf}
                            \cpsi_{e,f}(x_{e,m})
                     \Biggr)
         \, \dd{\muFSsimple(\cvpsiavgalt)} \\
    &=  
        \Biggl( \prod_{e}|\set{B}_{\set{X}_e^M}| \Biggr)
        \cdot
        \int
            \prod_f
                \Biggl(
                    \sum_{\xf}
                            f(\xf)
                            \prod_{e \in \setpf}
                                \cpsi_{e,f}(x_{e})
                \Biggr)^{\!\!\! M} \!\!
         \, \dd{\muFSsimple(\cvpsiavgalt)} \nonumber\\
    &\overset{(c)}{=}  
        \Biggl( \prod_{e}|\set{B}_{\set{X}_e^M}| \Biggr)
        \cdot
        \int
            \prod_f
                \bigl(
                    \ZSSTf( \cvpsi_{\setpff} )
                \bigr)^{\! M} 
         \, \dd{\muFSsimple(\cvpsiavgalt)},
\end{align*}
where step $(a)$ follows from the definition of $ \ZSSTM $ in~\eqref{sec:SST:eqn:43},
where step $(b)$ follows from the expression of $ g\bigl( \vxavgalt, \cvpsiavgalt \bigr) $ in~\eqref{sec:SST:eqn:26},
and where step $(c)$ follows from the definition of $ \ZSSTf $ in~\eqref{sec:SST:eqn:42}.
Also, we have
\begin{align*}
    \hspace{0.5cm}&\hspace{-0.5cm}\int  
    \sum_{\vxavgalt} g\bigl( \vxavgalt, \cvpsiavgalt \bigr)
    \dd{\muFSsimple(\cvpsiavgalt)} 
    \nonumber\\
    &\overset{(a)}{=} \int 
    \sum_{\vxavgalt} \Biggl( \prod_f \prod_{m \in [M]} f(\vx_{\setpff,m}) \Biggr)
    \cdot \prod_{e} \funcFS_{e,\cpsi_e}\bigl( \vx_{\efi,[M]}, \vx_{\efj,[M]} \bigr) 
    \dd{\muFSsimple(\cvpsiavgalt)} 
    \nonumber\\
    &\overset{(b)}{=}  \sum_{\vxavgalt} 
    \Biggl( \prod_f \prod_{m \in [M]} f(\vx_{\setpff,m}) \Biggr)
    \cdot \prod_{e} \int 
    \funcFS_{e,\cpsi_e}\bigl( \vx_{\efi,[M]}, \vx_{\efj,[M]} \bigr) 
    \dd{\muFSsimple(\cvpsi_e)} 
    \nonumber\\
    &\overset{(c)}{=} \sum_{\vxavgalt} 
    \Biggl( \prod_f \prod_{m \in [M]} f(\vx_{\setpff,m}) \Biggr)
    \cdot \prod_{e} P_e\bigl( \vx_{\efi,[M]}, \vx_{\efj,[M]} \bigr) \nonumber\\
    &\overset{(d)}{=} Z\bigl( \hgraphNavg \bigr),
\end{align*}
where step $(a)$ follows from the expression of $ g\bigl( \vxavgalt, \cvpsiavgalt \bigr) $ in~\eqref{sec:SST:eqn:10},
where step $(b)$ follows from the fact $ \dd{\muFSsimple(\cvpsiavgalt)} = \prod_{e} \dd{\muFSsimple(\cvpsi_{e})}  $ as defined in Definition~\ref{sec:SST:def:5},
where step $(c)$ follows from the property of $ P_{e} $ in Lemma~\ref{sec:SST:prop:2}, and where step $(d)$ follows from the expression of $ Z\bigl( \hgraphNavg \bigr) $ in Lemma~\ref{sec:SST:lem:6}.

  \ifx\sectionheaderonnewpage\x
  \clearpage
  \fi

  \chapter{Proof of Theorem~\ref{SEC:CHECKCON:THM:1} in Chapter~\ref{CHAPT: GCT DENFG}}
  \label{apx:check_graph_thm}
The following definitions and remark provide the groundwork for the subsequent derivations in this appendix.

\begin{definition}\label{sec:CheckCon:def:1}
  We make the following definitions.
  \begin{enumerate}

    \item Define $ \LCTtset{X} \defeq \prod\limits_{e} \LCTtset{X}_e$.

    \item For each $ e = (f_{i},f_{j}) \in \setEfull $ with $ i < j $, two collections of variables are defined:
    \begin{align*}
      \LCTvxs \defeq 
      ( \LCTt{x}_{e,f_{i}}  
      )_{ e = (f_{i},f_{j}) \in \setEfull:\, i<j }
      \in \LCTtset{X},
      \qquad 
      \LCTvxl \defeq ( \LCTt{x}_{e,f_{j}} 
      )_{ e = (f_{i},f_{j}) \in \setEfull:\, i<j }
      \in \LCTtset{X}.
    \end{align*}
    Here, $\LCTvxs$ represents the collection of variables corresponding to the function nodes with ``small'' indices (\ie, $f_i$ for each $ e = (f_{i},f_{j}) \in \setEfull$), and $\LCTvxl$ represents the collection of variables corresponding to the function nodes with the ``large'' indices (\ie, $f_j$ for each $ e = (f_{i},f_{j}) \in \setEfull$).
    Recall that in Definition~\ref{sec:DENFG:def:4}, the order of the function nodes is fixed by $ \set{F} = \{ f_{1},\ldots,f_{|\set{F}|} \}. $.

    \item The functions $ \gzero $ and $ \gone $ are defined as follows:
    \begin{align}
      \gzero
      (\LCTvxs, \LCTvxl) &\defeq
      \Biggl( \prod_{f}\LCT{f}( \LCTtv{x}_{\setpf,f}) \Biggr) \cdot
      [\LCTvxs\!=\!\LCTvxl \!=\! \tv{0}],  \qquad 
      \LCTvxs, \LCTvxl \in \LCTtset{X},
      \label{sec:CheckCon:eqn:60}\\
      \gone(\LCTvxs, \LCTvxl) &\defeq
      \Biggl( \prod_{f}\LCT{f}( \LCTtv{x}_{\setpf,f}) \Biggr) \cdot
      [ \wh(\LCTvxs) + \wh( \LCTvxl) \!\geq\! 1 ], \qquad 
      \LCTvxs, \LCTvxl \in \LCTtset{X},
      \label{sec:CheckCon:eqn:59}
    \end{align}
    where
    \begin{align*}
      \wh( \LCTvxs )
      =
      \sum_{ e }
      \wh( \LCTt{x}_{e,f_{i}}  ), \qquad 
      \wh( \LCTvxl )
      =
      \sum_{ e }
      \wh( \LCTt{x}_{e,f_{j}}  ),
    \end{align*}
    and where $ \wh(\cdot) $ denotes the Hamming weight 
    (the number of nonzero elements) of a vector:
    \begin{align*}
      \wh( \LCTt{x}_{e} )
      &\defeq \bigl[ \LCTt{x}_{e} \!\neq\! \tilde{0} \bigr], \qquad 
      \LCTt{x}_{e} \in \LCTtset{X}_{e}.
    \end{align*}

    \item A product over an empty index set evaluates to $1$. For example, if 
    $ \set{I} = \emptyset $, then we have 
    $ \prod_{ m \in  \set{I} } (\ldots) = 1 $.

  \end{enumerate}
  \edefinition
\end{definition}

For simplicity, we use $ ( \cdot )_{m} $ and $ \sum_{ \LCTvxs, \LCTvxl } $, instead of $ (\cdot)_{m \in [M]} $ and $ \sum_{ \LCTvxs, \LCTvxl \in \LCTtset{X} } $, if there is no ambiguity.

\begin{remark}
    For each $ e = (f_{i},f_{j}) \in \setEfull $, 
    the function
    $P_{e}\bigl( \LCTtv{x}_{\efi,[M]}, \LCTtv{x}_{\efj,[M]} \bigr)$ is obtained by
    replacing $( \vx_{\efi,[M]}, \vx_{\efj,[M]} ) \in \setxe^{M} \times \setxe^{M} $ in $P_{e}( \vx_{\efi,[M]}, \vx_{\efj,[M]} )$ (as defined in Definition~\ref{sec:SST:def:4})
    with $ ( \LCTtv{x}_{\efi,[M]}, \LCTtv{x}_{\efj,[M]} ) \in \LCTtset{X}_e^{M} \times \LCTtset{X}_e^{M} $, where $ \LCTtv{x}_{\efi,[M]} \defeq ( \LCTt{x}_{\efi,m} )_{m}$, and $\LCTtv{x}_{\efj,[M]} \defeq ( \LCTt{x}_{\efj,m} )_{m}$.
    The properties of the function $P_{e}$ proven earlier hold after applying this replacement. Furthermore, proving these properties in the new setup is a straightforward extension of the previous proof.
    \eremark
\end{remark}

\begin{lemma}\label{sec:CheckCon:prop:properties of N(0)}
  The functions $ \gzero $ and $ \gone $ possess the following properties.
  \begin{enumerate}

    \item It holds that
    \begin{align}
        \gzero(\tv{0}, \tv{0}) = 
        \sum_{ \LCTvxs, \LCTvxl }
        \gzero( \LCTvxs, \LCTvxl ) 
        &=\ZBSPA^{*}(\graphN). \label{sec:CheckCon:eqn:73}
    \end{align}

    \item It holds that
      \begin{align*}
        \sum_{ \LCTvxs, \LCTvxl }
        \gone( \LCTvxs, \LCTvxl )
        &= \prod_{f}
         \left( 
            \sum_{\LCTtv{x}_{\setpf}}
            \LCT{f}( \LCTtv{x}_{\setpf} ) 
        \right)
        -
        \ZBSPA^{*}(\graphN).
      \end{align*}
    Also, we have $ \gone( \LCTvxs, \LCTvxl ) = 0 $ if $ \wh(\LCTvxs) + \wh( \LCTvxl) = 1$.
    
    \item The function $ \bigl( \ZBM(\graphN) \bigr)^{\! M} $ can be decomposed as follows:
    \begin{align}
      \hspace{0.5 cm}&\hspace{-0.5 cm}
      \bigl( \ZBM(\graphN) \bigr)^{\! M}
      \nonumber\\
      &= 
      \sum_{k = 0}^{M}
      \binom{M}{k}
        \cdot 
        \sum_{ \LCTvxsM, \LCTvxlM \in \LCTtset{X}^{M} }
        \left( 
          \prod_{ m = 1 }^{k}
          \gzero( \LCTvxsm, \LCTvxlm ) 
        \right)
        \cdot 
        \left( 
          \prod_{ m = k+1 }^{M}
          \gone( \LCTvxsm, \LCTvxlm ) 
        \right)
      \nonumber\\
      &\hspace{5.2 cm}
      \cdot \prod_{e} 
      P_{e}\bigl( \LCTtv{x}_{\efi,[M]}, \LCTtv{x}_{\efj,[M]} \bigr),
      \label{sec:CheckCon:eqn:64}
    \end{align}
    where
    \begin{alignat*}{4}
        \LCTtv{x}_{\efi,[M]}
        &\defeq ( \LCTt{x}_{\efi,m} )_{m} \in 
        \LCTtset{X}_{e}^{M},  \qquad
        &\LCTtv{x}_{\efj,[M]}
        &\defeq ( \LCTt{x}_{\efj,m} )_{m} \in 
        \LCTtset{X}_{e}^{M}, 
        \nonumber\\
        \LCTvxsM 
        &\defeq (\LCTvxsm)_{m} \in \LCTtset{X}^{M},
        \qquad
        &\LCTvxlM 
        &\defeq (\LCTvxlm)_{m} \in \LCTtset{X}^{M},
        \nonumber\\
        \LCTvxsm 
        &\defeq
        ( \LCTt{x}_{\efi,m} )_{ e = (f_{i},f_{j}) \in \setEfull:\, i <j }
        \in \LCTtset{X}, 
        \nonumber\\
        \LCTvxlm 
        &\defeq
        ( \LCTt{x}_{\efj,m} )_{ e = (f_{i},f_{j}) \in \setEfull:\, i <j }
        \in \LCTtset{X}.
    \end{alignat*}
    and the arguments $\LCTt{x}_{\efi,m}$ and $\LCTt{x}_{\efj,m}$ take values  
    in $\LCTtset{X}_e$ for all $e = (f_{i}, f_{j}) \in \setEfull$ and $ m \in [M] $.
    By the definition of a product over an empty set in Definition~\ref{sec:CheckCon:def:1}, 
    we have $ \prod_{m=1}^{0}(\ldots) = \prod_{m=M+1}^{M}(\ldots) = 1 $.
    For simplicity, if there is no ambiguity, we use $ \sum_{\LCTvxsM} $ and $ \sum_{\LCTvxlM} $ instead of 
    $ \sum_{\LCTvxsM \in \LCTtset{X}^{M} } $ and 
    $ \sum_{\LCTvxlM \in \LCTtset{X}^{M} } $, respectively.

  \end{enumerate}
\end{lemma}

\begin{proof}
  The proof of each property is listed as follows.
  \begin{enumerate}

    \item This follows from the definition of $ \gzero $ in~\eqref{sec:CheckCon:eqn:60}  and the properties of the LCT proven in Proposition~\ref{prop:DENFG:LCT:1}.

    \item This follows from the definition of $ \gone $ in~\eqref{sec:CheckCon:eqn:59} and the properties of the LCT proven in Proposition~\ref{prop:DENFG:LCT:1}.

    \item It holds that
    \begin{align*}
      \hspace{0.5 cm}&\hspace{-0.5 cm}\bigl( \ZBM(\graphN) \bigr)^{\! M} 
      \nonumber\\
      &\overset{(a)}{=}
      \sum_{ \LCTvxsM, \LCTvxlM }
      \Biggl( 
        \prod_{m \in [M]} \prod_{f} 
        \LCT{f}( \LCTtv{x}_{\setpf,f,m}) 
      \Biggr) 
      \cdot \prod_{e = (f_{i},f_{j})} 
      P_{e}(\LCTtv{x}_{\efi,[M]}, \LCTtv{x}_{\efj,[M]}) \nonumber\\
      &\overset{(b)}{=} \sum_{ \LCTvxsM, \LCTvxlM }
      \Biggl( 
        \prod_{m \in [M]} 
        \Bigl( \gzero( \LCTvxsm, \LCTvxlm ) 
        + \gone( \LCTvxsm, \LCTvxlm ) 
        \Bigr)
      \Biggr) 
      \nonumber\\
      &\hspace{2.4 cm} \cdot \prod_{e = (f_{i},f_{j})} P_{e}\bigl( \LCTtv{x}_{\efi,[M]}, \LCTtv{x}_{\efj,[M]} \bigr)
      \nonumber\\
      &\overset{(c)}{=}
      \sum_{k = 0}^{M}
      \binom{M}{k}
      \cdot \sum_{ \LCTvxsM, \LCTvxlM }
        \left( 
          \prod_{m = 1}^{k}
          \gzero( \LCTvxsm, \LCTvxlm ) 
        \right)
        \cdot 
        \left( 
          \prod_{m = k+1 }^{M}
          \gone( \LCTvxsm, \LCTvxlm ) 
        \right) 
      \nonumber\\
      &\hspace{4.5 cm}
      \cdot \prod_{e = (f_{i},f_{j})} 
      P_{e}\bigl( \LCTtv{x}_{\efi,[M]}, \LCTtv{x}_{\efj,[M]} \bigr),
    \end{align*}
    \begin{itemize}
      \item where step $(a)$ follows from 
      replacing $ \tx_{e} $, $ \tset{X}_e $, and
      $ f(\tvx_{\setpf}) $, $ f(\tvx_{\setpff,m}) $ 
      with $ \LCTt{x}_{e} $, $ \LCTtset{X}_{e} $, $ \LCT{f}( \LCTtv{x}_{\setpf,f}) $,
      and $ \LCT{f}( \LCTtv{x}_{\setpf,f,m}) $, respectively, in~\eqref{eqn: Z for degree M average cover of DENFG},
      (Eqn.~\eqref{eqn: Z for degree M average cover of DENFG} still holds after applying these substitutions.)


      \item where step $(b)$ follows from the definitions of $ \gzero $ and $ \gone $ in Definition~\ref{sec:CheckCon:def:1},

      \item where step $(c)$ follows from the following derivations:
      for any $ \sigma \in \set{S}_{[M]} $,  where $ \set{S}_{[M]} $ is the set of all permutations of elements 
        in $[M] = \{1,\ldots,M\}$, the following equalities hold:
        \begin{align*}
          &\vt_e\bigl(\LCTtv{x}_{\efi,[M]}\bigr) 
          = \vt_e\bigl(\LCTtv{x}_{\efi,\sigma([M])}\bigr), \qquad  
          \vt_e\bigl(\LCTtv{x}_{\efj,[M]}\bigr) 
          = \vt_e\bigl(\LCTtv{x}_{\efj,\sigma([M])}\bigr), 
        \end{align*}
        for all 
        $ \LCTtv{x}_{\efi,[M]},\LCTtv{x}_{\efj,[M]} \in \LCTtset{X}_{e}^{M}$ 
        and 
        $ e = (f_{i}, f_{j}) \in \setEfull $,
        where 
        \begin{align*}
          \LCTtv{x}_{\efi,\sigma([M])} 
          \defeq ( \LCTt{x}_{\efi,\sigma(m)} )_{m} \in \LCTtset{X}_{e}^{M}, \qquad
          \LCTtv{x}_{\efj,\sigma([M])} \defeq (  \LCTt{x}_{\efj,\sigma(m)} )_{m}
          \in \LCTtset{X}_{e}^{M};
        \end{align*}
        then, by the property of $ P_{e} $ Lemma~\ref{sec:SST:lem:4}, we obtain
        \begin{align*}
          P_{e}\bigl( \LCTtv{x}_{\efi,[M]}, \LCTtv{x}_{\efj,[M]} \bigr)
          = P_{e}\bigl( \LCTtv{x}_{\efi,\sigma([M])}, 
          \LCTtv{x}_{\efj,\sigma([M])} \bigr);
        \end{align*}
      finally, the above equality for $ P_{e} $ implies
        \begin{align*}
          &\hspace{-0.25cm}
          \left( 
            \prod_{m = 1}^{k}
            \gzero( \LCTvxsm, \LCTvxlm ) 
          \right)
          \cdot 
          \left( 
            \prod_{m = k+1}^{M}
            \gone( \LCTvxsm, \LCTvxlm ) 
          \right) 
          \cdot \prod_{e} P_{e}\bigl( \LCTtv{x}_{\efi,[M]}, \LCTtv{x}_{\efj,[M]} \bigr)
          \nonumber\\
          &= 
          \left( 
            \prod_{m = 1}^{k}
            \gzero\bigl( \LCTvxssigm, 
            \LCTvxlsigm \bigr) 
          \right)
          \cdot 
          \left( 
            \prod_{m = k+1}^{M}
            \gone\bigl( \LCTvxssigm, 
            \LCTvxlsigm \bigr) 
          \right)
          \nonumber\\
          &\quad\ \cdot 
          \prod_{e} P_{e}\bigl( \LCTtv{x}_{\efi,[M]}, \LCTtv{x}_{\efj,[M]} \bigr).
        \end{align*}
    \end{itemize}
  \end{enumerate}
\end{proof}

 \begin{lemma}\label{sec:CheckCon:lem:2}
    The following properties for binomial coefficients hold.
    \begin{enumerate}
      \item Consider the case where
      \begin{align*}
        k_{i} \in \sZpp, \quad i \in [n],\quad
        M \in \sZpp, \quad
        \sum\limits_{i \in [n]} k_{i} \leq M.
      \end{align*}
      The following inequality holds
      \begin{align}
        \prod\limits_{i \in [n]}\binom{M}{ k_{i} }
        &\geq 
        \binom{M}{ \sum\limits_{i \in [n]} k_{i} }, 
        \label{sec:CheckCon:eqn:81}\\
        \frac{M!}{ 
          \Biggl( 
            \biggl( M-\sum\limits_{i \in [n]} k_{i} \biggr)! 
          \Biggr) 
          \cdot \prod\limits_{i \in [n]}(k_{i}!) 
        }
        &\geq \binom{M}{ \sum\limits_{i \in [n]} k_{i} }. \label{sec:CheckCon:eqn:82}
      \end{align}
      The inequalities~\eqref{sec:CheckCon:eqn:81} and~\eqref{sec:CheckCon:eqn:82} hold with equalities if and only if there exists an integer $ i' \in [n] $ such that $ k_{i} = k_{i'} $ for $ i = i' $ and $ k_{i} = 0 $  for $ i \in [n] \in \setminus \{i'\} $.

      \item The binomial coefficient $ \binom{M}{k} $ increases w.r.t. $ k \in \sZp $ when $ k \leq M/2 $ and decreases w.r.t. $ k \in \sZp $ when $ k \geq M/2 $.
    \end{enumerate}
    
 \end{lemma}
 \begin{proof}
  The inequality~\eqref{sec:CheckCon:eqn:81} follows from a straightforward generalization of the following standard combinatorial results:
  \begin{align*}
      &\hspace{-0.25cm} \binom{M}{ k_{1} }
      \cdot 
      \binom{M}{ k_{2} }
      \cdot \binom{M}{k_{1} + k_{2}}^{-1}
      \nonumber\\
      &= 
      \underbrace{
        \frac{ (k_{1} + k_{2})! 
        }{ ( k_{1} ! ) \cdot ( k_{2} ! ) }  
      }_{ \geq 1 }
      \cdot 
      \underbrace{
        \frac{ \bigl( M \cdot ( M - 1 ) \cdots
        (M-k_{1}+1) \bigr)
        \cdot 
        \bigl( M \cdot ( M - 1 ) \cdots
        (M-k_{2}+1) \bigr)
        }{ M \cdot ( M - 1 ) \cdots
        (M-k_{1} - k_{2}+1) } 
      }_{ \geq 1 }
      \nonumber\\
      &\geq 1,
  \end{align*}
  where the inequality holds with equality if and only if at least one of the entries in $ (k_{1}, k_{2}) $ is zero-valued.
  
  There is a combinatorial proof for $ \binom{M}{ k_{1} } \cdot  \binom{M}{ k_{2} }
  \geq  \binom{M}{k_{1} + k_{2}} $. 
  The number $ \binom{M}{ k_{1} } \cdot  \binom{M}{ k_{2} } $ equals the number of ways to finish the following procedure:
  first, choose $ k_{1} $ different elements from $ M $ different elements; second, put these $ k_{1} $ elements back to the $M$ elements; last, choose $ k_{2} $ different elements from these $ M $ different elements. Therefore, compared with the ways to choose $ k_{1} + k_{2} $ different elements from $ M $ different elements, we have more ways to finish the former procedure.

  The proof for the inequality~\eqref{sec:CheckCon:eqn:82} and the monotonicity of the binomial coefficients as stated in the second property in the lemma statement is straightforward and thus it is omitted here.
 \end{proof}

\begin{lemma} \label{sec:CheckCon:lem: lower bound of Pe}
    Fix an  integer $ M \in \sZpp $, an integer $ k $ such that $ 0 \leq k \leq M-1 $, and two collections of variables $ \LCTtv{x}_{\efi,[M]}, \LCTtv{x}_{\efj,[M]} \in \prod_{e} \LCTtset{X}_{e}^{M} $. If the following condition holds
    \begin{align}\label{sec:CheckCon:eqn:nonzero condition for G fun}
        \Biggl( 
          \prod_{m =1}^{k}
          \gzero( \LCTvxsm, \LCTvxlm ) 
        \Biggr)
        \cdot 
        \Biggl( 
          \prod_{m =k+1}^{M}
          \gone( \LCTvxsm, \LCTvxlm ) 
        \Biggr)
        \cdot \prod_{e} 
        P_{e}\bigl( \LCTtv{x}_{\efi,[M]}, \LCTtv{x}_{\efj,[M]} \bigr) \neq 0,
    \end{align}
    then we have
    \begin{align}
        \prod_{e} 
        P_{e}\bigl( \LCTtv{x}_{\efi,[M]}, \LCTtv{x}_{\efj,[M]} \bigr) 
        \leq \binom{M}{k}^{\!\! -1},
        \label{sec:CheckCon:eqn:67}
    \end{align}
   where $ t_{ e, \LCTt{x}_{e} }\bigl( 
   \LCTtv{x}_{e,f_{i},[M]}
  \bigr) $ and the type 
  $ 
    \vt_e =\bigl( t_{ e, \LCTt{x}_{e} }( 
     \LCTtv{x}_{e,f_{i},[M]}
    ) \bigr)_{ \LCTt{x}_{e} \in \LCTtset{X}_{e} } 
  $ 
  are obtained by replacing $ \xe $ and 
  $ \setx{e} $ with $ \LCTt{x}_{e} $ and $ \LCTtset{X}_{e} $, respectively, in Definition~\ref{sec:SST:def:6}.

  Note that the inequality~\eqref{sec:CheckCon:eqn:67} hold with equality under the case where $ k =0 $ and for each $ e \in \setEfull $, there exists 
    $ \LCTt{x}_{e} \neq \tzero $ such that 
  \begin{align*}
    t_{ e, \LCTt{x}_{e} }\bigl( 
     \LCTtv{x}_{e,f_{i},[M]}
    \bigr) = t_{ e, \LCTt{x}_{e} }\bigl( 
     \LCTtv{x}_{e,f_{j},[M]}
    \bigr) = 1.
  \end{align*}
\end{lemma}
\begin{proof}

    If~\eqref{sec:CheckCon:eqn:nonzero condition for G fun} is satisfied, we know that each term on the right-hand side of the equality in~\eqref{sec:CheckCon:eqn:nonzero condition for G fun} is non-zero, then
    the following three properties hold.
    \begin{enumerate}
        \item By the definitions of $ \gzero $, the condition 
        $ \gzero( \LCTvxsm, \LCTvxlm ) \neq 0 $ for all $ 1 \leq m \leq k $ implies
        \begin{align*}
            &\LCTvxsm = \LCTvxlm = \tv{0}, \qquad 1 \leq m \leq k.
        \end{align*}
        Then we have
        \begin{align}
            \wh\bigl(\LCTtv{x}_{e,f_{i},[M]} \bigr)=
            M\cdot \sum_{ \LCTt{x}_{e}:\, \LCTt{x}_{e} \neq 0 } t_{ e, \LCTt{x}_{e} }
            \bigl( 
             \LCTtv{x}_{e,f_{i},[M]}
            \bigr) &\leq M-k, \quad
            e = (f_{i},f_{j}) \in \setEfull, \, i < j.
            \label{sec:CheckCon:eqn:83} 
        \end{align}

        \item According to the property $ g_{1} $ proven in Lemma~\ref{sec:CheckCon:prop:properties of N(0)}, which is obtained based on the property of the LCT, we know that the condition 
        $ \gone( \LCTvxsm, \LCTvxlm ) \neq 0 $ for all $ k+1 \leq m \leq M $
        implies that 
        \begin{align*}
            \wh(\LCTvxsm) + \wh(\LCTvxlm) \geq 2, \qquad k+1 \leq m \leq M.
        \end{align*}
        Then we have
        \begin{align}
          \wh(\LCTvxsM) + \wh(\LCTvxlM)
          = 
          \sum_{m \in [M]}
          \bigl( \wh(\LCTvxsm) + \wh(\LCTvxlm) \bigr) 
          \geq 2( M - k).
          \label{sec:CheckCon:eqn:cond wh greater than 2}
        \end{align}

        \item Following a similar idea as in Lemma~\ref{sec:SST:lem:4}, we know that the condition $ 
        P_{e}\bigl( \LCTtv{x}_{\efi,[M]}, \LCTtv{x}_{\efj,[M]} \bigr) \neq 0 $ 
        for all $ e \in \setEfull $ implies the following equalities:
        \begin{align}
          \prod_{e} 
          P_{e}\bigl( \LCTtv{x}_{\efi,[M]}, \LCTtv{x}_{\efj,[M]} \bigr)
          &= \prod_{e}
          \frac{ \prod\limits_{\LCTt{x}_{e}}
          \bigl( (M \cdot t_{e,\LCTt{x}_{e}})! \bigr) }{ M! },
          \label{eqn: expression of Pe}\\
          \vt_{e,\LCTt{x}_{e}} 
          &=
          \vt_e(\LCTtv{x}_{\efi,[M]}) 
          = \vt_e(\LCTtv{x}_{\efj,[M]}), 
          \qquad e = (f_{i}, f_{j}) \in \setEfull.
          \nonumber
        \end{align}
        Then we have
        \begin{align}
            \wh(\LCTvxsM) = \wh(\LCTvxlM)=
            M\cdot \sum_{e}
            \ \sum_{ \LCTt{x}_{e}:\, \LCTt{x}_{e} \neq 0 } 
            t_{ e, \LCTt{x}_{e} }
            &\geq M-k,
            \label{existence of ke}
        \end{align}
        where the inequality follows from~\eqref{sec:CheckCon:eqn:cond wh greater than 2}.
        
    \end{enumerate}
    Based on these properties, we discuss two cases in the following.
    \begin{enumerate}
        \item Consider the case where there exists an edge 
        $ e_{1} \in \setEfull $ such that 
        \begin{align}
            M \cdot 
            \sum_{ \LCTt{x}_{e_{1}}:\, \LCTt{x}_{e_{1}} \neq 0 }
            t_{e_{1},\LCTt{x}_{e_{1}}}
            \geq \frac{M}{2}. \label{sec:CheckCon:eqn:97}
        \end{align}
        In this case, we have
        \begin{align}
        \frac{ M! }{ 
            \prod\limits_{\LCTt{x}_{e_{1}}} \Bigl( 
                \bigl( M 
                    \cdot t_{e_{1},\LCTt{x}_{e_{1}}}
                \bigr)!
            \Bigr) 
        }\overset{(a)}{\geq}
        \binom{ M 
        }{ 
          M \cdot  \sum\limits_{ \LCTt{x}_{e_{1}}:\, \LCTt{x}_{e_{1}} \neq 0 }
          t_{e_{1},\LCTt{x}_{e_{1}}} 
        }
        \overset{(b)}{\geq}
         \binom{ M }{ M-k }=
        \binom{ M }{ k }, 
        \label{eqn: bound of type for e prime}
        \end{align}
        where step $(a)$ follows from~\eqref{sec:CheckCon:eqn:82} in Lemma~\ref{sec:CheckCon:lem:2},
        and 
        where step $(b)$ follows from the inequalities 
        in~\eqref{sec:CheckCon:eqn:97} and~\eqref{sec:CheckCon:eqn:83}, \ie,
        \begin{align*}
          M-k
          \geq
          M \cdot 
          \sum_{ \LCTt{x}_{e_{1}}:\, \LCTt{x}_{e_{1}} \neq 0 }
          t_{e_{1},\LCTt{x}_{e_{1}}} 
          \geq 
          \frac{M}{2}, 
        \end{align*} 
        and Lemma~\ref{sec:CheckCon:lem:2}, \ie, 
        $ \binom{ M }{ M - k -c } \geq \binom{ M }{ M - k } $ 
        for any integer $ c \in \sZp $ such that $  M - k - c \geq M/2 $.

        In this case, we obtain an upper bound of the expression in~\eqref{eqn: expression of Pe} as follows:
        \begin{align*}
          \prod_{e}
          \frac{ \prod\limits_{\LCTt{x}_{e}}
          \bigl( (M \cdot t_{e_{1},\LCTt{x}_{e}})! \bigr) }{ M! }
          =
          \underbrace{
            \frac{ \prod\limits_{\LCTt{x}_{e_{1}}}
            \bigl( (M \cdot t_{e,\LCTt{x}_{e_{1}}})! \bigr) }{ M! }
          }_{ \overset{(a)}{\leq}  \binom{M}{k} }
          \prod_{e \in \setEfull \setminus \{e_{1}\}}
          \underbrace{ \frac{ \prod\limits_{\LCTt{x}_{e}}
          \bigl( (M \cdot t_{e,\LCTt{x}_{e}})! \bigr) }{ M! }
          }_{ \overset{(b)}{\leq} 1 }
          \leq \binom{M}{k}.
        \end{align*}
        where step $(a)$ follows from~\eqref{eqn: bound of type for e prime},
        and where step $(b)$ follows from
        \begin{align*}
          \sum_{ \LCTt{x}_{e} } 
          M \cdot t_{e,\LCTt{x}_{e}} &= M, \nonumber\\
          M \cdot t_{e,\LCTt{x}_{e}} &\in \sZp, \qquad 
          \LCTt{x}_{e} \in \LCTtset{X}_{e},\, 
          e \in \setEfull.
        \end{align*}
        Therefore, the inequality~\eqref{sec:CheckCon:eqn:67} holds in this case.

        \item Consider the case where there does not exist an edge $ e' \in \setEfull $ satisfying the inequality in~\eqref{sec:CheckCon:eqn:97}, \ie,
        \begin{align}
          M \cdot 
          \sum_{ \LCTt{x}_{e}:\, \LCTt{x}_{e} \neq 0 }
          t_{ e, \LCTt{x}_{e} }
          < \frac{M}{2}, \qquad e \in \setEfull.
          \label{eqn: upper bound of the types for each edge}
        \end{align}
        In this case, we obtain an upper bound of the expression in~\eqref{eqn: expression of Pe} as follows:
        \begin{align*}
          \begin{aligned}
            \prod_{ e \in \setEfull }
            \frac{ M! }{ 
                \prod_{\LCTt{x}_{e}} \biggl( 
                    \Bigl( M 
                        \cdot t_{ e, \LCTt{x}_{e} }
                    \Bigr)!
                \biggr) 
            }& 
            \overset{(a)}{\geq} 
            \prod_{ e \in \setEfull } \binom{ M }{ 
            M \cdot  \sum\limits_{ \LCTt{x}_{e}:\, \LCTt{x}_{e} \neq 0 }
            t_{ e, \LCTt{x}_{e} } }  
            \\
            &\overset{(b)}{\geq} 
            \prod_{ e \in \setEfull }
            \binom{ M }{ 
                M \cdot 
                \sum\limits_{ \LCTt{x}_{e}:\, \LCTt{x}_{e} \neq 0 }
                t_{ e, \LCTt{x}_{e} }
                - k_{e} 
            }\\
            &\overset{(c)}{\geq}
            \binom{ M }{ 
                \sum\limits_{e}
                \sum\limits_{ \LCTt{x}_{e}:\, \LCTt{x}_{e} \neq 0 }
                M \cdot t_{ e, \LCTt{x}_{e} }
                - \sum\limits_{e}k_{e} 
            }\\
            &\overset{(d)}{=} \binom{ M }{ k }, 
          \end{aligned}
        \end{align*}
        \begin{itemize}
          \item where step $(a)$ follows from~\eqref{sec:CheckCon:eqn:82} in Lemma~\ref{sec:CheckCon:lem:2}, 

          \item where step $(b)$ follows from the properties and the definition of $ k_{e} $ given as follows:
          \begin{itemize}

              \item the inequality~\eqref{eqn: upper bound of the types for each edge} holds;

              \item the coefficient $ \binom{ M }{ c } $ is an increasing function w.r.t. $ c \in \sZ $ in the range $ 0 \leq c \leq  M/2 $;

              \item based on~\eqref{existence of ke}, we can choose an arbitrary collection of integers 
              $ (k_e)_{e \in \setEfull} \in \sZ^{|\setEfull|} $ such that:
              \begin{align}
                & 0 \leq k_{e} \leq 
                \sum_{ \LCTt{x}_{e}:\, \LCTt{x}_{e} \neq 0 }
                M \cdot t_{ e, \LCTt{x}_{e} }, \qquad 
                  e \in \setEfull,
                 \nonumber \\
                &\sum_{e}
                  \sum_{ \LCTt{x}_{e}:\, \LCTt{x}_{e} \neq 0 } 
                    M \cdot t_{ e, \LCTt{x}_{e} }
                    - \sum_{e} k_{e} = M-k,
                  \label{sec:CheckCon:eqn:91}
              \end{align}
              (these conditions ensure that $k_e$ is a valid choice within the range of possible values)

          \end{itemize}

          \item  where step $(c)$ follows the inequality~\eqref{sec:CheckCon:eqn:81} in Lemma~\ref{sec:CheckCon:lem:2},

          \item where step $(d)$ follows from~\eqref{sec:CheckCon:eqn:91}.
        \end{itemize}
        Therefore, the inequality~\eqref{sec:CheckCon:eqn:67} also holds in this case.
    \end{enumerate}

\end{proof}

By the strict inequality~\eqref{sec:CheckCon:eqn:70} in the theorem statement, 
we obtain
\begin{align*}
  \ZBSPA^{*}(\graphN) > 0.
\end{align*}
Then we define $ \alpha_{\graphN} $ as follows
\begin{align}
    \alpha_{\graphN}
    &\defeq 
    \frac{ 
      \sum\limits_{ \LCTvxs, \LCTvxl }
      \bigl|\gone( \LCTvxs, \LCTvxl )\bigr| 
    }{ 
      \ZBSPA^{*}(\graphN) }.
    \label{sec:CheckCon:eqn:71}
\end{align}

\begin{lemma}\label{sec:CheckCon:lem:liminf Z N0 greater than ZSPA} 
    If the strict inequality in~\eqref{sec:CheckCon:eqn:70} holds, then we have
    \begin{align*}
        \liminf_{M \to \infty}
        \ZBM(\graphN)
        \geq \ZBSPA^{*}(\graphN).
    \end{align*}
\end{lemma}
\begin{proof}
  It holds that
  \begin{align}
    0 \leq 
      \alpha_{\graphN} 
      =\frac{ \prod_{f}
      \Biggl( 
          \sum\limits_{\LCTtv{x}_{\setpf}}
          \bigl| \LCT{f}( \LCTtv{x}_{\setpf} ) \bigr|
      \Biggr) - \ZBSPA^{*}(\graphN) 
    }{ \ZBSPA^{*}(\graphN) }
    < \frac{1}{2},
    \label{sec:CheckCon:property of alpha}
  \end{align}
  where the equality follows from the property of $ g_{1} $ as stated in Lemma~\ref{sec:CheckCon:prop:properties of N(0)},
  and
  the second strict inequality follows from the assumption stated in~\eqref{sec:CheckCon:eqn:70}. 
  Then we have
    \begin{align}
        \bigl( \ZBM(\graphN) \bigr)^{\! M} 
        &\overset{(a)}{=} \Biggl| \sum_{k=0}^{M}
          \binom{M}{k}
          \cdot 
          \!\!\!\!
          \sum_{ \LCTvxsM, \LCTvxlM }
          \!\!
            \Biggl( 
              \prod_{m =1}^{k}
              \gzero( \LCTvxsm, \LCTvxlm ) 
            \Biggr)
        \cdot 
        \Biggl( 
          \prod_{m =k+1}^{M}
          \gone( \LCTvxsm, \LCTvxlm ) 
        \Biggr)
        \nonumber\\
        &\hspace{4 cm}
        \cdot 
        \prod_{e} 
        P_{e}\bigl( \LCTtv{x}_{\efi,[M]}, \LCTtv{x}_{\efj,[M]} \bigr) \Biggr|
        \nonumber\\
        &\overset{(b)}{\geq}
        \underbrace{
          \sum_{ \LCTvxsM, \LCTvxlM }
          \Biggl( 
              \prod_{m \in [M]}
              \gzero( \LCTvxsm, \LCTvxlm ) 
          \Biggr)
          \cdot 
          \prod_{e} 
          P_{e}\bigl( \LCTtv{x}_{\efi,[M]}, \LCTtv{x}_{\efj,[M]} \bigr)
        }_{ \overset{(c)}{=}  \bigl( \ZBSPA^{*}(\graphN) \bigr)^{\! M} }
        \nonumber\\
        &\quad -
        \sum_{k =0}^{M-1}
        \binom{M}{k}
        \sum_{ \LCTvxsM, \LCTvxlM }
        \prod_{m =1}^{k}
        \gzero( \LCTvxsm, \LCTvxlm ) 
        \cdot 
        \Biggl|
          \prod_{m =k+1}^{M}
          \gone( \LCTvxsm, \LCTvxlm ) 
        \Biggr|
        \nonumber\\
        &\quad \cdot 
        \prod_{e} 
        P_{e}\bigl( \LCTtv{x}_{\efi,[M]}, \LCTtv{x}_{\efj,[M]} \bigr)
        \nonumber\\
        &\overset{(f)}{>}
        \bigl( \ZBSPA^{*}(\graphN) \bigr)^{M}
          \cdot 
        \Biggl( 1 - 
          \sum_{k =0}^{M-1} \alpha_{\graphN}^{M-k}
        \Biggr)
        \nonumber\\
        &\overset{(i)}{\geq}
        \bigl( \ZBSPA^{*}(\graphN) \bigr)^{\! M}
        \cdot 
        \Biggl(  1 - 
            \alpha_{\graphN}
            \cdot
            \frac{ 1 - \alpha_{\graphN}^{M}
            }{ 1-\alpha_{\graphN} }
        \Biggr)
        \nonumber\\
        &\overset{(j)}{\geq} 
        \bigl( \ZBSPA^{*}(\graphN) \bigr)^{\! M}
        \cdot 
        \Biggl(  
        \underbrace{1 - 
            \frac{ \alpha_{\graphN} 
            }{ 1-\alpha_{\graphN} }
        }_{ \overset{(k)}{>} 0}
        \Biggr),
        \label{sec:CheckCon:eqn:75}
    \end{align}
    \begin{itemize}
      \item where step $(a)$ follows from the expression of $ \bigl( \ZBM(\graphN) \bigr)^{\! M} $ in~\eqref{sec:CheckCon:eqn:64},

      \item where step $(b)$ follows from the triangle inequality, 

      \item where step $(c)$ follows from the following derivations:
      \begin{align*}
          \sum_{ \LCTvxsM, \LCTvxlM }
          &\Biggl( 
              \prod_{m \in [M]}
              \gzero( \LCTvxsm, \LCTvxlm ) 
          \Biggr)
          \cdot 
          \prod_{e} P_{e}
          \bigl( \LCTtv{x}_{\efi,[M]}, \LCTtv{x}_{\efj,[M]} \bigr)
          \nonumber\\
          &\overset{(d)}{=}
          \bigl( \gzero( \tv{0}) \bigr)^{\! M}
          \cdot 
          \prod_{e} P_{e}( \tv{0}, \tv{0}) 
          \nonumber\\
          &\overset{(e)}{=}
          \bigl( \ZBSPA^{*}(\graphN) \bigr)^{\! M},
      \end{align*}
        \item where step $(d)$ follows from the fact that in order to have $ \gzero( \LCTvxsm, \LCTvxlm ) \neq 0 $ for $ 1 \leq m \leq M $, the equality $ \LCTvxsM = \LCTvxsM = \tv{0} $ has to be satisfied,

        \item where step $(e)$ follows from  the property of $ \gzero $ in~\eqref{sec:CheckCon:eqn:73} and the property of $ P_{e} $ in Lemma~\ref{sec:SST:lem:4}: 
        \begin{align*}
            P_{e}\bigl( \tv{0}, \tv{0} \bigr)  = 1, \qquad
            e \in \setEfull,
        \end{align*}
        

      \item where step $(f)$ follows from the following derivations:
      \begin{align*}
          \hspace{-3cm}\binom{M}{k}
          &\sum_{ \LCTvxsM, \LCTvxlM }
          \Biggl|
            \prod_{m =1}^{k}
            \gzero( \LCTvxsm, \LCTvxlm ) 
          \Biggr|
          \cdot 
          \Biggl|
            \prod_{m =k+1}^{M}
            \gone( \LCTvxsm, \LCTvxlm ) 
          \Biggr|
          \cdot 
          \prod_{e} 
          P_{e}\bigl( \LCTtv{x}_{\efi,[M]}, \LCTtv{x}_{\efj,[M]} \bigr)
          \nonumber\\
          &\overset{(g)}{\leq}
          \sum_{ \LCTvxsM, \LCTvxlM }
          \Biggl|
            \prod_{m =1}^{k}
            \gzero( \LCTvxsm, \LCTvxlm ) 
          \Biggr|
          \cdot 
          \Biggl|
            \prod_{m =k+1}^{M}
            \gone( \LCTvxsm, \LCTvxlm ) 
          \Biggr|
          \nonumber\\
          &=
          \Biggl( 
              \sum_{\LCTvxs, \LCTvxl} 
              \bigl| \gzero( \LCTvxs, \LCTvxl ) \bigr|
          \Biggr)^{\!\!\! k}
          \cdot
          \Biggl( 
              \sum_{\LCTvxs, \LCTvxl} 
              \bigl| \gone( \LCTvxs, \LCTvxl ) \bigr|
          \Biggr)^{\!\!\! M-k}
          \nonumber\\
          &\overset{(h)}{=}
          \bigl( \ZBSPA^{*}(\graphN) \bigr)^{M}
          \cdot 
          \alpha_{\graphN}^{M-k},
      \end{align*}
      \item where step $(g)$ follows from Lemma~\ref{sec:CheckCon:lem: lower bound of Pe},

      \item where step $(h)$ follows from the properties of $ \gzero $ and $ \gone $ in Lemma~\ref{sec:CheckCon:prop:properties of N(0)} and the definition of $ \alpha_{\graphN} $ in~\eqref{sec:CheckCon:eqn:71},

      \item where step $(i)$ follows from
      \begin{align*}
          \sum_{k=0}^{M-1}\alpha_{\graphN}^{M-k}
          = \alpha_{\graphN} \cdot 
          \frac{1-\alpha_{\graphN}^{M}}{1 - \alpha_{\graphN}},
      \end{align*} 

      \item where step $(j)$ follows from $ \alpha_{\graphN} \in \sR_{>0} $ as proven in~\eqref{sec:CheckCon:property of alpha} and the following equality:
      \begin{align*}
          1 - 
          \alpha_{\graphN}
          \cdot
          \frac{ 1 - \alpha_{\graphN}^{M}
          }{ 1-\alpha_{\graphN} }
          = 1 - 
          \frac{ \alpha_{\graphN}
          }{ 1-\alpha_{\graphN} }
          + 
          \underbrace{
              \frac{ \alpha_{\graphN}^{M+1} 
              }{ 1-\alpha_{\graphN} }
          }_{\geq 0},
      \end{align*}
      
      \item where step $(k)$ follows from $ \alpha_{\graphN} < 1/2 $ as proven in~\eqref{sec:CheckCon:property of alpha}.
    \end{itemize}
    Finally, we obtain
    \begin{align*}
        \liminf_{M \to \infty}
        \ZBM(\graphN)
        &\overset{(a)}{\geq}
        \ZBSPA^{*}(\graphN)
        \cdot 
        \liminf_{M \to \infty}
        \Biggl(  
            1 -
            \frac{ \alpha_{\graphN}
            }{ 1-\alpha_{\graphN} }
        \Biggr)^{\! \!\! 1/M}
        \nonumber\\
        &\overset{(b)}{=} \ZBSPA^{*}(\graphN), 
    \end{align*}
    where step $(a)$ follows from the inequalities in~\eqref{sec:CheckCon:eqn:75},
    and where step $(b)$ follows from the fact that for any positive-valued number $ c $, we have
    \begin{align*}
      \lim_{M \to \infty} c^{1/M} = 1.
    \end{align*}
\end{proof}

\begin{lemma}\label{sec:CheckCon:lem:limsup Z N0 smaller than ZSPA} 
    If the strict inequality in~\eqref{sec:CheckCon:eqn:70} holds, then we have
    \begin{align*}
        \limsup_{M \to \infty}
        \ZBM(\graphN)
        \leq \ZBSPA^{*}(\graphN).
    \end{align*}
\end{lemma}
\begin{proof}
  The proof follow from a similar idea as in the proof of Lemma~\ref{sec:CheckCon:lem:liminf Z N0 greater than ZSPA}. 
\end{proof}

By Lemmas~\ref{sec:CheckCon:lem:liminf Z N0 greater than ZSPA} and~\ref{sec:CheckCon:lem:limsup Z N0 smaller than ZSPA}, we prove Theorem~\ref{sec:CheckCon:thm:1}.

\end{appendices}




\ifx\sectionheaderonnewpage\x
\clearpage
\fi

\addcontentsline{toc}{chapter}{Bibliography}


\printbibliography


\printindex



\end{document}